\documentclass[11pt]{article}

\pdfoutput=1 

\usepackage{svg}

\usepackage[ruled,vlined,linesnumbered]{algorithm2e}
\usepackage[utf8]{inputenc}
\usepackage[T1]{fontenc}
\usepackage{authblk}
\usepackage{orcidlink}
\usepackage{lmodern}
\usepackage[letterpaper, margin=1in]{geometry}
\usepackage{setspace}
\usepackage{natbib}
\setcitestyle{round}
\usepackage{float}
\usepackage{graphicx}
\usepackage{amsmath,amssymb,amsthm}
\usepackage{hyperref}
\usepackage{subcaption} 
\usepackage{cleveref}
\usepackage{booktabs}
\usepackage{enumitem}
\usepackage{adjustbox}
\usepackage{thm-restate}
\usepackage{caption}
\usepackage{xcolor}
\usepackage{booktabs}
\usepackage{adjustbox}
\usepackage{array}
\usepackage{tabularx} 
\usepackage{tabulary}
\usepackage{colortbl}

\definecolor{famBiSB}{HTML}{F9D4D4}
\definecolor{famSnarl}{HTML}{CFE0EF}

\newtheorem{theorem}{Theorem}
\newtheorem{proposition}{Proposition}%
\newtheorem*{remark*}{Remark}%
\newtheorem{definition}{Definition}
\newtheorem{corollary}{Corollary}
\newtheorem{lemma}{Lemma}

\newtheorem{observation}[theorem]{Observation}

\newtheorem*{note*}{Note}

\newcommand{\noextremity}[2]{\mathsf{State_{#1,#2}[NoInnerExtr]}}
\newcommand{\acyclic}[2]{\mathsf{State_{#1,#2}[Acyclic]}}
\newcommand{\reachesst}[2]{\mathsf{State_{#1,#2}[Reaches_{st}]}}
\newcommand{\reachests}[2]{\mathsf{State_{#1,#2}[Reaches_{ts}]}}
\newcommand{\reachesuvab}[2]{\mathsf{State_{#1,#2}[Reaches_{uv\alpha\beta}]}}

\newcommand{\state}[2]{\mathsf{State_{#1,#2}[\cdot]}}
\newcommand{\true}{\mathsf{True}}
\newcommand{\false}{\mathsf{False}}
\newcommand{\Null}{\mathsf{Null}}

\DeclareMathOperator{\skel}{skeleton}
\DeclareMathOperator{\dirskel}{skeleton^*}
\DeclareMathOperator{\expansion}{expansion}

\newcommand{\iink}{i \in \{1,\dots,k\}}
\newcommand{\iinkz}{i \in \{0,\dots,k\}}
\newcommand{\jink}{j \in \{1,\dots,k\}}
\newcommand{\jinkz}{j \in \{0,\dots,k\}}

\newcommand{\signs}{\{+,-\}}

\newcommand{\figr}[1]{\textcolor{cyan}{(figure)}}

\newcommand{\equalcontrib}{\thanks{These authors contributed equally.}}
\newcommand{\cosupervised}{\thanks{These authors jointly supervised this work.}}

\title{Identifying bubble-like subgraphs in linear-time \\via a unified SPQR-tree framework} 

\author[1]{Francisco Sena\equalcontrib}
\author[1]{Aleksandr Politov$^{*}$}
\author[1,2]{Corentin Moumard}
\author[1]{Massimo Cairo\orcidlink{0000-0001-7247-756X}}
\author[3]{Romeo Rizzi\orcidlink{0000-0002-2387-0952}}
\author[4]{Manuel Cáceres\orcidlink{0000-0003-0235-6951}\cosupervised}
\author[1]{Sebastian Schmidt\orcidlink{0000-0003-4878-2809}$^\dagger$}
\author[1]{Juha Harviainen\orcidlink{0000-0002-4581-840X}$^\dagger$}
\author[1]{Alexandru I. Tomescu\orcidlink{0000-0002-5747-8350}$^\dagger$}

\affil[1]{\small Department of Computer Science, University of Helsinki, Helsinki, Finland\\
\texttt{\{francisco.sena,aleksandr.politov\\sebastian.schmidt,juha.harviainen,alexandru.tomescu\}@helsinki.fi}}
\affil[2]{\small ENS Lyon, Lyon, France\\
\texttt{corentin.moumard@ens-lyon.fr}}
\affil[3]{\small Department of Computer Science, University of Verona, Verona, Italy\\
\texttt{romeo.rizzi@univr.it}}

\affil[4]{\small Department of Computer Science, Aalto University, Espoo, Finland\\
\texttt{manuel.caceres@aalto.fi}}
\date{}

\begin{document}
\pagenumbering{roman}
\maketitle
\vspace{-1cm}

\begin{abstract}
A fundamental algorithmic problem in computational biology is to find all subgraphs of a given type---\emph{superbubbles}, \emph{snarls}, and \emph{ultrabubbles}---in a directed or \emph{bi}directed input graph. Such \emph{bubble-like} subgraphs correspond to regions of genetic variation, whose identification is useful in analyzing \emph{collections} of genomes (a \emph{pangenome graph}). At present, all subgraphs of the latter two types can only be found in quadratic time, which constitutes a major bottleneck for applications involving massive inputs. Although all superbubbles can be identified in linear time, the existing algorithm is highly specialized and the result of a long sequence of tailored developments.

In this work, we present the first linear-time algorithms for identifying all snarls and all ultrabubbles, resolving problems that have remained open since 2018. Additionally, the algorithm for snarls relies on a new linear-size representation of all snarls with respect to the number of vertices in the graph. For the first time in this context, we make use of the well-known SPQR-tree decomposition, which encodes all 2-separators of a biconnected graph. By performing several dynamic-programming–style traversals of this tree, we maintain key properties (such as acyclicity) that allow us to decide whether a given 2-separator defines a subgraph to be reported.

A crucial ingredient for achieving linear-time complexity is the observation that the acyclicity of linearly many subgraphs can be tested simultaneously via a reduction to the classical problem of computing all arcs in a directed graph whose removal renders it acyclic (so-called \emph{feedback arcs}). As such, we prove a fundamental result for bidirected graphs, that may be of independent interest: all feedback arcs can be computed in linear time for tipless bidirected graphs, while in general graphs the problem is at least as hard as matrix multiplication, assuming the $k$-Clique Conjecture.

Altogether, our results form a \emph{unified framework} that also yields a completely different linear-time algorithm for finding all superbubbles. Although some of the results are technically involved, the underlying ideas are conceptually simple, and may extend to other bubble-like subgraphs. 

More broadly, our work contributes to the \emph{theoretical foundations of computational biology} and advances a growing line of research that uses SPQR-tree decompositions as a general tool for designing efficient algorithms, beyond their traditional role in graph drawing.

\end{abstract}

\newpage
\tableofcontents

\newpage
\pagenumbering{arabic}
\setcounter{page}{1} 
\section{Introduction}
\label{sec:introduction}


\subsection{Background and motivation} Recent bioinformatics applications in \emph{pangenomics} are concerned with building a single massive graph representing \emph{many} genomes in a population. For example, the Human Pangenome Reference Consortium (HPRC)~\citep{hprc} released in May 2025 a pangenome graph created from 232 individual genomes,\footnote{\url{https://humanpangenome.org/hprc-data-release-2/}} which contains over 206 million edges~\citep{billi}. The size of such graphs is expected to significantly increase in the future. For example, HPRC is planning to release in Summer 2026 a larger graph created from 350 individual genomes,\footnote{\url{https://humanpangenome.org/release-timeline/}} and there are other initiatives worldwide aiming to construct such graphs from many more human genomes, such as the European ``1+ Million Genomes'' initiative.\footnote{\url{https://digital-strategy.ec.europa.eu/en/policies/1-million-genomes}}

An advantage of pangenome graphs is that genetic variation translates into local substructures, with clean graph-theoretic characterizations, that can be found and analyzed for biological meaning. One of the most used type of such subgraph of a directed graph is a \emph{superbubble}~\citep{onodera2013detecting,dabbaghie2022bubblegun}, see \Cref{fig:bubbles}(a). A superbubble is an acyclic subgraph $B$ identified by two endpoint vertices, say $s$ and $t$, such that $s$ is the only source of $B$ and $s$ does not have out-neighbors outside of $B$; and symmetrically, $t$ is the only sink of $B$ and $t$ does not have in-neighbors outside of $B$. Additionally, there are no other edges from $B$ to the rest of the graph, and a minimality property is also imposed to ensure overall a linear number of superbubbles. See also~\citep{kolmogorov2020metaflye,rautiainen2023telomere,minkin2020scalable,shafin2020nanopore,garg2018graph} for other applications of superbubbles.

The development of pangenome graphs led to the introduction of generalizations of superbubbles, to explicitly handle the \emph{bi}directed nature of the graphs that arises from the reverse-complementary symmetry of DNA. In a \emph{bidirected graph}\footnote{Bidirected graphs, though not as widely known, are well represented in the literature; see e.g.~\citep{gabow1983efficient} (STOC 1983) and \cite[Chapter 36]{schrijver2003combinatorial}.}, every edge also carries a sign $+$ or $-$ at each endpoint. For example, between two vertices $u$ and $v$ there can be four bidirected edges: $\{u-,v-\}$, $\{u-,v+\}$, $\{u+,v-\}$ and $\{u+,v+\}$ (as unordered pairs). A vertex $v$ together with a sign $+$ or $-$ is called a \emph{vertex-side}~\citep{rahman2022assembler}. Directed graphs are a special case of bidirected graphs: every directed edge $(u,v)$ is encoded as $\{u+,v-\}$. Bidirected graphs are heavily used in bioinformatics, and we refer the reader to  e.g.~\citep{bessouf2019transitive,medvedev2007computability,kita2017bidirected,rahman2022assembler} for further motivation on bidirected graphs.

\begin{figure}[t!]
    \centering
    \includegraphics[width=0.95\linewidth]{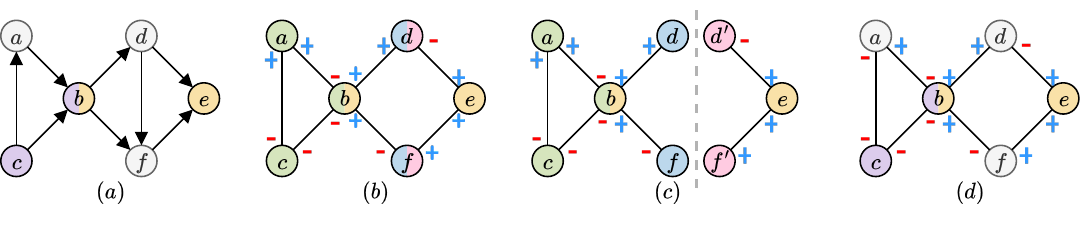}
    \caption{Examples of superbubbles (a), snarls (b), and ultrabubbles (d) in directed graphs and bidirected graphs, with the extremities of the bubbles highlighted in the same color. The superbubbles are $(c, b)$ and $(b, e)$, the snarls are $\{a+, b-\}$, $\{a+, c-\}$, $\{b-, c-\}$, $\{b+, e+\}$, $\{d+, f-\}$, and $\{d-, f+\}$, and the ultrabubbles are $\{b-, c-\}$ and $\{b+, e+\}$. Subfigure (c) illustrates splitting vertex sides $d+$ and $f-$ from (b), after which $d$ and $f$ are in the same connected component that is distinct from the component of $d'$ and $f'$. Note that all pairs of $a+$, $b-$, and $c-$ form snarls in (b), whereas with superbubbles and ultrabubbles each vertex-side can be an extremity of only a single bubble.}\label{fig:bubbles}
\end{figure}

In bidirected graphs, the most well-known \emph{bubble-like} notions are \emph{snarls} (see \Cref{fig:bubbles}(b)) and \emph{ultrabubbles} (see \Cref{fig:bubbles}(d)), introduced by~\cite{paten2018ultrabubbles}. At a high level, a snarl is a minimal connected subgraph identified by two vertex-sides that separate its interior from the rest of the graph; moreover, at each of the two endpoints, all vertex-sides of one sign lie inside the snarl and all vertex-sides of the opposite sign lie outside. Ultrabubbles are snarls whose interior contains no \emph{tip} and is \emph{acyclic}, i.e.~it contains no \emph{closed bidirected walk}. A \emph{tip} is a generalization of a source/sink in directed graphs: it is a vertex $v$ for which all incident edges carry the same sign at $v$. A \emph{bidirected walk} alternates the sign at every internal vertex; it is \emph{closed} if it starts and ends at the same vertex, at which it does not need to alternate sign. We also call such a walk a \emph{cycloid}. See also~\citep{garrison2018variation,hickey2024pangenome,wang2025population,chang2020distance,siren2021pangenomics,siren2024personalized} for important applications of snarls and ultrabubbles.


\subsection{Existing bubble-finding algorithms}

Given a directed graph $(V,E)$, the first superbubble finding\footnote{The interior of a bubble is uniquely identified by its endpoints, and thus bubble-finding algorithms compute the set of pairs of endpoint vertices of all bubbles of a certain type. In fact, since bubble subgraphs can contain smaller bubble subgraphs, one cannot obtain linear-time algorithms if the full subgraphs are reported in output.} algorithm ran in time $O(|V|(|E|+|V|))$~\citep{onodera2013detecting}, and was later improved to $O(\log|V|(|E|+|V|))$-time~\citep{loglinear}. Linear time $O(|V| + |E|)$ was first achieved for acyclic graphs by~\cite{brankovic2016linear}, then for general graphs by~\cite{gartner-revisited}, and further simplified by~\cite{gartner2019direct}.

Despite the massive size of pangenome graphs, existing worst-case bounds for these bidirected structures are far from linear: snarls can be computed in $O(|V||E|)$ time in the worst case, and ultrabubbles in $O((|V|+|E|)^2)$ time~\citep{paten2018ultrabubbles}. Another intrinsic difficulty is output size, as the total number of snarls can be quadratic in the number of vertices. \cite{paten2018ultrabubbles} therefore used the \emph{cactus graph}~\citep{paten2011cactus} to prune snarls and compute in linear time a \emph{snarl decomposition} of linear size. However, because of their possible biological significance, we are interested in identifying \emph{all} bubble-like structures in the graph. \cite{zisis2026ultrabubbleenumerationlowestcommon} recently proposed an algorithm which, given a set of $k$ snarls, can verify in time $O(k|V|+(|V|+|E|))$ which ones are ultrabubbles. Finally, \cite{Harviainen2026.03.28.714704} showed that ultrabubbles can be computed in linear time on the special class of bidirected graphs that either contain a tip or their underlying undirected graph contains a cutvertex. This is based on transforming any bidirected graph in this class into a directed graph which is at most of double size, such that the ultrabubbles of the bidirected graph correspond exactly to (a slight generalization of) superbubbles in the directed graph. However, note that this reduction does not work in general.

Despite the similarities between the several existing bubble-like structures (see also e.g.~\emph{bibubbles}~\citep{li2024exploring}, \emph{panbubbles}~\citep{billi}, and \emph{flubbles}~\citep{mwaniki2024popping}), there is no unified methodology for efficiently computing them. Moreover, since superbubbles can indeed be computed in linear time, it would seem reasonable to assume that such an approach can be adapted also for snarls. However, this achievement is heavily tailored to superbubbles and crucially relies on the \emph{directed} nature of the input graph. Finally, this result has been obtained only after a series of papers~\citep{onodera2013detecting,loglinear,brankovic2016linear,gartner-revisited,gartner2019direct}, begging the question of how large undertaking obtaining a linear-time snarl or ultrabubble identification algorithm is.



\subsection{Contributions} In this paper we show that all snarls and all ultrabubbles can be identified in linear time in the size of the graph, solving problems that have been open since 2018. For snarls, we further prove that the set of all snarls admits a representation of size linear in the number of vertices of the input graph, which can itself be constructed in linear time. Thus, we can \emph{identify} all snarls in time linear in the input size. 

We obtain both algorithms via a \emph{unified} framework for finding these structures. This framework also yields a new linear-time algorithm for computing all superbubbles in directed graphs. 

The key insight underlying our approach is that the two endpoints of a bubble subgraph form a \emph{2-separator} (i.e.~2-vertex cut) in the underlying undirected graph, except for special cases which we can handle separately. We then leverage classical decomposition machinery for 2-separators, namely the \emph{SPQR tree} of a (bi)connected graph~\citep{battista1990on-line,gutwenger2001linear}, which compactly encodes all its 2-separators. 

While SPQR trees have been used in bioinformatics in other contexts (e.g.~\citep{metagenomescope,Jafarzadeh2025.07.05.662656}), they have not been used to algorithmically capture bubble-like structures. More broadly, although SPQR trees are most commonly associated with graph embedding and drawing applications~\citep{mutzel2003spqr}, our work contributes to a growing line of research that employs them to design efficient algorithms. For example, recognizing if a graph belongs to a certain class~\citep{DEMACEDOFILHO2018101}, or constructing efficient indexing schemes for e.g.~shortest-path queries~\citep{maniu2017indexing}. 

Although some of the results are technically involved, the underlying ideas are conceptually simple. This suggests that the same framework may extend to other bubble-like notions, such as bibubbles~\citep{li2024exploring} and panbubbles~\citep{billi}, for which linear-time algorithms are currently unknown.

\subsection{Organization of the paper} In \Cref{sec:preliminaries} we introduce key technical notions, and in \Cref{sec:overview} we give an extended overview of our main ideas. To better illustrate our framework, we first apply it to standard directed graphs and present the new linear-time superbubble algorithm, in \Cref{sec:superbubbles}. Our goal in the presentation is to illustrate the key ideas, which we can then adapt to the more involved case of bidirected graphs. We show the linear-time snarl algorithm in \Cref{sec:snarls}, and the results on ultrabubbles in \Cref{sec:ultrabubbles}.

\section{Preliminaries}
\label{sec:preliminaries}

Many of our results are based on analyzing the underlying undirected graph of the (bi)directed graph we wish to find superbubbles, ultrabubbles, or snarls. We begin by giving preliminaries on undirected graphs and basic terminology on connectivity. Then we introduce terminology for bidirected and directed graphs.

\subsection{Undirected graphs and connectivity}
Let $H=(V,E)$ be an undirected graph with vertex set $V(H)$ and edge set $E(H)$. Let $u,v \in V(H)$ be vertices. If $H$ has an edge with endpoints $u$ and $v$ then we denote that edge as $\{u,v\}$ (parallel edges are allowed).
A \emph{$u$-$v$ undirected walk} in $H$ is a standard walk between $u$ and $v$; a \emph{$u$-$v$ undirected path} is a $u$-$v$ undirected walk without repeated vertices. The \emph{internal vertices} of a $u$-$v$ undirected path are the vertices contained in the path except $u$ and $v$. (When clear from the context, we simply say ``path'' instead of undirected ``path''.)

A subgraph $H$ of $G$ is \emph{maximal} w.r.t.~a given property if no proper supergraph of $H$ contained in $G$ has that property.
The subgraph \emph{induced} by a subset $C \subseteq V(G),E(G)$ of vertices or edges of $G$ is denoted by $G[C]$. The \emph{vertex-induced} subgraph is the graph with vertex set $C$ and the subset of edges in $E(G)$ whose endpoints are in $C$. The \emph{edge-induced} subgraph has edge set $C$ and a vertex set consisting of all endpoints of edges in $C$. If $G'=(V',E')$ is an undirected graph, then $G'\cup G := (V\cup V',E\cup E')$.

Graph $H$ is \emph{disconnected} if it has two vertices without a path between them.
Graph $H$ is \emph{$k$-connected} if it has more than $k$ vertices and no subset of fewer than $k$ vertices disconnects the graph. By Menger's theorem~\citep{menger}, if a graph $H$ is $k$-connected then $H$ has $k$ internally vertex-disjoint paths between any two of its vertices.
A \emph{component} of $H$ is a maximally (1-)connected subgraph.
Vertex $v$ is a \emph{cutvertex} of $H$ if $H-v$ has more components than $H$; if $H$ has no cutvertex then it is \emph{biconnected}. The vertex pair $\{u,v\}$ is a \emph{separation pair} of $H$ if $(H - u) - v$ has more components than $H$; if $H$ has no separation pairs then it is \emph{triconnected}.
Notice that we allow biconnected (resp. triconnected) graphs to have fewer than three (resp. four) vertices.
We call an edge a \emph{bridge} if its removal increases the number of components of the graph; a set of at least two parallel edges whose removal increases the number of components of the graph is called a \emph{multi-bridge}.
If every $u$-$v$ path contains vertex $w \neq u,v$ then $w$ is a $u$-$v$ cutvertex.
A \emph{separation} of $H$ is a pair of vertex sets $(A,B)$ such that $V = A \cup B$, $A\setminus B$ and $B\setminus A$ are nonempty, and there is no edge between $A\setminus B$ and $B\setminus A$. A maximal set of vertices $X \subseteq V(H)$ with $|X| \geq k$ such that no two vertices of $X$ can be separated by removing fewer than $k$ other vertices is called a \emph{$k$-block}.

\subsection{Bidirected and directed graphs}
A \emph{bidirected graph} $G = (V,E)$ has a set of vertices $V=V(G)$ and a set of \emph{bidirected edges} $E=E(G)$.
A \emph{sign} is an element in $\alpha \in \{+, -\}$, and the \emph{opposite sign} $\hat{\alpha}$ of $\alpha$ is defined as $\hat{-} = +$ and $\hat{+} = -$.
A pair $(v,\alpha)$ where $v \in V(G)$ and $\alpha \in \signs$ is a \emph{vertex-side},\footnote{We adopt this nomenclature from~\citep{rahman2022uncovering} and note that~\citep{kita2017bidirected} opts for ``signed vertex''. We remark that our definition of bidirected graph, while appropriate for this work, differs from commonly used definitions (e.g., ~\citep{kita2017bidirected,ghorbani2025generalisation,bessouf2019transitive,ando1996decomposition,schrijver2003combinatorial}) although all are essentially equivalent.} which we concisely write as $v\alpha$, e.g.~$v+$ or $v-$.
A \emph{bidirected edge} $e \in E(G)$ is an unordered pair of vertex-sides $\{u\alpha, v\beta\}$ (for simplicity we may refer to bidirected edges as just edges); we say that $e$ is incident to $u$ (resp. $v$) with sign $\alpha$ (resp. $\beta$), that $G$ has a vertex-side of sign $\alpha$ in $u$, and that $u$ and $v$ are the \emph{endpoints} of $e$.
The set of \emph{vertex-sides} of $G$ is the set $\bigcup_{e \in E(G)} e$.
A bidirected graph $H$ is a \emph{subgraph} of $G$ if $V(H) \subseteq V(G)$ and $E(H) \subseteq E(G)$, and write $H \subseteq G$.
The set of \emph{out-neighbors} of $v$ is denoted as $N^+_G(v)$ and consists of the vertices $x$ for which there is an edge $\{v+,x\beta\}$ in $G$ (the set of \emph{in-neighbors} is defined analogously). We say that a vertex $v$ is a \emph{tip} in $G$ if no two edges of $G$ are incident to $v$ with different signs.
The undirected graph of $G$ is denoted by $U(G)$ and is obtained from $G$ by ignoring the signs in its vertex-sides, and keeping parallel edges that possibly appear (edges are thus labeled with unique identifiers that are retained during the conversion).

A \emph{bidirected walk} $W$ is a sequence $\{v_1\alpha_1, v_2\alpha_2\}$, $\{v_2\hat{\alpha}_2, v_3\alpha_3\}, \dots, \{v_{k-1}\hat{\alpha}_{k-1}, v_k\alpha_k\} \in E(G)$.
We also say that $W$ is a \emph{$v_1$-$v_k$ bidirected walk} (also a $v_1\alpha_1$-$v_k\alpha_k$, a $v_1$-$v_k\alpha_k$, or a $v_1\alpha_1$-$v_k$ bidirected walk).
When clear from the context we may say simply ``walk'' instead of ``bidirected walk''. Vertices $x$ and $y$ are the \emph{first} and \emph{last} vertices of the walk, respectively, and all remaining vertices are its \emph{internal} vertices.
Observe that to any walk $W$ we can associate its reversed walk (i.e., walks in bidirected graphs have two possible orientations).
A \emph{bidirected path} is a walk without repeated vertices.
A \emph{cycloid} is a path with the exception that the first and last vertex are the same and where at most one of its vertices (called the \emph{exceptional vertex}) has the same sign over the two edges incident to it in the sequence.
A graph with no cycloid is \emph{acyclic}.

When every edge of $G$ has one vertex-side with a $+$ sign and the other with a $-$ sign then we can say that $G$ is a \emph{directed graph} where a bidirected edge $\{u+,v-\}$ is seen as a \emph{directed edge} from $u$ to $v$, which we concisely denote as $uv$.
A \emph{directed path} is a sequence of vertices $v_1\dots v_k$ such that $v_iv_{i+1} \in E(G)$ for $i=1,\dots,k-1$, in which case we also say that $v_1$ \emph{reaches} $v_k$ in $G$ or that this sequence is a \emph{$v_1$-$v_k$ directed path}. A vertex $v$ is a \emph{source} of $G$ if $|N^-_{G}(v)|=0$ and a \emph{sink} if $|N^+_{G}(v)|=0$.
A vertex $v$ is an \emph{extremity} of $G$ if $v$ is a source or sink of $G$, or a cutvertex of $U(G)$.

\subsection{Block-cut trees} 

Let $H=(V,E)$ be an undirected connected graph with at least two vertices. It follows from the definition of $k$-block that a subgraph induced by a 2-block is a maximal connected subgraph without cutvertices (see~\citep{diestel}). For simplicity, we will refer to the subgraphs induced by 2-blocks simply as blocks. The \emph{block-cut tree} of $H$ is a tree with \emph{node set} $N$ and edge set $A$. A node is a \emph{block node}, which is either a maximal 2-connected subgraph or a multi-bridge of $H$, or a \emph{cutnode}, which is a cutvertex of $H$.

The edges in $A$ represent how the blocks of $H$ interact via the cutvertices of $H$ as follows.
Let $v$ be a cutvertex of $H$ and let $\mu$ be the cutnode of $N$ corresponding to $v$. Then $H-v$ consists of components $C_1,\dots,C_\ell$ ($\ell \geq 2$) and $\mu$ has $\ell$ neighbours in the tree, each corresponding to the block contained in $H[V(C_i) + v]$ that meets $v$.
The blocks partition the edge set of $H$~\citep{diestel} and there is at most one block containing any two vertices.

\subsection{SPQR trees}

SPQR trees represent the decomposition of a biconnected graph according to its separation pairs in a tree-like way, thus exposing the 3-blocks of the graph (analogously to cutvertices and blocks in block-cut trees). They were first formally defined by Tamassia and Di Battista~\citep{battista1990on-line}, but were informally known before~\citep{lane1937structural,hopcroft1973dividing,bienstock1988complexity}.
They can be constructed in linear time~\citep{hopcroft1973dividing,gutwenger2001linear}, and if unrooted they are unique~\citep{battista1990on-line}. SPQR trees are a valuable tool in the design of algorithms for different problems~\citep{rotenberg,di1996line,onlinegraphalgorithms}. The definition of SPQR trees given in this paper is essentially the same given by Di Battista and Tamassia in~\citep{battista1990on-line,battista1996on-line,di1996line}; note that the exact same recursive definition is used in~\citep{gutwenger2001linear,gutwenger2005inserting} (for a non-recursive definition see~\citep{rotenberg}). We remark that our definition contains some minor technical adjustments to the definition of Di Battista and Tamassia. 

\begin{figure}[t]
    \centering
    \includegraphics[width=\linewidth]{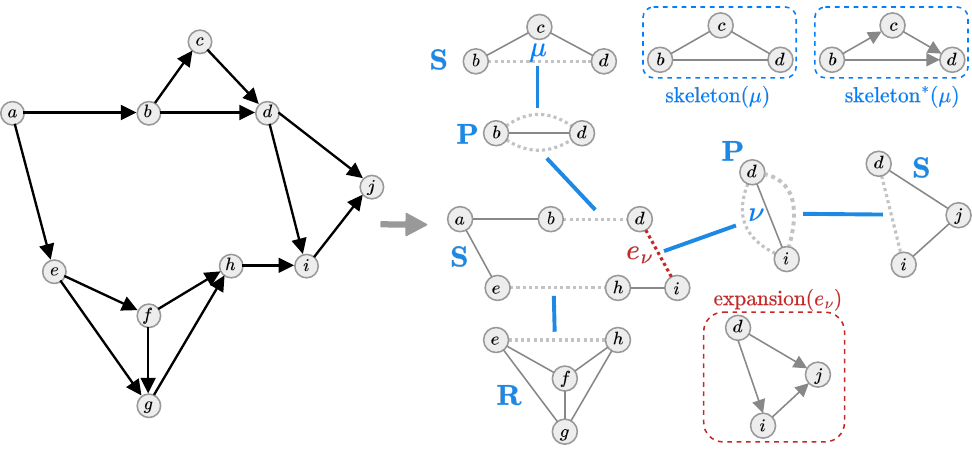}
    \caption{A directed graph (left) and the SPQR tree of its (biconnected) underlying undirected graph (right). Virtual edges are dashed, tree edges (in blue) connect pairs of virtual edges in adjacent tree nodes, and node types (S, P, R) are shown in blue. We highlight in red a virtual edge $e_\nu$ and show its $\expansion(e_\nu)$ in the red box. The $\skel(\mu)$ of the tree node $\mu$ (the top-most node) is shown in the blue box. The directed skeleton $\dirskel(\mu)$ is a directed graph obtained from $\skel(\mu)$ where real edges are directed as in the input directed graph, and for each $e = \{s,t\}$ in $\skel(\mu)$ we add an edge from $s$ to $t$ (and from $t$ to $s$) if $s$ reaches $t$ (or $t$ reaches $s$, resp.).
    \label{fig:spqr-tree}
    }
\end{figure}

To define SPQR trees we need some basic definitions. Let $H$ be an undirected biconnected graph with at least two edges. A \emph{split pair} of $H$ is a separation pair or an edge of $H$. A \emph{split component} of a split pair $\{u,v\}$ is an edge $\{u,v\}$ or a maximal subgraph $C$ of $H$ containing the vertices $u$ and $v$ such that $\{u, v\}$ is not a split pair of $C$. Let $\{s,t\}$ be a split pair of $H$. A \emph{maximal split pair} $\{u, v\}$ of $H$ with respect to $\{s, t\}$ is such that for any other split pair $\{u', v'\}$ vertices $u, v, s$, and $t$ are in the same split component of $\{u',v'\}$.

Fix an arbitrary edge $e=\{s,t\} \in E(H)$, called the \emph{reference edge}. The SPQR tree $T$ of $H$ with respect to $e$ is defined as a rooted tree with nodes of four types: S~(series), P~(parallel), Q~(single edge), and R~(rigid). Each node $\mu$ in $T$ has an associated biconnected graph \emph{$\skel(\mu)$}, called the \emph{skeleton} of $\mu$ with $V(\skel(\mu)) \subseteq V(H)$. The tree $T$ is recursively defined as follows:

\begin{description}
    \item[Trivial case:] If $H$ consists of exactly two parallel edges between $s$ and $t$, then $T$ consists of a single Q-node whose skeleton is $H$ itself.
    \item[Parallel case:] If the split pair $e = \{s,t\}$ has exactly $k+1\geq 3$ split components $H_0,\dots,H_k$ where $H_0$ denotes the split component containing $e$, then the root of $T$ is a P-node $\mu$ whose skeleton consists of $k+1$ parallel edges $e_0,\dots,e_k$ between $s$ and $t$ with $e_0 = e$.
    \item[Series case:] Otherwise the split pair $\{s,t\}$ has exactly two split components, where one is $e$ trivially and the other let us denote by $H'$. If $H'$ is a sequence of blocks $H_1,\dots,H_k$ separated by cutvertices $c_1,\dots,c_{k-1}$ $(k \geq 2)$ in this order from $s$ to $t$, then the root of $T$ is an S-node $\mu$ whose skeleton is a cycle $e_0, e_1,..., e_k$ where $e_0$ = $e$, $c_0=s$, $c_k=t$, and $e_i = (c_{i-1}, c_i)$ $(i=1,\dots,k)$.
    \item[Rigid case:] If none of the previous cases applies, let $\{s_1,t_1\},\dots,\{s_k,t_k\}$ be the maximal split pairs of $H$ with respect to $\{s,t\}$ $(k\geq1)$, and, for $i=1,\dots,k$ let $H_i$ be the union of all the split components of $\{s_i,t_i\}$ except the one containing $e$. The root of $T$ is an R-node $\mu$ whose skeleton is obtained from $H$ by replacing each subgraph $H_i$ with the edge $e_i = \{s_i, t_i\}$.
\end{description}
Except for the trivial case, $\mu$ has children $\mu_1,\dots,\mu_k$, such that $\mu_i$ is the root of the SPQR tree of $H_i \cup e_i$ with respect to $e_i$ for $i=1,\dots,k$. Notice how the (reference) edge $e_i$ in $\skel(\mu_i)$ ensures that $\skel(\mu_i)$ is biconnected (e.g., in the case of the S-node).
Once the recursion terminates, we add a Q-node with vertex set $\{s,t\}$ representing the first reference edge $e=\{s,t\}$ and make it a child of the root of $T$.
Node $\mu_i$ is associated with edge $e_i$ of the skeleton of its parent $\mu$, called the \emph{virtual edge} of $\mu_i$ in $\skel(\mu)$ $(i=1,\dots,k)$. Conversely, $\mu$ is implicitly associated with the reference edge $e_i$ in $\skel(\mu_i)$.
Notice that reference and virtual edges encode the same information: two subgraphs of $H$ and how they attach to each other. Indeed, a reference edge of some node is just another virtual edge with the additional property of pointing to the parent of that node.
We say that $\mu$ is the \emph{pertinent node} of $e_i \in E(\skel(\mu_i))$ (or that $e_i \in E(\skel(\mu_i))$ pertains to $\mu$), and that $\mu_i$ is the pertinent node of $e_i \in E(\skel(\mu))$ (or that $e_i \in E(\skel(\mu))$ pertains to $\mu_i$).

\textbf{Additional definitions.}
For simplicity, we will omit Q-nodes from the SPQR tree.
This amounts to replacing every virtual edge pertaining to a Q-node by a \emph{real edge} and removing every Q-node from the tree. The edges of a skeleton are then either real or virtual.

Suppose now that $\nu$ is the parent of $\mu$ in $T$. Let $e_\nu \in \skel(\mu)$ be the edge pertaining to $\nu$ and let $e_\mu \in \skel(\nu)$ be the edge pertaining to $\mu$. Let $\{s,t\}$ be the endpoints of $e_\nu$ and $e_\mu$. Deleting the edge $\{\nu,\mu\}$ from $T$ disconnects $T$ into two subtrees, $T_\nu$ containing $\nu$ and $T_\mu$ containing $\mu$. The \emph{expansion graph} of $e_\nu$, denoted as $\expansion(e_\nu)$, is the subgraph induced in $H$ by the real edges contained in the skeletons of the nodes in $T_\nu$. The graph $\expansion(e_\mu)$ is defined analogously with respect to $T_\mu$. The expansion of a real edge is the graph consisting of that edge alone.

\begin{sloppypar}
Each edge of $T$ (importantly, without Q-nodes) encodes a separation of $H$ (i.e., $(V(\expansion(e_\nu)),V(\expansion(e_\mu)))$ is a separation of $H$). Further, we have $V(\expansion(e_\nu)) \cap V(\expansion(e_\mu)) = V(\skel(\nu)) \cap V(\skel(\mu)) = \{s,t\}$ and $\expansion(e_\nu)\cup \expansion(e_{\mu})=H$.
For every node $\mu$ of $T$ whose skeleton has edge set $\{e_1,\dots,e_k\}$, the graph $\bigcup_{i=1}^k \expansion(e_i)$ is exactly $H$.
In SPQR trees no two S-nodes and no two P-nodes are adjacent~\citep{di1996line}.
\end{sloppypar}

The next two statements are well known results about SPQR trees. \Cref{lem:spqr-total-size} below is given in a context where Q-nodes are part of the tree. Clearly, removing Q-nodes maintains the bounds.

\begin{lemma}[SPQR trees and separation/split pairs]\label{lem:spqr-tree-contains-split-pairs}
    Let $H$ be an undirected 2-connected graph and let $T$ be its SPQR trees with Q-nodes omitted. For each S-node $\mu$ of $T$, let $X_\mu$ denote the set of all pairs of nonadjacent vertices in $\skel(\mu)$. Then the union of the virtual edges over the skeletons of the nodes of $T$ together with the union of all the $X_\mu$ is exactly the set of separation pairs of $H$. If Q-nodes are included in the tree, then this union is exactly the set of split pairs of $H$.
\end{lemma}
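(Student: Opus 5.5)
We prove the two inclusions separately, using only the recursive definition of the SPQR tree and the separation property of tree edges recorded after the definition. Throughout, we work with $T$ without Q-nodes and treat the version with Q-nodes at the end.

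\textbf{Every listed pair is a separation pair.} First take a virtual edge $e_\nu=\{s,t\}$ associated with a tree edge $\{\nu,\mu\}$. By the property stated after the definition, $(V(\expansion(e_\nu)),V(\expansion(e_\mu)))$ is a separation of $H$ whose two sides intersect exactly in $\{s,t\}$. Since Q-nodes are omitted and $e_\nu$ is virtual, each side contains a vertex outside $\{s,t\}$: each skeleton is biconnected with at least three edges, so an expansion that is not a single real edge has a third vertex. Hence $H-s-t$ is disconnected.

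Next take a nonadjacent pair $\{u,v\}$ in the cycle $\skel(\mu)$ of an S-node. Removing $u$ and $v$ splits the cycle into two nonempty arcs. Each arc has an internal vertex, since $u$ and $v$ are nonadjacent in the cycle. Expanding the edges of each arc, which by the separation property meet only at skeleton vertices, gives two vertex sets with no edges between them. So $\{u,v\}$ is a separation pair.

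\textbf{Every separation pair is listed.} Let $\{u,v\}$ be a separation pair. We induct along the recursive definition, i.e.\ on the number of edges, following the node whose decomposition "sees" $\{u,v\}$.

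At the root with reference edge $\{s,t\}$ there are three cases.
\begin{itemize}
\item If $\{u,v\}=\{s,t\}$ and it has at least three split components, or two nontrivial ones, then it appears as a virtual edge of a P-node or of a child.
\item In the series case, $u$ and $v$ either both lie in one block $H_i\cup e_i$, which is handled by induction in the child, or they are cutvertices $c_i,c_j$. Such cutvertices are either nonadjacent in the S-skeleton, giving a pair in $X_\mu$, or adjacent, in which case they are endpoints of a virtual edge.
\item In the rigid case, we use the defining property of maximal split pairs. Any split pair of $H$ other than $\{s,t\}$ lies inside some $H_i\cup e_i$ together with $s_i,t_i$, or equals some $\{s_i,t_i\}$. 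It cannot be a split pair of the triconnected-like skeleton.
\end{itemize}

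\textbf{Main obstacle.} The delicate step is the last one: showing that a separation pair of $H$ either is a maximal split pair or is contained in a single $H_i\cup e_i$ and remains a separation pair there. Otherwise the pair would separate $\skel(\mu)$, contradicting that the R-skeleton is triconnected.

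For this step we would first argue as follows. If $u\in V(H_i)\setminus\{s_i,t_i\}$ and $v\notin V(H_i)$, then $H_i$ stays connected to $\{s_i,t_i\}$ after removing $u$, by biconnectivity of $H_i\cup e_i$. Hence the separation would be inherited by the skeleton. Separations inside $H_i$ transfer to $H_i\cup e_i$ because the edge $e_i$ can be emulated by a path through the rest of $H$. Conversely, separations of $H_i\cup e_i$ transfer to $H$ because the rest of $H$ attaches only at $\{s_i,t_i\}$.

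\textbf{With Q-nodes.} Real edges become virtual edges of Q-nodes, and split pairs are precisely separation pairs together with edges. The second statement therefore follows immediately from the first.
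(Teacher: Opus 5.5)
The paper does not prove this lemma; it quotes it as a known fact about SPQR trees (Di Battista--Tamassia). Your induction along the recursive definition is therefore a self-contained substitute. Its architecture is the standard one: tree-edge separations give one inclusion, and for the other you either push a separation pair into a child or rule it out by triconnectivity of the R-skeleton.

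\textbf{The gap in the first inclusion.} The step ``each skeleton is biconnected with at least three edges, so an expansion that is not a single real edge has a third vertex'' fails for P-nodes, whose skeleton has only the two vertices $s,t$. On the P-node side, a vertex outside $\{s,t\}$ must come from another \emph{virtual} edge of the P-node, leading to an S- or R-node. This fails exactly when all other edges of the P-node are real, which requires parallel real edges. The paper explicitly allows parallel edges in undirected graphs, and $U(G)$ has them, e.g.\ from $\{u+,v-\},\{u-,v+\}$.

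So this is not cosmetic. Take the cycle $s,a,b,t$ with the edge $\{s,t\}$ doubled. The root is a P-node with two real $\{s,t\}$-edges and one virtual edge to the S-node of the cycle. The expansion on the P-side is just the two parallel real edges, with no third vertex. Indeed $H-s-t$ is the single edge $\{a,b\}$, so the virtual pair $\{s,t\}$ is not a separation pair. The paper meets exactly this configuration later, in the proof of \Cref{prop:P-node} and in the edge case of \Cref{thm:snarls-correct}.

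You must therefore either assume $H$ simple, or state the exception. If $H$ is simple, your argument goes through once you add that a P-node has at most one real edge, hence another virtual edge pointing to an S- or R-node. If you state the exception instead, the Q-node version remains true but does not follow ``immediately'' from the first sentence. You need the extra observation that every exceptional virtual pair is a real edge of $H$, hence a split pair.

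\textbf{Holes in the converse.} This direction is a sketch whose case split is incomplete; each missing case is routine but needed.
\begin{itemize}
\item At a P-root you only treat $\{u,v\}=\{s,t\}$. A pair with $u,v$ interior to different split components must be shown non-separating: each piece of $H_i-u$ still reaches $s$ or $t$, and the reference edge joins them.
\item In the series case, the dichotomy ``both in one $H_i\cup e_i$, or both cutvertices'' omits $u$ interior to $H_i$ with $v\notin V(H_i)$. Your R-node argument covers it, and more easily here, since $H_i$ is itself 2-connected, so $H_i-u$ still joins $c_{i-1}$ to $c_i$.
\item For adjacent $c_{i-1},c_i$ you assert the skeleton edge between them is virtual. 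It may be real, in which case you must show that $H-c_{i-1}-c_i$ is connected.
\item In the rigid bullet, ``split pair'' should read ``separation pair''. Real edges of the R-skeleton are split pairs of $H$ lying in no $H_i\cup e_i$.
\end{itemize}
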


\begin{lemma}[SPQR trees require linear space~\citep{onlinegraphalgorithms}]
\label{lem:spqr-total-size}
    Let $H=(V,E)$ be an undirected biconnected graph. The SPQR tree $T$ of $H$ has $O(|V(H)|)$ nodes and the total number of edges in the skeletons is $O(|E(H)|)$.
\end{lemma}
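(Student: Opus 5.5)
The plan is to bound the number of tree nodes by charging each node to graph structure, and then to bound the total skeleton size by counting real and virtual edges separately. Throughout I work with the version of $T$ that includes Q-nodes, since removing Q-nodes only deletes nodes and replaces virtual edges by real ones. If $H$ is a single edge or two parallel edges, the claim is trivial, so assume otherwise.

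\textbf{Edges first.} Every real edge of $H$ lies in exactly one skeleton, namely that of its Q-node, or after omission in exactly one non-Q skeleton. Every tree edge $\{\mu,\nu\}$ contributes exactly two virtual edges, $e_\mu$ and $e_\nu$. Hence the total skeleton size is $|E(H)| + 2(|N(T)|-1)$, where $N(T)$ is the node set of $T$. This reduces everything to bounding the number of nodes, and it suffices to show $|N(T)| = O(|V(H)|)$; the edge bound $O(|E(H)|)$ then follows because $|V(H)| \le |E(H)|+1$.

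\textbf{Nodes.} First, P-nodes and S-nodes cannot be adjacent to nodes of their own type. So it suffices to bound the R-nodes and S-nodes, together with the P-nodes that are not merely parallel bundles of real edges. The number of Q-nodes equals $|E(H)|$, and these disappear after omission.

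For the non-Q nodes I would use a contraction argument. Every non-Q skeleton has at least three edges, and each leaf of the Q-less tree contains at least two real edges in its skeleton. I would then show that the number of leaves is at most $|V(H)|$ by a charging argument on "private" vertices. Concretely, each leaf $\mu$ that is an S-node or R-node has a vertex of $\skel(\mu)$ not in the parent separation pair, and that vertex belongs to no other skeleton except neighbours sharing it through other virtual edges. Since a leaf has only one virtual edge, its non-pole vertices are private to it. A P-node leaf would consist only of parallel real edges, so it contains no vertex of its own. Such P-node leaves must be handled by noting that these are multi-edges between two vertices: each such bundle sits under a distinct virtual edge of an adjacent non-P node, and can be charged to that node.

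For the internal nodes, the nodes of degree at least $3$ number at most the number of leaves. Degree-$2$ nodes form chains that alternate in type. In such a chain, each S-node or R-node has a skeleton with at least three vertices and only two virtual edges, so it contains either a private vertex or a real edge. P-nodes of degree $2$ have at least three parallel edges, so they contain a real edge. Private vertices are charged to $|V|$, and real edges are charged to $|E|$, which gives an $O(|E|)$ count.

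\textbf{Main obstacle.} To obtain $O(|V|)$ rather than $O(|E|)$, I would invoke the standard fact that a biconnected graph can be reduced to a simple graph by removing parallel edges without changing the tree except at P-nodes. For a simple graph I would use the classical bound from Di Battista--Tamassia that the number of nodes is $O(|V|)$, via counting 3-blocks: each separation pair creates at most a linear number of split components in total. Since the lemma is cited from the literature, I would rely on that reference for this step, and verify only the edge-counting identity directly.
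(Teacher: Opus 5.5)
Your skeleton-size argument is sound. In the Q-less tree the skeletons have $|E(H)|+2(|N(T)|-1)$ edges in total. Your leaf/degree-$2$ charging then bounds $|N(T)|$ by $O(|V(H)|+|E(H)|)$, provided a parallel-bundle P-leaf is charged to its own (at least two) real edges rather than to its neighbour.

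The genuine gap is the first claim, $O(|V(H)|)$ nodes. Charging P-leaves to the adjacent S- or R-node cannot give it, because an R-node $\mu$ can have up to $\binom{|V(\skel(\mu))|}{2}$ virtual edges, each pertaining to such a leaf. In fact no argument can close this step while parallel edges are allowed, as they are in the paper's preliminaries (and in $U(G)$ whenever $G$ contains both $uv$ and $vu$). For example, let $H$ be $K_k$ with every edge doubled. Its SPQR tree consists of one R-node whose skeleton is $K_k$ with all edges virtual, plus $\binom{k}{2}$ P-nodes each holding two parallel real edges, so it has $\Theta(|V(H)|^2)$ nodes. Your fallback of deleting parallel edges and quoting the simple-graph bound fails for the same reason: the tree does change only at P-nodes, but that is exactly where the quadratic blow-up occurs. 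The node bound therefore needs a simplicity hypothesis, or must be restricted to S- and R-nodes. In general only $O(|E(H)|)$ nodes holds, which is what the running-time arguments actually use.

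Even for simple $H$, your degree-$2$ charging (``a private vertex or a real edge'') only yields $O(|E(H)|)$. The remark that each separation pair creates at most linearly many split components is not an argument, so, like the paper, you end up just citing Di Battista--Tamassia. The missing idea is a global vertex count over the Q-less tree, $\sum_{\mu}\bigl(|V(\skel(\mu))|-2\bigr)=|V(H)|-2$. It holds because the nodes whose skeletons contain a fixed vertex $x$ form a subtree $T_x$, and adjacent skeletons share exactly the two poles of their common virtual edge. Hence $\sum_\mu|V(\skel(\mu))|-2(|N(T)|-1)=\sum_x\bigl(|N(T_x)|-|E(T_x)|\bigr)=|V(H)|$. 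Since S-skeletons have at least three vertices and R-skeletons at least four, there are at most $|V(H)|-2$ S- and R-nodes in any biconnected multigraph. If $H$ is simple, a P-skeleton contains at most one real edge, so every P-node has degree at least $2$ in the tree. Write $n_P$ and $n_{SR}$ for the numbers of P-nodes and of S- and R-nodes. Since no two P-nodes are adjacent, counting tree edges gives $2n_P\le n_P+n_{SR}-1$. Hence $n_P<n_{SR}$, and there are $O(|V(H)|)$ nodes in total.
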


The next simple result is merely technical and will be used throughout in the paper.

\begin{lemma}
\label{lem:reaches-in-expansion}
    Let $G$ be a bidirected graph, $H$ be a 2-connected subgraph of $G$, and $T$ be the SPQR tree of $H$. Let $\mu$ be a node of $T$, $e=\{u,v\}$ be a virtual edge of $\skel(\mu)$, and $a \in V(H) \setminus \{u\}$ be a vertex. If $a \in V(\expansion(e))$ then $\expansion(e)$ has an $a$-$v$ path avoiding $u$.
\end{lemma}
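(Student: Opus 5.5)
We argue inside $\expansion(e)$ and use only that it is connected and contains $u$ and $v$. Recall that $e=\{u,v\}$ is virtual, so it pertains to a node $\nu$ adjacent to $\mu$ in $T$. Hence $\expansion(e)$ is the subgraph of $H$ induced by the real edges in the skeletons of the component $T_\nu$ of $T$ minus the tree edge $\{\mu,\nu\}$. By the separation property stated after the definition of expansions, $(V(\expansion(e)),V(\expansion(e')))$ is a separation of $H$, where $e'$ is the twin virtual edge in $\skel(\nu)$. Moreover, $V(\expansion(e))\cap V(\expansion(e'))=\{u,v\}$.

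The plan is to show that $\expansion(e)-u$ is connected. Once that holds, the case $a=v$ is trivial, and for $a\neq u,v$ any $a$-$v$ path in $\expansion(e)-u$ is the required path. Let $C$ be the component of $\expansion(e)-u$ containing $a$, and suppose for contradiction that $v\notin C$. Every edge of $H$ incident to a vertex of $C$ must then go to $u$ or stay inside $C$. It cannot stay inside $\expansion(e)$ while leaving $C$, since that would contradict the choice of $C$. It cannot go into $\expansion(e')$, since the separation forces any such edge to have its $\expansion(e)$-endpoint in $\{u,v\}$, and $C$ contains neither $u$ nor $v$. Here we use that the expansions partition $E(H)$, so every edge at a vertex of $C\subseteq V(\expansion(e))\setminus\{u,v\}$ lies in $\expansion(e)$.

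It follows that $u$ separates $C$ from $v$ in $H$. Since $v\in V(H)\setminus C$, this makes $u$ a cutvertex of $H$. That contradicts the 2-connectivity of $H$, unless $H$ is degenerate, which cannot happen because $H$ has a nontrivial SPQR tree. Hence $v\in C$ and the path exists.

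The main obstacle is the bookkeeping step showing that no edge leaves $C$ other than through $u$. This relies on the fact that $\expansion(e)$ is an edge-induced subgraph and that every real edge belongs to exactly one skeleton, so that $\expansion(e)\cup\expansion(e')=H$ with the two expansions sharing only the vertices $u$ and $v$. I would state this edge partition explicitly from the recursive construction of $T$ before running the argument above. Note also that $a\in V(\expansion(e))$ guarantees $a$ is a vertex of that edge-induced graph, so $C$ is well defined.
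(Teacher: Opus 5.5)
Your proof is correct and takes essentially the same route as the paper. Both argue that, since $\{u,v\}$ separates $\expansion(e)$ from the rest of $H$, a failure of the claim would make $u$ a cutvertex of the 2-connected graph $H$. The paper phrases this through $a$-$v$ paths leaving a split component of $\{u,v\}$, while you phrase it as the component of $a$ in $\expansion(e)-u$ being closed in $H-u$, and you spell out the edge-partition bookkeeping more explicitly.
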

\begin{proof}
    If $a=v$ we are done, so $a \neq v$. Suppose for a contradiction that every $a$-$v$ path in $\expansion(e)$ contains $u$. Since $a$ is contained in a split component of $\{u,v\}$, every $a$-$v$ path in $H$ also contains $u$. So $u$ is an $a$-$v$ cutvertex in $H$, contradicting the fact that $H$ is 2-connected.
\end{proof}

\paragraph{A remark on notation.}
To simplify the writing we will refer to connectivity properties of the underlying undirected graph of a bidirected graph by saying that the bidirected graph itself has the property, as to minimize the use of the cumbersome notation $U(G)$.
For instance, if $G$ is a bidirected graph such that $U(G)$ is 2-connected and where $x$ is a $u$-$v$ cutvertex, then we say that $G$ is 2-connected and that $x$ is a $u$-$v$ cutvertex with the meaning that $U(G)$ has those properties. The only possible ambiguity arising from this choice is on the notion of ``walk''. For that, we carefully specify what kind of walk we are referring to by using the terms ``(bi)directed'' and ``undirected'' as they were defined previously; only when it is clear from the context, we allow ourselves to simply say ``walk''.

We also build block-cut trees and SPQR trees directly on bidirected graphs (connectivity is seen from their underlying undirected graphs).
The edges contained in the blocks of block-cut trees, in the skeletons of the nodes of SPQR trees, in split components, etc, additionally encode their relevant properties in the (bi)directed graph they live in.
This applies also to the $\expansion$ operator in SPQR trees.

We will routinely solve subproblems on the skeletons whose edges are assigned directions depending on the reachability relation of $G$ restricted to their respective expansions. Formally, let $\mu$ be a node of $T$ and let $e_1=\{s_1,t_1\},e_2=\{s_2,t_2\},\dots,e_k=\{s_k,t_k\}$ be the edges of $\skel(\mu)$ $(k \geq 2)$.
Define the set of directed edges $B_1 = \{ s_it_i : \text{$s_i$ reaches $t_i$ in $\expansion(e_i) ,\; i=1,\dots,k$} \}$ and $B_2 = \{ t_is_i : \text{$t_i$ reaches $s_i$ in $\expansion(e_i),\; i=1,\dots,k $} \}$. We define the \emph{directed skeleton} of $\mu$ as $\dirskel(\mu) := (V(\skel(\mu)),B_1 \cup B_2)$.


\section{Overview of our results and techniques}
\label{sec:overview}

In this section, we will give a high-level description of the results and the techniques behind them. We start by informally defining the bubble-like structures (or just bubbles) we are interested in, as they will be formally defined in~\Cref{sec:superbubbles,sec:snarls,sec:ultrabubbles}, respectively. Next, we give more details on how we handle the bubbles of each type.
In this paper we assume that (bi)directed graphs have no parallel edges since they have no effect on superbubbles, ultrabubbles, and snarls (two edges $\{x\alpha,y\beta\}$ and $\{z\gamma,w\delta\}$ are parallel if $x=z$, $\alpha=\gamma$, $y=w$, and $\beta=\delta$).

\subsection{Bubble-like subgraphs} All bubbles we consider are characterized by two vertices $u$ and $v$ that are the ``extremities'', or ``endpoints`` of the bubble. Intuitively, if one enters a bubble from the outside of the bubble, they have to go through an extremity and similarly exit through an extremity. In other words, the extremities form a 2-separator of the underlying undirected graph in the sense that their removal separates the interior of the bubble from the rest of the graph (except fo special cases which we can handle separately). We actually require a mildly stronger property that is formalized under the notion of \emph{splitting}, where we pick a vertex $v$ and a direction (for directed graphs) or a vertex-side $v+$ or $v-$ (for bidirected graphs), create a copy $v'$ of $v$, and finally detach the edges of the opposing direction/vertex-side from $v$ and reattach them to $v'$. This is illustrated in \Cref{fig:bubbles}~(c). We then require that a bubble characterized by the extremities $u$ and $v$ has $u$ and $v$ in the same connected component that is \emph{separated} from $u'$ and $v'$ after splitting $u$ and $v$.
Further, all our bubbles have to be \emph{minimal}, intuitively meaning that they are not obtainable by concatenating smaller bubbles.

\emph{Snarls} are precisely defined by these separability and minimality properties in bidirected graphs, with examples provided in \Cref{fig:bubbles}~(b)--(c). Snarls are relatively weak bubbles in the sense that they lack any assumptions about their interior such as all internal vertices being reachable from an extremity in (bi)directed sense. In contrast, \emph{superbubbles} and \emph{ultrabubbles} require that the component containing $u$ and $v$ after splitting them has no cycloids and that for any internal vertex $w$ there is a (bi)directed walk (path, in fact) from one extremity to another that goes through~$w$. Superbubbles and ultrabubbles are illustrated in \Cref{fig:bubbles}~(a) and (d), respectively.

\begin{figure}[t]
    \centering
    \includegraphics[width=\linewidth]{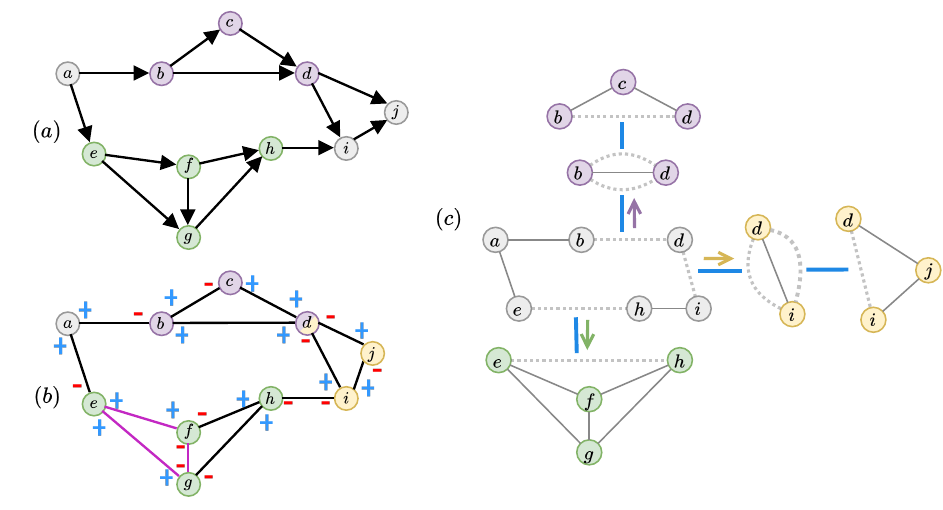}
    \caption{Superbubbles, snarls and ultrabubbles in a directed or bidirected graph naturally map to the SPQR tree of their underlying undirected graph. In (a) we have a directed graph with superbubbles $(b,d)$ (violet) and $(e,h)$ (green). (Note that also $(a,i)$ is a superbubble, but since it is the whole graph, it will be handled separately.) In (b) we have a bidirected graph with the same underlying undirected graph as the graph in (a). We have that $\{b+,d+\}$ (violet), $\{d-,i+\}$ (yellow) are snarls, that are also ultrabubbles. The pair $\{b+,i+\}$ is not a snarl because of minimality. The snarl $\{e+,h+\}$ (green) is not an ultrabubble as it contains a cycloid (purple) in its interior (alternating signs at all vertices except $e$). In (c) we show the SPQR tree of the underlying undirected graph of these graphs, where the different bubble subgraphs are shown with the same color as in the graphs in (a) and (b). Some snarls, such as the single-edge snarl $\{h-,i-\}$ do not arise from a 2-separator, but are handled separately when traversing the SPQR tree (in this case, when processing the middle S-node).}
    \label{fig:spqr-tree-bubbles}
\end{figure}

\subsection{Superbubbles} Since (nearly all) superbubbles correspond to some $2$-separator of the graph, our main technique is to exploit the decomposition of the $2$-separators provided by the SPQR tree, encoded by its virtual edges (see \Cref{fig:spqr-tree-bubbles}). For each $2$-separator, we need to decide whether the subgraph induced by the vertex set $C$ that the virtual edge points to corresponds to a superbubble, but we also need the other direction of whether the subgraph induced by the complement of $C$ is a superbubble. Most of our efforts in finding superbubbles and ultrabubbles is in computing the required properties in these ``two sides'' of the separations. (Exceptionally, P-nodes require special treatment since they encode many separations alone, but ultimately do not raise any issues due to their fixed topology.)

To solve these problems, we start by observing that the property of the desired walks existing inside the superbubble is equivalently captured by the lack of internal sources and sinks supposing that we know the graph to be acyclic. If the induced subgraph corresponding to some virtual edge contains cycles, sources, or sinks, then so do all of its induced supergraphs. Therefore, we traverse the SPQR tree with depth-first search starting from an arbitrary root~$r$, compute the information for the subgraphs, and finally combine the subresults to deduce the acyclicity and the existence of sources and sinks. 
For high-level visualization, see \Cref{fig:phase1}. To identify superbubbles from this information (and also to identify other bubbles), we then need to perform careful case analysis on how they can manifest in each type of a node of the SPQR tree.

\begin{figure}[t]
    \centering
    \includegraphics[width=\linewidth]{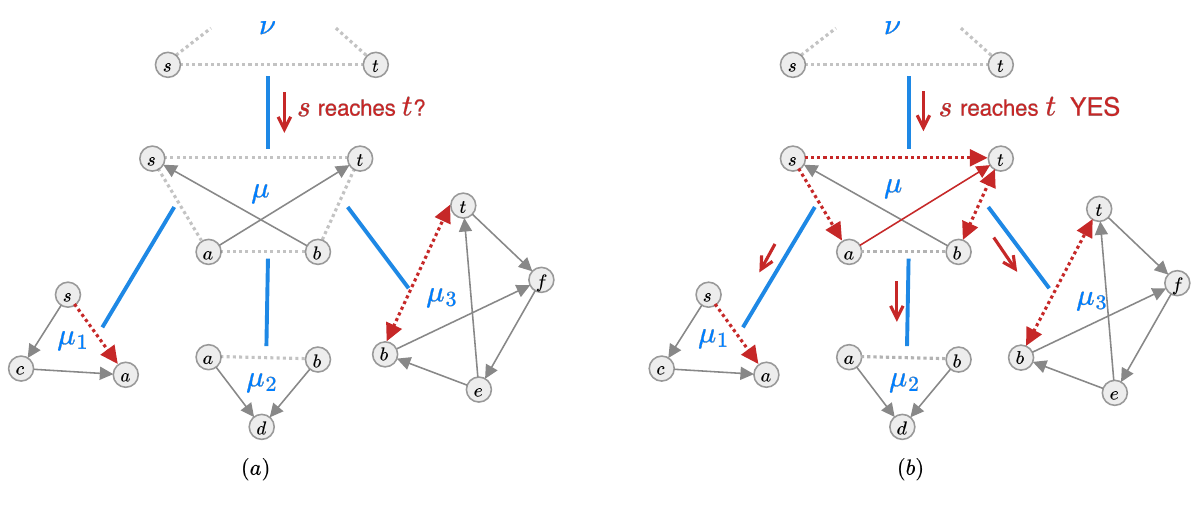}
    \caption{Phase~1 of the superbubble-finding algorithm (bottom-up DFS traversal of the SPQR tree). In~(a), the SPQR tree is shown with an R-node~$\mu$ having children $\mu_1, \mu_2, \mu_3$ and parent~$\nu$. The skeletons are depicted with dotted edges (virtual) and solid edges (real, maintaining the directions in the input directed graph). In this example, the goal is to determine whether $s$ reaches~$t$ in $\expansion(e_\mu)$, knowing the reachabilities between the 2-separators $\{s,a\}$, $\{a,b\}$, and $\{b,t\}$ in the children $\mu_1$, $\mu_2$, and $\mu_3$ (in red), respectively. In~(b), retrieving these information from the children, each expansion is collapsed into a directed edge in $\dirskel(\mu)$ (red arrows). A traversal of this directed skeleton confirms that $s$ reaches~$t$ (notice that the algorithm does not require an explicit traversal in order to determine this reachability relation, see \Cref{lem:phases-correct}).}
    \label{fig:phase1}
\end{figure}

\begin{figure}
    \centering
    \includegraphics[width=0.5\linewidth]{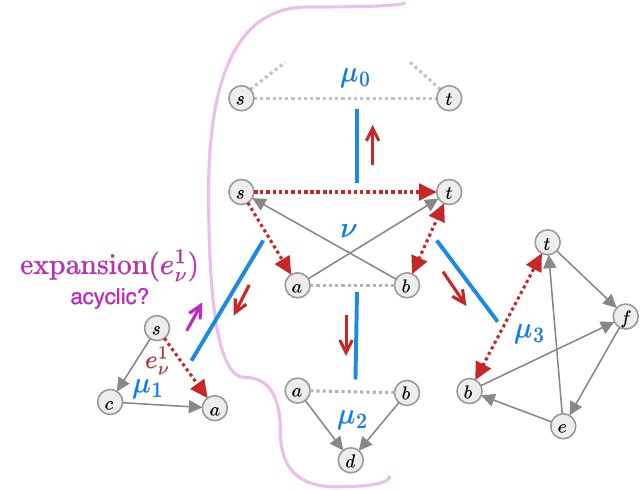}
    \caption{Phase~2 of the superbubble finding algorithm (BFS from the root).
    The algorithm processes node~$\nu$ and determines for each neighbor~$\mu_i$ whether $\expansion(e_\nu^i)$ is acyclic, for $\iink$.
    The red arrows represent states already computed (due to Phase 1 and the BFS traversal order in Phase 2).
    The purple region shows $\expansion(e_\nu^1)$.
    To handle all these tree nodes/states in linear-time in the size of $\skel(\nu)$, the feedback-arcs of $\dirskel(\nu)$ are computed.}
    \label{fig:phase2}
\end{figure}

This procedure identifies the superbubbles for each $2$-separator in one direction---in the separated component without the vertices of $r$---but not in the other. For that direction, we instead need to know that if we were to remove the subgraph corresponding to the virtual edge, would the remaining subgraph be a superbubble.
The main idea here is that if the subgraph corresponding to a virtual edge between $u$ and $v$ is acyclic and has no sources and sinks, then we can ``collapse'' it into a single arc whose direction is determined by whether all walks in the subgraph go from $u$ to $v$ or from $v$ to $u$. If we then collapse all the virtual edges of the node, then identifying the remaining superbubbles reduces to finding the feedback arcs among the collapsed virtual edges, that is, arcs whose removal makes the graph acyclic. Such arcs are computable in linear time in the number of arcs \citep{garey1978linear}, and SPQR trees contain only linearly many virtual edges in the size of the input. The process is illustrated in \Cref{fig:phase2}. Consequently, we get a linear-time algorithm for identifying all superbubbles.


\begin{theorem}
    The superbubbles of a directed graph $G$ can be computed in time $O(|V(G)| + |E(G)|)$.
\end{theorem}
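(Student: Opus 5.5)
We reduce to 2-connected pieces, then use two linear traversals of the SPQR tree: a bottom-up pass computing properties of each expansion, and a top-down pass computing the same properties of the complementary side.

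\textbf{Reduction to biconnected components.} Compute the block-cut tree of $U(G)$ in linear time. Every superbubble $(s,t)$ with $s\neq t$ has an interior that, after splitting $s$ and $t$, is separated from the rest of the graph. So either $\{s,t\}$ is a 2-separator inside a single block, or one of $s,t$ is a cutvertex and the superbubble is a union of blocks along a path of the block-cut tree. The special cases are those where the bubble is a whole block, a single edge (a bridge), or a chain of blocks glued at cutvertices. We handle them by the same dynamic programming on the block-cut tree, using the same states as below: acyclic, no inner extremity, $s$ reaches $t$. So it suffices to treat a biconnected $H=U(G')$ with its SPQR tree $T$ (of linear size by \Cref{lem:spqr-total-size}), rooted at an arbitrary node.

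\textbf{Bottom-up pass.} For each tree edge $\{\nu,\mu\}$, with $\mu$ the child and virtual edge $e_\nu\in\skel(\mu)$, compute the states of $\expansion(e_\mu)$, the side below $\mu$:
\begin{itemize}
\item $\acyclic{\cdot}{\cdot}$;
\item $\noextremity{\cdot}{\cdot}$, meaning no source or sink other than the separator vertices;
\item $\reachesst{\cdot}{\cdot}$ and $\reachests{\cdot}{\cdot}$.
\end{itemize}
For a node $\mu$ these follow from the children's states together with the real edges of $\skel(\mu)$, via $\dirskel(\mu)$. Acyclicity of the expansion holds iff every child expansion is acyclic and $\dirskel(\mu)$ minus the parent edge is acyclic. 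This uses that each child expansion is 2-connected with its separator, so any cycle through several expansions projects to a cycle in the directed skeleton (\Cref{lem:reaches-in-expansion} supplies the needed paths). Sources and sinks are checked from degree information at the skeleton vertices. Reachability is a single DFS on $\dirskel(\mu)$. Each node costs $O(|\skel(\mu)|)$, so the pass is linear.

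\textbf{Top-down pass (main obstacle).} We also need the states of the complementary side $\expansion(e_\nu)$ for every child $\mu_i$ of $\nu$. Naively this means that, for each of the $k$ children, we drop one edge of $\dirskel(\nu)$ and retest acyclicity, which is quadratic in the skeleton size. The fix is as follows.
\begin{itemize}
\item Combine the parent's already-known complementary states with the children's bottom-up states.
\item Observe that the complement of $\mu_i$ is acyclic iff all other pieces are acyclic and $e_i$ is a feedback arc of $\dirskel(\nu)$ (or $\dirskel(\nu)$ is already acyclic). Handle the "all other pieces" condition by counting how many pieces fail.
\item Compute all feedback arcs of $\dirskel(\nu)$ at once in time linear in its size, using the algorithm of~\citep{garey1978linear}.
\item Get the no-extremity states by counting, at each skeleton vertex, the in- and out-edges contributed by each piece, then subtracting piece $i$'s contribution.
\item Get reachability between $s_i$ and $t_i$ avoiding $e_i$ the same way. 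This is easy in S-nodes and P-nodes, where it reduces to prefix/suffix aggregates over a cycle or a parallel bundle. In R-nodes, note that a superbubble side must be acyclic, so we only need it when the remainder of the graph is a DAG, and there a path $s_i\to t_i$ avoiding the arc $s_it_i$ exists iff $s_it_i$ is not the unique such path. We will decide this from DAG dominator or topological-order information, again in linear time.
\end{itemize}

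\textbf{Reporting.} Finally, for every virtual edge side, and for every pair of nonadjacent vertices in an S-node (by \Cref{lem:spqr-tree-contains-split-pairs}, using prefix aggregates along the cycle), test the superbubble conditions: acyclic, no inner extremity, $s$ reaches $t$, correct orientation at $s$ and $t$, and minimality. Minimality must be checked locally as "no intermediate vertex forms a superbubble with $s$". In S-nodes this means taking only consecutive valid cut vertices. In P-nodes we group the parallel edges by orientation.

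The main risk is the correctness of the projection lemma linking cycles in $G$ to cycles in $\dirskel$ and to feedback arcs; we will prove it carefully using 2-connectivity of each expansion with its separator.
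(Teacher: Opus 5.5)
Your architecture matches the paper's: blocks, SPQR tree, a bottom-up and a top-down pass over tree edges, Garey--Tarjan feedback arcs on $\dirskel(\nu)$, and grouping of P-node pieces. However, the step you flag as the main obstacle is not resolved. For the complement side of each child $\mu_i$ of an R-node $\nu$, you want to decide, for every arc $e_i=s_it_i$ of the DAG $\dirskel(\nu)$, whether $s_i$ still reaches $t_i$ in $\dirskel(\nu)-e_i$, using ``dominator or topological-order information''. That is exactly deciding which arcs of a DAG are redundant, i.e.\ computing its transitive reduction, which is at least as hard as Boolean matrix multiplication (Aho, Garey and Ullman). Dominator trees are rooted at one source and do not answer per-arc queries with varying sources, a topological order does not either, and knowing the graph is a DAG does not help.

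The missing idea is to couple the states as the paper does. Acyclicity and reachability of a side are only needed when that side also has no inner extremity, because an inner extremity of a side stays interior to every side containing it, so otherwise the states can be set to $\Null$. Under that condition, \Cref{lem:unique-poles-from-acyclicity} gives that one of $s_i,t_i$ is the unique source and the other the unique sink of the side. The reachability state then becomes a constant-time local test: does $s_i$ have an out-arc in $\dirskel(\nu)-e_i$, which you can maintain by counts. The same lemma also replaces your DFS in the bottom-up pass.

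There are also three correctness omissions.
\begin{enumerate}
\item As you defined it, the no-extremity state only excludes sources and sinks. It must also exclude interior vertices that are cutvertices of $G$. Otherwise you would report, e.g., an acyclic R-node side with $s$ the unique source and $t$ the unique sink, even though an interior vertex $a$ has a pendant edge $aw$ into another block; this violates condition~4 of \Cref{def:superbubbloid-gartner}. Conversely, by \Cref{lem:bubbles-cutvertices} no superbubble has an interior cutvertex, so your ``chain of blocks'' case is empty and the block-cut-tree DP is unnecessary.
\item Your case split misses trivial superbubbles $st$ inside a 2-connected block where $\{s,t\}$ is not a separation pair. An example is a real edge of an R-node skeleton with $N^+_G(s)=\{t\}$, $N^-_G(t)=\{s\}$ and $ts\notin E(G)$. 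Such edges never appear as virtual-edge sides, so you must test every edge against this predicate separately.
\item ``No intermediate vertex forms a superbubble with $s$'' is not a local test as stated, and you do not say how to check it for R-node sides. Minimality becomes local through \Cref{lem:superbubbloid-superbubble-cutvertex}: a superbubbloid is a superbubble iff no interior vertex is an $s$-$t$ cutvertex of its graph. Consequently, sides towards R-nodes and P-node groups of at least two pieces are automatically minimal. Sides towards S-nodes (\Cref{prop:superbubbles-in-S-nodes}) and nonadjacent S-node pairs (\Cref{prop:nonadjacent-S-node}) never are.
\end{enumerate}
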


\subsection{Snarls}

Because snarls only require separability and minimality, identifying them with the SPQR tree should intuitively be more straightforward than identifying ultrabubbles. On the other hand, the lack of structural requirements makes it possible for there to be quadratically many of them in the size of the input; this occurs for example in a clique of tips.
To solve this issue, we provide a novel characterization of snarls and then provide a concise representation of all snarls based on that, whose size is only linear in the size of the input.

We start by identifying a subset of cutvertices of the graph such that the extremities of a snarl cannot be in distinct connected components after \emph{splitting} of any of them. These cutvertices have a property which we call \emph{sign-consistency}: a sign-consistent vertex becomes a tip in each component that is created after being split (not necessarily with the same sign in each component). By splitting each of these vertices we obtain a set of disjoint graphs that we call \emph{sign-cut graphs}, which preserve the set of all original snarls and where every snarl has its extremities in a single sign-cut graph.

We then observe that the extremities of each snarl are either (i) a pair of tips or (ii) a pair of non-tips, within a (unique) sign-cut graph.
For the snarls of type (i), we compile a list of tips for each sign-cut graph, enabling us to capture a quadratic number of snarls with linear-sized lists. Crucially, there are only linearly many snarls of type (ii), which we find by analyzing the nodes of the SPQR tree. Ultimately, we obtain the next result.

\begin{restatable}{theorem}{maintheoremsnarls}
    \label{thm:main}
    Given a bidirected graph $G$, there exists a representation of the set of all snarls of $G$ consisting of sets $T_1, T_2, \dots, T_k$ and $S_1, S_2, \dots, S_\ell$, where
        \begin{enumerate}[nosep]
            \item each $T_i$ is a set of vertex-sides of $G$, and any pair of vertex-sides in $T_i$ identifies a snarl of $G$;
            \item each $S_i$ is a pair of vertex-sides $\{u\alpha, v\beta\}$ identifying a snarl of $G$;
            \item $\sum_{i=1}^{k} |T_i|=O(|V(G)|)$ and $\sum_{i=1}^{\ell} |S_i| = O(|V(G)|)$.
        \end{enumerate}
    Moreover, this representation can be computed in time $O(|V(G)| + |E(G)|)$.
\end{restatable}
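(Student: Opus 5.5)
We follow the outline sketched just above: reduce to sign-cut graphs, then treat tip-pairs and non-tip pairs separately. Throughout, a snarl $\{u\alpha,v\beta\}$ means that after splitting $u$ at $\alpha$ and $v$ at $\beta$, the vertices $u,v$ lie in a component separated from $u',v'$. It must also be minimal, in the sense that no vertex-side pair strictly inside it is separable in the same way.

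\textbf{Step 1: reduction to sign-cut graphs.} We compute the block-cut tree of each component of $U(G)$ in linear time. For each cutvertex $c$, we check in $O(\deg c)$ time whether $c$ is sign-consistent, i.e.\ whether all edges at $c$ inside each component of $G-c$ carry a single sign at $c$. We then split every sign-consistent cutvertex into one copy per component. This yields the disjoint sign-cut graphs, of total size $O(|V|+|E|)$, where each copy is a tip.

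We next prove two lemmas. First, a pair of vertex-sides is a snarl of $G$ if and only if it is a snarl of the unique sign-cut graph containing both extremities. The forward direction uses that splitting a sign-consistent cutvertex cannot separate the interior of a snarl from its extremities without violating minimality. The converse uses that the sign-cut pieces attach only through tips, so separability and minimality are preserved. Second, in each sign-cut graph every snarl is either a pair of tips or a pair of non-tips. The reason is that a tip $u$ with side $\alpha$ has an empty opposite side, so separating it is trivial, while a non-tip extremity forces a genuine 2-separation.

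\textbf{Step 2: tips.} In a sign-cut graph $C$, we show that any two tip vertex-sides $u\alpha,v\beta$ form a snarl exactly when the non-tip vertices of $C$ do not block minimality. Concretely, we will show that either every pair of tips of $C$ is a snarl or the structure forces $C$ to be essentially a single-edge or trivial case, which we enumerate directly. We then output $T_C$ as the set of tip vertex-sides of $C$ in the first case, or the explicit small list in the second case. Since each vertex is a tip in at most one sign-cut graph per copy, and the number of copies is bounded by the block-cut tree size, $\sum|T_i|=O(|V|)$.

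\textbf{Step 3: non-tips.} For pairs of non-tips, we first observe that $\{u,v\}$ must be a cutvertex-free separator pair of a block, or an edge. So we build the SPQR tree of each block of each sign-cut graph by \Cref{lem:spqr-total-size}, and examine the candidates given by \Cref{lem:spqr-tree-contains-split-pairs}: virtual edges, real edges, and nonadjacent pairs in S-nodes. For each virtual edge $\{u,v\}$, we need to know whether the signs of the edges at $u$ and $v$ restricted to each side of the separation are uniform. We compute this with one bottom-up and one top-down DP over the tree, storing for each tree edge and endpoint the set of signs seen in the expansion; this costs $O(1)$ per skeleton edge after rerooting.

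The case analysis per node type is as follows. In a P-node, only $O(\deg)$ groupings can be sign-uniform, because signs give at most two classes. In an S-node, quadratically many pairs are candidates, but minimality forces only consecutive valid cut positions to form snarls, giving $O(|\skel|)$ pairs. In an R-node, only the virtual and real edges are candidates. Minimality of each candidate is then checked by verifying that no intermediate valid separator lies on the relevant side, which is again local to the node. The total output is linear in the total skeleton size, and we argue it is in fact $O(|V|)$ since each reported pair is charged to a tree node or S-node vertex.

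\textbf{Main obstacle.} I expect the main obstacle to be Step 2 together with the minimality argument in Step 3. Proving that tip pairs inside a sign-cut graph are uniformly snarls, with no hidden minimality obstruction, requires the exact definition of minimality. Keeping the S-node output linear also requires a careful minimality argument.
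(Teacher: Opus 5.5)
Your overall plan (sign-cut graphs, tip--tip versus non-tip--non-tip snarls, SPQR-tree case analysis) is the paper's. However, Step~1 as you state it is wrong, and it breaks completeness. The paper (\Cref{def:sign-cut-graphs}) splits a sign-consistent cutvertex $c$ into exactly two copies: one keeps all edges at $c+$, the other all edges at $c-$. So all components of $G-c$ that attach to $c$ with the same sign stay glued together at $c$. If you split into one copy per component of $G-c$, you lose every tip--tip snarl whose extremities lie in different components attached to $c$ with the same sign. Take the out-star with centre $y$ and edges $\{y+,\ell_i-\}$, $i=1,\dots,n$, $n\ge 2$. Here $y$ is a sign-consistent cutvertex, and all $\binom{n+1}{2}$ pairs from $\{y+,\ell_1-,\dots,\ell_n-\}$ are snarls: the paper's sign-cut graph is $G$ itself and every vertex is a tip, so \Cref{thm:where-are-snarls-after-cutting}(1) applies. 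Your pieces are $n$ single edges, so your $T$-sets encode only $n$ of these snarls. Your lemma that every snarl lives in ``the unique sign-cut graph containing both extremities'' also fails. Grouping the components of $G-c$ by their sign at $c$ repairs this.

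With that repair, Step~2 needs no hedge and no ``explicit small list'': every pair of distinct tips $u\alpha,v\beta$ of a sign-cut graph $F$ is a snarl. Separability is immediate. For minimality, an intermediate non-tip $z$ cannot form a separable pair with the tip $u$ (\Cref{prop:tip-nontip-nonseparable}). The real reason is that a non-tip of $F$ has opposite signs inside a single block of $F$ (\Cref{prop:mixed-signs-inside-block}), so its two sides stay connected through that block. An intermediate tip $z$ of sign $\gamma$ also fails, since splitting $z\hat\gamma$ isolates $z$.

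In Step~3 the minimality checks are not purely local to a node either. The paper's key tool is \Cref{prop:snarls-disjoint-paths}: two internally vertex-disjoint $u$--$v$ paths in the component give minimality. This settles R--R tree edges and P-node groups of at least two skeleton edges. A P-node group consisting of a single edge pertaining to an S-node must instead be deferred to that S-node's good-vertex analysis.

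You also need an explicit no-dangling-blocks test at $u$ and $v$ (\Cref{prop:mixed-dangling-blocks}). Your DP over the SPQR tree of $H$ alone cannot see other blocks at $u$, and without this test you would report non-separable pairs. Finally, adjacent pairs that are not separation pairs need separate treatment (\Cref{prop:snarl-edge-case}). Real edges give $\Theta(|E(H)|)$ such candidates. The $O(|V(G)|)$ bound on the reported ones comes from charging each to a vertex-side that lies on only one edge, not to a tree node or S-node vertex.
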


For a concrete example, cutvertex $b$ is sign-consistent in \Cref{fig:bubbles}~(b). After splitting it, we would get a sign-cut graph with tips $a$, $b$, and $c$ and another sign-cut graph with tips~$b$ and $e$. The remaining snarls $\{d+, f-\}$ and $\{d-, f+\}$ correspond to a pair of non-tip vertices~$d$ and~$f$.

\subsection{Ultrabubbles}
Ultrabubbles were introduced as a canonical generalization of superbubbles to bidirected graphs. Their similarities were exploited in~\cite{Harviainen2026.03.28.714704} in order to get a linear-time algorithm for finding ultrabubbles. This algorithm works by converting the input bidirected graph into a directed graph such that an ultrabubble in the original instance corresponds to a ``weak''\footnote{We do not need the exact definition of \emph{weak superbubble} in this paper. The original definition can be found in~\citep{gartner2019direct}.} superbubble in the transformed instance, and vice-versa. As a corollary, they showed that ultrabubbles are ``directable'', i.e., they can be cast on a directed graph.
The approach of~\cite{Harviainen2026.03.28.714704} has the limitation of requiring the input graph to contain a tip or a cutvertex.

Although ultrabubbles are originally defined in a way where their resemblance with superbubbles is not entirely clear, both these objects share the following key reachability property: every vertex in the superbubble/ultrabubble lies in some directed/bidirected path between the extremities of the bubble. In fact, many if not essentially all the structural results we present in detail for superbubbles can be adapted to ultrabubbles in a straightforward manner. As directed graphs are more common in the literature, we first present our SPQR tree approach to find superbubbles in linear time and then go on to show how to adapt this algorithm to find ultrabubbles also in linear time.

To achieve this within our SPQR-tree-methodologies we must be able to perform the following two procedures in linear time: testing whether a bidirected graph has cycloids and finding its \emph{bidirected feedback edges}, that is, edges whose deletion destroys all cycloids (i.e., the resulting bidirected graph is acyclic).
We observe this to be impossible in general under the $k$-\textsc{Clique Conjecture} (see Conjecture 10 of~\cite{Kunnemann24}), which asserts in particular that a triangle---a clique of $3$ vertices---cannot be found in an undirected graph in time $O(|V|^{\omega - \epsilon})$ for the matrix multiplication exponent~$\omega$ and any $\epsilon > 0$. The result follows by a relatively simple reduction, where we essentially associate appropriate vertex-sides to each undirected edge of the graph such that any cycloid is an undirected triangle and vice versa; see \Cref{fig:triangle} for an example.

\begin{figure}[H]
    \centering
    \includegraphics[width=0.6\linewidth]{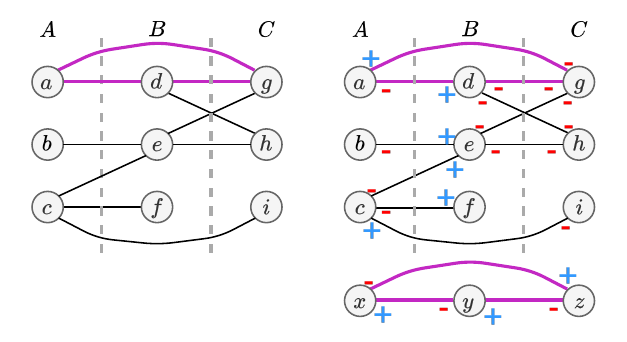}
    \caption{The reduction from finding triangles in tripartite graphs (left) to finding bidirected feedback edges, or finding cycloids, in bidirected graphs (right), where the vertices $x$, $y$, and $z$ are omitted in the latter case. Roughly speaking, vertex-sides are associated to each undirected edge based on the tripartite sets of the endpoints of the edges such that triangles correspond to closed walks. If a triangle exists then there are no bidirected feedback edges because there are two disjoint closed walks (in violet).}\label{fig:triangle}
\end{figure}

\begin{theorem}
    Let $G$ be a bidirected graph. Under the $k$-\textsc{Clique Conjecture}, we cannot decide if a bidirected graph is acyclic (i.e. it has no cycloid) or it has at least one bidirected feedback edge, in time $O(|V(G)|^{\omega - \epsilon})$ for any $\epsilon > 0$.
\end{theorem}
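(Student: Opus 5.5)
Given an undirected graph $H=(V_H,E_H)$, we reduce triangle detection to both questions at once. Under the $k$-\textsc{Clique Conjecture} with $k=3$, triangle detection cannot be done in time $O(|V_H|^{\omega-\epsilon})$, and this remains true for tripartite graphs. Indeed, the standard reduction makes three copies $V_1,V_2,V_3$ of $V_H$ and, for each edge $\{a,b\}$ of $H$ and each pair $i\neq j$, adds the edge between the copy of $a$ in $V_i$ and the copy of $b$ in $V_j$. This increases the vertex count only by a factor of $3$, and the tripartite graph has a triangle iff $H$ does. We therefore assume $H$ is tripartite with parts $A,B,C$.

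We construct a bidirected graph $G$ on $V(G)=A\cup B\cup C$, plus possibly a constant number of gadget vertices. An edge between $a\in A$ and $b\in B$ becomes $\{a+,b-\}$, an edge between $b\in B$ and $c\in C$ becomes $\{b+,c-\}$, and an edge between $c\in C$ and $a\in A$ becomes $\{c+,a-\}$. Then $G$ is in fact a directed graph oriented $A\to B\to C\to A$. Every bidirected walk alternates signs at internal vertices, so in this directed case it is a directed walk. A cycloid with exceptional vertex $x$ is a closed walk through $x$ that need not alternate there; since every vertex has only $+$ edges toward the next part and $-$ edges from the previous part, a non-alternating visit at $x$ would enter and leave through two edges of the same sign, so it goes back to a single adjacent part. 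This needs care. The cleaner choice, and the one I commit to, is to make the cycloid structure exactly the directed cycles of the 3-layer orientation while forbidding non-alternation. Any directed cycle has length divisible by $3$, and the shortest are triangles. To make \emph{only} triangles count, I instead use the layered version: take copies $A,B,C,A'$, with $A'$ a copy of $A$, orient $A\to B\to C\to A'$, and for each $a$ connect $a'\in A'$ back to $a$ by identifying them via signs. Concretely, I use a single vertex $a$ whose $-$ side receives the edges from $C$ and whose $+$ side emits the edges to $B$, so $a$ itself alternates. A closed walk must pass $A,B,C$ and return to the same $a$, and it has length exactly $3$ if it is simple. A cycloid is a path apart from the repeated start, so it has no repeated vertices; passing through two different $A$-vertices would need length $6$, which is allowed in a directed cycle. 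This is the main obstacle: longer cycles $a_1b_1c_1a_2b_2c_2a_1$ exist without triangles. I expect the paper's fix, as its figure suggests, to be to exploit the exceptional vertex together with tips. I would make every vertex of $A$ a tip with $+$ only, so a walk can never pass through $A$ internally. Then any cycloid must have its exceptional vertex in $A$, and it passes through $A$ exactly once, giving precisely a path $a\,b\,c\,a$, that is, a triangle. The walk $b\,c$ must alternate, so $B$ is attached with $\{a+,b-\}$ and $\{b+,c-\}$, and $C$ is attached with $\{c+,a+\}$. A cycloid thus exists iff $H$ has a triangle, which settles the acyclicity question.

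For feedback edges, we add a constant-size gadget: two disjoint fixed triangles on extra vertices $x,y,z$ and $x',y',z'$, using the same sign pattern. This guarantees that $G$ always has cycloids, and no single edge kills two disjoint cycloids. We then distinguish the cases. If $H$ has a triangle, there are still two disjoint cycloids, because the gadget triangles are disjoint from each other, so there is no feedback edge. I would instead use a single gadget triangle $x,y,z$. If $H$ has no triangle, the only cycloid is the gadget triangle and each of its edges is a feedback edge. If $H$ has a triangle, there are two vertex-disjoint cycloids, the gadget and the triangle of $H$, so no single edge is a feedback edge. Hence both problems decide triangle-freeness, and since $|V(G)|=O(|V_H|)$ and the construction is linear, an $O(|V(G)|^{\omega-\epsilon})$ algorithm for either would contradict the conjecture. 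The step to verify carefully is the claim that tip-ness of $A$ forces every cycloid to be a triangle, including the handling of the exceptional vertex.
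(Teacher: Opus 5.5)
Your final construction, after you discard the plain $A\to B\to C\to A$ orientation (which admits triangle-free 6-cycles) and the two-gadget idea, is correct and is essentially the paper's proof: reduce from tripartite triangle, make one part consist of tips so that it must host the unique exceptional vertex and every cycloid is a triangle, and add one disjoint gadget triangle for the feedback-edge version; the paper makes $C$ the tip part with sign $-$, which is your sign pattern up to relabeling the parts and flipping all signs. The step you flagged is immediate: a cycloid avoiding $A$ would use only edges of the form $\{b+,c-\}$, so none of its vertices could alternate and no such cycloid exists.
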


Fortunately, we are able to exploit the special structure of ultrabubbles to solve these problems in linear time. We observe the problems to be linear-time solvable in graphs without tips, which is a property of ultrabubbles.
Our solution involves an ear-addition procedure exploiting the bidirected reachability properties of the graph being constructed. If the construction succeeds then the procedure outputs a strongly connected directed graph with the same set of cycles as the original bidirected graph, and so we can use the linear-time algorithm of~\cite{garey1978linear} to find all feedback edges. If the procedure halts before having built the whole graph then we can correctly output that no feedback edge exists.

\begin{theorem}
    Let $G$ be a bidirected graph without tips. Then the set of feedback edges of $G$ can be computed in time $O(|V(G)| + |E(G)|)$. Further, we can decide whether $G$ has a cycloid in time $O(|V(G)| + |E(G)|)$.
\end{theorem}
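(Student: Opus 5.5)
We would prove the statement by reducing it to the directed case, where the algorithm of \cite{garey1978linear} computes all feedback arcs in linear time. First we may assume $U(G)$ is connected; otherwise we handle components separately, since a cycloid lies in one component. The plan is an \emph{ear-addition} construction that simultaneously orients $G$ and, if it succeeds, produces a directed graph $D$ on $V(G)$ whose directed cycles correspond exactly to the cycloids of $G$.

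The construction proceeds as follows. Since $G$ has no tips, every vertex has a $+$-edge and a $-$-edge. Starting from any vertex $v_0$, we follow a bidirected walk, always leaving a vertex through the side opposite to the one we entered by, which is possible because there are no tips. The walk must eventually repeat a vertex, and this yields a first cycloid or closed bidirected walk $C_0$. We then grow a subgraph $H$ together with an orientation: each vertex $x \in V(H)$ receives an \emph{orientation} $\sigma(x)\in\signs$, meaning ``$x\sigma(x)$ is the in-side''. Every edge $\{x\alpha,y\beta\}$ with both endpoints in $H$ is then a directed arc $x\to y$ exactly when $\alpha=\hat\sigma(x)$ and $\beta=\sigma(y)$. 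Repeatedly, from a vertex of $H$ we start a walk out of its out-side and, using tiplessness, continue until we hit $H$ again. This forms an ear, and we orient its new internal vertices consistently along it. The invariant is that $H$ is \emph{consistently oriented}, i.e.\ every edge of $H$ goes from an out-side to an in-side, and $H$ is strongly connected as a directed graph. Under that invariant, bidirected walks in $H$ are exactly directed walks in $H$, and closed walks correspond to directed cycles. Each ear is found by a single scan of unused edges, so the total time is linear.

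The failure case is when an ear (or simply an edge of $G$ between vertices already in $H$) lands inconsistently, entering a vertex through its out-side. We then claim that $G$ contains two edge-disjoint cycloids, or more precisely that every edge lies outside some cycloid. Consequently there is no feedback edge, and $G$ has a cycloid, and we stop and report this. To justify the claim, note that an inconsistent edge $\{x\alpha,y\beta\}$ together with the strongly connected $H$ yields two closed bidirected walks. One uses the $x$-to-$y$ directed path of $H$ and the other the $y$-to-$x$ path, or we combine the inconsistent edge with a cycle of $H$ passing through $x$ and $y$ in opposite rotation. We then need to check that for any fixed edge $f$ at least one cycloid avoids $f$; this uses the fact that strong connectivity in $H$ gives alternative paths. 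If the process succeeds on all of $G$ (or on all edges reachable, repeated for remaining components), then $D$ is a directed graph whose cycles coincide with the cycloids of $G$. Its feedback arcs are computed by \cite{garey1978linear} and are exactly the bidirected feedback edges, and acyclicity is decided by a DFS on $D$.

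The main obstacle is the failure case. We must prove rigorously that an inconsistency forces every edge to avoid some cycloid. Care is needed because cycloids allow only one exceptional vertex and must be paths otherwise, so the closed walks produced must be shortcut to genuine cycloids without reintroducing the removed edge. We would first try to show that strong connectivity of $H$ gives, for each edge $f$ of $H$, an $f$-avoiding route between the endpoints of the inconsistent edge in one of the two rotations. A second point needing care is that the ear walks never get stuck: tiplessness guarantees an exit at every internal vertex, and finiteness guarantees a return to $H$ or a self-intersection. A self-intersection itself gives a cycloid disjoint from $H$, and we handle it by restarting the argument on that piece.
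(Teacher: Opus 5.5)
Your success branch is the paper's argument: grow a strongly connected, consistently oriented subgraph by ears, and if it absorbs all of $G$, run Garey--Tarjan on the resulting digraph, whose directed cycles are exactly the cycloids. The gap is the failure branch. You leave it open, and the route you sketch does not work, because the closed walks formed by the bad ear and an $a\to b$ or $b\to a$ path of $H$ (or a cycle of $H$ through both attachments) need not avoid a given edge.

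For a counterexample, take $H$ with arcs $a\to u$, $b\to u$, $u\to v$, $v\to a$, $v\to b$, where arc $x\to y$ means the edge $\{x+,y-\}$. This $H$ is strongly connected. Add the inconsistent edge $\{a+,b+\}$.
\begin{itemize}
\item Every directed $a$--$b$ or $b$--$a$ path of $H$ uses $u\to v$, and no cycle of $H$ contains both $a$ and $b$, so none of your candidate walks avoids $uv$.
\item The cycloid that does avoid it is $\{a+,b+\},\{b-,v+\},\{v+,a-\}$, whose exceptional vertex $v$ is not an attachment.
\item This graph has no two edge-disjoint cycloids, so the first form of your claim is false, even though it has no feedback edge.
\end{itemize}
Your initial closed walk $C_0$ may also already have an exceptional vertex. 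Your invariant cannot accommodate that, so the failure case can arise before any ear is added.

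The missing idea is Lemma~\ref{lem:tipless-blossom-no-good-edges}: in a tipless graph, a single cycloid $B$ with an exceptional vertex $x$ (both $B$-edges at $x\alpha$) already rules out feedback edges. The argument goes as follows.
\begin{itemize}
\item Leave $x$ through $x\hat\alpha$ and extend greedily, which tiplessness allows.
\item If the walk meets itself or returns to $x$, you get a cycloid edge-disjoint from $B$.
\item Otherwise it meets $B$ at some $y\neq x$, splitting $B$ into two $x$--$y$ subpaths. The new path closes each subpath into a cycloid: both alternate at $x$, and at $y$ one alternates while the other is exceptional.
\item Hence every edge of $B$ is avoided by some cycloid, while any feedback edge would have to lie in $B$.
\end{itemize}
With this lemma, every failure event ends immediately. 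An inconsistent ear or edge, together with the appropriate directed path of $H$ between its attachments, is a cycloid exceptional at an attachment. In the example, starting from the cycloid exceptional at $a$, the lemma's walk reaches $v$ and outputs exactly the $uv$-avoiding cycloid. A self-intersecting ear walk, or one returning to its starting vertex, is a cycloid edge-disjoint from $C_0\subseteq H$. In that case you simply halt and report no feedback edge, rather than ``restart on that piece''.
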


\begin{theorem}
    The ultrabubbles of a bidirected graph $G$ can be computed in time $O(|V(G)| + |E(G)|)$.
\end{theorem}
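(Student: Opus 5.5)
The plan is to reduce to the superbubble framework, with the bidirected feedback-edge theorem for tipless graphs doing the work that the result of \cite{garey1978linear} did in the directed case. I will first find all snarls using \Cref{thm:main}. The ultrabubble extremities must be non-tips inside the bubble region, and each vertex-side is an extremity of at most one ultrabubble. So I will argue that every ultrabubble $\{u\alpha,v\beta\}$ either arises from a 2-separator $\{u,v\}$, which the SPQR trees of the blocks (after splitting sign-consistent cutvertices) encode through virtual edges, the nonadjacent pairs of S-nodes (\Cref{lem:spqr-tree-contains-split-pairs}) and P-nodes, or falls into a special case. The special cases are a single edge, a whole block, or a cutvertex situation. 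The cutvertex and tip cases can alternatively be delegated to the linear-time algorithm of \cite{Harviainen2026.03.28.714704}. Each remaining candidate has to be tested for two conditions: the interior bubble component is acyclic, and it has no internal tip.

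The testing follows the two-phase superbubble scheme. In \textbf{Phase 1}, a bottom-up DFS over the SPQR tree computes for each virtual edge $e=\{u,v\}$ the states of $\expansion(e)$. These are: whether it has an inner tip or inner extremity, whether it is acyclic, and which of the four bidirected reachabilities $u\alpha$ to $v\beta$ hold inside it ($\reachesuvab{\cdot}{\cdot}$). Each of these properties is monotone under taking supergraphs, and each can be combined from the children's states on a signed version of $\dirskel(\mu)$. The combination step replaces each child expansion that is acyclic and tip-free by a single bidirected edge carrying the signs determined by its reachability, and then reasons on the skeleton. This costs time linear in $|\skel(\mu)|$, so by \Cref{lem:spqr-total-size} the whole phase is linear. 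In \textbf{Phase 2}, a top-down BFS computes the same states for the complementary expansions $\expansion(e_\nu^i)$. For the tip and reachability states I will use prefix/suffix style aggregation over the skeleton's edges. For acyclicity, I form the collapsed signed skeleton of $\nu$, in which every child expansion is a bidirected edge. Then $\expansion(e^i_\nu)$ is acyclic exactly when the collapsed skeleton minus edge $e_i$ is acyclic and all other pieces are acyclic, that is, when $e_i$ is a bidirected feedback edge. Any non-collapsible piece, one that is cyclic or has a tip, kills all siblings at once or only itself, and this can be counted in $O(1)$ per edge.

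The main obstacle is that the collapsed skeleton may contain tips, and then the feedback-edge theorem does not apply. Indeed, by the $k$-Clique hardness result the general problem is unavailable. My plan is to show that an expansion whose complement in $\skel(\nu)$ yields a tip at a non-pole vertex can never be an ultrabubble, since that tip would be an internal tip of the candidate. Such candidates can be discarded by the tip-state computation beforehand. It therefore suffices to run the feedback-edge computation on the skeleton restricted to the remaining edges, after splitting the poles $u,v$ so that they are treated as extremities. I expect this to yield a tipless bidirected graph per node, which the preceding theorem handles in $O(|\skel(\nu)|)$ time. The delicate step is getting this pole handling right for R-nodes, where many candidate pairs share vertices; P- and S-nodes have fixed topology and are handled by direct case analysis.

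Finally, for each candidate I combine the two phase states with the sign conditions of the snarl and the minimality condition. Both are already guaranteed by restricting the candidates to snarls from \Cref{thm:main}, or can be checked locally via the per-vertex-side uniqueness. All steps are linear in the total skeleton size plus $|V(G)|+|E(G)|$.
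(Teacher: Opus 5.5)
Your architecture matches the paper's: blocks, SPQR tree, a bottom-up and a top-down pass maintaining the extremity, acyclicity and four signed reachability states, clean pieces collapsed into single bidirected edges, and one feedback-edge computation per node in Phase 2. The gap is the step you yourself flag as delicate. You never show that the graph handed to \Cref{thm:feedback-bidirected-linear} is tipless, and your workaround does not make it so.

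\begin{itemize}
\item If ``the skeleton restricted to the remaining edges'' means deleting the collapsed edges of discarded candidates, the answers are wrong. $\expansion(e^i_\nu)$ contains every other piece of $\skel(\nu)$, discarded or not, so cycloids through those pieces would be missed.
\item Splitting the poles $u,v$ is a per-candidate operation. In an R-node the $k$ candidates have different poles, so there is no single graph to feed to the routine. Doing it candidate by candidate costs $\Theta(k\,|E(\skel(\nu))|)$, which is exactly the quadratic behaviour the one-shot feedback computation is meant to avoid.
\item After discarding the candidates that contain a non-pole tip $x$ in their interior, the candidates incident to $x$ remain, and $x$ is still a tip of whatever graph you build.
\item Your Phase 1 hides the same problem. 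By the paper's hardness corollary, acyclicity of a signed skeleton cannot be decided in linear time without exploiting structure, so ``reasoning on the skeleton'' is not enough.
\end{itemize}

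The missing idea is a short structural lemma, which the paper uses.

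\begin{itemize}
\item Suppose every piece $\expansion(e)$, $e\in E(\skel(\nu))$, is acyclic and has no inner extremity. Then its only tips are its two poles. A piece with opposite signs at a pole would have at most one tip, and hence a cycloid by \Cref{prop:atm1-tip-cycloid}.
\item So the collapsed edge of $e$ carries, at each pole $x$, the unique sign that $x$ has in $\expansion(e)$. Consequently a tip of $\dirskel(\nu)$ is a tip of $H$, and therefore an extremity of $G$.
\item Such a vertex makes every complementary piece in which it is internal $\Null$.
\item It also cannot be a pole of a separation-pair ultrabubble. Splitting one of its sides leaves all its $H$-edges attached to it, so the component meets both sides of the separation. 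Splitting the other side isolates it within $H$. Hence the states of pieces having it as a pole are never used.
\end{itemize}

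So whenever feedback edges are actually needed, the whole $\dirskel(\nu)$ is tipless, and a single call to \Cref{thm:feedback-bidirected-linear} decides all $k$ states. The poles of $e_i$ become tips only in $\dirskel(\nu)-e_i$, which is never passed to the routine, so no pole splitting is required. The same lemma handles Phase 1: $\dirskel(\mu)$ minus the parent edge has tips only at its two poles, since a third tip would be an inner extremity of $\expansion(e_\mu)$. That is what makes its acyclicity test linear.

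Precomputing all snarls is harmless but unnecessary. Minimality follows from the structure of P- and R-nodes exactly as in the superbubble Phase 3.
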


\section{Superbubbles}
\label{sec:superbubbles}


In this section we develop a linear-time algorithm to find superbubbles in directed graphs.


\paragraph*{Basic notions}

Onodera et al. introduced in~\citep{onodera2013detecting} the notion of superbubble and motivate it as being a natural generalization of bubbles - a structurally simple graph motif appearing in graphs built from biological data.
A superbubble $(s,t)$ is a minimal acyclic vertex-induced subgraph by the set of vertices reachable from $s$ without going through $t$, which must coincide with the set of vertices that reach $t$ without going through $s$. This property is called the \emph{matching} property of superbubbles, so in particular every vertex in the superbubble lies in some $s$-$t$ path (see~\citep{onodera2013detecting} for the formal definition). In the same paper Onodera et al. showed that every vertex is the entry (and exit) of at most one superbubble.

A useful relaxation of superbubbles is that of superbubbloids, introduced by~\cite{gartner-revisited, gartner2019direct}; superbubbloids are superbubbles without the minimality constraint. Moreover,~\citep{gartner-revisited, gartner2019direct} defines (and proves equivalence with) these constructs in a way that is more suitable for our SPQR tree approach, as it exposes better the visual intuition that superbubbles are attached to the rest of the graph only by its defining vertices.

\begin{definition}[Superbubbloid~\citep{gartner2019direct}]\label{def:superbubbloid-gartner}
    Let $G$ be a directed graph, $X \subseteq V(G)$, and $s,t \in X$. The pair $(s,t)$ is a \emph{superbubbloid} with \emph{superbubbloid graph} $G[X]$ if:
    \begin{enumerate}[nosep]
        \item every $u \in X$ is reachable from $s$,
        \item $t$ is reachable from every $u \in X$,
        \item if $u \in X$ and $w \in V(G) \setminus X$, then every $w$-$u$ directed path contains $s$,
        \item if $u \in X$ and $w \in V(G) \setminus X$, then every $u$-$w$ directed path contains $t$,
        \item if $uv$ is an edge in $G[X]$, then every $v$-$u$ directed path in $G$ contains both $t$ and $s$, and
        \item $G$ does not contain the edge $ts$.
    \end{enumerate}
\end{definition}

Let $(s,t)$ be a superbubbloid with graph $B$. Vertex $s$ is the \emph{entry} and vertex $t$ is the \emph{exit} of the superbubbloid. The \emph{interior} of $(s,t)$ is the set $V(B) \setminus\{s,t\}$; notice that the interior of a superbubbloid does not contain sources or sinks.
Since the pair $(s,t)$ uniquely defines the superbubbloid graph $B$ and the superbubbloid graph $B$ uniquely defines the pair $(s,t)$, we refer to both the graph and the pair of vertices simply as ``superbubbloid''.

A \emph{superbubble} $(s,t)$ is a superbubbloid where no $s'\in V(B) \setminus \{s\}$ is such that $(s',t)$ is a superbubbloid. An edge $st$ with $N^+_G(s)=\{t\}$, $N^-_G(t)=\{s\}$, and $ts \notin E(G)$ is a \emph{trivial superbubble} (in fact, the original notion of ``bubbles'' essentially coincides with that of trivial superbubbles).

Next we give some results on the relation between cutvertices, superbubbloids, and superbubbles. Importantly, we show that cutvertices are not in the interior of any superbubble. 

\begin{lemma}
\label{lem:superbubbloid-superbubble-cutvertex}
    Let $G$ be a directed graph and let $(s,t)$ be a superbubbloid of $G$ with graph $B$. Then $(s,t)$ is a superbubble if and only if no vertex in the interior of $B$ is an $s$-$t$ cutvertex in $B$.
\end{lemma}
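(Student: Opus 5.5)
We prove the two directions separately, working throughout with \Cref{def:superbubbloid-gartner}. Recall that $(s,t)$ is a superbubble exactly when there is no $s'\in V(B)\setminus\{s\}$ such that $(s',t)$ is a superbubbloid.

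\textbf{($\Leftarrow$).} Suppose $(s,t)$ is not a superbubble, so $(s',t)$ is a superbubbloid for some $s'\in V(B)\setminus\{s\}$, with graph $B'$. We first observe that $s'\neq t$: otherwise $(t,t)$ would be a superbubbloid, which the definition excludes. So $s'$ lies in the interior of $B$.

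Next we show that $s'$ is an $s$-$t$ cutvertex in $B$. Since $s$ reaches $s'$ in $B$, any $s$-$t$ directed path can be traced through $B'$. We would also like $V(B')\subseteq V(B)$. This holds because vertices reachable from $s'$ stay inside $B$ until $t$ (condition~4 for $(s,t)$), together with condition~1 for $(s',t)$.

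It remains to show that every undirected $s$-$t$ path in $B$ passes through $s'$. Suppose instead there is an undirected $s$-$t$ path in $B$ avoiding $s'$. We know $s\notin V(B')$: otherwise $s$ would be reachable from $s'$, giving a directed cycle through $s$ and $s'$, which contradicts condition~5 (or acyclicity) for $(s,t)$. So this path leaves $V(B')\setminus\{s'\}$ at some edge $xy$ with exactly one endpoint in $V(B')\setminus\{s',t\}$. Take the endpoint inside, and consider both orientations of the edge:
\begin{itemize}[nosep]
\item an edge entering $B'$ from outside at a vertex other than $s'$ contradicts condition~3 for $(s',t)$;
\item an edge leaving $B'$ at a vertex other than $t$ contradicts condition~4 for $(s',t)$.
\end{itemize}
Hence every undirected $s$-$t$ path meets $s'$, so $s'$ is an interior $s$-$t$ cutvertex of $B$.

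\textbf{($\Rightarrow$).} Suppose some interior vertex $c$ of $B$ is an $s$-$t$ cutvertex of $B$. We claim that $(c,t)$ is a superbubbloid. Let $X'$ be the vertex set of the component of $B-c$ containing $t$, together with $c$. We check the six conditions of \Cref{def:superbubbloid-gartner} for $(c,t)$ with vertex set $X'$.

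\begin{enumerate}[nosep]
\item[(2)] Every $u\in X'$ reaches $t$, already within $B$.
\item[(1)] Every $u\in X'$ is reachable from $c$. Take an $s$-$u$ directed path inside $B$; it exists by conditions~1 and~3 for $(s,t)$. Since $s\notin X'$, the path crosses from one side of $c$ to the other, and the only connection in $B$ is through $c$. So the path passes through $c$, and its suffix from $c$ reaches $u$.
\item[(3)] Any $w$-$u$ directed path with $w\notin X'$ must enter $X'$. It either enters from outside $B$, which forces it through $s$ and then through $c$ by the separation, or it enters from $B\setminus X'$, which forces it through $c$.
\item[(4)] Symmetric to (3): any path leaving $X'$ goes through $t$, or goes through $c$ back into $B\setminus X'$. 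The latter would create a cycle through $c$, which is excluded by condition~5 for $(s,t)$.
\item[(5)] Inherited from condition~5 for $(s,t)$.
\item[(6)] Since $c\neq s$, the edge $tc$ would create a cycle $c\to t\to c$ inside $B$, which condition~5 excludes.
\end{enumerate}

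Thus $(c,t)$ is a superbubbloid with $c\in V(B)\setminus\{s\}$, so $(s,t)$ is not a superbubble.

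The main obstacle is condition~4 in the ($\Rightarrow$) direction. We have to rule out a directed path that leaves $X'$ through $c$ and continues into $B\setminus X'$. The argument is that such a path closes a directed cycle with an $s$-$c$ path. That cycle would use an edge $uv$ of $B$ together with a $v$-$u$ directed path avoiding $t$, which contradicts condition~5 for $(s,t)$.
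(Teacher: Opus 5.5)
Your plan works in outline, and it takes a somewhat different route from the paper. The paper calls $(\Rightarrow)$ trivial. For $(\Leftarrow)$ it argues that, since $v$ is not an $s$-$t$ cutvertex, $B$ has a \emph{directed} $s$-$t$ path avoiding $v$. Condition~3 for $(v,t)$ then puts $s$ into the graph of $(v,t)$, so $s$ and $v$ reach each other, giving a cycle. You use the same cycle to get $s\notin V(B')$, but then rule out \emph{undirected} $s$-$t$ paths avoiding $s'$ by inspecting where such a path enters $V(B')$. This avoids the paper's unexplained passage from an undirected path avoiding $v$ to a directed one. You also write out $(\Rightarrow)$ explicitly, via $(c,t)$ and the $t$-side of $B-c$.

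There is, however, a missing case in your crossing argument. You assert that the crossing edge has its inner endpoint in $V(B')\setminus\{s',t\}$. Nothing justifies this: the path may meet $V(B')$ for the first time at $t$ itself, through an edge between $t$ and some $x\in V(B)\setminus V(B')$.
\begin{itemize}
\item The orientation $x\to t$ is excluded by condition~3 for $(s',t)$.
\item The orientation $t\to x$ is not excluded by condition~4, since the one-edge path from $t$ to $x$ contains $t$.
\end{itemize}
You need the extra observation that $t$ has no out-neighbour in $B$: $x$ reaches $t$ by a directed path inside $B$, so $t\to x$ would close a cycle in $B$. Separately, the claim $V(B')\subseteq V(B)$ is never used and can be dropped; its justification is incomplete anyway.

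In $(\Rightarrow)$, two steps each need one more line.
\begin{itemize}
\item Condition~5 for $(c,t)$ is not literally inherited. Condition~5 for $(s,t)$ only puts $s$ and $t$ on every $v$-$u$ path. You must add that the portion from $s\notin X'$ to $u\in X'$ passes through $c$, by your condition~3.
\item In condition~4, an $s$-$c$ path closes a cycle with an edge $c\to y$ (where $y\in V(B)\setminus X'$) only when $y=s$. What you need in general is the directed path from $y$ to $t$ inside $B$. It must pass through $c$, because $y$ and $t$ lie in different components of $B-c$. Its prefix from $y$ to $c$, together with the edge $c\to y$, then violates condition~5 for $(s,t)$.
\end{itemize}
With these repairs the argument is complete; like the paper's proof, they rely on acyclicity of the superbubbloid graphs.
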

\begin{proof}
    The statement holds if $B$ is trivial, so suppose $B$ has at least three vertices.

    $(\Rightarrow)$ (Trivial.)
    
    $(\Leftarrow)$ Suppose for a contradiction that $B$ has a vertex $v \neq s,t$ violating the minimality of $(s,t)$.
    Notice that $B$ has an $s$-$t$ directed path avoiding $v$ for otherwise $v$ is an $s$-$t$ cutvertex.
    So $s$ reaches $t$ without $v$ and so $s$ is contained in the superbubbloid graph of $(v,t)$, and consequently $v$ reaches $s$ without $t$. But $s$ reaches $v$ without $t$ because $(s,t)$ is a superbubbloid, and thus $B$ has a cycle, a contradiction.
\end{proof}

\begin{corollary}
\label{cor:superbubbles-biconnected}
    Superbubbles are biconnected.
\end{corollary}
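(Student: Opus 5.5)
We want to show that the superbubble graph $B$ of any superbubble $(s,t)$ has no cutvertex, i.e.\ $U(B)$ is biconnected. The plan is to derive this directly from \Cref{lem:superbubbloid-superbubble-cutvertex} together with the reachability properties of \Cref{def:superbubbloid-gartner}.

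First, the trivial case. If $B$ has only two vertices, it is the single edge $st$, which is biconnected by the convention that biconnected graphs may have fewer than three vertices. So assume $|V(B)|\ge 3$.

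Now suppose for contradiction that some vertex $c$ is a cutvertex of $U(B)$. We show every vertex of $B$ lies on an undirected $s$-$t$ path, and derive a contradiction by cases.

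\textbf{Case $c \notin \{s,t\}$.} Since $B$ is a superbubble, \Cref{lem:superbubbloid-superbubble-cutvertex} says $c$ is not an $s$-$t$ cutvertex in $B$. Hence $s$ and $t$ lie in the same component $K$ of $U(B)-c$. Because $c$ is a cutvertex, there is another component $K'$ of $U(B)-c$ avoiding both $s$ and $t$; take any $u \in K'$. By properties~1 and~2 of \Cref{def:superbubbloid-gartner}, $B$ contains a directed $s$-$u$ path and a directed $u$-$t$ path. Every vertex of $B$ is reachable from $s$ within $B$: the superbubbloid graph is vertex-induced, and the reaching paths stay inside $X$ by property~3. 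Both paths leave $K'$, so both must pass through $c$. Concatenating a $c$-$u$ subpath (the end of the $s$-$u$ path after its last visit to $c$) with a $u$-$c$ subpath (the start of the $u$-$t$ path up to its first visit to $c$) gives a directed closed walk through $u\neq c$ inside $B$. This contradicts acyclicity of superbubbloid graphs, which follows from property~5: an edge on a directed cycle inside $B$ would need every reverse path to contain both $s$ and $t$, and a cycle avoiding $s$ or $t$ violates this. If the cycle does pass through $s$, then some vertex of $B$ reaches $s$; but $s$ reaches every vertex, so some edge $uv$ of $B$ lies on an $s$-containing cycle whose reverse $v$-$u$ path also contains $t$. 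That would let $t$ reach $s$ within $B$, and the same argument gives a cycle through the edge into $t$, which property~5 again forbids. This subcase is the fiddliest bookkeeping, and I would instead cite the standard fact that superbubbloids are acyclic.

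\textbf{Case $c \in \{s,t\}$, say $c=s$.} Take a component $K'$ of $U(B)-s$ not containing $t$, and $u \in K'$. The directed $u$-$t$ path from property~2 must pass through $s$, so $s$ reaches $t$ only after $u$ reaches $s$. Combined with $s$ reaching $u$ (property~1), this gives a directed cycle through $s$ and $u$, again contradicting acyclicity. The case $c=t$ is symmetric, using a directed path from $s$ to a vertex $u$ in a component avoiding $s$; it must pass through $t$, giving a cycle through $t$.

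The main obstacle is justifying acyclicity of $B$ cleanly from the listed properties. I expect this can be taken as known, since superbubbles are acyclic by their original definition (the definitions are proven equivalent in the cited work). All other steps are short path-concatenation arguments.
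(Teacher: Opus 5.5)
Your argument is correct and is essentially the paper's: both use \Cref{lem:superbubbloid-superbubble-cutvertex} to put $s$ and $t$ in one component of $B-c$. Both then show that a vertex $u$ in another component, which meets the rest of $B$ only through $c$, forces a directed cycle through $c$; the paper gets this from a source/sink count in that component, and you get it from the explicit $s$-$u$ and $u$-$t$ paths. Like the paper, simply take acyclicity as part of the original superbubble definition. Your side attempt to derive it from property~5 does not work, because the listed properties as stated do not exclude a cycle through both $s$ and $t$: for example, the directed $4$-cycle $s\to a\to t\to b\to s$, with $X$ its whole vertex set, satisfies properties 1--6.
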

\begin{proof}
    Suppose there is a vertex $v$ in a superbubble $(s,t)$ with graph $B$ such that $B-v$ is disconnected with components $C'_1,\dots,C'_\ell$ $(\ell \geq 2)$.
    Let $C_i:=B[V(C'_i)+v]$ for each $i\in\{1,\dots,\ell\}$.
    Observe that there is exactly one component $C_i$ where both $s$ and $t$ appear, since if $v \neq s,t$ then $v$ does not separate $s$ and $t$ because \Cref{lem:superbubbloid-superbubble-cutvertex} gives us two internally vertex-disjoint $s$-$t$ paths in $B$ (and if $v=s$ or $v=t$ this follows trivially).
    Moreover, any component $C_j$ but the one containing $s$ and $t$ contains at most one source or sink, which is $v$, since $B$ has no sources or sinks besides $s$ and $t$ due to conditions 1 and 2 of superbubbles (see \Cref{def:superbubbloid-gartner}). Therefore $C_j \subseteq B$ has a cycle, a contradiction to the acyclicity of superbubbles.
\end{proof}

\begin{lemma}[Superbubbles and cutvertices]\label{lem:bubbles-cutvertices}
    Let $G$ be a directed graph and let $(s,t)$ be a superbubble of $G$ with graph $B$. Then no vertex in the interior of $B$ is a cutvertex of $G$.
\end{lemma}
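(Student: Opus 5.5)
Suppose for contradiction that some interior vertex $v \in V(B)\setminus\{s,t\}$ is a cutvertex of $G$. The plan is to show that every component of $G-v$ other than the one containing $s$ and $t$ would have to lie entirely inside $B$, and then to derive a cycle or an extra source/sink inside $B$ exactly as in the proof of \Cref{cor:superbubbles-biconnected}.

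\textbf{Step 1: $s$, $t$ and $B-v$ share a component of $G-v$.} By \Cref{cor:superbubbles-biconnected}, $B$ is biconnected, so $B-v$ is connected (if $B$ is trivial there is no interior vertex and nothing to prove). Hence all of $B-v$, including $s$ and $t$, lies in a single component $C$ of $G-v$. Since $v$ is a cutvertex of $G$, there is another component $D\neq C$ of $G-v$, and $D$ contains no vertex of $B$.

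\textbf{Step 2: some vertex of $D$ is adjacent to $v$.} Since $G$ is connected around $v$, some vertex $w\in V(D)$ is adjacent to $v$. Because $w\in V(G)\setminus X$ (with $X=V(B)$) and $v\in X$, conditions 3 and 4 of \Cref{def:superbubbloid-gartner} apply.
\begin{itemize}[nosep]
\item If the edge is $wv$, then the path $w\,v$ is a $w$-$v$ directed path, so by condition 3 it contains $s$. This is impossible, since $s\neq v$ and $s\neq w$ (the vertex $w$ is not in $B$).
\item If the edge is $vw$, then by condition 4 the path $v\,w$ must contain $t$, which is impossible for the same reason.
\end{itemize}
So no edge joins $v$ to a vertex outside $B$. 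Then $D$ cannot attach to $v$, which contradicts $v$ being a cutvertex with $D$ a separate component.

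The only delicate point is Step 1. We need the graph $B-v$ to be connected, and this is exactly what \Cref{cor:superbubbles-biconnected} supplies. We must also handle connectivity carefully: $G$ may be disconnected, so "cutvertex" means that removing $v$ increases the number of components. In that case the new component $D$ still has to contain a neighbor of $v$. The resulting contradiction is that every neighbor of an interior vertex lies in $B$.
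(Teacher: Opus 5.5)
Your proof is correct, but it takes a different route from the paper. The paper splits on whether some block of $G$ contains both $s$ and $t$:
\begin{itemize}
\item If no block does, a vertex separating $s$ from $t$ would lie in $B$ and be an $s$-$t$ cutvertex of $B$, contradicting \Cref{lem:superbubbloid-superbubble-cutvertex}.
\item If one does, then $s$ and $t$ stay in the same component of $G-v$. A neighbor $w$ of $v$ in another component must lie in $B$; this is your Step 2 observation, used contrapositively. The $s$-$w$ and $w$-$t$ directed paths are both forced through $v$, so they close a cycle in $B$.
\end{itemize}

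You instead invoke \Cref{cor:superbubbles-biconnected} to get $B-v$ connected. That corollary is proved beforehand and independently of this lemma, so there is no circularity. You combine it with the local fact, from conditions 3 and 4 of \Cref{def:superbubbloid-gartner}, that an interior vertex has no neighbor outside $B$. Together these put every neighbor of $v$ in a single component of $G-v$.

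Your route is shorter and avoids both the block case split and the explicit cycle construction. The cost is reliance on the corollary, which already packages the minimality and acyclicity arguments that the paper uses directly here.

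One point should be stated precisely: the choice of $D$. Take it among the components of $G-v$ into which the component of $G$ containing $v$ splits, as you remark at the end. That way it is guaranteed to contain a neighbor of $v$.
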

\begin{proof}
    We can assume that $B$ contains at least three vertices.
    Suppose for a contradiction that the interior of $B$ contains a cutvertex of $G$. 
    There are two cases to analyze.

    \textbf{No block contains both $s$ and $t$:}
    Since no block contains both $s$ and $t$ there is a vertex $v$ whose removal disconnects $s$ from $t$ in $G$. Since $(s,t)$ is a superbubble and every $s$-$t$ path in $U(G)$ contains $v$, every $s$-$t$ directed path in $G$ also contains $v$. Thus $v$ is reachable from $s$ without $t$ and so $v \in V(B)$. Since $B \subseteq G$, vertex $v$ is an $s$-$t$ cutvertex in $B$, which is a contradiction by \Cref{lem:superbubbloid-superbubble-cutvertex}.

    \textbf{There is a block containing $s$ and $t$:}
    Let $v$ be a cutvertex of $G$ in the interior of $B$.
    Removing $v$ from $G$ results in a disconnected graph where at least one component does not contain both $s$ and $t$ since one block of $G$ contains both $s$ and $t$.
    Thus, let $w$ be a vertex in such a component adjacent to $v$ in $G$.
    Notice that regardless of whether $vw \in E(G)$ or $wv \in E(G)$, we have $w \in V(B)$ because $v \in V(B)$.
    So, in $G$, $w$ is reachable from $s$ without $t$ and reaches $t$ without $s$, but any two directed paths witnessing these reachabilities contain $v$, and so there is a cycle in $B$ through $v$, a contradiction.
\end{proof}

By \Cref{lem:bubbles-cutvertices} a cutvertex of $G$ can only be the entry or the exit of a superbubble. Therefore, superbubble graphs are confined to the blocks of $G$, and moreover there is a unique block that contains both the entrance and exit of of any given superbubble.
Then the task of computing superbubbles in a directed graph $G$ reduces to that of computing superbubbles in each block of $G$. Since block-cut trees can be built in linear time, if we can find superbubbles inside a block in linear time then we can find every superbubble of an arbitrary graph also in linear time.

The next result explains why (interesting) superbubbles induce 2-separators of the underlying undirected graph.

\begin{theorem}[Superbubbles and split pairs]\label{thm:bubbles-split-pairs}
    Let $G$ be a directed graph and let $(s,t)$ be a superbubble of $G$ with graph $B$. Let $H_1,\dots,H_{\ell}$ be the blocks of $G$ $(\ell \geq 1)$. Then $\{s,t\}$ is a split pair of some $H_i$ or $V(B)=V(H_i)$.
\end{theorem}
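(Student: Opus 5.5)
Set $B$ the superbubble graph. By \Cref{lem:bubbles-cutvertices} no interior vertex of $B$ is a cutvertex of $G$, and by the remark after it there is a unique block $H_i$ containing both $s$ and $t$. Moreover $B$ is biconnected by \Cref{cor:superbubbles-biconnected}, so $V(B)\subseteq V(H_i)$. If $B$ is trivial (a single edge $st$), then $\{s,t\}$ is an edge of $H_i$ and hence a split pair, so we may assume $B$ has at least three vertices. If $V(B)=V(H_i)$ we are done, so assume some $w\in V(H_i)\setminus V(B)$ exists; we will show that $\{s,t\}$ is a separation pair of $H_i$.

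The key step is to show that no edge of $U(G)$ joins an interior vertex $u\in V(B)\setminus\{s,t\}$ to a vertex $w\notin V(B)$. Suppose $uw\in E(G)$. Then $uw$ is a $u$-$w$ directed path avoiding $t$ (as $u\neq t$ and $w\neq t$), contradicting condition 4 of \Cref{def:superbubbloid-gartner}. Symmetrically, if $wu\in E(G)$, condition 3 is violated because the path avoids $s$. Consequently, in $U(H_i)$, every path from an interior vertex of $B$ to a vertex outside $V(B)$ passes through $s$ or $t$.

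To finish, we check that both sides of the candidate separation are nonempty. Since $|V(B)|\ge 3$, the interior $V(B)\setminus\{s,t\}$ is nonempty, and by assumption $V(H_i)\setminus V(B)$ is nonempty. Hence $(V(B),\,(V(H_i)\setminus V(B))\cup\{s,t\})$ is a separation of $H_i$ with separator $\{s,t\}$. It follows that $H_i-s-t$ is disconnected, so $\{s,t\}$ is a separation pair and therefore a split pair of $H_i$.

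The main obstacle is only bookkeeping: arguing that $B$ lies inside a single block, which follows from biconnectivity, and handling the degenerate cases where $B$ is a single edge or $H_i$ coincides with $B$.
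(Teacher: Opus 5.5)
Your proof is correct and follows the paper's approach: place $B$ inside a single block, then use conditions 3 and 4 of the superbubbloid definition to show that $\{s,t\}$ separates an interior vertex of $B$ from any vertex of that block outside $V(B)$. Your edge-level version of the key step (no edge of $U(G)$ joins the interior of $B$ to $V(G)\setminus V(B)$) makes rigorous the paper's brief passage from directed paths through $s$ or $t$ to undirected paths through $s$ or $t$.
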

\begin{proof}
    It follows from~\Cref{lem:bubbles-cutvertices} that $V(B)$ is contained in a block of $G$, say, without loss of generality, of $H_1$.
    Suppose that $|V(B)| \geq 3$, $V(B) \neq V(H_1)$. Then it suffices to show that $\{s,t\}$ disconnects $H_1$.
    Let $u \in V(B) \setminus \{s,t\}$ and let $v \in V(H_1) \setminus V(B)$.
    By (3) and (4) of \Cref{def:superbubbloid-gartner}, every $u$-$v$ path in $G$ contains $t$ and every $v$-$u$ path in $G$ contains $s$. But then every $u$-$v$ path in $H_1$ contains $s$ or $t$, and therefore $\{s,t\}$ is a separation pair of $H_1$.
\end{proof}

The interesting case of~\Cref{thm:bubbles-split-pairs} is when the vertices identifying a superbubble form a separation pair of a block, as the other two cases can be dealt with a linear-time preprocessing step (with a single graph traversal it is possible to decide if the whole block is a superbubble, and with a simple predicate it is possible to decide whether an edge is a trivial superbubble).
By~\Cref{lem:spqr-tree-contains-split-pairs} we know that every separation pair of a graph is encoded as the endpoints of a virtual edge of some node of the SPQR tree, or as nonadjacent vertices in an S-node (but these cannot form superbubbles unless its superbubble graph is the whole block, as the next result shows; see~\Cref{fig:nonadj-snode}).
Conversely, the endpoints of any virtual edge in a node of the SPQR tree forms a separation pair.
Therefore, by correct examination of the virtual edges of the SPQR tree we can obtain the complete set of superbubbles of $G$ (see \Cref{fig:spqr-tree-bubbles} for an example).

\begin{proposition}
\label{prop:nonadjacent-S-node}
    Let $G$ be a directed graph, $H$ a maximal 2-connected subgraph of $G$, $T$ the SPQR tree of $H$, and $\mu \in V(T)$ an S-node with $u,v \in V(\skel(\mu))$. If $\{u,v\} \notin E(\skel(\mu))$ then $(u,v)$ and $(v,u)$ are not superbubbles, unless the corresponding graph is $H$.
\end{proposition}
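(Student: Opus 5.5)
Suppose for a contradiction that $(u,v)$ is a superbubble with graph $B \neq H$, where $u,v$ are nonadjacent in the cycle $\skel(\mu)$. The plan has three steps: locate $B$ inside $H$, show that both "sides" of $\{u,v\}$ are forced into $B$, and conclude $V(B)=V(H)$.

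\textbf{Step 1: the two arcs of the cycle.} Removing $u$ and $v$ from the cycle $\skel(\mu)$ leaves two nonempty arcs $P_1,P_2$; each contains at least one skeleton vertex, since $u,v$ are nonadjacent. Let $H_j$ be the union of the expansions of the skeleton edges on the $j$-th $u$-$v$ path of the cycle. Then $H=H_1\cup H_2$ with $V(H_1)\cap V(H_2)=\{u,v\}$. Moreover, each internal skeleton vertex $c$ on $P_j$ is a $u$-$v$ cutvertex of $H_j$, because the expansions along the series are glued only at consecutive skeleton vertices.

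\textbf{Step 2: $B$ meets both sides.} By \Cref{lem:bubbles-cutvertices} and \Cref{cor:superbubbles-biconnected}, $B$ lies inside the block $H$ and is biconnected. By \Cref{lem:superbubbloid-superbubble-cutvertex}, no interior vertex of $B$ is a $u$-$v$ cutvertex of $B$. Since $B$ is biconnected with at least three vertices, Menger gives two internally disjoint $u$-$v$ paths in $B$. Suppose both paths lie in $H_1$. Then both pass through every internal skeleton vertex $c$ of $P_1$, since $c$ separates $u$ from $v$ in $H_1$; they are therefore not internally disjoint, a contradiction. If $B$ were just the single edge $uv$, this edge would be real in $\skel(\mu)$, contradicting nonadjacency. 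Hence $B$ has one path in $H_1$ and one in $H_2$, so $B$ contains a vertex of each of $P_1$ and $P_2$.

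\textbf{Step 3: $B$ is all of $H$.} I would use the separation property. Take any $w\in V(H)\setminus V(B)$, say $w\in V(H_1)$. By (3)--(4) of \Cref{def:superbubbloid-gartner}, as in the proof of \Cref{thm:bubbles-split-pairs}, every undirected path in $H$ from $w$ to an interior vertex of $B$ meets $\{u,v\}$. The main obstacle is making this rigorous: I need that $H_1-\{u,v\}$ is connected, or at least that $w$ is joined to some interior vertex of $B$ in $H_1$ while avoiding $u$ and $v$. The tool for this is \Cref{lem:reaches-in-expansion}.
\begin{itemize}
\item Let $w$ lie in the expansion of a skeleton edge $\{c_{i-1},c_i\}$ of the series.
\item By \Cref{lem:reaches-in-expansion}, $w$ reaches $c_i$ and $c_{i-1}$ within that expansion by paths avoiding the other endpoint.
\item Walking along the arc $P_1$ through consecutive expansions, using \Cref{lem:reaches-in-expansion} again between consecutive skeleton vertices, all internal skeleton vertices of $P_1$ are connected in $H_1-\{u,v\}$, and every vertex of $H_1-\{u,v\}$ attaches to them.
\item So $H_1-\{u,v\}$ is connected. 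By Step 2 it contains an interior vertex of $B$, which gives the desired contradiction.
\end{itemize}
Hence every vertex of $H$ lies in $B$, and since $B$ is vertex-induced, $B=H$. The case $(v,u)$ is symmetric.
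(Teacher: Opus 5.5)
Your proof is correct and uses the same ingredients as the paper's: the two arcs of the S-node cycle, \Cref{lem:reaches-in-expansion} to fill an arc, the fact that no edge joins the interior of $B$ to the rest of $G$, and \Cref{lem:superbubbloid-superbubble-cutvertex}. The difference is only in the order. The paper first fills one arc with a directed boundary-edge argument and uses the $u$-$v$ cutvertex contradiction only at the end, when $B$ would equal a single arc. You instead apply \Cref{lem:superbubbloid-superbubble-cutvertex} up front, via Menger, to place $B$ on both arcs and then fill both, which avoids the paper's case split. One small correction: a real edge $\{u,v\}$ of $H$ need not be real in $\skel(\mu)$ itself, but it would lie in the expansion of a skeleton edge of $\mu$ with endpoints $u,v$, so nonadjacency still rules it out.
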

\begin{proof}
    Suppose for a contradiction and without loss of generality that $(u,v)$ is a superbubble with graph $B$. Assume $B \neq H$.
    Let $e_1,\dots,e_\ell$ and $f_1,\dots,f_k$ denote the sequence of edges in $\skel(\mu)$ in the two $u$-$v$ paths of this S-node ($\ell,k \geq 2$ because $\{u,v\} \notin E(\skel(\mu))$).
    Since superbubbles are contained in blocks and $u,v\in V(H)$ we have $B \subseteq H$ and, due to the structure of the S-nodes, it is not hard to see that any $u$-$v$ directed path contains the vertices which are endpoints of the edges $e_i$ or of the edges $f_j$, for $i \in \{1,\dots,\ell\}$ and $j \in \{1,\dots,k\}$.
    Thus the set of vertices of $B$ contains at least the endpoints of the edges $e_i$ or those of $f_j$ (possibly of both). Suppose without loss of generality that $B$ contains those of the edges $e_i$. We claim that $V(\bigcup_{i=1}^\ell \expansion(e_i)) \subseteq V(B)$.
    
    Suppose otherwise and let $w \in V(\bigcup_{i=1}^\ell \expansion(e_i)) \setminus V(B)$ and let $e_i=\{x,y\}$ be the edge such that $w \in V(\expansion(e_i))$, and suppose that $x$ denotes the vertex closest to $u$ in $\skel(\mu)$ (possibly $y=v$). By \Cref{lem:reaches-in-expansion} $\expansion(e_i)$ has an $x$-$w$ path $p$ avoiding $y$, which starts in a vertex in $B$ and ends in a vertex not in $B$. Let $x'$ denote the last vertex in $p$ that is in $B$ (possibly $x'=x$). So $x'$ has a successor $y'$ in $p$ which is not contained in $B$ (notice that $y' \neq v$ since $p$ avoids $y$).
    Thus $\expansion(e_i)$ has a directed edge $x'y'$ or $y'x'$.
    Since $x' \in V(B)$, $B$ has a $u$-$x'$ path avoiding $v$ and an $x'$-$t$ path avoiding $u$.
    Therefore, if $x'y' \in E(\expansion(e_i))$ then $\expansion(e_i)$ has a $u$-$y'$ path avoiding $v$, thus $y' \in V(B)$, a contradiction, and similarly a contradiction is obtained if $y'x' \in E(\expansion(e_i))$. Hence $V(\bigcup_{i=1}^\ell \expansion(e_i)) \subseteq V(B)$.

    If $B$ has a vertex not in $V(\bigcup_{i=1}^\ell \expansion(e_i))$ then such a vertex can be chosen so that it is an endpoint of an edge $f_j$. Thus the same argument as above can be made to conclude that $V(\bigcup_{i=1}^k \expansion(f_i)) \subseteq V(B)$.
    But notice that either $V(\bigcup_{i=1}^\ell \expansion(e_i)) \subseteq V(B)$ or $V(\bigcup_{i=1}^k \expansion(f_i)) \subseteq V(B)$ hold, for if both hold then $V(H) \subseteq V(B)$ and since superbubbles are contained in blocks we get $V(B)=V(H)$ and hence $B=H$, a contradiction.
    So $V(B) \subseteq V(\bigcup_{i=1}^\ell \expansion(e_i))$ without loss of generality and thus $V(B)=V(\bigcup_{i=1}^\ell \expansion(e_i))$.
    Due to the structure of S-nodes and the fact that $\ell \geq 2$ we can conclude that $B$ has a $u$-$v$ cutvertex, a contradiction to \Cref{lem:superbubbloid-superbubble-cutvertex}.

\end{proof}

\begin{figure}[t]
    \centering
    \includegraphics[width=\linewidth]{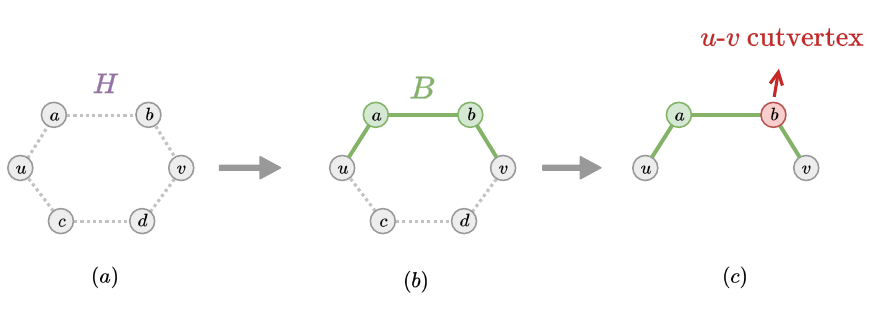}
    \caption{Why nonadjacent vertices in an S-node do not form superbubbles (\Cref{prop:nonadjacent-S-node}). (a)~S-node skeleton of a block $H$ with $u$ and $v$ nonadjacent. There are two $u$-$v$ paths in the cycle. (b)~The superbubble graph $B$ must contain one entire side. (c)~Since this side has at least two edges, there is a $u$-$v$ cutvertex in $B$ (red), a contradiction. If both sides lie in $B$ then $B=H$, a contradiction.}
    \label{fig:nonadj-snode}
\end{figure}

The next result is merely technical and will be used later on to show correctness of the superbubble finding algorithm.

\begin{lemma}[Unique orientation at poles of acyclic components]\label{lem:unique-poles-from-acyclicity}
    Let $G$ be a directed graph and let $C \subseteq V(G)$, $|C| 
    \geq 2$, be such that $G[C]$ is connected and $G[C]$ is acyclic. Moreover, let $s,t \in C$ be such that for all other vertices $v \in C \setminus \{s,t\}$ there is no edge in $G$ between $v$ and some $v' \in V(G) \setminus C$. If no vertex in $C \setminus \{s,t\}$ is a source or sink of $G$, then one vertex among $\{s,t\}$ is the unique source of $G[C]$ and the other vertex is the unique sink of $G[C]$.
\end{lemma}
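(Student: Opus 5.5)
The plan is to reduce everything to the elementary fact that a finite nonempty acyclic directed graph has at least one source and at least one sink. The idea is to transfer the hypothesis ``no vertex of $C \setminus \{s,t\}$ is a source or sink of $G$'' from $G$ to $G[C]$, and then count.

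\textbf{Step 1.} Fix $v \in C \setminus \{s,t\}$. By hypothesis, $G$ has no edge between $v$ and $V(G) \setminus C$, so every edge of $G$ incident to $v$ lies in $G[C]$. Hence $N^+_{G[C]}(v) = N^+_G(v)$ and $N^-_{G[C]}(v) = N^-_G(v)$. Since $v$ is neither a source nor a sink of $G$, it is neither a source nor a sink of $G[C]$. Therefore every source and every sink of $G[C]$ lies in $\{s,t\}$.

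\textbf{Step 2.} Since $G[C]$ is acyclic and nonempty, it has at least one source: follow in-edges backwards from any vertex; acyclicity and finiteness force this to stop at a source. Symmetrically, $G[C]$ has at least one sink.

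\textbf{Step 3.} Since $G[C]$ is connected and $|C| \geq 2$, every vertex of $C$ has at least one incident edge in $G[C]$. So no vertex of $G[C]$ is simultaneously a source and a sink, i.e., the set of sources and the set of sinks of $G[C]$ are disjoint.

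\textbf{Conclusion.} The set of sources and the set of sinks of $G[C]$ are nonempty, disjoint, and both contained in $\{s,t\}$. This already forces $s \neq t$; if $s = t$, the single available vertex would have to be both a source and a sink, contradicting Step 3. With $s \neq t$, each set must be a singleton, and the two singletons partition $\{s,t\}$. So one of $s,t$ is the unique source of $G[C]$ and the other is the unique sink.

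The only point needing care is Step 1, namely that the in- and out-neighbourhoods of interior vertices are the same in $G$ and in $G[C]$; this is exactly what the ``no edge to $V(G)\setminus C$'' hypothesis provides. The rest is routine.
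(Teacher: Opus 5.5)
Your proof is correct and follows essentially the same route as the paper. Both arguments first show that every source and sink of $G[C]$ lies in $\{s,t\}$, since interior vertices have the same in- and out-neighbourhoods in $G$ and in $G[C]$; both then combine the existence of a source and a sink in the acyclic $G[C]$ with the fact that connectivity and $|C|\geq 2$ prevent any vertex from being both. Your explicit handling of the degenerate case $s=t$ is a small extra point of care that the paper leaves implicit.
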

\begin{proof}
    Notice that since $G[C]$ is acyclic, it has at least one source (relative to $G[C]$), say $v$. If $v \in C \setminus \{s,t\}$, then it is also a source of $G$, since by the hypothesis there is no edge in $G$ between $v$ and some $v' \in V(G) \setminus C$. This contradicts the assumption that no vertex in $C \setminus \{s,t\}$ is a source of $G$. Therefore, any source of $G[C]$ belongs to $\{s,t\}$.

    Symmetrically, we have that any sink of $G[C]$ belongs to $\{s,t\}$. Observe that neither $s$ nor $t$ can be both a source and a sink of $G[C]$, because otherwise it would be an isolated vertex with no edges in $G[C]$, which contradicts the assumption that $G[C]$ is connected. Therefore, since $G[C]$ has at least one source and at least one sink (being acyclic), one vertex among $\{s,t\}$ is the unique source of $G[C]$ and the other vertex is the unique sink in $G[C]$.
\end{proof}

\subsection{Setup}
\label{subsec:superbubble-finding-algorithm}

Here, we focus on superbubbles that induce a separation pair of a block $H$. The next result will be applied frequently later on.

\begin{lemma}
\label{lem:separation-pair-is-superbubbloid}
    Let $G$ be a directed graph, let $\{s,t\}$ be a separation pair of a maximal 2-connected subgraph of $G$, and let $K$ be the union of a nonempty subset of the split components of $\{s,t\}$. If there are no extremities of $G$ in $V(K)\setminus\{s,t\}$, $K$ is acyclic, $N^+_G(s) \subseteq V(K)$, $N^-_G(t) \subseteq V(K)$, and $ts \notin E(G)$, then $(s,t)$ is a superbubbloid of $G$ with graph $K$.
\end{lemma}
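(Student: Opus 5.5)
We verify the six conditions of \Cref{def:superbubbloid-gartner} with $X=V(K)$, where $K$ is the union of a nonempty subset of the split components of $\{s,t\}$ inside a block $H$. The key structural fact, from the definition of split components, is that every edge of $G$ with an endpoint in $V(K)\setminus\{s,t\}$ lies in $K$. The only issue is edges leaving $H$, but those would make the vertex a cutvertex of $G$, hence an extremity, which is excluded. So the interior of $K$ is attached to the rest of $G$ only through $s$ and $t$, and $K=G[X]$ apart from a possible edge between $s$ and $t$. Condition 6 is a hypothesis, so we get it for free.

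For conditions 1 and 2, we apply \Cref{lem:unique-poles-from-acyclicity} to $C=V(K)$. The set $V(K)$ has at least two vertices and $K$ is connected, since a split component is connected, and so is a union of components sharing $s,t$. $K$ is acyclic by hypothesis, no interior vertex has outside edges, and interior vertices are neither sources nor sinks of $G$ by the no-extremity hypothesis. The lemma then says that one of $s,t$ is the unique source of $K$ and the other the unique sink. To rule out $t$ being the source we need a further argument. If $s$ were a sink of $K$, then $N^+_G(s)\subseteq V(K)$ would force $s$ to have no out-neighbours at all, making $s$ a sink of $G$. We therefore treat two cases. If $K$ has an interior vertex $w$, then $w$ has an in-neighbour inside $K$; following in-edges backwards in the acyclic $K$ must end at a source, which must be $s$. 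If $K$ is only the edge, then it is $st$, since $ts\notin E(G)$. I expect to argue via $N^-_G(t)\subseteq V(K)$ that $t$ has an in-edge in $K$ and so cannot be the source; if that fails, I will handle degenerate $K$ separately. Once $s$ is the unique source and $t$ the unique sink of the acyclic $K$, every vertex of $K$ is reachable from $s$ and reaches $t$ inside $K$, by walking backwards or forwards in the DAG.

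For conditions 3 and 4, take $u\in X$ and $w\notin X$, and a $w$-$u$ directed path. It must enter $X$ at some first vertex. That vertex is not interior, since interior vertices have no outside edges, so it is $s$ or $t$. If it is $t$, the path leaves $t$ via an out-edge. This requires care, because $N^-_G(t)\subseteq V(K)$ constrains in-edges, not out-edges. Instead I note that the path enters $t$ from outside $X$, but all in-neighbours of $t$ lie in $K$, a contradiction. So the first vertex of $X$ is $s$, which gives condition 3 (when $u=s$ it holds trivially). Condition 4 is symmetric, using $N^+_G(s)\subseteq V(K)$: a path from $u$ leaving $X$ must leave from $s$ or $t$, and it cannot leave from $s$.

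Condition 5 is, I expect, the main obstacle. Let $uv$ be an edge of $G[X]$ and let $P$ be a $v$-$u$ directed path. If $P$ stays inside $X$ it closes a cycle in $K$, or in $K$ plus the edge $st$, which is still acyclic given $ts\notin E(G)$; this is a contradiction. So $P$ must leave $X$, which by the argument above happens only through $t$, and it must re-enter, which happens only through $s$. Hence $P$ contains both $t$ and $s$. The subtle points are edges between $s$ and $t$ not in $K$, and paths that exit and re-enter several times; both are handled by the exit-only-via-$t$, entry-only-via-$s$ observation.
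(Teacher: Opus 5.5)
Your proposal has a genuine gap exactly where you flagged it. Nothing in it establishes that $s$, rather than $t$, is the source of $K$, and each of your three attempts fails:
\begin{itemize}
\item Suppose $s$ were a sink of $K$. The hypothesis $N^+_G(s)\subseteq V(K)$ still allows an out-edge $st$ that forms its own split component outside $K$. Even when it does force $N^+_G(s)=\emptyset$, ``$s$ is a sink of $G$'' contradicts nothing, since only $V(K)\setminus\{s,t\}$ is assumed free of extremities.
\item Walking backwards from an interior vertex ends at the unique source of $K$, which may just as well be $t$.
\item $N^-_G(t)\subseteq V(K)$ does not give $t$ any in-neighbour at all.
\end{itemize}
The same missing orientation is used silently in your condition~5. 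There you claim that $K$ plus the edge $st$ is acyclic because $ts\notin E(G)$, but this holds only if $K$ has no $t$-$s$ path. Relatedly, \Cref{lem:unique-poles-from-acyclicity} should be applied to $K$ itself, not to $G[V(K)]$. The latter may contain that extra edge $st$ and then need not be acyclic.

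In fact no argument can close this gap, because the statement is false as written. Let $G$ consist of the edges $ta$, $as$, $tc$, $cs$. Then $U(G)$ is a $4$-cycle and $\{s,t\}$ is a separation pair; take $K$ to be the split component with edges $ta,as$. All hypotheses hold:
\begin{itemize}
\item $a$ is not an extremity;
\item $K$ is acyclic;
\item $N^+_G(s)=N^-_G(t)=\emptyset\subseteq V(K)$;
\item $ts\notin E(G)$.
\end{itemize}
Yet $s$ has no out-neighbours, so conditions~1 and~2 of \Cref{def:superbubbloid-gartner} fail: $K$ is oriented from $t$ to $s$.

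The paper's own proof has the same hole. It asserts ``$v_1=s$ since $s$ is the unique source of $K_i$'' without justification. The lemma needs an additional hypothesis, for instance that $s$ reaches $t$ in $K$. With it, the source/sink argument applied to $K$ shows that $s$ is the unique source and $t$ the unique sink of $K$. Then $K$ together with a possible edge $st$ is acyclic, and your verification of conditions~1--6 goes through.

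That repaired proof is essentially the paper's route. Your exit-only-via-$t$/entry-only-via-$s$ argument for condition~5 is a slightly more uniform version of the paper's per-split-component argument.
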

\begin{proof}
    Let $K_1,\dots,K_k$ denote the set of split components of $\{s,t\}$ whose union is $K$ $(k\geq1)$.    
    We show that $K$ fulfills each condition of \Cref{def:superbubbloid-gartner}.

    \begin{enumerate}[nosep]
        \item Let $u \in V(K_i)$ and let $p=v_1,\dots,v_\ell$ be a maximal directed path in $K_i$ containing $u$ $(\ell \geq 1)$. Suppose that $v_1$ has in-neighbors in $K_i$.
        If $v_1$ has an in-neighbor in $p$ then this edge is in $K_i$ by maximality of split component, and so there is a cycle in $K_i \subseteq K$, a contradiction.
        If $v_1$ has an in-neighbor in $K_i$ that is not in $p$ then $p$ can be prolonged, contradicting its maximality. Therefore $v_1$ does not have in-neighbors in $K_i$ and thus $v_1=s$ since $s$ is the unique source of $K_i$. Analogously, we have $v_\ell=t$. As $\bigcup_{i=1}^kV(K_i)=V(K)$ it follows that every vertex in $K$ lies in some $s$-$t$ directed path, and thus it is reachable from $s$ and reaches $t$.
        
        \item (Proved in the previous item.)
        
        \item Let $u\in V(K)$ and $w\in V(G) \setminus V(K)$. If $u=s$ we are done. Otherwise, suppose for a contradiction that $G$ has a $w$-$u$ directed path avoiding $s$. Then, since $K$ consists of the union of split components of $\{s,t\}$, this path contains $t$, a contradiction to the fact that every in-neighbor of $t$ is contained in $K$. Therefore every $w$-$u$ path in $G$ contains $s$.

        \item (Analogous to the previous item.)
        
        \item Let $uv \in E(K)$ and let $K_i$ denote the split component that contains the edge $uv$. Suppose for a contradiction that $G$ has a $v$-$u$ path avoiding, say, $s$. Then this path is contained in $K_i$ because $K_i$ is a split component (if the path leaves $K_i$ it must do it through $t$, but then it cannot reenter since paths do not repeat vertices). This path together with the edge $uv$ forms a cycle in $K_i$, a contradiction. Therefore every $v$-$u$ path in $G$ contains both $t$ and $s$.
        
        \item Direct by assumption.
    \end{enumerate}
\end{proof}

We assume in what follows that the SPQR tree of $H$ does not consist of a single node, since otherwise any superbubble is either $H$ or a single edge by \Cref{thm:bubbles-split-pairs} and \Cref{prop:nonadjacent-S-node}. 

Let $T$ be the SPQR tree of $H$.
Let $\{\nu,\mu\} \in E(T)$ such that $\nu$ is the parent of $\mu$, and let $e_\nu$ be the virtual edge in $\skel(\mu)$ pertaining to $\nu$ and define $e_\mu$ analogously; let $\{s,t\}$ denote the endpoints of these virtual edges. In the edge $\{\nu,\mu\}$ we store two pieces of information: the state corresponding to the subgraph $\expansion(e_\mu)$ as $\state{\nu}{\mu}$ and the state corresponding to $\expansion(e_\nu)$ as $\state{\mu}{\nu}$. We say that $\state{\nu}{\mu}$ \emph{leaves} $\nu$ and that it \emph{enters} $\mu$. (This can be seen as a directed edge in the tree pointing from $\nu$ to $\mu$.) Let $X=\expansion(e_\mu)$. Notice that any state uniquely identifies a virtual edge, in this case $e_\mu$. In $\mathsf{State_{\nu,\mu}}$ we store the following information:

\begin{itemize}
    \item $\noextremity{\nu}{\mu} := \true$ iff no vertex in $V(X)\setminus\{s,t\}$ is an extremity of $G$.
    
    \item $\acyclic{\nu}{\mu} := 
    \begin{cases}
        \Null, & \text{if $\noextremity{\nu}{\mu}$ is false,}\\
        \true, & \text{otherwise, if $X$ is acyclic,}\\
        \false, & \text{otherwise.}
    \end{cases}$\\
    \item $\reachesst{\nu}{\mu} := 
    \begin{cases}
        \Null, & \text{if $\acyclic{\nu}{\mu}$ is $\false$ or $\Null$,}\\ 
        \true, & \text{otherwise, if $s$ reaches $t$ in $X$,}\\
        \false, & \text{otherwise.}
    \end{cases}
    $\\
    \item $\reachests{\nu}{\mu} := 
    \begin{cases}
        \Null, & \text{if $\acyclic{\nu}{\mu}$ is $\false$ or $\Null$,}\\ 
        \true, & \text{otherwise, if $t$ reaches $s$ in $X$,}\\
        \false, & \text{otherwise.}
    \end{cases}
    $
\end{itemize}

We define $\state{\mu}{\nu}$ in the same way, but where $X$ is the graph $\expansion(e_\nu)$ (i.e., this state now ``points'' from $\mu$ to $\nu$).
With this information we can ``almost'' decide if a separation pair $\{s,t\}$ identifies a superbubble $(s,t)$, since if $\noextremity{\nu}{\mu}$ and $\acyclic{\nu}{\mu}$ are $\true$, $N^+_G(s),N^-_G(t) \subseteq V(\expansion(e_\mu))$, and  $ts\notin E(G)$, then $(s,t)$ is a superbubbloid with graph $\expansion(e_\mu)$ by~\Cref{lem:separation-pair-is-superbubbloid}. This fact also hints that most of our effort is in the computation of $\state{\nu}{\mu}$ for all edges $\{\nu,\mu\}$ of $T$, as the other conditions are easy to check.

The algorithm consists of three phases.

\begin{itemize}
    \item \emph{Phase 1.} Process the edges $\{\nu,\mu\}$ of $T$ (with $\nu$ the parent of $\mu$) with a DFS traversal starting in the root, and compute all $\state{\nu}{\mu}$.
    \item \emph{Phase 2.} Process the nodes $\nu$ of $T$ with a BFS traversal starting in the root. For every child $\mu$ of $\nu$, compute all $\state{\mu}{\nu}$.
    \item \emph{Phase 3.} Examine the separation pairs $\{s,t\}$ of $H$ in $T$ and use the information computed in the previous phases to decide whether $(s,t)$ or $(t,s)$ are superbubbles.
\end{itemize}

\subsection{Algorithm - Phase 1}
Phase 1 is a dynamic program over the edges of $T$.
Let $\nu$ be the parent of $\mu$ in $T$ and let $\{s,t\}$ denote the endpoints of $e_\nu \in \skel(\mu)$ and of $e_\mu \in \skel(\nu)$.
If $\mu$ has no children then the edges of its skeleton but $e_\nu$ are all real edges, and hence the problem of updating $\state{\nu}{\mu}$ is trivial: with one DFS on $\dirskel(\mu)-st-ts$ we can decide $\noextremity{\nu}{\mu}$, $\acyclic{\nu}{\mu}$, $\reachesst{\nu}{\mu}$, and $\reachests{\nu}{\mu}$.
Otherwise $\mu$ has at least one virtual edge besides $e_\nu$. Let us denote the children of $\mu$ by $\mu_1,\dots,\mu_k$ $(k\geq 1)$ and denote the endpoints of the corresponding virtual edges $e_i$ in $\skel(\mu)$ as $\{s_i,t_i\}$ for all $i \in \{1,\dots,k\}$.
Assume recursively that $\state{\mu}{\mu_i}$ is solved and let $X = \expansion(e_\mu)$, $X_i = \expansion(e_i)$ for all $i \in \{1,\dots,k\}$, and let $K=\dirskel(\mu)-st-ts$.

We now describe how to compute the states $\state{\nu}{\mu}$.
\begin{description}
    \item[$\noextremity{\nu}{\mu}$:] We set $\noextremity{\nu}{\mu}$ to $\true$ if and only if no vertex in $V(K) \setminus \{s,t\}$ is an extremity and $\noextremity{\mu}{\mu_i}$ is $\true$ for all $i \in \{1,\dots,k\}$. To see this is correct we prove both implications.
    
    ($\Rightarrow$) Suppose no vertex in $V(X) \setminus \{s,t\}$ is an extremity. Then indeed, no vertex in $V(K) \setminus \{s,t\}$ is an extremity because $V(K) \subseteq V(X)$.
    Moreover, $\noextremity{\mu}{\mu_i}$ must be $\true$ for all $i \in \{1,\dots,k\}$, for otherwise an extremity $x$ in $X_i$ is different from both $s_i$ and $t_i$ and thus also different from $s$ and $t$, as it does not belong to $\skel(\mu)$ since $\{s_i,t_i\}$ is a separation pair.

    $(\Leftarrow)$ Suppose no vertex in $V(K) \setminus \{s,t\}$ is an extremity and $\noextremity{\mu}{\mu_i}$ is $\true$ for all $i \in \{1,\dots,k\}$. For a contradiction, assume that some $x \in V(X) \setminus \{s,t\}$ is an extremity. By the initial assumption, we have that $x$ cannot belong to $V(K)$. Thus, $x$ is also different from $s_i,t_i$, for all $i \in \{1,\dots,k\}$. Since $x \in V(X) \setminus \{s,t\}$, $x$ is a vertex of $X_i$ for some $i \in \{1,\dots,k\}$. Therefore, it is an extremity for it, since it is different from $s_i$ and $t_i$. This contradicts the initial assumption that $\noextremity{\mu}{\mu_i}$ is $\true$.

    \item[$\acyclic{\nu}{\mu}$:] If $\noextremity{\nu}{\mu}$ is $\false$ then we set $\acyclic{\nu}{\mu}$ to $\Null$, which is correct by definition. Thus, in the following we assume that $\noextremity{\nu}{\mu}$ is $\true$.
    
    If for some $\iink$ $\acyclic{\mu}{\mu_i}$ is $\Null$, then by definition $\noextremity{\mu}{\mu_i}$ is $\false$. Let thus $x$ be an extremity in $V(X_i) \setminus \{s_i,t_i\}$. Since $\{s_i,t_i\}$ is a separation pair, we have that $x \notin \{s,t\}$. Thus, $x \in V(X) \setminus \{s,t\}$, which contradicts the fact $\noextremity{\nu}{\mu}$ is $\true$. Thus $\acyclic{\mu}{\mu_i}$ is $\true$ or $\false$ for each $\iink$.
    
    If for some $\iink$ $\acyclic{\mu}{\mu_i}$ is $\false$, then $X_i$ has a cycle, which implies that also $X$ contains a cycle because $X_i$ is a subgraph of $X$. Since $\noextremity{\nu}{\mu}$ is $\true$, then by definition we can set $\acyclic{\nu}{\mu}$ to $\false$.

    \begin{sloppypar}
    Finally, we are in the case where for every $\iink$, $\acyclic{\mu}{\mu_i}$ is $\true$, and therefore $\reachesst{\mu}{\mu_i}$ and $\reachests{\mu}{\mu_i}$ are $\true$ or $\false$. In other words, each $X_i$ is acyclic, and, importantly, the reachability in $X_i$ between the endpoints of each virtual edge $e_i$ are known.
    
    Then $K$ can be built explicitly and we can set $\acyclic{\nu}{\mu}$ to $\true$ if $K$ is acyclic and to $\false$ otherwise. To see this is correct, notice that any cycle $C$ in $X$ can be mapped to a cycle in $K$: whenever $C$ uses edges of some $X_i$, it passes through $s_i$ (or $t_i$), and since $X_i$ is acyclic, it must return to $t_i$ (or $s_i$). This path of $C$ in $X_i$ between $s_i$ and $t_i$ (or between $t_i$ and $s_i$) can be mapped to the edge of $K$ that was introduced from $s_i$ to $t_i$, if $\reachesst{\mu}{\mu_i}$ is $\true$ (or from $t_i$ to $s_i$, if $\reachests{\mu}{\mu_i}$ is $\true$). Viceversa, every cycle $C$ in $K$ can be symmetrically mapped to a cycle in $X$ such that whenever $C$ uses some edge $s_it_i$ (or $t_is_i$) in $K$, we expand this edge into a path from $s_i$ to $t_i$ (or from $t_i$ to $s_i$) in $X_i$.
    \end{sloppypar}
    
    \item[$\reachesst{\nu}{\mu}$, $\reachests{\nu}{\mu}$:] At this point we have $\acyclic{\nu}{\mu}$ computed. If it is $\false$ or $\Null$ we set $\reachesst{\nu}{\mu}$ and $\reachests{\nu}{\mu}$ to $\Null$, which is correct by definition. Otherwise $X$ is acyclic, $V(X) \setminus \{s,t\}$ has no cutvertex, no sink and no source of $G$, and there is no edge between a vertex not in $V(X)$ and a vertex in $V(X) \setminus \{s,t\}$ since $\{s,t\}$ is a separation pair; moreover $X$ is clearly connected. We can thus apply \Cref{lem:unique-poles-from-acyclicity} to conclude that one vertex between $s$ and $t$ is a source of $X$ and the other is a sink. In the former case we can set $\reachesst{\nu}{\mu}$ to $\true$ and $\reachests{\nu}{\mu}$ to $\false$, and in the latter case we can set $\reachests{\nu}{\mu}$ to $\true$ and $\reachesst{\nu}{\mu}$ to $\false$.
\end{description}

Let $\mu$ be a node of $T$ and let $\nu$ be its parent.
Since $T$ is a tree, each of its edges is processed exactly once during the DFS, thus every state of the form $\state{\nu}{\mu}$ is updated during this phase. Moreover, since every node $\mu$ has a unique parent in $T$ and $\dirskel(\mu)$ is built only when the edge $\{\nu,\mu\}$ is processed, each directed skeleton $\dirskel(\mu)$ is built only once during the algorithm. Since the computational work per edge $\{\nu,\mu\}$ is linear in the size of $\dirskel(\mu)$ and the total size of all skeletons is linear in the size of the current block $H$ (recall \Cref{lem:spqr-total-size}), Phase 1 runs in time $O( |V(H)| + |E(H)|)$.

\begin{algorithm}[t]
\small
\caption{Superbubble finding -- Phase 1}
\label{alg:phase1}
\KwIn{Directed graph $G$, SPQR tree $T$ having at least two nodes}

\SetKwFunction{ProcessEdge}{ProcessEdge}
\SetKwProg{Fn}{Function}{:}{}
\Fn{\ProcessEdge{$\nu,\mu$}}{
    \tcp{$\nu$ is the parent of $\mu$}
    \For{$\mu_i \in \mathsf{children}_T(\mu)$}{
        \ProcessEdge{$\mu,\mu_i$}\;
    }
    \If{$\nu = \text{null}$}{
        \Return{}\;
    }
    $\{s,t\} \gets e_\mu$ (where $e_\mu \in \skel(\nu)$)\;
    \textit{noExtr} $\gets \text{false}$ iff $V(\skel(\mu)) \setminus \{s,t\}$ has an extremity of $G$\;

    Let $e_1,\dots,e_k$ denote the virtual edges of $\mu$ with pertaining nodes $\mu_1,\dots,\mu_k$ $(k \ge 0)$\;
    \textit{ThereIsNoExtremityBelow} $\gets \bigwedge_{i=1}^k \noextremity{\mu}{\mu_i}$\;
    \textit{ThereIsNoCycleBelow} $\gets \bigwedge_{i=1}^k \acyclic{\mu}{\mu_i}$\;
    $\noextremity{\nu}{\mu} \gets$ \textit{noExtr} $\land$ \textit{ThereIsNoExtremityBelow}\;

    \If{$\noextremity{\nu}{\mu}$ is $\false$}{
        $\acyclic{\nu}{\mu} \gets \Null$\;
    }
    \Else{
        \If{$\neg$ ThereIsNoCycleBelow}{
            $\acyclic{\nu}{\mu} \gets \false$\;
        }
        \Else{
            \tcp{We are in conditions to build $\dirskel(\mu) - st - ts$}
            $K \gets \dirskel(\mu) - st - ts$\;
            $\acyclic{\nu}{\mu} \gets \true$ iff $K$ is acyclic\; \tcp{Run DFS or BFS on $K$}
        }
    }

    \If{$\acyclic{\nu}{\mu}$ is $\true$}{
        \If{$N^+_K(s) \cap V(K) \neq \emptyset$}{
            \tcp{See \Cref{lem:unique-poles-from-acyclicity}}
            $\reachesst{\nu}{\mu} \gets \true$;
            $\reachests{\nu}{\mu} \gets \false$\;
        }
        \Else{
            $\reachesst{\nu}{\mu} \gets \false$;
            $\reachests{\nu}{\mu} \gets \true$\;
        }
    }
    \Else{
        $\reachesst{\nu}{\mu} \gets \Null$;
        $\reachests{\nu}{\mu} \gets \Null$\;
    }
}

$\rho \gets$ the root of $T$\;
\ProcessEdge{$\text{null}, \rho$}\;

\end{algorithm}

\subsection{Algorithm - Phase 2} In Phase 2 we compute the states $\state{\mu}{\nu}$ with $\nu$ the parent of $\mu$ by processing the \emph{nodes} of $T$ via Breadth-First Search, i.e., we compute the states ``pointing'' towards the root.
Notice that the dependencies between states behave differently from Phase 1. Now the relevant states for $\state{\mu}{\nu}$ are those leaving $\nu$ to its children except $\mu$, and the state leaving $\nu$ to its parent whenever $\nu$ is different from the root of $T$; the former states are known from Phase 1 and the latter state is known due to the breadth-first traversal order. We remark that following the same strategy of computation as in Phase 1 may cause the algorithm to have a worst-case quadratic running time. For example, if $T$ consists only of node $\rho$ with children $\mu_1,\dots,\mu_k$, then in order to update $\acyclic{\mu_i}{\nu}$ we would have to build $\dirskel(\nu)-s_it_i-t_is_i$ for each $i=1,\dots,k$, which would have a quadratic running time in the size of the graph whenever, e.g., $|V(\skel(\nu))| \geq |V(G)|/2$. To overcome this issue we examine the states $\acyclic{\mu_i}{\nu}$ for $i=1,\dots,k$ all ``at once''.

Let $\nu$ be a node of $T$. Let $\mu_1,\dots,\mu_k$ denote the children of $\nu$ and denote the endpoints of the corresponding virtual edges in $\skel(\nu)$ as $e_i = \{s_i,t_i\}$ for $i \in \{1,\dots,k\}$ $(k \geq 1)$. To distinguish the reference edges $e_\nu$ belonging to each node $\mu_i$, we write $e_\nu^i$ for the edge $e_\nu$ in node $\mu_i$. Assume from the breadth-first traversal order that the states leaving $\nu$ to its parent are known and, for convenience, denote by $e_0=\{s_0,t_0\}$ the reference edge of $\nu$ and by $\mu_0$ the parent of $\nu$, so the neighbours of $\nu$ in $T$ are the nodes $\mu_0,\mu_1\dots,\mu_k$ (if $\nu$ is the root of $T$ then $\mu_0$ and $e_0$ can be ignored during the following discussion).
Let $X_i = \expansion(e_\nu^i)$, $K=\dirskel(\nu)$, and $K_i=K-s_it_i-t_is_i$ for every $\iink$.

First we compute $\noextremity{\mu_i}{\nu}$ for each $\iink$ similarly to Phase 1.

\begin{description}
    \item[$\noextremity{\mu_i}{\nu}$:] We set $\noextremity{\mu_i}{\nu}$ to $\true$ if and only if no vertex in $V(K_i) \setminus \{s_i,t_i\}$ is an extremity and $\noextremity{\nu}{\mu_j}$ is $\true$ for every $\jinkz\setminus \{i\}$. To see this is correct we prove both implications.
    
    ($\Rightarrow$) Suppose no vertex in $V(X_i) \setminus \{s_i,t_i\}$ is an extremity. Then indeed, no vertex in $V(K_i) \setminus \{s_i,t_i\}$ is an extremity.
    Moreover, $\noextremity{\nu}{\mu_j}$ must be $\true$ for each $\jinkz$ distinct from $i$, for otherwise an extremity in $\expansion(e_j)$ is different from both $s_j$ and $t_j$ and thus also different from $s_i$ and $t_i$, as it does not belong to $\skel(\nu)$ since $\{s_j,t_j\}$ is a separation pair.

    $(\Leftarrow)$ Suppose no vertex in $V(K_i) \setminus \{s_i,t_i\}$ is an extremity and that $\noextremity{\nu}{\mu_j}$ is $\true$ for all $\jinkz\setminus\{i\}$. For a contradiction, assume that a vertex $x \in V(X_i) \setminus \{s_i,t_i\}$ is an extremity. By the initial assumption, we have that $x$ cannot belong to $V(K_i)$. Thus, $x$ is also different from $s_j,t_j$ for each $\jinkz\setminus\{i\}$, and so $x$ is an extremity in $\expansion(e_j)$ for some $\jinkz \setminus \{i\}$. Therefore it is an extremity for it, since it is different from $s_j$ and $t_j$, contradicting the assumption that $\noextremity{\nu}{\mu_j}$ is $\true$.
\end{description}

Then we compute the states $\acyclic{\mu_i}{\nu}$ for all $\iink$.
Notice that at this point the states $\reachesst{\nu}{\mu_i}$ and $\reachests{\nu}{\mu_i}$ are known for all $\iinkz$. We proceed by cases on the values of these states.

\begin{itemize}

    \item If there is an $\iinkz$ such that $\reachesst{\nu}{\mu_i}$ or $\reachests{\nu}{\mu_i}$ is $\Null$, then by definition $\acyclic{\nu}{\mu_i}$ is $\Null$ or $\false$. Then we proceed by cases.

    \begin{itemize}
        \item If $\acyclic{\nu}{\mu_i}$ is $\Null$ then $\noextremity{\nu}{\mu_i}$ is $\false$ by definition, and so there is an extremity $x \in V(\expansion(e_i)) \setminus \{s_i,t_i\}$. So for every $\jink$ distinct from $i$, vertex $x$ is an extremity also for $X_j$: $x \in V(X_j)$ because $\expansion(e_i)$ is a subgraph of $X_j$ and $x$ is different from $s_j,t_j$ since $x$ is different from $s_i,t_i$ and $\{s_i,t_i\}$ is a separation pair; thus each state $\acyclic{\mu_j}{\nu}$ is $\Null$.
        
        For the remaining state $\acyclic{\mu_i}{\nu}$ we proceed by cases. First, if $\noextremity{\mu_i}{\nu}$ is $\false$ then $\acyclic{\mu_i}{\nu}$ is $\Null$. We can thus assume that $\noextremity{\mu_i}{\nu}$ is $\true$, which implies that $\acyclic{\nu}{\mu_j}$ is $\true$ or $\false$ for each $\jinkz$ distinct from $i$.
        If some $\acyclic{\nu}{\mu_j}$ is $\false$ then $\expansion(e_j)$ has a cycle, and hence so does $X_i$ as it is a supergraph of $\expansion(e_j)$; therefore $\acyclic{\mu_i}{\nu}$ is $\false$. Otherwise every $\acyclic{\nu}{\mu_j}$ is $\true$ and thus we are in conditions to build $K_i$ since the states $\reachesst{\nu}{\mu_j}$ and $\reachests{\nu}{\mu_j}$ are not $\Null$ by definition. Then $\acyclic{\mu_i}{\nu}$ is $\false$ if and only if $K_i$ has a cycle because any cycle in $X_i$ can be mapped to a cycle in $K_i$ (similarly to the acyclicity update rule discussed in Phase 1).
        \item Otherwise $\acyclic{\nu}{\mu_i}$ is $\false$. So $\expansion(e_i)$ contains a cycle $C$. Then for every $\jink$ with $j\neq i$, $\acyclic{\mu_j}{\nu}$ is $\false$ since $C \subseteq \expansion(e_i) \subseteq X_j$. For the remaining state $\acyclic{\mu_i}{\nu}$ we proceed identically as in the case above.
    \end{itemize}

    \item Otherwise $\reachesst{\nu}{\mu_i}$ and $\reachests{\nu}{\mu_i}$ are either $\true$ or $\false$ for all $\iinkz$. Therefore we are in conditions to build $K$. Moreover, the fact that the reachability states are all either $\true$ or $\false$ implies, by definition, that $\acyclic{\nu}{\mu_i}$ is $\true$ for all $\iinkz$; in particular, there is no cycle in $K$ of the form $s_it_is_i$. So $\acyclic{\mu_i}{\nu}$ is $\true$ if and only if $K_i$ is acyclic (the correctness of this argument was established in Phase 1), and hence it is enough to examine the acyclicity of $K_i$ in order to determine $\acyclic{\mu_i}{\nu}$, for every $\iink$. However, we do not test the acyclicity of each $K_i$ individually.
    
    First, notice that if $K$ is acyclic then so is $K_i$ because $K_i$ is a subgraph of $K$, in which case $\acyclic{\mu_i}{\nu}$ is $\true$ for all $\iink$.
    Otherwise $K$ has a cycle.

    Let $\iink$. Suppose that an edge $e_i \in E(K)$ intersects every cycle of $K$. Since $K_i=K-e_i$ it follows that $K_i$ is acyclic, and therefore $\acyclic{\mu_i}{\nu}$ is $\true$. Conversely, if $e_i$ does not intersect every cycle of $K$ then $K_i$ has a cycle, and therefore $\acyclic{\mu_i}{\nu}$ is $\false$.

    So in order to keep the algorithm linear-time, it suffices to identify the edges that intersect every cycle of the directed skeleton in time proportional to its size. This is essentially the feedback arc set problem for the restricted case where every feedback set contains just one arc\footnote{In its generality, the feedback arc set problem is an NP-hard problem which asks if a directed graph $G$ has a subset of at most $k$ edges intersecting every cycle of $G$. Here, we are interested in enumerating all feedback-arcs.} (``arc'' and ``edge'' mean the same thing).
    Our subroutine to find feedback-arcs works as follows. We start by testing if the graph is acyclic. If it is we are done. Otherwise we compute the strongly connected components (SCCs) of $G$. If there are multiple non-trivial SCCs then there are two disjoint cycles and no solution exists. Thus, the last case is when there is a single non-trivial SCC, where we then have to find feedback-arcs. For that, we use the linear-time algorithm of Garey and Tarjan~\citep{garey1978linear} for finding feedback vertices, and use a standard linear-time reduction from the feedback problem on arcs to the feedback problem on vertices described in, e.g., Even et al.~\citep{Even98}). We briefly describe how the reduction works. Subdivide each arc $uv$ of $G$ into two arcs $uw$ and $wv$, thus obtaining a graph $G'$ with $|V(G)|+|E(G)|$ vertices and $2|E(G)|$ edges. If an arc $uv$ is a feedback arc of $G$ then $w$ is a feedback vertex of $G'$ (deleting $w$ from $G'$ corresponds to deleting the arcs $uw$ and $wv$ in $G$), and the converse also holds. Notice, however, that $G'$ has feedback vertices that do not correspond to arcs of $G$, but those can be safely ignored.
\end{itemize}

The states $\reachesst{\mu_i}{\mu},\reachests{\mu_i}{\mu}$ get updated for $\iink$ as in Phase 1.

\begin{description} 
    \item[$\reachesst{\mu_i}{\nu}$, $\reachests{\mu_i}{\nu}$:] At this point $\acyclic{\mu_i}{\nu}$ is known. If $\acyclic{\mu_i}{\nu}$ is $\false$ or $\Null$ then $\reachesst{\mu_i}{\nu}$ and $\reachests{\mu_i}{\nu}$ are $\Null$ by definition. Otherwise $\acyclic{\mu_i}{\nu}$ is $\true$, and thus so is $\noextremity{\mu_i}{\nu}$ by definition. Therefore, $X_i$ is acyclic, $V(X_i) \setminus \{s_i,t_i\}$ has no cutvertex, no sink and no source of $G$, and there is no edge between a vertex not in $V(X_i)$ and a vertex in $V(X_i) \setminus \{s_i,t_i\}$ since $\{s_i,t_i\}$ is a separation pair; moreover $X_i$ is clearly connected. We can thus apply \Cref{lem:unique-poles-from-acyclicity} to conclude that one vertex between $s_i$ and $t_i$ is a source of $X_i$ and the other is a sink. In the former case we can set $\reachesst{\mu_i}{\nu}$ to $\true$ and $\reachests{\mu_i}{\nu}$ to $\false$, and in the latter case we can set $\reachests{\mu_i}{\nu}$ to $\true$ and $\reachesst{\mu_i}{\nu}$ to $\false$.
\end{description}

Notice that each node $\nu$ of $T$ is processed exactly once during this phase by BFS properties. Moreover, this also implies that every state pointing from a node to its parent gets updated, as desired.
As the work done in $\nu$ is linear in the size of $\dirskel(\nu)$ (see \Cref{alg:phase2}), with \Cref{lem:spqr-total-size} we can conclude that Phase 2 runs in time $O(|V(H)|+|E(H)|)$.

\begin{lemma}\label{lem:phases-correct}
    \Cref{alg:phase1} and \Cref{alg:phase2} correctly compute the states $\state{\nu}{\mu}$ and $\state{\mu}{\nu}$ for every edge $\{\nu,\mu\}$ of $T$ and run in time $O(|V(H)|+|E(H)|)$ where $H$ is the input graph.
\end{lemma}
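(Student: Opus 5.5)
The plan is to prove \Cref{lem:phases-correct} by induction over the order in which the states are produced. We handle correctness and running time separately, and we mostly collect the per-case arguments already given in the descriptions of Phase~1 and Phase~2.

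\textbf{Correctness of Phase 1.} We induct on the height of $\mu$ in $T$ and show that when \ProcessEdge{$\nu,\mu$} returns, all four components of $\state{\nu}{\mu}$ match their definitions for $X=\expansion(e_\mu)$.
\begin{itemize}
\item Base case: $\mu$ is a leaf. All edges of $\skel(\mu)$ other than $e_\nu$ are real, so $X$ is exactly $K=\dirskel(\mu)-st-ts$, read as a subgraph of $G$. A single traversal of $K$ then decides the four fields.
\item Inductive step: the states $\state{\mu}{\mu_i}$ are correct by hypothesis. The $\noextremity{\nu}{\mu}$ rule is justified by the two implications proved in the text, which use that the $\{s_i,t_i\}$ are separation pairs, so interior vertices of the $X_i$ avoid $\skel(\mu)$.
\item For $\acyclic{\nu}{\mu}$ the key fact is the cycle correspondence between $X$ and $K$. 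Suppose every $X_i$ is acyclic and has no interior extremity. By \Cref{lem:unique-poles-from-acyclicity}, exactly one of $s_it_i$, $t_is_i$ is present in $K$. A cycle of $X$ entering $X_i$ at one pole must leave at the other, because $X_i$ is acyclic and $\{s_i,t_i\}$ separates it. So the cycle contracts to a cycle of $K$, and conversely each arc of $K$ expands to a directed path in $X_i$.
\item The reachability fields follow from \Cref{lem:unique-poles-from-acyclicity} applied to $X$. Its hypotheses are: $X$ is connected, no interior vertex is adjacent outside $X$ (since $\{s,t\}$ is a separation pair), and there is no interior source, sink or cutvertex.
\item The pseudocode test $N^+_K(s)\neq\emptyset$ is correct. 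If $s$ is the unique source of $X$, some arc leaves $s$ in $X$, and its virtual/real edge gives an out-arc of $s$ in $K$. If $s$ is the sink, $s$ has no out-arc in any $X_i$, hence none in $K$.
\end{itemize}

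\textbf{Correctness of Phase 2.} We induct on BFS order. When $\nu$ is processed, every state $\state{\nu}{\mu_j}$ for $j\in\{0,\dots,k\}$ is already correct: those toward children come from Phase~1, and the one toward the parent $\mu_0$ was computed when $\mu_0$ was processed, by the BFS order.
\begin{itemize}
\item The $\noextremity{\mu_i}{\nu}$ rule uses the same two-implication argument as Phase~1.
\item For acyclicity, the degenerate branches (some neighbouring state is $\Null$ or $\false$) are monotone consequences. $X_i$ contains $\expansion(e_j)$ for all $j\neq i$, so an interior extremity or a cycle there propagates. The single exceptional index $i$ is handled by building $K_i$ once, which costs $O(|\skel(\nu)|)$.
\item In the main branch every $\expansion(e_j)$ is acyclic with well-defined orientation. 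By the cycle correspondence above, $X_i$ is acyclic if and only if $K_i=K-e_i$ is acyclic.
\item This holds for all $i$ if $K$ is acyclic. Otherwise it holds exactly for those $e_i$ lying on every cycle of $K$, i.e.\ the feedback arcs.
\item We then argue that the subroutine outputs exactly these arcs. With two nontrivial SCCs there are disjoint cycles, so there are none. With one nontrivial SCC, arc-subdivision reduces the problem to feedback vertices, which \citep{garey1978linear} computes; the subdivision vertex $w_{uv}$ meets all cycles of $G'$ if and only if $uv$ meets all cycles of $K$.
\item The reachability fields follow again from \Cref{lem:unique-poles-from-acyclicity}.
\end{itemize}

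\textbf{Running time.} The bound follows by charging.
\begin{itemize}
\item In Phase~1 each node's skeleton is built once, when its parent edge is processed. In Phase~2 each node is processed once, and all work there (building $K$ or one $K_i$, the SCC computation, and Garey--Tarjan on a graph of size $O(|E(\skel(\nu))|)$) is linear in $|\skel(\nu)|$.
\item Extremity tests are $O(1)$ per vertex after linear preprocessing: degree counts, plus cutvertex flags from the block-cut tree.
\item The conjunctions over neighbouring states cost $O(\deg_T(\nu))$. For $\noextremity{\mu_i}{\nu}$ over all $i$, we count the number of $\false$ states among $j\in\{0,\dots,k\}$. The exclusion ``all $j\neq i$'' then takes $O(1)$ per $i$, which avoids the quadratic cost of recomputing each conjunction.
\item The same counting trick works for whether $K_i$ is needed. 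Its interior-vertex check is handled by counting extremities in $V(\skel(\nu))$ and subtracting those among $s_i,t_i$.
\item Summing, and using \Cref{lem:spqr-total-size}, the total is $O(|V(H)|+|E(H)|)$.
\end{itemize}

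The main obstacle is this last point: making sure the ``all $j\neq i$'' aggregations in Phase~2 are implemented by counting rather than recomputation. The cycle-correspondence argument is the conceptual core, but it is straightforward once acyclicity of each $X_i$ is available.
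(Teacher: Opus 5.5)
Your proposal is correct and takes essentially the same route as the paper. The paper justifies this lemma only through the inline case analysis of Phase~1 and Phase~2, and you organize exactly that argument: induction along the DFS and BFS orders, the cycle correspondence between an expansion and its directed skeleton once every neighbouring expansion is acyclic with a single orientation, the reduction of all the per-child acyclicity questions at a node to the feedback arcs of $\dirskel(\nu)$, and charging all work to skeleton sizes via \Cref{lem:spqr-total-size}.

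One bookkeeping point should be made explicit, since the paper's text is also silent on it in the feedback-arc branch. In Phase~2, $\acyclic{\mu_i}{\nu}$ must be set to $\Null$, not $\true$/$\false$, whenever $\noextremity{\mu_i}{\nu}$ is $\false$. This can happen even when every state leaving $\nu$ is $\true$, because a vertex of $\skel(\nu)$ other than $s_i,t_i$ may be an extremity of $G$.
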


\begin{algorithm}[t!]
\footnotesize
\caption{Superbubble finding -- Phase 2}
\label{alg:phase2}
\KwIn{Directed graph $G$, SPQR tree $T$ having at least two nodes}

$\rho \gets$ root of $T$, $\mathsf{Q} \gets \mathsf{queue()}$, $\mathsf{Q.push}(\rho)$\;

\While{$\mathsf{Q}$ is not empty}{
    $\nu \gets \mathsf{Q.pop()}$\;
    \lIf{$\nu$ has no children in $T$}{
        \textbf{continue}
    }

    Let $\mu_1,\dots,\mu_k$ be the children of $\nu$ with pertaining virtual edges $\{s_i,t_i\} = e_i$ $\in E(\skel(\nu))$, and $\mu_0$ the parent of $\nu$ with pertaining virtual edge $e_0 \in E(\skel(\nu))$\;
    \tcp{$\mu_0$ can be ignored if $\nu=\rho$}
    $\mathsf{Q.push}(\mu_i)$ \hspace{0.5em} $\forall i \in [1,k]$\;
    \textit{AllNodeExtremities} $\gets $ a set containing all the extremities of $G$ in $V(\skel(\nu))$\;
    \textit{AllEdgeExtremities} $\gets$ a set containing all the virtual edges $e_i \in E(\skel(\nu))$ such that $\noextremity{\nu}{\mu_i}$ is $\false$\;
    \For{$i \in [1,k]$}{
        \textit{noExtr} $\gets \true$ iff \textit{AllNodeExtremities} $\setminus \{s_i,t_i\} = \emptyset$\;
        \tcp{Notice that it suffices to store (up to) three extremities of $V(\skel(\mu))$ in order to update \textit{noExtr} in constant time}
        $\noextremity{\mu_i}{\nu} \gets $ (\textit{AllEdgeExtremities} $ \setminus \{\{s_i,t_i\}\} = \emptyset$) $\land$ \textit{noExtr}\;
        \tcp{Similarly, it suffices to store (up to) two virtual edges of $\skel(\nu)$ with corresponding extremity states leaving $\nu$ set to $\false$}
        \If{$\noextremity{\mu_i}{\nu}$ is $\false$}{
            $\acyclic{\mu_i}{\nu} \gets \Null$\;
        }
    }
    \If{at least two states among $\{ \acyclic{\nu}{\mu_1}, \dots ,\acyclic{\nu}{\mu_k} \}$ evaluate to $\Null$}{
        $\acyclic{\mu_i}{\nu} \gets \Null \; \forall \iink$\;
    }
    \If{exactly one state among $\{ \acyclic{\nu}{\mu_1}, \dots ,\acyclic{\nu}{\mu_k} \}$ evaluates to $\Null$}{
        Let $\jink$ be such that $\acyclic{\nu}{\mu_j}=\Null$\;
        $Y \gets \{1,\dots,k\} \setminus \{j\}$\;
        $\acyclic{\mu_i}{\nu} \gets \Null, \; \forall i\in Y$\;
        \textit{AcyclicOutside} $\gets$ true iff $\bigwedge_{i\in Y \cup \{0\}} \acyclic{\nu}{\mu_i}$ is $\true$\;
        \lIf{$\neg$ AcyclicOutside}{
            $\acyclic{\mu_j}{\nu} \gets \false$
        }
        \Else{
            \tcp{We are in conditions to build $\dirskel(\nu) - s_jt_j - t_js_j$}
            $K \gets \dirskel(\nu) - s_jt_j - t_js_j$\;
            $\acyclic{\mu_j}{\nu} \gets \true$ iff $K$ is acyclic\;
        }
    }
    \If{no state among $\{ \acyclic{\nu}{\mu_1}, \dots ,\acyclic{\nu}{\mu_k} \}$ evaluate to $\Null$}{

        \If{at least two states among $\{ \acyclic{\nu}{\mu_1}, \dots ,\acyclic{\nu}{\mu_k} \}$ evaluate to $\false$}{
            $\acyclic{\mu_i}{\nu} \gets \false \; \forall \iink$\;
        }
        \If{exactly one state among $\{ \acyclic{\nu}{\mu_1}, \dots ,\acyclic{\nu}{\mu_k} \}$ evaluates to $\false$}{
            Let $\jink$ be such that $\acyclic{\nu}{\mu_j}=\false$\;
            $Y \gets \{1,\dots,k\} \setminus \{j\}$\;
            $\acyclic{\mu_i}{\nu} \gets \false, \; \forall i\in Y$\;
            \tcp{We are in conditions to build $\dirskel(\nu) - s_jt_j - t_js_j$}
            $K \gets \dirskel(\nu) - s_jt_j - t_js_j$\;
            $\acyclic{\mu_j}{\nu} \gets \true$ iff $K$ is acyclic\;
        }
        \If{every state among $\{ \acyclic{\nu}{\mu_1}, \dots ,\acyclic{\nu}{\mu_k} \}$ evaluates to $\true$}{
            \tcp{We are in conditions to build $\dirskel(\mu)$}
            $A \gets \mathsf{FeedbackArcs(\dirskel(\nu))} \cap \{e_1,\dots,e_k\}$\;
            \tcp{$A$ contains those virtual edges of $\dirskel(\nu)$ which are feedback arcs in $\dirskel(\nu)$}
            $\acyclic{\mu_i}{\nu} \gets \true  \; \forall e_i \in A$\;
            $\acyclic{\mu_i}{\nu} \gets \false \; \forall e_i \notin A$\;
        }
    }
}
\end{algorithm}

\subsection{Algorithm - Phase 3}

In Phase 3 the pairs $(s,t)$, $(t,s)$ such that $\{s,t\}$ is a separation pair of $H$ are reported. (Recall that these are necessarily endpoints of virtual edges due to \Cref{lem:spqr-tree-contains-split-pairs} and \Cref{prop:nonadjacent-S-node}).
Further, if a pair of vertices are adjacent in the skeleton of an S-node and these identify a superbubble then the corresponding superbubble graph is not within the S-node, as we show next.

\begin{proposition}[Superbubbles and S-nodes]
\label{prop:superbubbles-in-S-nodes}
    Let $G$ be a directed graph, let $(s,t)$ be a superbubble of $G$ with graph $B$, let $T$ be the SPQR tree of a maximal 2-connected subgraph of $G$, and let $e=\{s,t\}$ be a virtual edge of a node of $T$. If the pertaining node of $e$ is an S-node then $B \not\subseteq \expansion(e)$.
\end{proposition}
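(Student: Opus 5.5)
Suppose for contradiction that $B \subseteq \expansion(e)$, where the pertaining node $\mu$ of $e$ is an S-node. We have $\skel(\mu)$ a cycle containing $e$ (as the reference edge of $\mu$) and further edges $f_1,\dots,f_\ell$ ($\ell \geq 2$) forming an $s$-$t$ path $s=c_0,c_1,\dots,c_\ell=t$ in $\skel(\mu)$. The expansion $\expansion(e)$ is the union of $\expansion(f_1),\dots,\expansion(f_\ell)$, glued at the cutvertices $c_1,\dots,c_{\ell-1}$. The plan is to show that $B$ must then have an $s$-$t$ cutvertex in its interior, contradicting \Cref{lem:superbubbloid-superbubble-cutvertex}.

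\textbf{Step 1: $B$ meets every $c_j$.} The vertex $c_1$ lies in $\expansion(e)$, and every undirected $s$-$t$ path inside $\expansion(e)$ passes through $c_1$, because $\expansion(e)$ is a series composition. We need $B$ to contain an $s$-$t$ directed path. This follows from \Cref{def:superbubbloid-gartner}, since $t$ is reachable from $s$ within $B$; for a nontrivial $B$ the interior lies on such paths. That path lives in $B \subseteq \expansion(e)$, so it passes through $c_1$. Hence $c_1 \in V(B)\setminus\{s,t\}$.

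\textbf{Step 2: conclude.} Every $s$-$t$ path in $B$ contains $c_1$, since $B \subseteq \expansion(e)$ and $c_1$ separates $s$ from $t$ there. So $c_1$ is an $s$-$t$ cutvertex of $B$ lying in the interior of $B$. By \Cref{lem:superbubbloid-superbubble-cutvertex}, $(s,t)$ is then not a superbubble, a contradiction. Equivalently, by \Cref{cor:superbubbles-biconnected}, $B$ would not be biconnected.

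\textbf{Edge case.} We must make sure $B$ is not the trivial single edge $st$, which would avoid $c_1$. If the real edge $st$ existed in $H$, it would lie in a separate split component of $\{s,t\}$, namely as a Q-node or real edge in the parent P-node. It would therefore not be contained in $\expansion(e)$, because $\expansion(e)$ on the S-node side consists only of the series chain. So $B \subseteq \expansion(e)$ forces $B$ to contain a path through $c_1$.

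\textbf{Main obstacle.} The main difficulty is precisely this bookkeeping: identifying which side is $\expansion(e)$ when $e$ pertains to $\mu$, and confirming that $\expansion(e)$ contains no direct $s$-$t$ edge. Once that is settled, the cutvertex argument is immediate.
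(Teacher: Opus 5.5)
Your proof is correct and follows essentially the same route as the paper: the series structure of the S-node yields a vertex ($c_1$) separating $s$ from $t$ in $\expansion(e)$, so it is an interior $s$-$t$ cutvertex of $B$, which contradicts \Cref{lem:superbubbloid-superbubble-cutvertex}. Your extra checks fill in details the paper leaves implicit: that $c_1$ actually lies in $B$ (via an $s$-$t$ path inside $B$), and that $\expansion(e)$ contains no direct $st$ edge.
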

\begin{proof}
    Suppose for a contradiction that $B \subseteq \expansion(e)$.
    By definition of S-node, the graph $\expansion(e)$ is a split component of the split pair $\{s,t\}$ and moreover contains a vertex $y$ separating $s$ from $t$, so $y$ is an $s$-$t$ cutvertex in $B$ since $B \subseteq \expansion(e)$. 
    The result now follows from \Cref{lem:superbubbloid-superbubble-cutvertex}.
\end{proof}

Thus, if $(s,t)$ is a superbubble and $\{s,t\}$ is a separation pair of $H$, then there is a P-node of $T$ with vertex-set $\{s,t\}$, or there is an R-node of $T$ with a virtual edge $\{s,t\}$. We discuss informally the two cases.

SPQR trees encode not only every separation pair of the graph but also the respective sets of split components. The way in which these split components are put together to form the skeletons of the nodes is what defines the different types of nodes, S, P, and R, as well as the SPQR tree itself. For the application of finding superbubbles, examining only the natural separations (and split components thereof) encoded in the SPQR tree is not enough to ensure completeness. Consider, for instance, a P-node $\mu$ with $k \geq 4$ split components. The separations encoded in each of the $k$ tree-edges incident to $\mu$ implicitly group $k-1$ expansions of the edges of $\skel(\mu)$ and puts the vertices therein in one side of the separation, and the vertices on the expansion of remaining virtual edge is put on the remaining side of the separation. However, it may be that the graph of a superbubble could match, e.g., the union of the expansions of two virtual edges of $\skel(\mu)$.
To overcome this, we first iterate over all the P-nodes of $T$ (say, with vertex set $\{s,t\}$) and group the virtual edges containing out-neighbours of $s$ and group the virtual edges containing in-neighbours of $t$; these have to match, otherwise $(s,t)$ is not a superbubble. Further, for each virtual edge in these (matching) sets, we check if the respective state leaving this P-node has the acyclicity and absence-of-extremities fields set to true. Finally, if all the out-neighbors (resp. in-neighbors) of $s$ (resp. $t$) are contained in candidate superbubble graph given by the matching sets, and $ts \notin E(G)$, then $(s,t)$ is a superbubbloid by \Cref{lem:separation-pair-is-superbubbloid}. Minimality follows from the structure of P-nodes as follows. If the resulting matching sets have more than one edge then minimality follows from \Cref{lem:superbubbloid-superbubble-cutvertex}. Otherwise, if the pertaining node of the unique edge in the set is an S-node then \Cref{prop:superbubbles-in-S-nodes} tells us that the expansion of that edge is not a superbubble, and so in this case S-nodes can be ignored; hence the pertaining node in question is an R-node, in which case minimality follows easily due to the connectivity of R-nodes.
We formalize this discussion with the next two results.

\begin{proposition}[Superbubbloids and P-nodes, see \Cref{fig:superbubbloid-pnode}]\label{prop:P-node}
    Let $G$ be a directed graph, $H$ be a maximal 2-connected subgraph of $G$, and $\mu$ be a P-node of the SPQR tree of $H$.
    Let $e_1,\dots,e_k$ denote the edges of $\skel(\mu))$ with endpoints $\{s,t\}$ $(k\geq3)$.  
    Let $E^+_s = \{ e_i : V(\expansion(e_i))\cap N^+(s) \neq \emptyset \}$, $E^-_t = \{ e_i : V(\expansion(e_i))\cap N^-(t) \neq \emptyset \}$, and $K=\bigcup_{e\in E^+_s}\expansion(e)$.
    Then $(s,t)$ identifies a superbubbloid of $G$ with graph $K$ if and only if $E^+_s \neq \emptyset$, $E^+_s = E^-_t$, $N^+_G(s) \subseteq V(K)$, $N^-_G(t) \subseteq V(K)$, $ts \notin E(G)$, and for each $e \in E^+_s$ the graph $\expansion(e)$ is acyclic and does not contain extremities of $G$ except $\{s,t\}$.
\end{proposition}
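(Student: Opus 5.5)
The plan is to prove the two directions separately. The reverse direction rests on \Cref{lem:separation-pair-is-superbubbloid}; the forward direction reads the required properties off \Cref{def:superbubbloid-gartner}.

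\emph{($\Leftarrow$).} Since $\mu$ is a P-node, $\{s,t\}$ is a separation pair of $H$, and the expansions $\expansion(e_i)$ are exactly the split components of $\{s,t\}$ (up to grouping). Hence $K$ is the union of a nonempty subset of split components; nonemptiness comes from $E^+_s\neq\emptyset$. Each $\expansion(e)$ with $e\in E^+_s$ contains no extremity of $G$ other than $s,t$, so $V(K)\setminus\{s,t\}$ contains no extremity.

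Next, acyclicity of $K$. Any cycle of $K$ that is not contained in a single $\expansion(e)$ must pass through both $s$ and $t$. It therefore yields an $s$-$t$ directed path inside one expansion $\expansion(e_a)$ and a $t$-$s$ directed path inside another expansion $\expansion(e_b)$, or inside the same one. Within an acyclic $\expansion(e)$ with no interior extremities, \Cref{lem:unique-poles-from-acyclicity} applies, since inner vertices have no neighbours outside. It makes $s$ the unique source and $t$ the unique sink of that expansion: the orientation must be $s\to t$, because $\expansion(e)$ contains an out-neighbour of $s$. So no expansion in $E^+_s$ has a $t$-$s$ path, and $K$ is acyclic. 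Together with $N^+_G(s),N^-_G(t)\subseteq V(K)$ and $ts\notin E(G)$, \Cref{lem:separation-pair-is-superbubbloid} gives that $(s,t)$ is a superbubbloid with graph $K$.

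\emph{($\Rightarrow$).} Suppose $(s,t)$ is a superbubbloid with graph $K$. Condition 1 of \Cref{def:superbubbloid-gartner}, together with $|V(K)|\ge 2$ and connectivity, forces $s$ to have an out-neighbour in $K$, so $E^+_s\neq\emptyset$. Conditions 3 and 4 yield $N^+_G(s)\subseteq V(K)$ and $N^-_G(t)\subseteq V(K)$: an out-neighbour $w\notin V(K)$ of $s$ would give an $s$-$w$ path avoiding $t$. Condition 6 is $ts\notin E(G)$. Acyclicity of each $\expansion(e)\subseteq K$ follows from condition 5: a cycle through edge $uv$ gives a $v$-$u$ path avoiding $s$ or $t$.

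For $E^+_s=E^-_t$, take $e\in E^+_s$. Every vertex of $\expansion(e)$ lies on an $s$-$t$ path by conditions 1–2. Such a path, starting with an edge into $\expansion(e)$, stays in $\expansion(e)$ until it reaches $t$, since it cannot revisit $s$. Hence its last edge enters $t$ from $\expansion(e)$, which gives $e\in E^-_t$; the converse inclusion is symmetric. Finally, no interior vertex of $\expansion(e)$ is a source or sink by conditions 1–2 together with separation. It is not a cutvertex of $G$ either, because $\{s,t\}$ lies in the 2-connected $H$ and $\expansion(e)\cup e$ is 2-connected.

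\textbf{Main obstacle.} The step I expect to be most delicate is the cutvertex part of the extremity condition, i.e.\ ensuring that a vertex of the block $H$ cannot be a cutvertex of $G$ attaching pendant structure. Here I would appeal to the reasoning of \Cref{lem:bubbles-cutvertices}: a pendant block hanging at an interior vertex $v$ has a neighbour $w$ of $v$ that lies in the superbubbloid. Then $w$ is reached from $s$ and reaches $t$ only through $v$, which creates a cycle in $K$, contradicting condition 5.
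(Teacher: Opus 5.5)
Your ($\Leftarrow$) direction follows the paper's argument. $K$ is a union of split components of $\{s,t\}$ with no interior extremities. Each $\expansion(e)$ with $e\in E^+_s$ is oriented from $s$ to $t$ by \Cref{lem:unique-poles-from-acyclicity}, so $K$ is acyclic and \Cref{lem:separation-pair-is-superbubbloid} applies.

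The one step you assert too quickly is ``since $\mu$ is a P-node, $\{s,t\}$ is a separation pair''. A P-node skeleton may consist of two real edges and a single virtual edge, and then $\{s,t\}$ is not a separation pair of $H$. Since $G$ has no parallel edges, those two real edges would have to be $st$ and $ts$. The paper excludes this case using the hypothesis $ts\notin E(G)$, and you need the same remark before invoking \Cref{lem:separation-pair-is-superbubbloid}.

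Your ($\Rightarrow$) direction takes a different route from the paper's. You read the hypothesis as ``$(s,t)$ is a superbubbloid whose graph is $K$'' and verify each condition directly from the clauses of \Cref{def:superbubbloid-gartner}. The paper instead starts from a superbubbloid $(s,t)$ with an arbitrary graph $B$ and proves $B=K$ by double inclusion. For $B\subseteq K$, a directed path from $s$ avoiding $t$ stays inside one expansion, and its second vertex places that expansion in $E^+_s$. The paper then reads off the conditions from $K$ being a superbubbloid graph. Your version suffices for the biconditional as literally stated and needs no identification step. However, it does not give the stronger fact that \emph{every} superbubbloid $(s,t)$ at a P-node has graph exactly $K$, and that is the form the completeness part of \Cref{thm:superbubbles} invokes.

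Two local steps of your ($\Rightarrow$) need repair.

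First, clause 5 only excludes cycles that miss $s$ or $t$. On a cycle through both, the $v$-$u$ path along the cycle contains $s$ and $t$, so no contradiction arises. Either use, as the paper does, that superbubbloid graphs are acyclic, or give a separate argument for such cycles.

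Second, ``not a cutvertex of $G$ because $\expansion(e)\cup e$ is 2-connected'' is not valid: 2-connectivity inside $H$ says nothing about blocks of $G$ hanging at $v$. Your ``main obstacle'' repair is the right idea, and under your reading it is shorter than you make it. A neighbour $w$ of $v$ in such a pendant block lies outside $V(H)\supseteq V(K)$. So the single edge between $v$ and $w$ already violates clause 3 or 4. In particular $w\notin V(K)$, so the cycle you describe would not lie in $K$.
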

\begin{proof}
    $(\Rightarrow)$ Let $(s,t)$ be a superbubbloid of $G$ with graph $B$ and let $\mu$ be a P-node whose skeleton has vertex set $\{s,t\}$. Since superbubbles are contained in the blocks of $G$ by \Cref{lem:bubbles-cutvertices} and $s,t \in V(H)$ it follows that $V(B) \subseteq V(H)$. Further, since $B$ contains all the out-neighbors of $s$ and $V(B) \subseteq V(H)$, it follows that $E^+_s \neq \emptyset$ (analogously, $E^-_t \neq \emptyset$).
    We show that $K = B$.
    
    We show that $K \subseteq B$. Notice that $K$ is an induced subgraph since each expansion is induced and there are no edges across expansions. Since $B$ is also induced by definition of superbubbloid, it is enough to show that any vertex in $K$ is also in $B$.
    Notice that if $u \in V(K)$ then $u\in \expansion(e)$ for some $e\in E^+_s$. As established in the proof of (1) of \Cref{lem:separation-pair-is-superbubbloid}, $\expansion(e)$ has a directed path from $s$ to $t$ through $u$ since it is acyclic and has no extremities except $\{s,t\}$, and since $(s,t)$ is a superbubbloid, we have $u \in B$.
    Now we show that $B \subseteq K$. Suppose for a contradiction that $B \not\subseteq K$. Since $K$ and $B$ are induced subgraphs there is a vertex $v \in V(B)\setminus V(K)$. So in particular, $s$ reaches $v$ without $t$ via some directed path. 
    Due to the structure of P-nodes, this path is contained in $\expansion(e)$ for some $e\in E(\skel(\mu))$. Thus, the first vertex following  $s$ in this path is also in $\expansion(e)$ and hence $\expansion(e)$ has an out-neighbour of $s$. Therefore $e\in E^+_s$ and hence $v \in V(K)$, a contradiction.

    The conditions $N^+_G(s) \subseteq V(K)$ and $N^-_G(t) \subseteq V(K)$ follow trivially since $B=K$, and $ts \notin E(G)$ because $(s,t)$ is a superbubbloid. 
    Further, for each $e \in E^+_s$ the graph $\expansion(e)$ is acyclic and does not contain extremities of $G$ except $\{s,t\}$, since a cycle or extremity except $\{s,t\}$ in some expansion would be a cycle or extremity in $K$, each contradicting the fact that $K$ is a superbubbloid graph.
    The equality $E^+_s = E^-_t$ follows at once from the fact that $B=K=\bigcup_{e\in E^+_s}\expansion(e)$ and $N^+_G(s) \subseteq V(K)$ and $N^-_G(t) \subseteq V(K)$.

    $(\Leftarrow)$ First, notice that if $\{s,t\}$ is not a separation pair then $\skel(\mu)$ has exactly three edges, two of which are the real edges $st$ and $ts$ (recall that $G$ has no parallel edges and that $|E(\skel(\mu))| \geq 3$ by definition), a contradiction to the assumption that $ts \notin E(G)$.
    
    So $\{s,t\}$ is a separation pair and $K$ consists of a union of a nonempty subset of split components of $\{s,t\}$ since $E^+_s\neq\emptyset$.
    Moreover, $K$ has no extremities of $G$ except $\{s,t\}$ because $\expansion(e)$ has no extremities of $G$ except $\{s,t\}$ for every $e\in E^+_s$ .
    Further, since $\expansion(e)$ is acyclic and has no sources or sinks except $\{s,t\}$ for all $e\in E^+_s$, it follows that one vertex in $\{s,t\}$ is the unique source and the other is the unique sink of $\expansion(e)$ (see \Cref{lem:unique-poles-from-acyclicity}); due to the neighborhood constraints, it follows that $s$ is the source and $t$ is the sink of $\expansion(e)$.
    Also, since $\expansion(e)$ is acyclic for each $e\in E^+_s$, any cycle in $K$ contains vertices from different split components. So a cycle in $K$ contains both $s$ and $t$, but since $s$ is a source (and $t$ is a sink) in $\expansion(e)$ for any $e\in E^+_s$, it follows that $K$ is acyclic.
    So we are in conditions of applying \Cref{lem:separation-pair-is-superbubbloid} and conclude that $(s,t)$ is a superbubbloid of $G$ with graph $K$.
\end{proof}

\begin{figure}[t]
    \centering
    \includegraphics[width=0.75\linewidth]{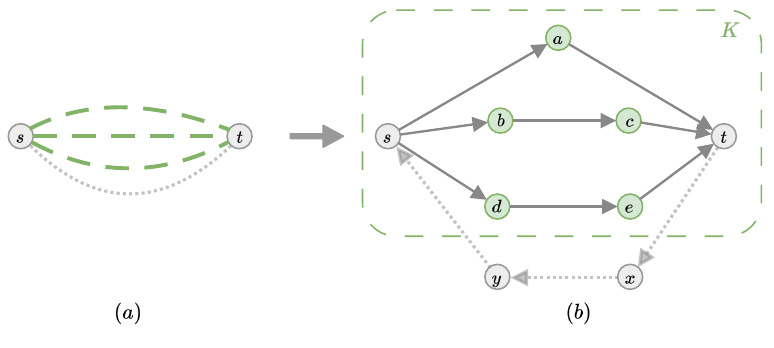}
    \caption{Superbubbloids in a P-node (\Cref{prop:P-node}). (a)~P-node skeleton with four parallel split components between $s$ and $t$. The green edges are those whose expansion contains out-neighbors of $s$ and in-neighbors of $t$ ($E^+_s = E^-_t$). (b)~ The three green paths go from $s$ to $t$ and their union $K$ forms the superbubbloid graph. The gray path goes from $t$ to $s$ and is not part of $K$.}
    \label{fig:superbubbloid-pnode}
\end{figure}

\begin{proposition}[Superbubbles and R-nodes]\label{prop:R-node}
    Let $G$ be a directed graph, $H$ be a maximal 2-connected subgraph of $G$, and $\nu,\mu$ be nodes of the SPQR tree of $H$.
    Let $e_\mu=\{s,t\} \in E(\skel(\nu))$ be the virtual edge pertaining to $\mu$ and $e_\nu \in E(\skel(\mu))$ be the virtual edge pertaining to $\nu$.
    If $\nu$ is an R-node, $N^+_G(s),N^-_G(t) \subseteq V(\expansion(e_\nu))$, $\expansion(e_\nu)$ is acyclic and has no extremities except $\{s,t\}$, $ts \notin E(G)$, then $(s,t)$ is a superbubble with graph $\expansion(e_\nu)$.
\end{proposition}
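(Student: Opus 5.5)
The plan has two parts. First I show that $(s,t)$ is a superbubbloid with graph $B:=\expansion(e_\nu)$ by invoking \Cref{lem:separation-pair-is-superbubbloid}. Then I show minimality through \Cref{lem:superbubbloid-superbubble-cutvertex}, exploiting the triconnectivity of R-node skeletons.

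\textbf{Superbubbloid.} Since $e_\mu$ is a virtual edge, $\{s,t\}$ is a separation pair of $H$ by \Cref{lem:spqr-tree-contains-split-pairs}. The graph $\expansion(e_\nu)$ is the union of a nonempty set of split components of $\{s,t\}$: namely all of them except those lying in $\expansion(e_\mu)$, where $\expansion(e_\mu)\cup\expansion(e_\nu)=H$ and the two meet only in $\{s,t\}$. The remaining hypotheses of \Cref{lem:separation-pair-is-superbubbloid} are given verbatim: no extremities in the interior, acyclicity, $N^+_G(s),N^-_G(t)\subseteq V(B)$, and $ts\notin E(G)$. Hence $(s,t)$ is a superbubbloid with graph $B$.

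\textbf{Minimality.} By \Cref{lem:superbubbloid-superbubble-cutvertex}, it suffices to show that no interior vertex $w$ of $B$ is an $s$-$t$ cutvertex of $B$, so let $w\in V(B)\setminus\{s,t\}$. There are two cases.
\begin{itemize}
\item If $w\in V(\skel(\nu))$, note that $\skel(\nu)$ is triconnected because $\nu$ is an R-node. Then $\skel(\nu)-e_\mu$ is still 2-connected, so it contains an $s$-$t$ path avoiding $w$.
\item If $w\notin V(\skel(\nu))$, then $w$ lies in the interior of $\expansion(f)$ for a unique edge $f\neq e_\mu$ of $\skel(\nu)$. Again, $\skel(\nu)-e_\mu$ is 2-connected, and since $\skel(\nu)$ is triconnected it is not just a cycle. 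So $\skel(\nu)-e_\mu-f$ is connected and contains an $s$-$t$ path $P$ avoiding $f$.
\end{itemize}
In both cases I obtain a skeleton path $P$ in $\skel(\nu)-e_\mu$ avoiding $w$ or $f$. I then replace every edge $\{x,y\}$ of $P$ by an undirected $x$-$y$ path inside $\expansion(\{x,y\})$. Such a path exists by \Cref{lem:reaches-in-expansion}, and it avoids $w$: in the first case because $w$ is a skeleton vertex different from $x,y$, and in the second case because the expansions of distinct skeleton edges meet only in skeleton vertices. All these expansions lie in $B$, since $B=\bigcup_{f'\neq e_\mu}\expansion(f')$ by the covering property of skeletons. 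This yields an undirected $s$-$t$ walk in $B-w$, so $w$ is not an $s$-$t$ cutvertex.

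\textbf{Main obstacle.} The delicate point is the meaning of ``cutvertex'' in \Cref{lem:superbubbloid-superbubble-cutvertex}. Its proof uses the existence of a \emph{directed} $s$-$t$ path avoiding $v$, so undirected connectivity alone may not suffice. I would bridge this gap by upgrading undirected paths to directed ones. By the proof of \Cref{lem:separation-pair-is-superbubbloid}(1), every vertex of the acyclic, extremity-free superbubbloid $B$ lies on a directed $s$-$t$ path. Suppose every directed $s$-$t$ path in $B$ passed through $w$. Then the vertices on the $s$-side and the $t$-side of $w$ would be separated by $w$ in $U(B)$, and an edge crossing between the two sides would create either a cycle or a bypass of $w$. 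That would make $w$ an undirected $s$-$t$ cutvertex, contradicting the argument above.

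Alternatively, I can argue directly as in the proof of \Cref{lem:superbubbloid-superbubble-cutvertex}. If $(w,t)$ were a superbubbloid for some interior $w$, its graph would be $\expansion$ of a sub-separation with $\{w,t\}$ a separation pair of $B$. Because $\skel(\nu)$ is triconnected, such a pair cannot separate $s$ from the rest, which rules out the violation.
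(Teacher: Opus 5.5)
Your proof is correct and follows the paper's route: \Cref{lem:separation-pair-is-superbubbloid} gives the superbubbloid, and the fact that $\skel(\nu)-e_\mu$ remains 2-connected, lifted edge by edge into $\expansion(e_\nu)$, rules out interior $s$-$t$ cutvertices, so \Cref{lem:superbubbloid-superbubble-cutvertex} applies. The paper phrases this via two internally vertex-disjoint $s$-$t$ paths instead of your vertex-by-vertex check, which is equivalent by Menger, and your directed-versus-undirected paragraph correctly justifies a step the paper leaves implicit inside \Cref{lem:superbubbloid-superbubble-cutvertex} but is not needed here, since that lemma already means cutvertices in $U(B)$; the vague ``Alternatively'' paragraph can be dropped.
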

\begin{proof}
    Let $K=\expansion(e_\nu)$.
    Notice that $\{s,t\}$ is a separation pair of $H$ and that $K$ is a split component with respect to $\{s,t\}$.
    So we are in conditions of applying \Cref{lem:separation-pair-is-superbubbloid}, which implies that $(s,t)$ is a superbubbloid with graph $K$. Next we argue on the minimality.
    
    Notice that $\skel(\nu)$ has three internally vertex-disjoint $s$-$t$ paths since it is 3-connected. Then $\skel(\nu)$ without the edge $\{s,t\}$ has two internally vertex-disjoint $s$-$t$ undirected paths and hence so does $K$ (split components are connected, so a path through the edges of $\skel(\nu)$ can be mapped to a path in $K$). Therefore $(s,t)$ is a superbubble by \Cref{lem:superbubbloid-superbubble-cutvertex}.
\end{proof}

\begin{algorithm}[t!]
\small
\caption{Superbubble finding -- Phase 3}
\label{alg:phase3}
\KwIn{Directed graph $G$, SPQR tree $T$ having at least two nodes}

\For{every P-node $\mu$ of $T$}{
    Build the sets $E^+_s, E^-_t$ of $\mu$ as described in \Cref{prop:P-node}\;
    Build the sets $E^-_s, E^+_t$ analogously\;
    \If{$E^+_s = E^-_t$}{
        \tcp{Equivalently, $E^-_s = E^+_t$}
        Let $\{s,t\}$ be the vertex set of $\skel(\mu)$\;
        Let $e_1,\dots,e_k$ denote the edges in $\skel(\mu)$ $(k\ge3)$\;
        Let $\mu_1,\dots,\mu_\ell$ denote the pertaining nodes of edges in $E^+_s$ $(\ell\ge0)$\;
        Let $\mu'_1,\dots,\mu'_{\ell'}$ denote the pertaining nodes of edges in $E^-_s$ $(\ell'\ge0)$\;
        $\mathsf{assert}(\ell' = k - \ell)$\;
        
        \If{$\acyclic{\mu}{\mu_i}$ and $\noextremity{\mu}{\mu_i}$ are $\true$ for all $i \in \{1,\dots,\ell\}$}{
            \If{$N^+_G(s), N^-_G(t) \subseteq V(\bigcup_{e \in E^+_s} \expansion(e))$ and $ts \notin E(G)$}{
                \If{$\ell = 1$}{
                    \tcp{See \Cref{prop:superbubbles-in-S-nodes}}
                    Report $(s,t)$ if the pertaining node of the edge in $E^+_s$ is not an S-node\; \label{line:superbubbles-P-st-1}
                }
                \Else{
                    Report $(s,t)$\; \label{line:superbubbles-P-st-2}
                }
            }
        }

        \If{$\acyclic{\mu}{\mu'_i}$ and $\noextremity{\mu}{\mu'_i}$ are $\true$ for all $i \in \{1,\dots,\ell'\}$}{
            \If{$N^-_G(s), N^+_G(t) \subseteq V(\bigcup_{e \in E^-_s} \expansion(e))$ and $st \notin E(G)$}{
                \If{$\ell' = 1$}{
                    \tcp{See \Cref{prop:superbubbles-in-S-nodes}}
                    Report $(t,s)$ if the pertaining node of the edge in $E^-_s$ is not an S-node\; \label{line:superbubbles-P-ts-1}
                }
                \Else{
                    Report $(t,s)$\; \label{line:superbubbles-P-ts-2}
                }
            }
        }
    }
}

\For{every R-node $\mu$ of $T$}{
    \For{every neighbour $\nu$ of $\mu$ in $T$ that is not a P-node}{
        Let $\{s,t\}=e_\mu \in \skel(\nu)$ be the virtual edge pertaining to $\mu$\;
        Let $X = \expansion(e_\mu)$\;
        \If{$\acyclic{\nu}{\mu}$ and $\noextremity{\nu}{\mu}$ are $\true$}{
            \If{$N^+_G(s), N^-_G(t) \subseteq V(X)$ and $ts \notin E(G)$}{
                \tcp{And hence $N^-_G(s), N^+_G(t) \subseteq \overline{V(X)}$}
                Report $(s,t)$\; \label{line:superbubbles-R-st}
            }
            \If{$N^+_G(t), N^-_G(s) \subseteq V(X)$ and $st \notin E(G)$}{
                \tcp{And hence $N^-_G(t), N^+_G(s) \subseteq \overline{V(X)}$}
                Report $(t,s)$\; \label{line:superbubbles-R-ts}
            }
        }
    }
}
\end{algorithm}

\begin{algorithm}[t]
\small
\caption{Superbubble finding algorithm}
\label{alg:superbubbles-main}
\KwIn{Directed graph $G$}

Let $\mathcal{B}$ and $C \subseteq V(G)$ be the list of blocks and cutvertices of $U(G)$, respectively $(k \ge 1)$\;

\For{$H \in \mathcal{B}$}{
    \For{$e=\{s,t\} \in E(H)$}{
    \If{$N^+_s(G) = \{t\}$ and $N^-_t(G) = \{s\}$ and $ts\notin E(G)$}{
        Report $(s,t)$\; \label{line:superbubbles-trivial-1}
        }
    \If{$N^+_t(G) = \{s\}$ and $N^-_s(G) = \{t\}$ and $st\notin E(G)$}{
        Report $(t,s)$\;
        \label{line:superbubbles-trivial-2}
        }
    }
   \If{$H$ is a multi-bridge}{
       \textbf{continue}
   }\Else{
       \tcp{$H$ is 2-connected}
       \If{$H$ has exactly one source $s$ and one sink $t$ w.r.t.\ $H$ and $ts \notin E(G)$ \label{line:superbubbles-rn1}}{
           \If{$C \cap V(H)\setminus\{s,t\} = \emptyset$ and $N^+_G(s), N^-_G(t) \subseteq V(H)$ and $H$ is acyclic \label{line:superbubbles-rn2}}{
               Report $(s,t)$\; \label{line:superbubbles-block}
           }
       }
       $T \gets \mathsf{BuildSPQR}(H)$\;
       $\mathsf{Phase1}(G,T)$\;
       $\mathsf{Phase2}(G,T)$\;
       $\mathsf{Phase3}(G,T)$\;
   }
}
\end{algorithm}

\subsection{The superbubble finding algorithm}

We are in conditions to prove the correctness of the superbubble finding algorithm directly.

\paragraph{Correctness and runtime}

\begin{theorem}\label{thm:superbubbles}
    Let $G$ be a directed graph. The algorithm computing superbubbles (\Cref{alg:superbubbles-main}) is correct, that is, it finds every superbubble of $G$ and only its superbubbles, and it can be implemented in time $O(|V(G)| + |E(G)|)$.
\end{theorem}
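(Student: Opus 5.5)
We prove soundness (every reported pair is a superbubble) and completeness (every superbubble is reported) separately, and then bound the running time. By \Cref{lem:bubbles-cutvertices}, every superbubble $(s,t)$ with graph $B$ lies inside a unique block $H$ of $G$. By \Cref{thm:bubbles-split-pairs}, either $V(B)=V(H)$, or $\{s,t\}$ is a split pair of $H$. If it is a split pair, there are two possibilities: $\{s,t\}$ is an edge and $B$ is that single edge (a trivial superbubble), or $\{s,t\}$ is a separation pair. For a separation pair, \Cref{lem:spqr-tree-contains-split-pairs} and \Cref{prop:nonadjacent-S-node} show that $\{s,t\}$ is the pair of endpoints of a virtual edge of $T$. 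We organize the argument according to which line of \Cref{alg:superbubbles-main} or \Cref{alg:phase3} handles each case.

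\textbf{Soundness.} We check each reporting line in turn.

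\emph{Lines~\ref{line:superbubbles-trivial-1}--\ref{line:superbubbles-trivial-2}.} These verify the definition of a trivial superbubble directly.

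\emph{Line~\ref{line:superbubbles-block}.} Take $K=H$. The hypotheses of \Cref{lem:separation-pair-is-superbubbloid} hold with $H$ in place of the union of split components: the proof of that lemma uses only acyclicity, the absence of internal extremities, the neighbourhood containments and $ts\notin E(G)$. So $(s,t)$ is a superbubbloid. For minimality, $H$ is 2-connected, so no interior vertex is an $s$-$t$ cutvertex, and \Cref{lem:superbubbloid-superbubble-cutvertex} applies.

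\emph{P-node lines.} Correctness of the states is guaranteed by \Cref{lem:phases-correct}, so \Cref{prop:P-node} gives a superbubbloid. Minimality splits into two cases. If $\ell\ge2$, two distinct split components supply internally disjoint $s$-$t$ paths, so \Cref{lem:superbubbloid-superbubble-cutvertex} applies. If $\ell=1$, the pertaining node is adjacent to a P-node and is not an S-node, so it is an R-node. Its 3-connected skeleton then provides two internally disjoint paths avoiding $\{s,t\}$, exactly as in \Cref{prop:R-node}.

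\emph{R-node lines.} \Cref{prop:R-node} applies verbatim, up to exchanging the roles of $\nu$ and $\mu$.

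\textbf{Completeness.} Let $(s,t)$ be a nontrivial superbubble with $V(B)\neq V(H)$.

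First, $B$ cannot be a strict union of split components sitting inside an S-node expansion, by \Cref{prop:superbubbles-in-S-nodes}. Hence either $\{s,t\}$ is the vertex set of a P-node, or $\{s,t\}=e_\mu$ for a tree edge $\{\nu,\mu\}$ where $\mu$ is an R-node and $\nu$ is not a P-node.

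\emph{P-node case.} The ($\Rightarrow$) direction of \Cref{prop:P-node} shows that all tested conditions hold and that $B=K$. If $|E^+_s|=1$ and the pertaining node were an S-node, \Cref{prop:superbubbles-in-S-nodes} would be contradicted, so the pair is reported.

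\emph{R-node case.} We must show that $B=\expansion(e_\mu)$. The superbubble $B$ is a union of split components of $\{s,t\}$: every vertex of $B$ lies on an $s$-$t$ path inside $B$, and split components meet only in $s$ and $t$. Since $\nu$ is not a P-node, $\{s,t\}$ has exactly two split components, and $B\neq H$. It follows that $B$ is one of the two sides, either $\expansion(e_\mu)$ or $\expansion(e_\nu)$. The loop in Phase~3 ranges over every non-P neighbour of every R-node, and it tests both orientations $(s,t)$ and $(t,s)$. The side that is an R-node expansion is therefore caught.

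The side that is an S-node expansion is excluded by \Cref{prop:superbubbles-in-S-nodes}. The remaining possibility is an R--R tree edge, and this is where the main obstacle lies. There, both sides are tested through the two iterations, one from each endpoint, so the pair is still reported.

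I expect this step to be the delicate one: showing that $B$ equals exactly one full expansion. I would prove it by combining the neighbourhood conditions (3)--(4) of \Cref{def:superbubbloid-gartner} with the fact that an out-neighbour of $s$ lies in a single component.

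\textbf{Running time.} Computing the block-cut tree and the SPQR trees takes linear time. Phases~1--2 run in $O(|V(H)|+|E(H)|)$ by \Cref{lem:phases-correct}. Phase~3 touches each skeleton edge $O(1)$ times. To answer the containment tests $N^+_G(s)\subseteq V(X)$ in linear total time, I would precompute, for each tree edge, the number of out-neighbours of $s$ lying in each expansion, by counting during the Phase~1 and Phase~2 traversals. Summing over blocks via \Cref{lem:spqr-total-size} gives $O(|V(G)|+|E(G)|)$.
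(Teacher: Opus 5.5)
Your outline follows the paper's proof. It uses the same case split: trivial edges, the whole block, P-nodes, and tree edges $\{\nu,\mu\}$ with $\mu$ an R-node and $\nu$ not a P-node. Soundness goes through \Cref{prop:P-node}, \Cref{prop:R-node} and \Cref{lem:superbubbloid-superbubble-cutvertex}, and linear time through precomputed neighbour counts; those parts are fine.

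The gap is the completeness step for R-node tree edges, which you flag as delicate but do not prove. Phase~3 tests only the \emph{full} expansion $X$, through $\acyclic{\nu}{\mu}$, $\noextremity{\nu}{\mu}$ and $N^+_G(s),N^-_G(t)\subseteq V(X)$, so you need $B=X$ exactly. Your stated reason that $B$ is a union of split components is that every vertex of $B$ lies on an $s$-$t$ path in $B$ and split components meet only in $s,t$. That only shows each interior vertex of $B$ lies in a single side. It does not show that $B$ contains all of the side it meets. So it does not rule out $B$ being a proper piece of $\expansion(e_\mu)$, and the states of $\expansion(e_\mu)$ say nothing about such a piece. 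Your suggested repair, via (3)--(4) and ``an out-neighbour of $s$ lies in a single component'', gives containment of $B$ in one side, not that $B$ fills it.

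The missing ingredient is that each side is connected away from $s$ and $t$. The argument goes in three steps.

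\begin{enumerate}
\item Conditions (3) and (4) of \Cref{def:superbubbloid-gartner}, applied to single edges $wu$ and $uw$, show that no edge joins $V(B)\setminus\{s,t\}$ to $V(G)\setminus V(B)$. Hence $V(B)\setminus\{s,t\}$ is a union of components of $H-\{s,t\}$.
\item Both $\expansion(e_\mu)-\{s,t\}$ and $\expansion(e_\nu)-\{s,t\}$ are connected. The graph $\skel(\mu)-\{s,t\}$ is connected by 3-connectivity. If $\nu$ is an S-node, $\skel(\nu)-e_\mu$ is an $s$-$t$ path with an internal vertex; if $\nu$ is an R-node, use 3-connectivity again. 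Then \Cref{lem:reaches-in-expansion} joins every interior vertex of the expansion of a skeleton edge, by a path avoiding $s$ and $t$, to an endpoint of that edge outside $\{s,t\}$. Such an endpoint exists because R-skeletons have no edge parallel to $\{s,t\}$.
\item Hence $H-\{s,t\}$ has exactly two components. Since $B$ has nonempty interior and $V(B)\neq V(H)$, $B$ is exactly one side. This also makes your appeal to ``exactly two split components because $\nu$ is not a P-node'' unnecessary.
\end{enumerate}

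The paper reaches the same conclusion differently. For a vertex $x$ of a side outside $B$, it builds an undirected $s$-$x$ path avoiding $t$ through the skeleton and \Cref{lem:reaches-in-expansion}. It then derives a contradiction at the last vertex of that path lying in $B$. If you take that route, start the path at a vertex of $V(B)\setminus\{s,t\}$ rather than at $s$, since an edge $bs$ with $b\notin V(B)$ is permitted.

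A minor point: your completeness part also skips the trivial and whole-block superbubbles, though these cases are immediate.
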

\begin{proof}
     
     
    \textbf{(Completeness.)}
    We argue that every superbubble $(s,t)$ of $G$ is reported by the algorithm. Let $B$ denote the superbubble graph of $(s,t)$ and $H$ the block containing $B$. 
    
    If $(s,t)$ is a trivial superbubble then $N^+_G(s)=\{t\}$, $N^-_G(t)=\{s\}$, and $ts\notin E(G)$ by definition. These are exactly the conditions tested in Line~\ref{line:superbubbles-trivial-1} and~\ref{line:superbubbles-trivial-2}, and so $(s,t)$ is reported by the algorithm.
    Otherwise, if $V(B)=V(H)$, then $(s,t)$ is reported by the algorithm in Line~\ref{line:superbubbles-block}: clearly, $B$ has at most one source $s$ and at most one sink $t$ of $G$, no vertex in the interior of $B$ is a cutvertex of $G$ by \Cref{lem:bubbles-cutvertices}, $B$ is acyclic, $ts\notin E(G)$, and the out-neighbors of $s$ and the in-neighbors of $t$ are all contained in $H$. These conditions altogether are enough to report the pair $(s,t)$.
    
    Otherwise $\{s,t\}$ is a separation pair of $H$ by \Cref{thm:bubbles-split-pairs} (so $H$ is a maximal 2-connected subgraph of $G$). 
    Let $T$ denote the SPQR tree of $H$. Since no pair of nonadjacent vertices in an S-node identifies a superbubble by \Cref{prop:nonadjacent-S-node}, it follows by \Cref{lem:spqr-tree-contains-split-pairs} that $\{s,t\}$ are endpoints of a virtual edge of a node $\mu$ of $T$. This virtual edge is associated with a tree edge $\{\nu,\mu\}$. Let $e_\mu$ be the virtual edge in $\nu$ pertaining to $\mu$ and let $e_\nu$ the virtual edge in $\mu$ pertaining to $\nu$.

    If $\mu$ is an S-node then \Cref{prop:superbubbles-in-S-nodes} implies that $B \not\subseteq \expansion(e_\mu)$.
    (Essentially $\state{\nu}{\mu}$ can be ignored). Symmetrically, $\state{\mu}{\nu}$ can be ignored whenever $\nu$ is an S-node.
    If $\mu$ is a P-node with vertex set $\{s,t\}$ then $B$ can be expressed as the union of the expansions of the virtual edges of $\mu$ as described in \Cref{prop:P-node}. Symmetrically, the same is done whenever $\nu$ is a P-node.
    Hence, the remaining virtual edges that encode superbubbles are those contained in the R-nodes. Moreover, if the pertaining node of such a virtual edge is a P-node then $(s,t)$ is processed when analyzing P-nodes. So it suffices to analyze P-nodes individually followed by the tree-edges $\{\nu,\mu\}$ such that $\nu$ is not a P-node and $\mu$ is an R-node. We argue on the two cases separately.
    
    \begin{itemize}
        \item \textbf{$\mu$ is a P-node:}
        Let $e_1,\dots,e_k$ be the edges in $\skel(\mu)$ whose endpoints are $\{s,t\}$ $(k \geq 3)$.
        Since $(s,t)$ is a superbubble, $(s,t)$ is also a superbubbloid and thus \Cref{prop:P-node} implies that
        $B$ can be expressed, without loss of generality, as $\bigcup_{i=1}^{k'} \expansion(e_i)$ for some $k' < k$ ($k\neq k'$ since otherwise $V(B)=V(H)$); further, it implies that $\expansion(e_i)$ is acyclic and has no extremities except $\{s,t\}$ for each $i=1,\dots,k'$, $E^+_s=E^-_t$, $N^+_G(s), N^-_G(t) \subseteq V(B)$ and $ts \notin E(G)$.
        If $k' \neq 1$ then these conditions are enough to report $(s,t)$ as a superbubble (Line~\ref{line:superbubbles-P-st-2} or Line~\ref{line:superbubbles-P-ts-2}). 
        Otherwise we have $k'=1$. If $e_1$ is a real edge then it was reported when analyzing the trivial superbubbles (notice also that, in this case, the conditions given by \Cref{prop:P-node} match those characterizing a trivial superbubble). Otherwise $e_1$ is virtual and thus it has a pertaining node in $T$. Suppose for a contradiction that the pertaining node of $e_1$ is an S-node. Since $(s,t)$ is a superbubble and the out-neighbors of $s$ are all contained in $\expansion(e_1)$, it implies that $B \subseteq \expansion(e_1)$, from where \Cref{prop:superbubbles-in-S-nodes} gives a contradiction. Therefore the pertaining node of $e_1$ is not an S-node and $(s,t)$ is reported in Line~\ref{line:superbubbles-P-st-1} or Line~\ref{line:superbubbles-P-ts-1}.
        
        \item \textbf{$\mu$ is an R-node and $\nu$ is not a P-node:}
        We have that $s$ and $t$ are the endpoints of $e_\mu$ and $e_\nu$.
        Suppose that $s$ has out-neighbors both in $\expansion(e_\nu)$ and $\expansion(e_\mu)$.
        Then we claim that $V(H)=V(B)$, a contradiction to the fact that we are under the assumption $V(H)\neq V(B)$.
        First we show that $V(\expansion(e_\nu)) \subseteq V(B)$.

        Suppose for a contradiction that $V(\expansion(e_\nu)) \not\subseteq V(B)$. Let $x \in V(\expansion(e_\nu)) \setminus V(B)$. We claim that $U(\expansion(e_\nu))$ has an $s$-$x$ path $p$ avoiding $t$.
        Let $e_x=\{x',y'\}$ be an edge in $\skel(\nu)-e_\mu$ whose expansion contains $x$ with $x' \neq t$ (possibly $x'=s$ or $y'=t$, but not both equalities hold). First we argue that $\skel(\nu)-e_\mu$ has an $s$-$x'$ path avoiding $t$ and then we argue that $\expansion(e_x)$ has an $x'$-$x$ path avoiding $t$. The concatenation of these two paths produces an undirected $s$-$x$ walk in $U(\expansion(e_\nu))$ avoiding $t$, which can be simplified into the desired path.
        
        If $\nu$ is an S-node then $\skel(\nu)-e_\mu$ consists of a path between $s$ and $t$ with at least three vertices. Since $x' \neq t$ the graph $\skel(\nu)-e_\mu$ has an $s$-$x'$ path avoiding $t$.
        If $\nu$ is an R-node then $\skel(\nu)$ has three internally vertex-disjoint $s$-$x'$ paths at most one of which contains the edge $e_\nu$. Thus $\skel(\nu)-e_\mu$ has an $s$-$x'$ path avoiding $t$. Notice that this path can be mapped to a path in $\expansion(e_\nu)$ since split components are connected (while still avoiding $t$).
        Finally, applying \Cref{lem:reaches-in-expansion} gives an $x'$-$x$ undirected path in $\expansion(e_x)$ avoiding $y'$, so this path does not contain $t$.

        The undirected path $p$ starts in a vertex contained in $B$ and ends in a vertex not contained in $B$.
        Let $a$ denote the last vertex in $p$ that is contained in $B$ (such a vertex exists since $a=s$ at the earliest). Then $a$ has a successor in $p$, say $b$, which is not contained in $B$.
        Thus $U(\expansion(e_\nu))$ has an edge $\{a,b\}$ and hence $\expansion(e_\nu)$ has an edge $ab$ or $ba$.
        Since $B$ is a superbubble graph and $a \in V(B)$, $H$ has an $s$-$a$ directed path avoiding $t$ and an $a$-$t$ directed path avoiding $s$. 
        So if $ab \in E(\expansion(e_\nu))$ then $\expansion(e_\nu)$ has an $s$-$b$ directed path avoiding $t$ and thus $b \in V(B)$, a contradiction, and if $ba \in E(\expansion(e_\nu))$ then $\expansion(e_\nu)$ has a $b$-$t$ path avoiding $s$ and thus $b \in V(B)$, a contradiction. Therefore $V(\expansion(e_\nu)) \subseteq V(B)$.
        
        Symmetrically we have $V(\expansion(e_\mu)) \subseteq V(B)$: we can apply the argument described above for the case when $\nu$ is an R-node since $\mu$ is an R-node.
        Since $V(\expansion(e_\mu)) \cup V(\expansion(e_\nu)) = V(H)$ we have $V(H) \subseteq V(B)$, and since superbubbles live within blocks we get $V(B)=V(H)$, as desired.

        So $s$ has out-neighbors only in one expansion between $\nu$ and $\mu$ and therefore the superbubble is contained in that expansion. By symmetry, $t$ has in-neighbors in only one expansion, and it is not hard to see that these expansions have to match.
        If the out-neighbors of $s$ are contained in $\expansion(e_\nu)$ and $\nu$ is an S-node then \Cref{prop:superbubbles-in-S-nodes} gives a contradiction, so $\nu$ is an R-node. Further, we have that $B$ is acyclic, has no extremities of $G$ except $\{s,t\}$, and $ts \notin E(G)$, since $(s,t)$ is a superbubble. These conditions altogether are enough to report $(s,t)$ in Line~\ref{line:superbubbles-R-st} (when iterating over node $\mu$ if $B \subseteq \expansion(e_\mu)$ and over node $\nu$ if $B \subseteq \expansion(e_\nu)$). 
    \end{itemize}

    \textbf{(Soundness.)}
    Let $(s,t)$ be a pair of vertices reported by the algorithm. We show that $(s,t)$ is a superbubble of $G$.

    If the pair $(s,t)$ is reported in Line~\ref{line:superbubbles-trivial-1} or~\ref{line:superbubbles-trivial-2} then $(s,t)$ is a trivial superbubble by definition.
    If the pair $(s,t)$ is reported by virtue of Line~\ref{line:superbubbles-block} then $H$ has exactly one source $s$ and exactly one sink $t$ (with respect to $H$), no vertex in $H$ except $\{s,t\}$ is a cutvertex of $G$, $H$ is acyclic, $N^+_G(s),N^-_G(t) \subseteq V(H)$, and $ts\notin E(G)$. It is not hard to see that under these conditions the pair $(s,t)$ is a superbubbloid with graph $H$ (a similar proof to that of \Cref{lem:separation-pair-is-superbubbloid} is possible and we omit it for the sake of brevity).
    Since $H$ is 2-connected it has two internally vertex-disjoint $s$-$t$ undirected paths, so \Cref{lem:superbubbloid-superbubble-cutvertex} implies that $(s,t)$ is a superbubble.
    

    Now we discuss the case when $\{s,t\}$ is a separation pair of a block $H$ of $G$. By symmetry it suffices to show that the pairs reported in Lines~\ref{line:superbubbles-P-st-1},~\ref{line:superbubbles-P-st-2}, and~\ref{line:superbubbles-R-st} are superbubbles.
    If $(s,t)$ is reported in Line~\ref{line:superbubbles-P-st-2} then \Cref{prop:P-node} implies that $(s,t)$ is a superbubbloid with graph $K$; moreover, since $K=\bigcup_{e\in E^+_s}\expansion(e)$ consists of the union of $\ell \geq 2$ split components of $\{s,t\}$, $K$ has two internally vertex-disjoint $s$-$t$ undirected paths, and hence \Cref{lem:superbubbloid-superbubble-cutvertex} gives that $(s,t)$ is a superbubble.
    If $(s,t)$ is reported in Line~\ref{line:superbubbles-R-st} then the fact that $(s,t)$ is a superbubble follows at once by \Cref{prop:R-node}.
    If $(s,t)$ is reported in Line~\ref{line:superbubbles-P-st-1} then the pertaining node of the unique edge in $E^+_s$ is an R-node (as no two P-nodes are adjacent in $T$), so we are conditions of applying \Cref{prop:R-node} and conclude that $(s,t)$ is a superbubble.

    \textbf{(Running time.)} Block-cut trees can be built in linear time \citep{Hopcroft73blockcut} and the total size of the blocks is linear in $|V(G)|+|E(G)|$. The case when a block is a multi-bridge is trivial, so suppose that we are analyzing a block $H$ that is 2-connected. Let $|H|:=|V(H)|+|E(H)|$. We show that the rest of the algorithm runs in time $O(|H|)$, thus proving the desired bound.
    
    The conditions on Lines~\ref{line:superbubbles-rn1} and~\ref{line:superbubbles-rn2} are trivial and require $O(|H|)$ time altogether.
    The SPQR tree $T$ can be built in $O(|H|)$ time~\citep{gutwenger2001linear}. Phases 1 and 2 take $O(|H|)$ time by \Cref{lem:phases-correct}.
    For Phase 3, recall first that $T$ has $O(|H|)$ P-nodes as well as tree-edges by \Cref{lem:spqr-total-size}. Further, notice that the work done in each P-node and in each tree-edge entering an R-node takes constant-time with exception of the inclusion-neighborhood queries of $s$ and $t$.
    To handle this type of queries, we can proceed as follows.

    In order to decide inclusion-neighborhood queries, e.g., of vertices $u$ and $v$, we process all edges of $T$ with a DFS traversal starting in the root. Let $\nu$ be the parent of $\mu$ in $T$ and let $\{u,v\}$ denote the endpoints of $e_\nu$ and $e_\mu$.
    We store at $u$ and $v$ the number of their out- and in-neighbors in $\expansion(e_\mu)$. Assume that we have already computed this information (via the DFS order) for all tree edges to children of $\mu$ in $T$ (if $\mu$ is not a leaf). For all such tree edges to children of $\mu$ in which $u$ is present, we increment the respective counts of $u$ by these values, and same for $v$. Moreover, we scan every real edge in $\skel(\mu)$ and use the neighborhoods induced by the edge to correspondingly increment the respective counts for $u$ and $v$. Doing this, we process every real edge once because every edge of the input graph is a real edge in exactly one skeleton. Having the correct out- and in-neighborhood counts for $u$ and $v$ in $\expansion(e_\mu)$, we can obtain their counts in $\expansion(e_\nu)$ by subtracting from the total number of out-neighbors of $u$ the value of out-neighbor counter of $u$ in $\expansion(e_\mu)$ (and same for $v$). This can again be obtained by paying only constant time per edge.
    
    To conclude, for each P-node $\mu$ the algorithm spends $O(|E(\skel(\mu))|)$ time to build the sets described in \Cref{prop:P-node}, and for tree-edges entering R-nodes the algorithm spends a constant amount of time. The latter thus requires $O(|V(H)|)$ time altogether because $T$ has $O(|V(H)|)$ R-nodes at most, and the former requires $O(|H|)$ time altogether since the total number of edges in the skeletons of the nodes of $T$ is $O(|E(H)|)$ and $T$ has $O(|V(H)|)$ P-nodes (recall \Cref{lem:spqr-total-size}).
\end{proof}

\section{Snarls}
\label{sec:snarls}

\subsection{Setup}

We assume, without loss of generality, that $G=(V,E)$ is a \emph{connected} bidirected graph. To give our equivalent snarl characterization we introduce more terminology. The \emph{splitting} operation takes a bidirected graph $G=(V,E)$ and a vertex-side $u\alpha$ and produces a new bidirected graph $G'=(V',E')$ with $V' := V(G) \cup \{u'\}$ and $E' := E(G) \setminus \{\{u\hat{\alpha},v\beta\} \mid \{u\hat\alpha,v\beta\} \in E(G)\} \cup \{ \{u'\hat{\alpha},v\beta\} \mid \{u\hat\alpha,v\beta\} \in E(G) \}$. As a result, all edges incident to $u$ with sign $\hat{\alpha}$ will be incident to $u'$ instead.

\paragraph{Remark:} In the remainder of this section we discuss two equivalent definitions of ``snarls'' and for the sake of our results this section can be skipped, only \Cref{def:snarl} is required for the rest of the paper.\\



\cite{paten2018ultrabubbles} define snarls via the \emph{biedged graph}, an undirected graph $B(G) = (V_B, E_B)$ constructed from a bidirected graph $G = (V,E)$ as follows.
We first \emph{split} every vertex $v\in V$ into two nodes $v+$ and $v-$ (one per vertex-side), so that $V_B=\{v+,v- \mid v\in V\}$.
Then, for each $v\in V$, we add an undirected \emph{inner} edge $\{v+,v-\}$. Finally, for each bidirected edge $\{u\alpha, v\beta\}\in E$ (with $\alpha,\beta\in\signs$), we add an undirected \emph{outer} edge $\{u\alpha, v\beta\}$ between the corresponding split nodes. This construction is illustrated in \Cref{fig:bubbles}(b).
We call the endpoints of an inner edge \emph{opposites} and denote by $\hat{x}$ the opposite of a node $x\in V_B$.
If an inner edge has a parallel outer edge, we keep both edges (as distinct parallel edges).
Otherwise, we assume without loss of generality that there are no parallel outer edges, since they do not affect snarls.
Throughout, we write $u\alpha$ for a vertex-side (with $\alpha\in\signs$) and $u\hat{\alpha}$ for its opposite vertex-side.
In the biedged graph, Paten~et~al. define snarls as follows.

\begin{definition}[Snarls in biedged graphs]
    \label{def:snarl-biedged}
    An unordered pair of distinct, non-opposite nodes $\{a, b\}$ is a \emph{snarl} if
    \begin{enumerate}[nosep]
        \item[(a)] \emph{separable:} the removal of the inner edges incident with $a$ and $b$ (i.e., $\{a,\hat{a}\}$ and $\{b,\hat{b}\}$) disconnects the graph, creating a connected component $X$ that contains $a$ and $b$ but neither $\hat{a}$ nor $\hat{b}$.
        We call $X$ the \emph{snarl component} of $\{a, b\}$.
        \item[(b)] \emph{minimal:} no pair of opposites $\{z, \hat{z}\}$ in $X$ different from $a$ and $b$ exists such that $\{a, z\}$ and $\{b, \hat{z}\}$ are separable.
    \end{enumerate}
\end{definition}

To avoid using the biedged graph in our algorithm, we propose an equivalent definition of snarls in bidirected graphs.

\begin{definition}[Snarl, Snarl component]\label{def:snarl}
    A pair of vertex-sides $\{x\alpha, y\beta\}$ with $x \neq y$ is a \emph{snarl} in a bidirected graph $G$:
    \begin{enumerate}[nosep]
        \item[(a)] \emph{separability:}
        the graph created by splitting the vertex-sides $x\alpha$ and $y\beta$ in $G$ has a separate component $X$ containing $x$ and $y$ but not the vertices $x'$ and $y'$ created by the split operation.
        We call $X$ the \emph{snarl component} of $\{x\alpha, y\beta\}$.
        
        \item[(b)] \emph{minimality:}
        $X$ has no vertex-sides $z\gamma$ and $z\hat{\gamma}$ such that $\{x\alpha, z\gamma\}$ and $\{z\hat{\gamma}, y\beta\}$ are separable in $G$.
    \end{enumerate}
\end{definition}

The equivalence of these two definitions should be clear.

\begin{lemma}[Equivalence of snarl definitions]
    \label{lem:snarl-equivalence}
    Let $G$ be a bidirected graph and let $B(G)$ be its biedged graph. Let $\{u\alpha, v\beta\}$ be an unordered pair of vertex-sides of $G$ (equivalently, an unordered pair of nodes of $B(G)$), with $\alpha,\beta\in\signs$.  Then $\{u\alpha, v\beta\}$ is separable (resp.\ minimal, resp.\ a snarl) in $G$ by \Cref{def:snarl} if and only if it is separable (resp.\ minimal, resp.\ a snarl) in $B(G)$ by \Cref{def:snarl-biedged}.
\end{lemma}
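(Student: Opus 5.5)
The plan is to build an explicit correspondence between the graph $G_{x\alpha,y\beta}$, obtained from $G$ by splitting the vertex-sides $u\alpha$ and $v\beta$, and the graph $B^-$, obtained from $B(G)$ by deleting the inner edges $\{u\alpha,u\hat\alpha\}$ and $\{v\beta,v\hat\beta\}$. Separability will be handled first. Minimality and the snarl property will then follow formally, because both definitions state minimality purely in terms of separability of other pairs together with membership in the snarl component.

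\textbf{Step 1 (component correspondence).} Consider the map $\phi$ that sends each node $w\gamma$ of $B^-$ to a vertex of the split graph. Set $\phi(w\gamma)=w$ for every $w\notin\{u,v\}$. Set $\phi(u\alpha)=u$ and $\phi(u\hat\alpha)=u'$, and define $\phi$ on $v\beta$ and $v\hat\beta$ analogously (with $u'$, $v'$ the copies created by splitting, which carry the edges of sign $\hat\alpha$, $\hat\beta$).

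I will check that every outer edge $\{w\gamma,z\delta\}$ of $B^-$ maps to the bidirected edge $\{w\gamma,z\delta\}$ of $G$. After splitting, that edge becomes an edge between $\phi(w\gamma)$ and $\phi(z\delta)$. Every inner edge remaining in $B^-$ joins two nodes with the same image. Conversely, every edge of the split graph is the image of exactly one outer edge. Hence $\phi$ is a surjective graph homomorphism whose fibres are connected: a fibre is either a pair joined by an inner edge, or a singleton. It therefore induces a bijection between the components of $B^-$ and those of the split graph. Here I must also treat vertices with no incident edges of some sign: the corresponding node of $B(G)$ is still present and attached through its inner edge, so the correspondence still works.

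\textbf{Step 2 (separability).} Under this bijection, the nodes $u\alpha$ and $v\beta$ lie in a common component avoiding $u\hat\alpha$ and $v\hat\beta$ exactly when $u$ and $v$ lie in a common component avoiding $u'$ and $v'$. This equates condition (a) of \Cref{def:snarl-biedged} with condition (a) of \Cref{def:snarl}. It also shows that the snarl component $X$ in one setting is $\phi$ (or $\phi^{-1}$) of the snarl component in the other.

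Two side requirements must match. \Cref{def:snarl-biedged} demands that the nodes be distinct and non-opposite, while \Cref{def:snarl} demands $u\neq v$. I will note that the pairs excluded by one definition are excluded, or vacuously non-separable, in the other. For example, if $u=v$ with $\alpha\neq\beta$, separability fails in both settings. I expect this boundary-case bookkeeping to be the main (though minor) obstacle, together with the self-loop or parallel inner/outer edge case. In that case the inner edge has a parallel outer edge that survives deletion, and I will check that it maps to a loop at $u$ joining $u$ and $u'$ after splitting, so the correspondence is preserved.

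\textbf{Step 3 (minimality).} A pair of opposite nodes $\{z,\hat z\}$ contained in $X\subseteq B(G)$ corresponds to vertex-sides $z\gamma$, $z\hat\gamma$ with $z\in\phi(X)$. Applying Step 2 to the pairs $\{u\alpha,z\gamma\}$ and $\{z\hat\gamma,v\beta\}$ shows that the two separability tests agree, so conditions (b) coincide and the snarl notions coincide.
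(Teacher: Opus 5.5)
Your proposal is correct and is essentially the paper's argument. The paper's sketch observes that deleting the inner edges $\{u\alpha,u\hat\alpha\}$ and $\{v\beta,v\hat\beta\}$ in $B(G)$ affects connectivity exactly as splitting $u\alpha$ and $v\beta$ in $G$ does; your map $\phi$, which contracts the surviving inner edges, makes this precise, and the minimality clause then transfers verbatim as in your Step 3, with your version also covering the boundary cases ($u=v$, loops parallel to inner edges) that the paper leaves implicit.
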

\begin{proof} \textit{(Sketch)}
    Deleting the inner edge $\{u\alpha,u\hat{\alpha}\}$ in $B(G)$ separates the node $u\alpha$ from its opposite $u\hat{\alpha}$ while leaving all outer edges intact. This has the same effect on connectivity as splitting the vertex-side $u\alpha$ in $G$ (which detaches all edges incident to $u\hat{\alpha}$ from $u$ by moving them to the new vertex $u'$). Applying the same argument to $v\beta$ yields the equivalence of separability. The minimality clauses translate verbatim, since ``opposites'' in $G$ correspond exactly to the endpoints of an inner edge in $B(G)$.
\end{proof}

\subsection{Sign-cut graphs and dangling blocks}

Let $x$ be a cutvertex of $G$ and let $C_1,\dots,C_\ell$ be the components of $G-x$.
Then $x$ is \emph{sign-consistent} if 
$x$ is a tip in $G[V(C_i) + x]$ for $i \in \{1,\dots,\ell\}$.

\begin{definition}[Sign-cut graphs, see \Cref{fig:sign-cut-graphs}]
\label{def:sign-cut-graphs}
    Let $G$ be a bidirected graph and let $Y \subseteq V(G)$ be the set of sign-consistent vertices of $G$. The \emph{sign-cut graphs} of $G$ are the graphs resulting from splitting each vertex $y\in Y$ with any sign in $\{+,-\}$ and relabeling each new vertex $y'$ as $y$.
\end{definition}

\begin{figure}[t]

    \centering

    \includegraphics[width=\linewidth]{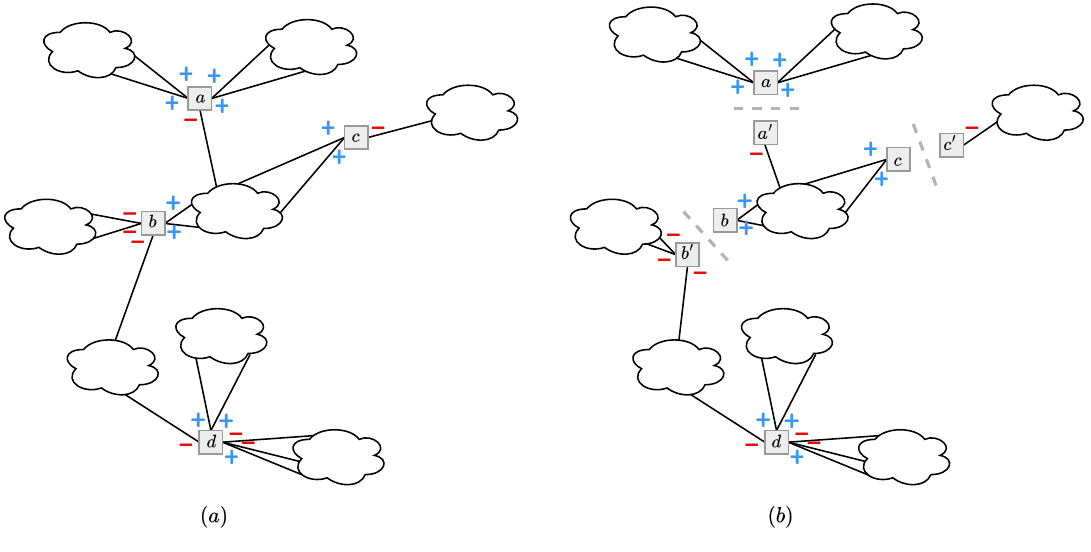}

    \caption{Illustration of sign-cut graphs (\Cref{def:sign-cut-graphs}). (a)~A bidirected graph $G$ with three sign-consistent cutvertices $a$, $b$, $c$ and one non-sign-consistent cutvertex $d$ (which has both $+$ and $-$ vertex-sides towards the same block). (b)~The sign-cut graphs of $G$ after splitting each sign-consistent vertex. Each split vertex appears as a tip in both copies. Note also that each pair of tips of a resulting a sign-cut graph of $G$ forms a snarl.}

    \label{fig:sign-cut-graphs}

\end{figure}

Notice that if $u\alpha$ and $v\beta$ with $u \neq v$ are vertex-sides, then splitting $u\alpha$ and then $v\beta$ yields the same graph as splitting $v\beta$ and then $u\alpha$, and so sign-cut graphs are well defined.

A vertex is contained in two sign-cut graphs if and only if it is sign-consistent, and moreover every sign-consistent vertex becomes a tip in both sign-cut graphs it appears in (one with its positive vertex-sides and the other with its negative vertex-sides).
Sign-cut graphs partition the edges (and thus the vertex-sides) of $G$ and the blocks of $G$ coincide with the blocks of its sign-cut graphs.
With this, we can already show a simple property of snarls.

\begin{lemma}\label{lem:snarls-inside-sign-cut-tree}
    Let $G$ be a bidirected graph, let $F_1$ and $F_2$ be distinct sign-cut graphs of $G$, and let $u\alpha$ be a vertex-side of $F_1$ and $v\beta$ a vertex-side of $F_2$. Then $\{u\alpha,v\beta\}$ is not a snarl of $G$.
\end{lemma}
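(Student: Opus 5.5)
We argue by contradiction: suppose $\{u\alpha,v\beta\}$ is a snarl of $G$ with snarl component $X$, where $u\alpha$ lies in $F_1$ and $v\beta$ lies in $F_2$. Since $F_1\neq F_2$ and sign-cut graphs are glued along sign-consistent vertices in a tree-like way (the block-cut tree structure of $G$), every undirected $u$-$v$ path in $G$ passes through some sign-consistent cutvertex $y$ that separates the part of $G$ containing $F_1$ from the part containing $F_2$. Pick such a $y$ on the boundary of $F_1$. When $y\in\{u,v\}$ we need a little care about which vertex-side of $y$ actually belongs to $F_1$; this is the case to examine first.

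\textbf{Step 1 (locating $y$ inside $X$).} Separability says that after splitting $u\alpha$ and $v\beta$, the component $X$ contains both $u$ and $v$. Hence $X$ contains an undirected $u$-$v$ path, which uses only edges incident to $u$ with sign $\alpha$ and to $v$ with sign $\beta$ at its ends. This path crosses from $F_1$ to $F_2$, so it passes through a sign-consistent cutvertex $y$. Since $y$ is a tip in each of its components, the path enters $y$ through one vertex-side $y\gamma$, the side belonging to the $F_1$-part, and leaves through $y\hat\gamma$, the side belonging to the $F_2$-part. In particular $y\notin\{u,v\}$ as a vertex of $X$ with both sides used, unless $y$ equals $u$ or $v$. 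If $y=u$, then $u$ would have to be left through the side opposite to $\alpha$, which is impossible since the path is in $X$. So we may take $y$ to be an internal vertex of $X$, and both $y\gamma$ and $y\hat\gamma$ occur in $X$.

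\textbf{Step 2 (contradicting minimality).} We show that $\{u\alpha,y\gamma\}$ and $\{y\hat\gamma,v\beta\}$ are both separable in $G$, which contradicts condition (b) of \Cref{def:snarl}. Splitting $y\gamma$ detaches all edges of $y$ with sign $\hat\gamma$. Because $y$ is sign-consistent, these edges are exactly the edges going into the components of $G-y$ on the $F_2$ side. (More precisely, the $\hat\gamma$-edges go to some set of components and the $\gamma$-edges go to the others, with none shared.) Consider the graph obtained by splitting $u\alpha$ and $y\gamma$. The component containing $u$ and $y$ is contained in the portion of $X$ lying on the $F_1$ side of $y$, together with $y$ itself. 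This portion does not contain $u'$: the only way out of $X$ is through $u'$ or $v'$, and $v'$ lies on the other side of $y$. So this component cannot reach $u'$ without passing through $u'$'s edges, which were already excluded by the separability of $X$. It also does not contain $y'$, since $y'$ carries only edges into the other side. The same argument applies symmetrically to $\{y\hat\gamma,v\beta\}$.

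\textbf{Main obstacle.} The hardest part is making rigorous the claim that, after splitting $y\gamma$, the $F_1$-side restriction of $X$ remains a separate component avoiding both $u'$ and $y'$. This requires combining two facts: $X$ was closed under adjacency, apart from the split sides, and $y$ is a cutvertex whose two sign classes lead to disjoint sets of components of $G-y$. I would first establish a small auxiliary claim: for a sign-consistent $y$, splitting $y$ (with either sign) increases the number of components, with $y$ and $y'$ landing in different components. Applying this claim inside $X$ yields the decomposition needed for Step 2.
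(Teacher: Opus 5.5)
Your proposal is correct and follows essentially the same route as the paper: you pick a sign-consistent cutvertex $y$ separating $F_1$ from $F_2$, rule out $y\in\{u,v\}$ because no $u$-$v$ path can then exist in $X$ (the paper phrases this as non-separability), and otherwise show that $\{u\alpha,y\gamma\}$ and $\{y\hat\gamma,v\beta\}$ are separable, contradicting minimality. The step you flag as the main obstacle goes through as you sketch: after splitting $u\alpha$ and $y\gamma$, every path from $u$ stays on the $F_1$ side of $y$, so it avoids $v$ and is also a path in the graph obtained by splitting $u\alpha$ and $v\beta$; hence the component of $u$ lies inside $X$ and contains neither $u'$ nor $y'$.
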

\begin{proof}
    We can assume that $u \neq v$ for otherwise $\{u\alpha,v\beta\}$ is not a snarl by definition. There is a sign-consistent vertex $x \in V(G)$ that puts $u\alpha$ and $v\beta$ in distinct sign-cut graphs (possibly $x=u$ or $x=v$). Moreover, for some $\gamma\in\signs$, the edges of $G$ incident to $x$ containing a vertex-side $x\gamma$ are all in $F_1$ and those with a vertex-side $x\hat \gamma$ are all in $F_2$, since $x$ is sign-consistent.
    As such, splitting $x\gamma$ in $G$ results in a graph, say $G_{x\gamma}$, with two components: one containing $x$ and $u$ and the other containing $x'$ and $v$.
    
    Suppose that $x$ is distinct from $u$ and $v$ and suppose for a contradiction that $\{u\alpha,v\beta\}$ is a snarl with component $X$. We argue that $\{u\alpha, x\gamma\}$ is separable, and the fact that $\{v\beta, x\hat\gamma\}$ is separable follows symmetrically, thus contradicting the minimality of $\{u\alpha, v\beta\}$.
    Notice that splitting $u\alpha$ in $G_{x\gamma}$ does not separate $u$ and $x$, otherwise splitting $u\alpha$ in $G$ separates $u$ from $v$ because every $u$-$v$ path in $G$ contains $x$. Similarly, since $u$ and $v$ are in different components of $G_{x\gamma}$ (or $G-x$), splitting $u\alpha$ separates $u$ from $u'$. So $u$ and $x$ remain connected and become separated from $u'$ and $x'$.

    Suppose now that $x$ is equal to $u$ or $v$; say $x=u$ without loss of generality, so $x\gamma=u\alpha$. Then $G_{x\gamma}$ has one component containing vertex $u$ and another containing the vertices $u'$ and $v$.
    Further splitting $v\beta$ does not create paths between $u$ and $v$ (although it may separate $u'$ and $v$), so the pair $\{u\alpha, v\beta\}$ is not separable and so it is not a snarl.
\end{proof}

Cutvertices that are not sign-consistent require additional care, as we show in the next result.
Let $v$ be a vertex in a block $H$. If $H'$ is a block distinct from $H$ that contains $v$ and has vertex-sides of opposite signs at $v$, then $H'$ is a \emph{dangling block} of $v$ with respect to $H$. For instance, non cutvertices have no dangling blocks.

\begin{proposition}[Dangling blocks, see \Cref{fig:dangling-block}]
\label{prop:mixed-dangling-blocks}
    Let $G$ be a bidirected graph, $H$ be a block of $G$, and $u,v \in V(H)$ be vertices. If $u$ or $v$ has dangling blocks with respect to $H$ then $\{u\alpha,v\beta\}$ is not separable for any signs $\alpha,\beta \in \signs$.
\end{proposition}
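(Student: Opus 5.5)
The plan is to show directly that, after splitting $u\alpha$ and $v\beta$, the vertex $u$ stays connected to its copy $u'$ (or $v$ to $v'$). Then no component can contain $u$ and $v$ while avoiding $u'$ and $v'$, so separability in \Cref{def:snarl} fails. By symmetry I assume $u$ has a dangling block $H'$ with respect to $H$. I may also assume $u \neq v$, since separability is only defined for distinct vertices.

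\textbf{Step 1: $v \notin V(H')$.} Two distinct blocks share at most one vertex (see the preliminaries on block-cut trees). Since $u \in V(H)\cap V(H')$ and $v \in V(H)$ with $v \neq u$, it follows that $v \notin V(H')$. Hence splitting $v\beta$ does not touch any edge of $H'$. Splitting $u\alpha$ only reattaches the edges of $H'$ that are incident to $u$ with sign $\hat\alpha$, moving them to $u'$.

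\textbf{Step 2: a path through $H'$.} Because $H'$ has vertex-sides of both signs at $u$, I pick two edges of $H'$:
\begin{itemize}
  \item $e_1$, incident to $u$ with sign $\alpha$, with other endpoint $a$;
  \item $e_2$, incident to $u$ with sign $\hat\alpha$, with other endpoint $b$.
\end{itemize}
Then $a,b \neq u$, and both lie in $H'$. The block $H'$ has at least two edges at $u$, so it is not a single bridge. It is therefore either a 2-connected block or a multi-bridge.
\begin{itemize}
  \item If $H'$ is 2-connected, then $H' - u$ is connected, so it contains an $a$-$b$ undirected path $p$.
  \item If $H'$ is a multi-bridge, then $a=b$ and $p$ is trivial.
\end{itemize}
By Step 1, $p$ avoids both $u$ and $v$, so none of its edges is affected by either split.

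\textbf{Step 3: conclusion.} In the graph obtained by splitting $u\alpha$ and $v\beta$:
\begin{itemize}
  \item $e_1$ still joins $u$ and $a$;
  \item $p$ still joins $a$ and $b$;
  \item $e_2$ now joins $b$ and $u'$.
\end{itemize}
Hence $u$ and $u'$ lie in the same component. So no component contains $u$ and $v$ without $u'$, and $\{u\alpha,v\beta\}$ is not separable for any $\alpha,\beta$. The case where $v$ has the dangling block is identical with the roles of $u$ and $v$ exchanged.

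The only delicate point is Step 2: $H'$ might be a multi-bridge rather than 2-connected, and I need the $a$-$b$ connection to avoid both $u$ and $v$. Both issues are handled by the block-intersection fact of Step 1 together with the case split above.
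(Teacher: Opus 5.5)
Your proof is correct and follows essentially the same argument as the paper. Both choose edges of $H'$ at $u$ with opposite signs and join their other endpoints by a path in $H'$ avoiding $u$. Both then use that two distinct blocks share at most one vertex to get $v\notin V(H')$, so the split at $v\beta$ cannot break the resulting $u$-$u'$ connection. Your explicit multi-bridge case ($a=b$, trivial path) spells out a detail the paper leaves implicit in ``since $H'$ is a block, it has an $x$-$y$ path avoiding $u$.''
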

\begin{proof}
    \sloppypar
    Without loss of generality suppose that $u$ has a dangling block $H'$ with respect to $H$. So $H'$ has edges $\{u+,x\gamma\}$ and $\{u-,y\delta\}$. Notice that $u$ is a cutvertex of $G$, since otherwise there is exactly one block containing $u$ (which is $H$).
    Since $H'$ is a block, it has an $x$-$y$ path avoiding $u$. Thus splitting $u\alpha$ results in a graph containing a $u$-$u'$ path whose internal vertices are contained in $H'$.
    Since $u,v \in V(H)$ and no two blocks contain the same two vertices, $v$ is not in $H'$ and so splitting $v\beta$ does not affect the path previously constructed.
    Therefore $u$ and $u'$ remain connected in the graph resulting from splitting $u\alpha$ and $v\beta$, in other words, $\{u\alpha,v\beta\}$ is not separable.
\end{proof}

\begin{figure}[t]
    \centering
    \includegraphics[width=0.8\linewidth]{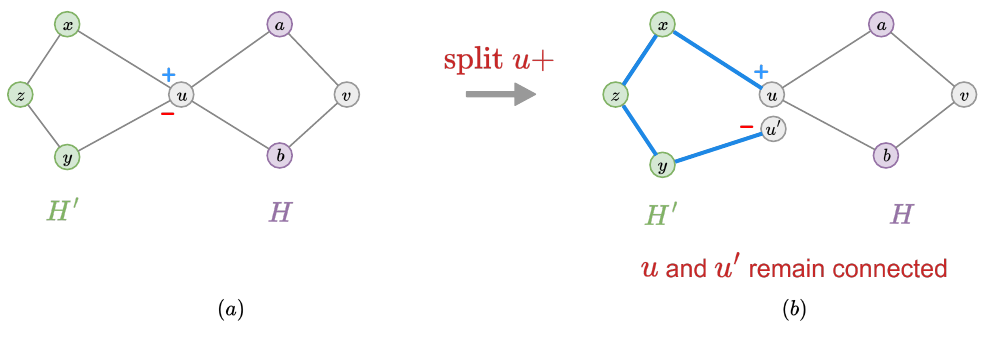}
    \caption{A dangling block $H'$ with respect to $u$ and $H$ (\Cref{prop:mixed-dangling-blocks}). (a)~Block $H'$ (green) has both $+$ and $-$ vertex-sides at $u$, while $H$ (purple) has only $+$. (b)~After splitting $u{+}$, the path through $H'$ (blue) connects $u$ to $u'$, so $\{u\alpha, v\beta\}$ is not separable.}
    \label{fig:dangling-block}
\end{figure}

Our goal is to show an equivalence between the snarls of $G$ and the snarls of its sign-cut graphs (\Cref{lem:snarls-G-F}), from where pinpointing all the snarls becomes easy (\Cref{thm:where-are-snarls-after-cutting}).

\begin{restatable}{lemma}{snarlsGF}
\label{lem:snarls-G-F}
    Let $G$ be a bidirected graph, let $\{u\alpha,v\beta\}$ be a pair of vertex-sides. Then $\{u\alpha,v\beta\}$ is a snarl of $G$ if and only if there is a sign-cut graph $F$ of $G$ such that $\{u\alpha,v\beta\}$ is a snarl of $F$.
\end{restatable}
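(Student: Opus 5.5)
The plan is to compare separability in $G$ and in a sign-cut graph $F$ directly, and then carry the comparison over to minimality. By \Cref{lem:snarls-inside-sign-cut-tree}, if $\{u\alpha,v\beta\}$ is a snarl of $G$ then both vertex-sides lie in one sign-cut graph $F$. The vertex-sides of $G$ are partitioned among the sign-cut graphs, so $F$ is determined by the pair. Both directions therefore reduce to one claim: for vertex-sides $u\alpha,v\beta$ of the same sign-cut graph $F$, the pair $\{u\alpha,v\beta\}$ is separable in $G$ iff it is separable in $F$, and in that case the two snarl components coincide. The minimality clause quantifies over pairs $\{u\alpha,z\gamma\}$ and $\{z\hat\gamma,v\beta\}$ with $z$ in the snarl component $X$. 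If $X$ is the same in both graphs, then $z\gamma$ and $z\hat\gamma$ are vertex-sides inside $X$, hence in $F$. The claim then shows that the minimality conditions also agree, which gives the lemma.

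To prove the claim, I would view $G$ as obtained from the sign-cut graphs by re-identifying copies of sign-consistent vertices. These identifications follow the block-cut tree: $F$ is attached to the rest of $G$ only through sign-consistent vertices $y\in V(F)$. At each such $y$, $F$ contains exactly the vertex-sides of one sign $\gamma_y$, because $y$ is a tip in $F$. The rest of $G$ hangs off $y$ via edges at $y\hat\gamma_y$ and does not return to $F$. Let $G'$ and $F'$ be the graphs obtained by splitting $u\alpha$ and $v\beta$ in $G$ and in $F$ respectively.

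I would then show that $G'$ arises from $F'$ by gluing, at each sign-consistent $y\in V(F)$, a graph $D_y$ that meets $F'$ only at one vertex. Two cases have to be handled. If $y\notin\{u,v\}$, then $D_y$ is glued at $y$, and gluing a pendant piece at a single vertex does not change connectivity among the vertices of $F'$. If $y=u$, there are two sub-cases. When $u\alpha$ is the sign present in $F$, the hanging part moves to $u'$ after splitting. When $u\hat\alpha$ is the sign present in $F$, all of $u$'s $F$-edges go to $u'$ and $u$ is left only with the outside part. The subcase where $u\hat\alpha$ is present is the key one. There $u$ is isolated in $F'$, so the pair is not separable in $F'$. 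In $G'$, $u$ lies in a component disjoint from $F$, so it is not with $v$ unless $v$ is also in that outside part; but $v\in F$ and $v\neq u$. Hence the pair is not separable in $G'$ either. The same reasoning appears in the last paragraph of the proof of \Cref{lem:snarls-inside-sign-cut-tree}, and the case $y=v$ is symmetric.

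In the remaining cases, the components of $G'$ restricted to $V(F)\cup\{u',v'\}$ are exactly the components of $F'$, each enlarged by the pendant pieces attached to it. So $u,v$ are together and away from $u',v'$ in $G'$ iff they are so in $F'$. The main obstacle is that the snarl component in $G$ is larger than in $F$, since it contains the pendant pieces $D_y$. Minimality must therefore also rule out witnesses $z$ inside some $D_y$. For that I would argue as in \Cref{lem:snarls-inside-sign-cut-tree}: a witness pair with $z$ outside $F$ would put $u\alpha$ or $v\beta$ together with a vertex-side of a different sign-cut graph. Separability of such a pair would force an intermediate sign-consistent vertex $x$ to yield a separable pair in the opposite direction, and I would check carefully that this yields no valid witness. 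Conversely, a witness inside $F$ in either graph transfers by the separability claim.
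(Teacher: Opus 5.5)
Your separability comparison is sound. Gluing the pendant pieces $D_y$ onto $F'$ shows that, for vertex-sides of a common sign-cut graph $F$, separability in $G$ and in $F$ coincide. This treats all cases uniformly, unlike the paper's case analysis through \Cref{prop:tip-nontip-nonseparable}. It also gives $X_F\subseteq X_G$, which is all the $(\Rightarrow)$ minimality transfer needs, and you correctly invoke \Cref{lem:snarls-inside-sign-cut-tree} only for snarls.

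The gap is in the $(\Leftarrow)$ minimality. Your reduction asserts that the two snarl components coincide, and this is false. In fact $X_G$ is $X_F$ together with the pieces $D_y$ for every sign-consistent $y\in V(X_F)\setminus\{u,v\}$; this happens, for instance, for every tip--tip snarl of $F$ when $F$ has a third tip that is sign-consistent. You notice this at the end, but you leave exactly the nontrivial case as ``I would check carefully''.

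Two kinds of $G$-witnesses have no counterpart in $F$. The first is $z=y$ itself, one of whose sides lies outside $F$; you do not mention this case. The second is $z\in V(D_y)\setminus\{y\}$. For this one, the tool you point to cannot work on its own. \Cref{lem:snarls-inside-sign-cut-tree} only says that a cross pair is not a \emph{snarl}, and cross pairs can be separable. For example, in the path with edges $\{u+,x-\},\{x+,v-\}$ the vertex $x$ is sign-consistent and $\{u+,v-\}$ is separable, since splitting creates only isolated copies. The paper handles this case by a one-line appeal to the same lemma (asserting $w\gamma$ must lie in $F$), which for the same reason is not enough by itself. So this is a step you genuinely have to supply.

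The missing idea is to play the two separability conditions against each other. Let $\delta$ be the sign of $y$ in $F$.

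If $z=y$, consider whichever of $\{u\alpha,y\gamma\}$ and $\{y\hat\gamma,v\beta\}$ uses $y\hat\delta$. It is not separable: after that split, $y$ keeps only its edges into $D_y$, and $D_y$ meets the rest of $G$ only at $y\notin\{u,v\}$.

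If $z\in V(D_y)\setminus\{y\}$, take a $u$--$z$ path in the graph obtained by splitting $u\alpha$ and $z\gamma$. It must pass through $y$, since $u\notin D_y$. Its final segment $P$ runs from $y$ to $z$ inside $D_y$ and ends with an edge at $z\gamma$. Now split $z\hat\gamma$ and $v\beta$ instead.
\begin{itemize}
\item The last edge of $P$ moves to $z'$, so $P$ now joins $y$ to $z'$.
\item The vertex $y$ is still joined to $v$ by a $y$--$v$ path of $X_F$. That path uses only $u\alpha$-edges at $u$ and $v\beta$-edges at $v$, and it does not meet $z$, so it survives this split.
\end{itemize}
Hence $z'$ reaches $v$, contradicting the separability of $\{z\hat\gamma,v\beta\}$.

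Together with your transfer argument for non-sign-consistent $z\in V(X_F)$, whose two sides both lie in $F$, this completes the $(\Leftarrow)$ direction.
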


\begin{restatable}{theorem}{snarlsaftercutting}
\label{thm:where-are-snarls-after-cutting}
    Let $G$ be a bidirected graph and let $F$ be a sign-cut graph of $G$.
    \begin{enumerate}[nosep]
        \item If $u$ and $v$ are distinct tips in $F$ with signs $\alpha,\beta \in\signs$, respectively, then $\{u\alpha,v\beta\}$ is a snarl of $G$.
        \item If $\{u\alpha,v\beta\}$ is a snarl of $G$ and $u$ and $v$ are non-tips in $F$ then there is a unique block of $F$ where $u$ and $v$ are non-tips and where $\{u,v\}$ is a split pair.
    \end{enumerate}
\end{restatable}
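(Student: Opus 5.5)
Both parts will be reduced to statements about snarls of $F$ via \Cref{lem:snarls-G-F}, so throughout I work inside a fixed sign-cut graph $F$.

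\textbf{Part 1.} Let $u,v$ be distinct tips of $F$, with $u\alpha$ and $v\beta$ the signs of their incident edges in $F$. By \Cref{lem:snarls-G-F} it suffices to show that $\{u\alpha,v\beta\}$ is a snarl of $F$.

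For separability: since $u$ is a tip, splitting $u\alpha$ in $F$ leaves $u'$ isolated (or absent), because no edge of $F$ uses $u\hat\alpha$. The same holds for $v'$. The graph $F$ is connected, being obtained from the connected graph $G$ by splitting only at cutvertices along component boundaries. Hence $u$ and $v$ remain in one component $X$ (the whole $F$ minus the isolated copies), and $X$ avoids $u'$ and $v'$.

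For minimality: suppose $z\in V(X)$ with $z\neq u,v$ is such that $\{u\alpha,z\gamma\}$ and $\{z\hat\gamma,v\beta\}$ are both separable. Splitting $z\gamma$ and $u\alpha$ must separate $z$ from $z'$. Since $u'$ is isolated, this forces splitting $z\gamma$ alone to disconnect $F$. So $z$ is a cutvertex of $F$ with all $\gamma$-edges on one side and all $\hat\gamma$-edges on the other, making $z$ a tip in each part. I would then argue that $z$ is a cutvertex of $G$ with this property. The sign-cut operation only separates along sign-consistent vertices, so the components of $F-z$ are the components of $G-z$ restricted to $F$, up to parts hanging off already-split vertices. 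It follows that $z$ is sign-consistent in $G$. But then $z$ would have been split when forming $F$, so $z$ could not have both signs in $F$. This contradicts the fact that $z$ is not a tip in $F$, and $z$ must indeed be a non-tip, otherwise $z\gamma$ or $z\hat\gamma$ would be absent. I expect this transfer of sign-consistency from $F$ back to $G$ to be the main technical step; I would handle it by tracking, for each component $C$ of $G-z$, which sign-cut graphs meet $C$.

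\textbf{Part 2.} Let $\{u\alpha,v\beta\}$ be a snarl of $G$ with $u,v$ non-tips of $F$. By \Cref{lem:snarls-G-F} and \Cref{lem:snarls-inside-sign-cut-tree}, it is a snarl of $F$. Separability in $F$ says that after splitting, $u$ and $v$ are connected while being separated from $u'$ and $v'$.

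The first step is to show that $u$ and $v$ lie in a common block $H$ of $F$. Otherwise some cutvertex $x$ of $F$ separates them, and I would mimic the proof of \Cref{lem:snarls-inside-sign-cut-tree} to contradict minimality by using $x\gamma$ as the intermediate vertex-side. This requires that $x$ has one sign toward $u$'s side and the opposite sign toward $v$'s side. Such signs are forced by separability, because otherwise splitting would create a $u$-$u'$ or $v$-$v'$ connection through the side blocks, as in the argument of \Cref{prop:mixed-dangling-blocks}. Since two blocks share at most one vertex, the block $H$ is unique.

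Next, \Cref{prop:mixed-dangling-blocks} shows that $u$ and $v$ have no dangling blocks with respect to $H$. Because $u$ is a non-tip of $F$ whose other blocks are each single-signed, $u$ must be a non-tip in $H$ itself; otherwise $u$ would be sign-consistent, contradicting the fact that it was not split. The same holds for $v$.

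Finally, to show that $\{u,v\}$ is a split pair of $H$, suppose it is neither an edge nor a separation pair. Then $H-u-v$ is connected. Since $u$ has both signs in $H$, there are neighbors of $u$ via $u\alpha$ and via $u\hat\alpha$ in $H$ other than $v$, or edges to $v$ itself, and these neighbors are connected in $H-u-v$. This yields a $u$-$u'$ path after splitting, contradicting separability. The case where every such edge goes to $v$ makes $\{u,v\}$ an edge and hence a split pair.
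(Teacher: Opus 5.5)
Your Part 1 is correct. For a non-tip $z$ you show that separability of $\{u\alpha,z\gamma\}$ makes every component of $F-z$ single-signed at $z$, and you lift this to sign-consistency of $z$ in $G$; the lift works because a path of $G$ between two vertices of $F$ stays inside $F$. This re-derives what the paper obtains directly from \Cref{prop:tip-nontip-nonseparable}. In Part 2, your no-dangling-blocks step and your final split-pair argument also match the paper.

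The gap is the first step of Part 2, showing that $u$ and $v$ lie in a common block. Routing the contradiction through minimality via a separating cutvertex $x$ of $F$ cannot work, for two reasons.

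\begin{itemize}
\item Your claim that the required signs at $x$ are ``forced by separability'' is a non sequitur. Separability of $\{u\alpha,v\beta\}$ only concerns connectivity after splitting $u$ and $v$. Since $x$ is not split, its signs play no role there. The argument of \Cref{prop:mixed-dangling-blocks} turns mixed signs at the \emph{split} vertex into a $u$-$u'$ connection, not mixed signs at $x$.
\item The configuration you need never occurs in a sign-cut graph. For $\{u\alpha,x\gamma\}$ and $\{x\hat\gamma,v\beta\}$ both to be separable, every component of $F-x$ must be single-signed at $x$: each component avoids $u$ or $v$, and a mixed one would reconnect $x$ and $x'$. Both signs must also be present. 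By your own Part 1 lifting argument, such an $x$ is sign-consistent in $G$, so it was split and is a tip of $F$, which contradicts having both signs.
\end{itemize}

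So no cutvertex of $F$ can witness non-minimality, and the contradiction has to come from separability at $u$ itself.

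The missing idea is one you only use later. Since $u$ is a non-tip of $F$, \Cref{prop:mixed-signs-inside-block} gives a block $B$ of $F$ containing edges $\{u+,a\gamma\}$ and $\{u-,b\delta\}$. Suppose $v\notin V(B)$. Then an $a$-$b$ path in $B-u$ (trivial if $B$ is a multi-bridge) avoids $v$. After splitting $u\alpha$ and $v\beta$, this path joins $u$ to $u'$, contradicting separability. Hence $v\in V(B)$, so $B$ is the unique common block $H$, and $u$ is automatically a non-tip in $H$; the same holds symmetrically for $v$. With this replacement, the rest of your Part 2 goes through.
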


\subsection{Properties of snarls}

In this section we give a series of technical results on snarls in order to prove \Cref{lem:snarls-G-F} and \Cref{thm:where-are-snarls-after-cutting}.


\begin{proposition}
\label{prop:mixed-signs-inside-block}
    Let $G$ be a bidirected graph, $F$ be a sign-cut graph of $G$, and $u \in V(F)$ be a vertex. If $u$ is a non-tip in $F$ then there is a block of $F$ with vertex-sides of opposite signs in $u$.
\end{proposition}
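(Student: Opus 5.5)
We argue by contradiction, assuming that every block of $F$ containing $u$ has vertex-sides of only one sign at $u$. Since $u$ is a non-tip in $F$, it has incident edges of both signs in $F$. These edges lie in blocks of $F$, because the blocks partition the edges. Hence $u$ lies in at least two blocks of $F$, one carrying only $+$ at $u$ and another carrying only $-$ at $u$. In particular $u$ is a cutvertex of $F$.

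The key step is to show that $u$ is then a sign-consistent cutvertex of $G$. This would contradict the construction of the sign-cut graphs, since every sign-consistent vertex of $G$ has been split and appears as a tip in each sign-cut graph containing it. To make this precise, we first relate the components of $F-u$ to those of $G-u$.

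By the remark following \Cref{def:sign-cut-graphs}, the blocks of $G$ coincide with the blocks of its sign-cut graphs. So the blocks of $G$ at $u$ are exactly the blocks of the various sign-cut graphs containing $u$. Now, $u$ is a non-tip in $F$, so $u$ was not split when forming the sign-cut graphs. Therefore every block of $G$ containing $u$ belongs to $F$, and $u$ appears in no other sign-cut graph. Each component $C$ of $G-u$ attaches to $u$ through some set of blocks, and $G[V(C)+u]$ is a union of blocks of $G$ that are connected through cutvertices other than $u$.

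The main obstacle is that such a component may contain several blocks at $u$, some carrying $+$ and some carrying $-$ at $u$. In that case $u$ is not a tip in $G[V(C)+u]$ even though every single block at $u$ is sign-uniform. So we cannot conclude that $u$ is sign-consistent directly. We handle this by strengthening the hypothesis. Among the blocks of $F$ at $u$, pick a $+$-block $H_1$ and a $-$-block $H_2$. Suppose no block of $F$ has both signs at $u$. Then we must show that some component of $F-u$ contains blocks of both signs at $u$; otherwise every component of $G-u$ would be sign-uniform at $u$, and $u$ would be sign-consistent, a contradiction.

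So assume a component $C$ of $F-u$ contains a $+$-block $H_1$ and a $-$-block $H_2$ at $u$. Take a path in $C$ from a neighbour $x$ of $u$ in $H_1$ to a neighbour $y$ of $u$ in $H_2$, avoiding $u$. Together with the edges $\{u+,x\cdot\}$ and $\{u-,y\cdot\}$, this path forms a cycle in $U(F)$ through $u$. A cycle lies in a single block, so the block containing it has both signs at $u$. This contradicts the assumption and shows that each component of $G-u$ is sign-uniform at $u$. Thus $u$ is sign-consistent, which is the desired contradiction.
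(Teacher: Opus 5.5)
Your proof is correct and is essentially the paper's argument: if every block of $F$ at $u$ is sign-uniform, then $u$ lies in at least two blocks and is therefore a cutvertex. That makes $u$ sign-consistent in $G$, so it would have been split into a tip, contradicting that $u$ is a non-tip in $F$. Your cycle argument is just the standard fact (recalled in the paper's block-cut tree preliminaries and used implicitly in its proof) that each component of $G-u$ meets exactly one block at $u$, so the ``main obstacle'' never arises; to make your write-up consistent, run that argument in $G-u$ and $U(G)$ rather than $F-u$ and $U(F)$, which works verbatim since you already showed that every block of $G$ at $u$ is a block of $F$.
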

\begin{proof}
    Suppose for a contradiction that, for every block of $F$, the vertex-sides of $u$ have the same sign.
    If $u$ is not a cutvertex in $F$ then $u$ is in a unique block of $F$ and thus $u$ is a tip, a contradiction.
    Otherwise $u$ is a sign-consistent non-tip cutvertex of $G$, a contradiction to the fact that $F$ is a sign-cut graph of $G$.
\end{proof}

Unlike superbubbles and cutvertices, we cannot argue that snarls do not contain sign-consistent vertices in their ``interior'' (e.g., the component of a snarl formed by two tips is the whole graph, and thus contains all sign-consistent vertices). Nonetheless, sign-cut graphs are useful because they give us a way to efficiently encode all the snarls (this will become clear later on when the snarl finding algorithm is given).

\begin{lemma}
\label{lem:snarls-both-tips-component}
    Let $G$ be a connected bidirected graph and let $\{u\alpha,v\beta\}$ be a snarl of $G$ with component $X$. Then $X=G$ if and only if $u$ and $v$ are tips in $G$ with signs $\alpha,\beta\in\signs$, respectively.
\end{lemma}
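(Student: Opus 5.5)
The proof is a direct unpacking of the splitting operation. The whole argument rests on one observation: when we split $u\alpha$, exactly the edges of $G$ incident to $u$ with sign $\hat\alpha$ are moved onto $u'$, and nothing else changes. So a vertex $u$ is a tip of sign $\alpha$ precisely when splitting $u\alpha$ moves no edge, which leaves $u'$ isolated. Throughout, ``$X=G$'' is read as: $X$ has vertex set $V(G)$ and contains every edge of $G$ (the new vertices $u',v'$ are not vertices of $G$).

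\emph{($\Leftarrow$)} Suppose $u$ and $v$ are tips of $G$ whose incident edges carry signs $\alpha$ and $\beta$, respectively.
\begin{enumerate}[nosep]
    \item Splitting $u\alpha$ and then $v\beta$ relocates no edge, so the resulting graph $G'$ is $G$ together with the two isolated vertices $u'$ and $v'$.
    \item Since $G$ is connected, the component of $G'$ containing $u$ is all of $G$, and it contains $v$ but neither $u'$ nor $v'$.
    \item The snarl component of $\{u\alpha,v\beta\}$ is uniquely the component containing $u$ and $v$, so $X=G$.
\end{enumerate}
Minimality plays no role in this direction; it is already supplied by the hypothesis that $\{u\alpha,v\beta\}$ is a snarl.

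\emph{($\Rightarrow$)} Suppose $X=G$. Assume for a contradiction that $u$ is not a tip of sign $\alpha$, the case of $v$ being symmetric. Then $G$ has an edge $e$ incident to $u$ with sign $\hat\alpha$.
\begin{enumerate}[nosep]
    \item After splitting $u\alpha$, the edge $e$ is incident to $u'$ instead of $u$.
    \item Subsequently splitting $v\beta$ can move the other end of $e$ onto $v'$, but it never moves the $u$-end of $e$ back onto $u$. Hence $e$ remains incident to $u'$ in the split graph.
    \item Since $X$ is a component not containing $u'$, it cannot contain any edge incident to $u'$. So $e\notin E(X)$.
    \item This contradicts $X=G$, since $e \in E(G)$.
\end{enumerate}
Therefore every edge at $u$ has sign $\alpha$, and likewise every edge at $v$ has sign $\beta$.

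One degenerate case must be checked: an edge between $u$ and $v$ with signs $\hat\alpha$ at $u$ and $\hat\beta$ at $v$. After both splits it joins $u'$ and $v'$, so it again lies outside $X$, and the contradiction still holds.

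The only genuine obstacle is formalizing ``$X=G$'', because $X$ lives in the split graph rather than in $G$. I would resolve this through edges rather than vertices, as done above. A purely vertex-based reading would fail: $u$ might have an edge of sign $\hat\alpha$ to a vertex $w$ that is also reachable from $u$ by another route. Then $V(X)\supseteq V(G)$ could fail only because $w$ is pulled into the component of $u'$. Comparing edge sets sidesteps this entirely.
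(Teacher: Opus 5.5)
Your proof is correct and is essentially the paper's. The ($\Leftarrow$) direction is identical. For ($\Rightarrow$), the paper takes an edge $\{u\hat\alpha,w\gamma\}$ and splits on whether $w=v$: when $w\neq v$ it shows $w\notin V(X)$, and when $w=v$ it shows the edge becomes $\{u'\hat\alpha,v'\hat\beta\}$ and lies outside $X$. Your edge-based argument handles both cases at once, and also covers the case where $u$ is a tip of sign $\hat\alpha$, which the paper leaves implicit.

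Your closing remark misplaces the obstacle, though. A vertex-only argument does work when $w\notin\{u,v\}$, since $w\in V(X)$ would then put $u'$ in $X$. It breaks down when $w=v$, which is exactly where the paper switches to comparing edge sets, as you do throughout.
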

\begin{proof}
    $(\Leftarrow)$
    If $u$ and $v$ are tips in $G$ with signs $\alpha$ and $\beta$ then splitting $u\alpha$ and $v\beta$ results in a graph consisting of $G$ plus two isolated vertices, $u'$ and $v'$. Since $G$ is connected it follows that $X=G$.

    $(\Rightarrow)$
    We show that if $u$ or $v$ are non-tips in $G$ then $X \neq G$ (i.e., there is an edge or a vertex of $G$ that is not part of $X$, as $X \subseteq G$ by definition of snarl and snarl component).
    Suppose without loss of generality that $\alpha=+$.
    Since $u$ is a non-tip in $G$, $G$ has an edge $\{u-,w\gamma\}$. Let $G'$ denote the graph resulting from splitting $u+$ and $v\beta$.
    
    Suppose that $w \neq v$. If $w \in V(X)$ then $X$ has a $u$-$w$ path in $X$ because components are connected. This path can be extended with the edge $\{w\gamma,u'-\} \in E(G')$, and so there is a $u$-$u'$ path in $G'$ and we have $u' \in V(X)$, contradicting the fact that $\{u\alpha,v\beta\}$ is separable. Thus $w \notin V(X)$ and so $V(X) \neq V(G)$.

    Suppose that $w = v$. Then $\gamma=\hat{\beta}$, otherwise $\{u\alpha,v\beta\}$ is not separable since $u$ and $u'$ would be connected in $G'$ via $v$. Then the edge $\{u-,v\hat{\beta}\}$ of $G$ (which becomes the edge $\{u'-,v'\hat{\beta}\}$ in $G'$) is not contained in $X$ since $u',v'\notin V(X)$, and therefore $E(X) \neq E(G)$.
\end{proof}

\begin{proposition}\label{prop:tip-nontip-nonseparable}
    Let $G$ be a bidirected graph and $F$ be a sign-cut graph of $G$ with vertices $u$ and $v$. With respect to $F$, if $u$ is a tip with sign $\alpha$ and $v$ is a non-tip then $\{u\alpha,v\beta\}$ is not separable for any $\beta\in\signs$.
\end{proposition}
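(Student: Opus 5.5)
The approach is to find, inside $F$, a connection between $v$ and the split copy $v'$ that cannot be cut by splitting $u\alpha$. Such a connection exists because $v$ is a non-tip and $u$ is a tip. The argument should work in $F$ and equally in $G$, since it only uses edges of $F$.

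\textbf{Step 1: a block with mixed signs at $v$.} Since $v$ is a non-tip in $F$, \Cref{prop:mixed-signs-inside-block} gives a block $H'$ of $F$ with vertex-sides of opposite signs at $v$. So $H'$ contains edges $\{v+,x\gamma\}$ and $\{v-,y\delta\}$.

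\textbf{Step 2: an $x$-$y$ connection in $H'$ avoiding $v$.} If $x=y$, the connection is trivial; this covers the case where $H'$ is a multi-bridge. Otherwise $H'$ is $2$-connected, so it has an undirected $x$-$y$ path $p$ avoiding $v$. Now split $v\beta$. One of the two edges above stays at $v$, and the other moves to $v'$. Together with $p$ they form an undirected $v$-$v'$ path. Its internal vertices lie in $H'$, and all its edges are edges of $F$.

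\textbf{Step 3: splitting $u\alpha$ does not break this path.} This is the main point to check. The path may pass through $u$, or use an edge at $u$ (possibly $x=u$ or $y=u$). Since $u$ is a tip of $F$ with sign $\alpha$, every edge of $F$ incident to $u$ carries sign $\alpha$ at $u$. Splitting $u\alpha$, whether in $F$ or in $G$, only reattaches edges with sign $\hat\alpha$ at $u$ to $u'$. None of these edges belongs to $F$: in $G$ such edges lie in a different sign-cut graph, when $u$ is sign-consistent. Hence every edge of our path keeps its endpoints after the second split, with only $v$-side edges having been moved to $v'$ as in Step 2.

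\textbf{Conclusion.} After splitting both $u\alpha$ and $v\beta$, the vertices $v$ and $v'$ remain in the same component. So no component contains $v$ while excluding $v'$, and $\{u\alpha,v\beta\}$ is not separable, in $F$ and in $G$ alike.

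The only delicate points are the degenerate cases. The first is when $x$ or $y$ equals $u$, which is handled by Step 3. The second is when $H'$ is a multi-bridge, which is handled by the case $x=y$ in Step 2.
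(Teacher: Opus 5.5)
Your proof is correct and follows the paper's argument essentially step for step. You take a block of $F$ with opposite signs at $v$ (via \Cref{prop:mixed-signs-inside-block}) and an $x$-$y$ path in it avoiding $v$, which is trivial in the multi-bridge case. You then observe that splitting $u\alpha$ cannot break this path because every edge of $F$ at $u$ carries sign $\alpha$. Your extra remark that the same witness also rules out separability in $G$ is sound for the same reason: all edges of the path lie in $F$.
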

\begin{proof}
    Since $v$ is a non-tip in $F$ we can apply \Cref{prop:mixed-signs-inside-block} to get a block $H$ of $F$ containing edges $\{v+, z\gamma\}$ and $\{v-, w\delta\}$. So $H$ has a $z$-$w$ path $p$ avoiding $v$ (if $H$ is 2-connected this follows from 2-connectivity, and if $H$ is a multi-bridge then $z=w$ and the path is trivial).
    Thus, splitting $v\beta$ for $\beta\in\signs$ in $F$ results in a graph where $v$ and $v'$ are connected by $p$ and the two edges incident to $z$ and $w$ (where $v$ is appropriately changed to $v'$ in one of the edges).
    Splitting $u\alpha$ has no effect in this path as it only creates an isolated vertex $u'$, $u$ being a tip.
\end{proof}

We are now ready to prove the two desired results.

\snarlsGF*
\begin{proof}
    $(\Rightarrow, separability)$ Let $\{u\alpha,v\beta\}$ be a separable pair of vertex-sides of $G$. \Cref{lem:snarls-inside-sign-cut-tree} implies that the vertex-sides $u\alpha$ and $v\beta$ are contained in the same sign-cut graph of $G$, say $F$. Since $F\subseteq G$, it follows that $\{u\alpha,v\beta\}$ is separable in $F$.

    $(\Leftarrow, separability)$ Let $\{u\alpha,v\beta\}$ be a separable pair of vertex-sides of $F$.
    If $u$ and $v$ are not sign-consistent vertices of $G$ then the set of edges incident to $u$ and $v$ in $G$ are all contained in $F$, and thus the separability of $\{u\alpha,v\beta\}$ in $F$ clearly carries over to $G$.
    
    Otherwise $u$ or $v$ is a sign-consistent vertex of $G$, say $u$ without loss of generality. By separability, splitting $u\alpha$ and $v\beta$ in $F$ leaves $u$ and $v$ connected by a path (which is contained in $F$), and since $F\subseteq G$ splitting $u\alpha$ and $v\beta$ in $G$ also leaves $u$ and $v$ connected at least by that same path. 
    It remains to show that $u$ and $v$ become separated from $u'$ and $v'$ (in $G$).

    Since $u$ is sign-consistent for $G$ it is a tip in $F$ with sign $\alpha$, and thus \Cref{prop:tip-nontip-nonseparable} implies that $v$ is a tip in $F$.
    Moreover, the edges with vertex-side $u\hat \alpha$ (possibly none if $u$ is also a tip in $G$) are all contained in another sign-cut graph.
    Thus, splitting $u\alpha$ in $G$ amounts to disconnecting $G$ into two components, one containing the edges with vertex-sides $u \alpha$ and the other containing the edges with vertex-sides $u'\hat \alpha$, so $u$ and $v$ become disconnected from $u'$.
    Since $v$ is a tip, $v$ is either a sign-consistent vertex of $G$ or a non-cutvertex tip of $G$. The latter case is trivial, and in the former we can argue as we did for $u$ and conclude that splitting $v\beta$ results in a graph where $v$ and $u$ are disconnected from $v'$.

    $(\Leftarrow, minimality)$ Let $\{u\alpha,v\beta\}$ be a snarl in $F$. Suppose for a contradiction that $G$ has vertex-sides $w\gamma$ and $w\hat\gamma$ with $w\neq u,v$ such that $\{u\alpha,w\gamma\}$ and $\{v\beta, w\hat\gamma\}$ are separable in $G$. Then $w\gamma$ is in $F$ by \Cref{lem:snarls-inside-sign-cut-tree} since otherwise $\{u\alpha,w\gamma\}$ is not separable in $G$. Moreover, by the separability result for the $(\Rightarrow)$ direction given above, $\{u\alpha,w\gamma\}$ and $\{v\beta, w\hat\gamma\}$ are also separable in $F$, a contradiction to the minimality of $\{u\alpha,v\beta\}$. So $\{u\alpha,v\beta\}$ is a snarl in $G$.

    $(\Rightarrow, minimality)$ Let $\{u\alpha,v\beta\}$ be a snarl in $G$. If $F$ has vertex-sides $w\gamma$ and $w\hat\gamma$ with $w\neq u,v$ such that $\{u\alpha,w\gamma\}$ and $\{v\beta, w\hat\gamma\}$ are separable in $F$, then by separability result for the $(\Leftarrow)$ direction given above, $\{u\alpha,w\gamma\}$ and $\{v\beta, w\hat\gamma\}$ are also separable in $G$, a contradiction to the minimality of $\{u\alpha,v\beta\}$, and so $\{u\alpha,v\beta\}$ is a snarl in $F$.
\end{proof}

\snarlsaftercutting*
\begin{proof}
    We prove the two items separately.
    \begin{enumerate}[nosep]
        \item Let $F'$ denote the graph resulting from splitting $u\alpha$ and $v\beta$ in $F$.
        Since $u$ and $v$ are tips in $F$ with signs $\alpha$ and $\beta$, respectively, $F'$ consists of three components, which are $F$ and the two isolated vertices $u'$ and $v'$. Hence, $\{u\alpha,v\beta\}$ is separable.

        To see minimality, suppose for a contradiction that $F$ has vertex-sides $z\gamma$ and $z\hat\gamma$ with $z\notin\{u,v\}$ such that $\{u\alpha,z\gamma\}$ and $\{v\beta,z\hat{\gamma}\}$ are separable. If $z$ is not a tip then neither $\{u\alpha,z\gamma\}$ nor $\{v\beta,z\hat{\gamma}\}$ are separable by \Cref{prop:tip-nontip-nonseparable} (since $u$ and $v$ are tips by assumption).
        So $z$ is a tip in $F$, without loss of generality, with sign $\gamma$. Splitting $z\hat{\gamma}$ results in $z$ being isolated and further splitting $v\beta$ does not create new paths. Hence there is no $v$-$z$ path in the graph resulting from these two splits and thus $\{v\beta,z\hat{\gamma}\}$ is not separable, a contradiction.
        
        Hence $\{u\alpha,v\beta\}$ is a snarl of $F$, and by \Cref{lem:snarls-G-F}, of $G$ too.

        \item By \Cref{lem:snarls-G-F} there is a sign-cut graph of $G$ where $\{u\alpha,v\beta\}$ is a snarl. Notice that $u$ and $v$ are not sign-consistent in $G$ because they are non-tips in $F$. Therefore the only sign-cut graph containing these vertices is exactly $F$ and thus $\{u\alpha,v\beta\}$ is a snarl in $F$.
        
        First we show that there is a block of $F$ containing the vertices $u$ and $v$. Suppose otherwise.       
        Because $u$ is a non-tip in $F$ we can apply \Cref{prop:mixed-signs-inside-block} to conclude that there are edges $\{u+,z\gamma\}$ and $\{u-,w\delta\}$ within a block of $F$.
        So there is a $z$-$w$ path $p$ in this block that avoids $u$.
        After splitting $u\alpha$, one of $\{u+,z\gamma\}$ and $\{u-,w\delta\}$ remains incident to $u$ and the other becomes incident to $u'$, so $u$ and $u'$ remain connected via $p$. Since $v$ is in a different block than $u$ by assumption, vertex $v$ is not in $p$ and so splitting $v\beta$ does not separate $u$ from $u'$, contradicting the separability of $\{u\alpha,v\beta\}$. Hence $u$ and $v$ are in the same block of $F$, say $H$.
        
        Since $\{u\alpha,v\beta\}$ is separable and $u,v \in V(H)$, applying \Cref{prop:mixed-dangling-blocks} to $F,H,u,v$ implies that $u$ and $v$ have no dangling blocks with respect to $H$.
        Since $u$ and $v$ are non-tips in $F$ by assumption and $F$ is a sign-cut graph of $G$, it follows that $H$ is the unique block of $F$ that contains vertex-sides of different signs at $u$ and $v$, in other words, $u$ and $v$ are non-tips only in $H$.

        We are left to show that $\{u,v\}$ is a split pair of $H$.
        If $\{u,v\}$ is an edge we are done, so suppose otherwise.
        Graph $H$ has vertices $x \in N^+_H(u)$ and $y \in N^-_H(u)$ because $u$ is a non-tip in $H$. Furthermore, since $\{u,v\}$ is not an edge, both $x$ and $y$ are distinct from $v$. Clearly $x$ is contained in the snarl component of $\{u\alpha,v\beta\}$ and $y$ is not.
        So if $\{u,v\}$ is not a separation pair of $H$ then $H-\{u,v\}$ has an $x$-$y$ path and therefore the graph resulting from splitting $u\alpha$ and $v\beta$ in $F$ has a $u$-$u'$ path, contradicting the separability of $\{u\alpha,v\beta\}$.
    \end{enumerate}
\end{proof}

\Cref{lem:snarls-G-F} has a convenient consequence for arguing on the minimality of separable vertex-sides. If $\{u\alpha,v\beta\}$ is separable in $F$ then in order to show that it is a snarl in $G$ it is enough that no vertex-side in $F$ violates minimality, even if the component of $\{u\alpha,v\beta\}$ in $G$ spans over $F$. This may the case if at least one vertex between $u$ and $v$ is a tip in $F$, for instance, if $F$ has three sign-consistent vertices of $G$ then any two of these vertices (together with the obvious signs) form a snarl whose component spans over $F$. Further, minimality is trivial to check in this case as it is shown in the proof of (1) of \Cref{thm:where-are-snarls-after-cutting}.
On the other hand it is not hard to see that in the other case, i.e., non-tip with non-tip snarls, the component is contained in $F$. For these we give a result showing that the vertex-sides are in the block $H$ where both $u$ and $v$ appear, i.e., they are not contained in the blocks ``attached'' to $H$ by $u$ or $v$.

\subsection{The snarl finding algorithm}

In this section we develop an algorithm to find snarls whose vertices form a separation pair of a block of $G$. Since snarls are only defined by their separability and minimality, it is not required to maintain any partial information during the algorithm. In fact, most of our effort is devoted to understanding where and how snarls arise in the different nodes of the SPQR tree.

We start by giving a useful result for showing minimality of separable pairs of vertex-sides. Then we show results giving (mostly) sufficient conditions for a snarl to exist within the different types of nodes of the SPQR tree. Finally we present our algorithm and give a correctness proof.

\begin{proposition}\label{prop:snarls-disjoint-paths}
    Let $G$ be a bidirected graph and let $u\alpha$ and $v\beta$ be vertex-sides of $G$ such that $\{u\alpha,v\beta\}$ is separable with component $X$. If $X$ has two internally vertex-disjoint $u$-$v$ paths then $\{u\alpha,v\beta\}$ is a snarl.
\end{proposition}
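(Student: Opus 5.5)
Since separability of $\{u\alpha,v\beta\}$ is given, only minimality needs proof. I will argue by contradiction. Suppose $X$ contains vertex-sides $z\gamma$ and $z\hat\gamma$, with $z\notin\{u,v\}$, such that $\{u\alpha,z\gamma\}$ and $\{z\hat\gamma,v\beta\}$ are both separable in $G$. Let $P_1,P_2$ be the two internally vertex-disjoint $u$-$v$ paths in $X$.

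The first step is to use the paths to show that $z$ must be split away from the endpoint it is paired with. Since the paths are internally disjoint, $z$ is an internal vertex of at most one of them, say not of $P_2$.

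Consider first the graph $G_1$ obtained by splitting $u\alpha$ and $z\gamma$. Every edge of $X$ at $u$ has sign $\alpha$ at $u$: an edge of $X$ with sign $\hat\alpha$ at $u$ would, after splitting $u\alpha$ and $v\beta$, be attached to $u'$, and then $u'$ would lie in $X$. So both paths start at $u$ through edges that stay at $u$ when $u\alpha$ is split. The same argument shows that $v$ enters both paths with sign $\beta$.

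Now I follow the $u$-$v$ path that avoids $z$, which is $P_2$ (or either path if $z$ lies on neither). In $G_1$ this path still connects $u$ to $v$. The only splitting at $v$ in $G_1$ concerns $z$, and $v$ itself is not split, so $u$ and $v$ stay connected in $G_1$.

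The same reasoning applies to the graph $G_2$ obtained by splitting $z\hat\gamma$ and $v\beta$: along the path avoiding $z$, $v$ stays connected to $u$. So in $G_1$ we have $u\sim v$, and in $G_2$ we have $v\sim u$.

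Next I combine this with separability of $\{u\alpha,z\gamma\}$. Its component contains $u$ and $z$ but not $z'$. Hence $v$ lies in that component, since $v$ is connected to $u$ in $G_1$.

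I then look at the path among $P_1,P_2$ that contains $z$; if $z$ lies on neither path, I pick any $z$-$v$ connection inside $X$ instead. Traversing $P_1$ through $z$, the path enters $z$ on one side and leaves on the other, unless the signs at $z$ along it agree. One of the two subpaths, $u$ to $z$ or $z$ to $v$, uses an edge with sign $\hat\gamma$ at $z$.

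Suppose the segment from $u$ to $z$ arrives at $z$ with sign $\hat\gamma$. In $G_1$ this segment reaches $z'$, while $u$, $v$ and $z$ are all in the component of $u$. This puts $z'$ in the same component as $z$, contradicting separability of $\{u\alpha,z\gamma\}$.

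Otherwise the segment from $z$ to $v$ leaves $z$ with sign $\gamma$. Symmetrically, in $G_2$ the vertices $v$, $u$ (via the path avoiding $z$) and the $z'$-end of that segment are all connected. This contradicts separability of $\{z\hat\gamma,v\beta\}$.

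The remaining case is when $z$ lies on neither path. Then $z\in X$ is still connected to $u$ in $X$, and I expect the main obstacle to be this case together with the case where $P_1$ has equal signs at $z$. For these, my plan is to take any edge at $z$ with sign $\hat\gamma$ and one with sign $\gamma$; both must exist, because otherwise one of the two pairs is trivially non-separable, the split vertex $z'$ or $z$ being isolated from $u$ or $v$. I then route each of these edges back to $P_1\cup P_2$ inside $X$, using the fact that $\{u,v\}$ together with both paths connects everything. This produces, in $G_1$ or in $G_2$, a walk from $z'$ to the surviving $u$-$v$ path, yielding the same contradiction. This concludes that $\{u\alpha,v\beta\}$ is minimal and therefore a snarl.
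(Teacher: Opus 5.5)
Your strategy is sound, and it differs from the paper's route. You join $u$ and $v$ in both $G_1$ and $G_2$ by the path avoiding $z$, then use the sign at $z$ of a path arriving at $z$. The paper instead extracts from the separability of $\{v\beta,z\hat\gamma\}$ an edge $\{x\delta,z\hat\gamma\}$ with $x$ reachable from $v$ in $G-z$. It shows that $x$ lies in the component of $X-z$ containing $u$ and $v$, and concludes that $u$ reaches $z'$ after splitting $u\alpha$ and $z\gamma$.

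However, your case analysis is not exhaustive, and the missing case is exactly the one where the hypothesis matters. Your two cases are ``the $u$-to-$z$ segment of $P_1$ arrives with $\hat\gamma$'' and ``the $z$-to-$v$ segment leaves with $\gamma$''. Each gives a contradiction from the segment alone: it joins $u$ to $z'$ in $G_1$, respectively $v$ to $z'$ in $G_2$, and $P_2$ is never used. What remains is $u\cdots z\gamma$ followed by $z\hat\gamma\cdots v$. That is precisely how a $u$-$v$ path looks when $z$ breaks $\{u\alpha,v\beta\}$ into two separable pairs. Your ``Otherwise'' does not follow from negating the first case; it even clashes with your claim that some segment uses $\hat\gamma$ at $z$. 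Your parenthetical use of $u$ ``via the path avoiding $z$'' sits in a case where it is not needed.

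Conversely, the situations you flag as the main obstacle are less serious than you think. Equal signs at $z$ on $P_1$ are already covered by your two cases. The case of $z$ on neither path is resolved by the repair below. The routing plan you propose for them is not justified: an arbitrary edge at $z$ may be routable back to $P_1\cup P_2$ inside $X$ only through $z$ itself, since its far endpoint may lie in a component of $X-z$ avoiding $u$ and $v$. In that case nothing survives the split as a connection from $z'$.

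The repair uses only your own ingredients and handles every case at once. Since $X$ is connected, take any $u$-$z$ path $Q$ in $X$. Only its last edge is incident to $z$, and it survives both splits because its edges at $u$ and $v$ carry signs $\alpha$ and $\beta$. If the last edge has sign $\hat\gamma$ at $z$, then in $G_1$ the path $Q$ joins $u$ to $z'$, contradicting the separability of $\{u\alpha,z\gamma\}$. If it has sign $\gamma$, then in $G_2$ the path $Q$ joins $u$ to $z'$. The path among $P_1,P_2$ that avoids $z$ joins $u$ to $v$ there, so $v$ reaches $z'$, contradicting the separability of $\{z\hat\gamma,v\beta\}$. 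With this fix your argument is complete. It is slightly shorter than the paper's because it never needs to place the neighbour $x$ inside the component of $X-z$ containing $u$ and $v$.
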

\begin{proof}
    Suppose for a contradiction that $X$ has vertex-sides $w\gamma$ and $w\hat\gamma$ such that $\{u\alpha, w\gamma\}$ and $\{v\beta, w\hat{\gamma}\}$ are separable.
    Since $w\neq u,v$, at least one of $p_1,p_2$ avoids $w$; assume without loss of generality that $p_2$ avoids $w$, and let $C$ be the connected component of $X-w$ containing $u$ and $v$.

    Since $\{v\beta, w\hat{\gamma}\}$ is separable, the graph obtained by splitting $v\beta$ and $w\hat{\gamma}$ has a $v$-$w$ path.
    Let $\{x\delta, w\hat{\gamma}\}$ be the last edge of such a path (so $x$ is the predecessor of $w$ in the path). Then $x$ is reachable from $v$ in $G-w$ and, since the edge $\{x\delta, w\hat{\gamma}\}$ is also present in the graph obtained by splitting $u\alpha$ and $v\beta$, we have $x\in V(X)$.
    
    If $x\notin V(C)$ then $x$ lies in a component of $X-w$ disjoint from $\{u,v\}$, whose vertices (being in $X\setminus\{u,v\}$) have no neighbors outside $X$; thus $x$ would not be reachable from $v$ in $G-w$, a contradiction. Therefore $x \in V(C)$.

    Now split $u\alpha$ and $w\gamma$, and let $w'$ denote the vertex created by splitting $w\gamma$ (so $w'$ is incident to the edges originally incident to $w\hat{\gamma}$).
    The component $C$ contains a $u$-$x$ path avoiding $w$; since it is contained in $X$, it starts at $u$ with a $u\alpha$ vertex-side and is preserved by the split of $u\alpha$.
    Further, the edge $\{x\delta, w\hat{\gamma}\}$ becomes incident to $w'$.
    Hence $u$ reaches $w'$, contradicting the separability of $\{u\alpha, w\gamma\}$.

    Therefore no such $w$ exists and $\{u\alpha,v\beta\}$ is minimal, and so is a snarl.
\end{proof}

\paragraph*{S-, P-, and R-nodes}

The technique to find snarls within S-nodes is similar to the technique used to define sign-cut graphs. Let $\mu$ be an S-node and consider a fixed cyclical order of the edges of $\skel(\mu)$. Say that $v \in \skel(\mu)$ is \emph{good} if the vertex-sides at $v$ in the expansion of the edge to the left of $v$ all have sign $\alpha\in\signs$ and those in the expansion of the edge to its right have sign $\hat{\alpha}$ at $v$, and there are no dangling blocks with respect to $H$ at $v$. We show that the consecutive pairs formed by these vertex-sides in the obvious way form snarls.

\begin{proposition}[Snarls and S-nodes, see \Cref{fig:snarls-spr-nodes}(a)]
\label{prop:S-node-snarls}
    Let $G$ be a bidirected graph, $H$ be a 2-connected subgraph of $G$, $T$ be the SPQR tree of $H$, and $\mu$ be an S-node of $T$.
    Suppose that $\skel(\mu)$ has vertices $v_0,\dots,v_{k-1}$ and edges $e_0,\dots,e_{k-1}$ with the endpoints of $e_i$ being $v_i$ and $v_{(i+1 \mod k)}$ $(k \geq 3)$.
    Let $v_{i_1},\dots,v_{i_q}$ be the good vertices of $\mu$ with $q \geq 2$ listed in the order such that $i_1 < \dots < i_q$ $(i_1,\dots,i_q\subseteq\{0,\dots,k-1\})$, and let $\hat{\alpha}_{i_j},\alpha_{i_j} \in\signs$ denote the signs of the vertex-sides at $v_{i_j}$ in $\expansion(e_{ (i_j-1)\mod k})$ and $\expansion(e_{i_j})$, respectively, with $j \in \{1,\dots,q\}$. Then $\{v_{i_1}\alpha_{i_1}, v_{i_2}\hat{\alpha}_{i_2}\},\dots, \{v_{i_q}\alpha_{i_q}, v_{i_1}\hat{\alpha}_{i_1}\}$ are snarls.
\end{proposition}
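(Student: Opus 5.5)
Fix two cyclically consecutive good vertices, say $u=v_{i_j}$ and $v=v_{i_{j+1}}$, with $v_{i_{q+1}}:=v_{i_1}$. Let $P$ be the arc of the cycle $\skel(\mu)$ from $u$ to $v$ through the edges $e_{i_j},\dots,e_{i_{j+1}-1}$, and let $X_P=\bigcup_{e\in P}\expansion(e)$. I would show two things: that $\{u\alpha_{i_j},v\hat\alpha_{i_{j+1}}\}$ is separable with snarl component containing $X_P$ (together with whatever hangs off $H$ at interior vertices of $X_P$), and that it is minimal.

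\textbf{Separability.} Split $u\alpha_{i_j}$ and $v\hat\alpha_{i_{j+1}}$ in $G$. By the definition of good, every edge at $u$ inside $\expansion(e_{i_j})$ carries sign $\alpha_{i_j}$ and stays at $u$. Every edge at $u$ in the other incident expansion carries the opposite sign and moves to $u'$. The same holds symmetrically at $v$. It remains to handle edges at $u$ or $v$ that lie outside $H$. Since $u$ has no dangling block with respect to $H$, each other block at $u$ is monochromatic at $u$, so it moves entirely with $u$ or entirely with $u'$. Because $\{u,v\}$ lies in the block $H$, no block other than $H$ contains both $u$ and $v$. Hence the part of $G$ hanging off $u$ through such a block is attached to the rest of $G$ only at $u$, and cannot create a path from $u$ to $u'$. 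The same argument applies at $v$.

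Next I would argue that, inside $H$, every path leaving $X_P$ must go through $u$ or $v$. This holds because $\{u,v\}$ separates $X_P-\{u,v\}$ from the expansions of the other arc of the S-node. Attachments of $G$ outside $H$ at interior vertices of $X_P$ again meet $H$ only at cutvertices. Thus the component $X$ of $u$ contains $X_P$ and $v$, but avoids the opposite arc, and so contains neither $u'$ nor $v'$. One subtlety is the case $q=2$, where the opposite arc also runs from $v$ to $u$. The pair there has the opposite signs, so the two claimed snarls differ, and the same argument applies to each.

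\textbf{Minimality.} I would not try to apply \Cref{prop:snarls-disjoint-paths} directly, because $X_P$ has cutvertices (the interior vertices of $P$ in the S-node). Instead I would argue by contradiction. Suppose $\{u\alpha,z\gamma\}$ and $\{z\hat\gamma,v\hat\beta\}$ are both separable for some $z\in X$ with $z\ne u,v$. There are three cases.
\begin{itemize}
\item If $z$ is an interior vertex of the arc $P$, then $z$ is not good, since $u$ and $v$ are consecutive good vertices. Either $z$ has a dangling block with respect to $H$, and then \Cref{prop:mixed-dangling-blocks} gives non-separability. Or $z$ has both signs within a single incident expansion. Using \Cref{lem:reaches-in-expansion}, splitting $z\gamma$ then keeps $z$ and $z'$ connected inside that expansion without passing through $u$ or $v$, which contradicts separability of one of the two pairs.
\item If $z$ is interior to some expansion $\expansion(e)$ with $e\in P$, then $z\notin V(\skel(\mu))$. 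I would show that $\{u,z\}$ cannot separate the whole S-cycle. Concretely, splitting at $u$ and $z$ leaves $z'$ connected to $v$ and, around the cycle, to $u'$. This is the main obstacle. One possible route is to observe that $\{u,z\}$ would have to be a split pair of $H$. Since $u\in\skel(\mu)$ and $z$ is strictly inside $\expansion(e)$, the second $u$-$v$ route through the other arc of the cycle reconnects the two sides.
\item If $z$ lies outside $H$ (in a block attached at an interior vertex), then $z$ is separated from $v$ by a cutvertex. The argument of \Cref{lem:snarls-inside-sign-cut-tree} then shows that one of the two pairs cannot be separable.
\end{itemize}
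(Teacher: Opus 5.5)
Your separability argument is sound and follows the paper's idea that any $u$--$u'$ connection would have to cross between the two arcs at $u$ or $v$. Your use of the fact that each component of $G-V(H)$ attaches to the block $H$ at a single vertex makes it cleaner.

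\textbf{The genuine gap is in minimality, case~2} ($z$ strictly inside $\expansion(e)$ for some $e\in P$). You flag this case as the main obstacle and do not resolve it, and the route you sketch yields no contradiction:
\begin{itemize}
\item $\{u,z\}$ can perfectly well be a split pair of $H$. For example, $e_{i_j}$ may pertain to a P-node one of whose branches is an S-node whose skeleton contains $u$ and $z$.
\item $z'$ being joined to $v$ and to $u'$ is compatible with $\{u\alpha_{i_j},z\gamma\}$ being separable, which only forbids $u'$ and $z'$ from lying in the component of $u$.
\item After splitting $u\alpha_{i_j}$, the other arc hangs off $u'$, not $u$, so it cannot ``reconnect'' anything to $u$.
\end{itemize}
The missing idea is that such a $z$ is not a $u$-$v$ cutvertex of $X_P$. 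No two S-nodes are adjacent in $T$, so every virtual edge $e=\{x,y\}$ of $\skel(\mu)$ pertains to a P- or R-node. Hence $\expansion(e)$ contains two internally vertex-disjoint $x$-$y$ paths, and $X_P-z$ contains a $u$-$v$ path. After splitting $u\alpha_{i_j}$ and $z\gamma$, this path survives: every edge of $X_P$ at $u$ carries sign $\alpha_{i_j}$, and $v$ is not split. Meanwhile any $u$-$v$ path through the other arc becomes a $u'$-$v$ path. So $u$ reaches $u'$, and $\{u\alpha_{i_j},z\gamma\}$ is not separable.

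\textbf{Case~3} is handled by the same argument: if $z\notin V(H)$, then $X_P$ is untouched by splitting $z$. Your appeal to \Cref{lem:snarls-inside-sign-cut-tree} does not fit there. The cutvertex through which $z$ hangs need not be sign-consistent, and a cutvertex between $z$ and $v$ does not by itself prevent separability.

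\textbf{Case~1 is also incomplete.} A non-good interior skeleton vertex $z$ without dangling blocks need not have both signs in one incident expansion: both incident expansions may be monochromatic at $z$ with the \emph{same} sign $\delta$. This is easy to close:
\begin{itemize}
\item If $\gamma=\delta$, all edges of $H$ at $z$ stay at $z$, so $X_P$ is intact and the argument above applies.
\item If $\gamma=\hat\delta$, all edges of $H$ at $z$ move to $z'$, so $z$ is cut off from $u$.
\end{itemize}
The paper's proof has the same omission. It also dismisses non-skeleton $w$ in a single sentence; the cutvertex argument above is the rigorous way to do it.

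Finally, in your both-signs sub-case, the $z$--$z'$ connection inside the expansion may run through its far pole when that pole is $u$ or $v$. Choose the pair whose split leaves that pole untouched, rather than claiming the path avoids both.
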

\begin{proof}
    For conciseness, let $u = v_{i_j}$, $\alpha=\alpha_{i_j}$, $v = v_{i_{(j + 1)\,\mathrm{mod}\,q}}$, and $\beta=\beta_{i_{(j + 1)\,\mathrm{mod}\,q}}$, for an arbitrary $j \in \{1, 2, \dots, q\}$.
    For separability, first note that after obtaining graph $G'$ by splitting $u\alpha$ and $v\hat\beta$ in $G$, there remains a path from $u$ to $v$ through the edges of
    \[ E\left(\expansion(e_{i_j})\right) \cup E\left(\expansion(e_{(i_j+1)\,\mathrm{mod}\, k})\right) \cup \dots \cup E\left(\expansion\left(e_{\left(-1 + i_{(j+1)\,\mathrm{mod}\,q}\right)\,\mathrm{mod}\,k}\right)\right)\,. \]
    
    It remains to show that $u$ does not reach $u'$ and $v$ does not reach $v'$ in $G'$. All vertex-sides of $u'$ are in $E\left(\expansion(e_{(i_j - 1)\,\mathrm{mod}\,k})\right)$ and all the vertex-sides of $v'$ are in $E\left(\expansion(e_{i_{(j+1)\,\mathrm{mod}\,q}})\right)$. Further, there are no $u'\alpha$ or $v'\hat\beta$ vertex-sides because of splitting.
    Without loss of generality, assume for contradiction that there is a $u$-$u'$ path $p$ in $U(G')$. There then has to exist a last vertex $a$ on the path $p$ and its (not necessarily immediate) successor $b$ with the property that $a$ is $\{u', v'\}$ or in
    \[ V\left(\expansion(e_{i_{(j+1)\,\mathrm{mod}\,q}})\right) \cup V\left(\expansion(e_{(1 + i_{(j+1)\,\mathrm{mod}\,q})\,\mathrm{mod}\,k})\right) \cup \dots \cup V\left(\expansion\left(e_{\left(i_j - 1\right)\,\mathrm{mod}\,k}\right)\right) \]
    with $a \not\in \{u, v\}$, and $b$ is in
    \[ V\left(\expansion(e_{i_j})\right) \cup  V\left(\expansion(e_{(i_j+1)\,\mathrm{mod}\,k})\right) \cup \dots \cup V\left(\expansion\left(e_{\left(-1 + i_{(j+1)\,\mathrm{mod}\,q}\right)\,\mathrm{mod}\,k}\right)\right)\,. \]
    By the definition of splitting, it cannot be the case that $a = v'$ and $b = v$, since there are no edges between $v$ and $v'$ and we have no dangling blocks.
    Similarly, it cannot be that $a = u'$ and $b = u$. On the other hand, we must have either $a = v'$ and $b = v$ or $a = u'$ and $b = u$, because we are operating within an S-node. By the contradiction, $u$ does not reach $u'$ and $v$ does not reach $v'$ in $G'$.
    
    For minimality, suppose for a contradiction that there is are vertex-sides $w\gamma$ and $w\hat\gamma$ with $w \neq u,v$ in the component of $\{u\alpha, v\hat\beta\}$ such that $\{u\alpha, w\gamma\}$ and $\{w\hat\gamma, v\hat\beta\}$ are separable.
    Clearly, $w$ does not have dangling blocks with respect to $H$ by \Cref{prop:mixed-dangling-blocks}. It also must be that $w$ is a vertex of the S-node, since a necessary condition for separability is that there cannot be a path that starts with $w+$ vertex-side and ends with $w-$ vertex-side and does not pass through $u$ or $v$. Let thus $w = v_l$. Since we assumed that $w$ is not a good vertex and there are no dangling blocks, either $\expansion(e_l)$ or $\expansion(e_{(l - 1)\,\mathrm{mod}\,k})$ has both $w+$ and $w-$ vertex-sides. Without loss of generality, assume this to be $e_{l}$. Then, the non-separability of $\{u\alpha, w\hat\gamma\}$ follows by there being a path from $w$ and $w'$ to $v$ if we split $u\alpha$ and $w\hat\gamma$.
\end{proof}

\begin{algorithm}[ht]
\small
\caption{$\mathsf{FindSnarlsInSnodes}(F,H,T)$\label{alg:find-snarls-in-S}}
\KwIn{Sign-cut graph $F$ of a bidirected graph, maximal 2-connected subgraph $H$ of $G$, SPQR tree $T$ of $H$}
\For{each S-node $\mu$ of $T$}{
    $v_0,\dots,v_{k-1} \gets$ ordered sequence of the vertices of $\skel(\mu)$\;
    $e_0,\dots,e_{k-1} \gets$ ordered sequence of the edges of $\skel(\mu)$ such that $v_i$ is an endpoint of $e_{( (i+1) \bmod k)}$ and $e_i$\;
    $W \gets [\;]$\; 
    \For{$i \in [0,k-1]$}{
        \If{$v_i$ is a tip in $F$ or $\mathsf{HasDangling}(F,H,v_i)$}{
            \textbf{continue}\;
        }
        $L, R \gets \expansion(e_i), \expansion(e_{(i+1 \bmod k)})$\;
        \If{$(N^+_H(v_i) \subseteq V(L)$ or $N^+_H(v_i) \subseteq V(R))$ and $(N^-_H(v_i) \subseteq V(L)$ or $N^-_H(v_i) \subseteq V(R))$}{
            \tcp{$v_i$ is good}
            $\alpha \gets +$ if $(N^+_H(v_i) \subseteq V(R))$ else $-$\;
            $W.\mathsf{append}(v_i\hat{\alpha})$\;
            $W.\mathsf{append}(v_i\alpha)$\;
        }
    }
    Report the pairs formed by the vertex-sides in $W$ in consecutive positions starting from the second element, and lastly pair the last vertex-side with the first vertex-side of $W$\; \label{line:S-node-snarls}
}
\end{algorithm}

For P-nodes we can give a characterization of separability, similarly to \Cref{prop:P-node} and superbubbloids.

\begin{proposition}[Snarls and P-nodes, , see \Cref{fig:snarls-spr-nodes}(b)]
\label{prop:P-node-snarls}
    Let $G$ be a bidirected graph, $F$ be a sign-cut graph of $G$, and $H$ be a 2-connected subgraph of $F$ with SPQR tree $T$. Let $\mu$ be a P-node of $T$ whose skeleton has edges $e_1,\dots,e_k$ with endpoints $\{u,v\}$ $(k\geq3)$.
    Let $\alpha,\beta\in\signs$.
    Let $E^{\alpha}_u = \{ e_i : V(\expansion(e_i))\cap N^{\alpha}_H(u) \neq \emptyset \}$ and $E^{\beta}_v = \{e_i : V(\expansion(e_i))\cap N^{\beta}_H(v) \neq \emptyset\}$.
    Then $\{u\alpha,v\beta\}$ is separable in $F$ if and only if $E^{\alpha}_u \neq \emptyset$, $E^{\alpha}_u \cap E^{\hat{\alpha}}_u = \emptyset$, $E^{\beta}_v \cap E^{\hat{\beta}}_v = \emptyset$, $E^{\alpha}_u = E^{\beta}_v$, and $u$ and $v$ have no dangling blocks with respect to $H$.
\end{proposition}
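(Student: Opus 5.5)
I will prove both directions by tracking which edges of the graph $F'$ obtained by splitting $u\alpha$ and $v\beta$ in $F$ remain attached to $u$, to $v$, and to the new vertices $u',v'$. The key structural fact is that in a P-node every $\expansion(e_i)$ meets the rest of $H$ only in $\{u,v\}$. Moreover, by \Cref{lem:reaches-in-expansion} and 2-connectivity, each virtual $\expansion(e_i)$ with its poles removed is connected and contains neighbours of both $u$ and $v$ (each such component is attached to both poles). Everything in $F$ outside $H$ hangs off $H$ through cutvertices, i.e.\ through blocks attached at single vertices of $H$.

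For $(\Rightarrow)$, assume $\{u\alpha,v\beta\}$ is separable with component $X$. \Cref{prop:mixed-dangling-blocks} immediately gives that $u$ and $v$ have no dangling blocks with respect to $H$. For $E^{\alpha}_u\cap E^{\hat\alpha}_u=\emptyset$: if some $e_i$ carries both an $\alpha$-neighbour $x$ and an $\hat\alpha$-neighbour $y$ of $u$, then $\expansion(e_i)-\{u,v\}$ contains an $x$-$y$ path, because the interior of a split component is connected. (If $e_i$ is real, then $x=y=v$ and that edge already witnesses the connection; I will have to check this degenerate case against the no-parallel-edges assumption.) Together with the two edges at $u$, this path joins $u$ to $u'$ in $F'$ while avoiding $v$, which is a contradiction. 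The same argument gives $E^{\beta}_v\cap E^{\hat\beta}_v=\emptyset$.

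For $E^\alpha_u\neq\emptyset$: $u$ and $v$ must be connected in $F'$ through $u\alpha$ edges. Since there are no dangling blocks, such a connection has to start inside $H$, so $u$ has an $\alpha$-neighbour in $H$. For $E^\alpha_u=E^\beta_v$: take $e_i\in E^\alpha_u$. Its interior lies in $X$, and it also contains a neighbour of $v$. That neighbour must be attached via $v\beta$, since otherwise $X$ would reach $v'$. Hence $e_i\in E^\beta_v$, and the converse is symmetric. A real edge $\{u\alpha,v\gamma\}$ needs separate handling: here $\gamma=\beta$ is forced by separability.

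For $(\Leftarrow)$, let $K=\bigcup_{e\in E^\alpha_u}\expansion(e)$. This graph is connected and contains $u$ and $v$, and after the split its edges at $u$ and $v$ are exactly the $u\alpha$ and $v\beta$ edges, by the disjointness conditions and $E^\alpha_u=E^\beta_v$. So $u$ and $v$ are connected in $F'$. It remains to show that no path leaves $K$ except through $u'$ or $v'$. Inside $H$, any other expansion attaches only at $u$ and $v$, and by the hypotheses its edges at those vertices use $u\hat\alpha$ and $v\hat\beta$, which are now on $u'$ and $v'$. Outside $H$, blocks hanging at an interior vertex of $K$ stay in the component, which is harmless. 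Blocks hanging at $u$ or $v$ are non-dangling and hence single-signed, so each attaches to exactly one of $u$ or $u'$ (respectively $v$ or $v'$) and cannot create a $u$-$u'$ connection. Blocks hanging at $H$-vertices outside $K$ are unreachable, since reaching them would require passing through $u'$ or $v'$.

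I expect the main obstacle to be this last bookkeeping. One must show rigorously that any $u$-$u'$ walk in $F'$ passes through an articulation that the hypotheses forbid. To do this I will use the block-cut tree of $F$: since no two blocks share two vertices, a walk leaving $H$ at a cutvertex $c$ must return at $c$. This reduces the claim to connectivity inside $H$ with single-signed pendant blocks at $u$ and $v$, and the P-node decomposition then settles it.
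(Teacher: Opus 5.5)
Your proof is correct and follows the paper's argument. There are two differences. First, you derive $E^\alpha_u\neq\emptyset$ directly from the $u$--$v$ connection in the split graph, not through the tip/non-tip reduction (\Cref{prop:tip-nontip-nonseparable}) and the sign-cut property, so your argument never uses that hypothesis. Second, you actually carry out the $(\Leftarrow)$ bookkeeping that the paper dismisses as following at once.

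Your parenthetical on a real $e_i$ is resolved by reading $E^\alpha_u$ edge-wise, meaning some edge of $\expansion(e_i)$ has vertex-side $u\alpha$. This is how the paper's proof and your other steps already use it. Under that reading a real skeleton edge has a single sign at $u$, so the case never arises and your remark that ``that edge already witnesses the connection'' is not needed. Under the literal vertex-wise definition the statement itself fails, and the no-parallel-edges assumption does not help: $v\in V(\expansion(e_j))$ for every $j$, so a real edge $\{u\alpha,v\gamma\}$ puts every $e_j$ into $E^\alpha_u$. The disjointness condition then fails even for genuine snarls.
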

\begin{proof}
    $(\Rightarrow)$ Suppose that $\{u\alpha,v\beta\}$ is separable in $F$ with component $X$. We show that each condition described in the statement holds.

    If $u$ and $v$ are tips in $F$ with signs $\alpha$ and $\beta$, respectively, then each condition clearly holds (notice that we do not impose $E^{\hat{\alpha}}_u\neq\emptyset$ in the conditions of the statement). By \Cref{prop:tip-nontip-nonseparable} a tip and a non-tip do not form a separable pair, so we can assume that $u$ and $v$ are both non-tips in $F$.
    By \Cref{prop:mixed-dangling-blocks} it follows that $u$ and $v$ have no dangling blocks since $\{u\alpha,v\beta\}$ is separable. We show that the remaining conditions hold.

    If $E_u^{\alpha} = \emptyset$ then $H$ has no vertex-sides of $u$ with sign $\alpha$. Since $F$ is a sign-cut graph, there is a block of $F$ containing opposite vertex-sides of $u$, for otherwise $u$ is a non-tip and a sign-consistent vertex in $F$, contradicting the fact that $F$ is a sign-cut graph of $G$. In other words, $u$ has a dangling block with respect to $H$, contradicting that $u$ has no dangling blocks. Therefore $E_u^{\alpha} \neq \emptyset$.

    If $e\in E_u^{\alpha} \cap E_u^{\hat{\alpha}}$ then $\expansion(e)$ has edges $\{u\alpha,x\gamma\}$ and $\{u\hat{\alpha},y\delta\}$. Notice that $x,y \neq v$ by construction of the P-nodes: a real edge with endpoints $u$ and $v$ would constitute a split component of $\{u,v\}$ and thus would be represented alone by an edge in $\skel(\mu)$. Now, $\expansion(e)$ has an $x$-$y$ path avoiding $u$ and $v$ since otherwise $x$ and $y$ are in different split components of $\mu$, contradicting the fact that $x,y \in V(\expansion(e))$. So the graph resulting from splitting $u\alpha$ and $v\beta$ has a $u$-$u'$ path, a contradiction. Thus $E_u^{\alpha} \cap E_u^{\hat{\alpha}} = \emptyset$, and symmetrically we can deduce $E_v^{\beta} \cap E_v^{\hat{\beta}} = \emptyset$.

    If $E_u^{\alpha} \neq E_v^{\beta}$ then, without loss of generality, there is an edge $e \in E_u^{\alpha} \setminus E_v^{\beta}$. Since $E_u^{\alpha} \cap E_u^{\hat{\alpha}} = \emptyset$ and $E_v^{\beta} \cap E_v^{\hat{\beta}} = \emptyset$, it follows that every vertex-side of $\expansion(e)$ in $v$ has sign $\hat{\beta}$. Since $\expansion(e)$ is connected, it has a $u\alpha$-$v\hat\beta$ path $p$ (which avoid the vertex-sides $u\hat\alpha$ and $v\beta$, since it is a path). By the separability of $\{u\alpha,v\beta\}$, its component contains $u$ and $v$ and does not contain $u'$ and $v'$, but $p$ connects $u$ and $v'$, and thus $v$ and $v'$ are connected, a contradiction. Therefore $E_u^\alpha=E_v^\beta$.

    $(\Leftarrow)$
    If $E^{\alpha}_u \neq \emptyset$, $E^{\alpha}_u \cap E^{\hat{\alpha}}_u = \emptyset$, $E^{\beta}_v \cap E^{\hat{\beta}}_v = \emptyset$, $E^{\alpha}_u = E^{\beta}_v$, and $u$ and $v$ have no dangling blocks with respect to $H$, then the separability of $\{u\alpha,v\beta\}$ follows at once.
\end{proof}

\begin{figure}[H]
    \centering
    \includegraphics[width=\linewidth]{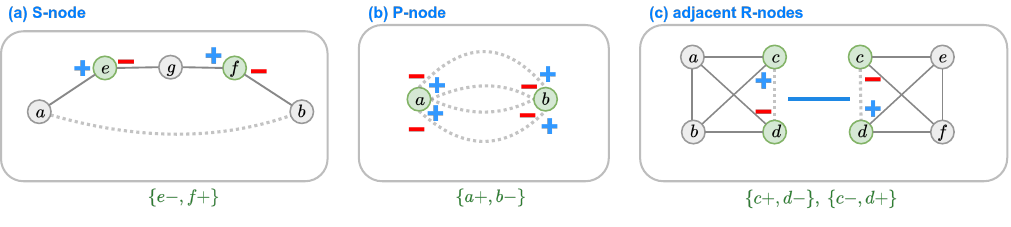}
    \caption{Finding nontip-nontip snarls in the three types of SPQR tree nodes. Real edges are solid gray and virtual edges are dashed. (a)~S-node: vertices $e$ and $f$ (green) have opposite signs on their two adjacent edges and are the only ``good'' vertices (see~\Cref{prop:S-node-snarls}), yielding the snarl $\{e{-},f{+}\}$. (b)~P-node: the signs at $a$ and $b$ partition across the parallel edges, yielding $\{a{+},b{-}\}$. (c)~Adjacent R-nodes: the signs at $c$ and $d$ partition across the boundary, yielding $\{c{+},d{-}\}$ and $\{c{-},d{+}\}$. In all cases, the reversed sign counterpart is also a snarl.}
    \label{fig:snarls-spr-nodes}
\end{figure}

\begin{algorithm}[ht]
\small
\caption{$\mathsf{FindSnarlsInPnodes}(F,H,T)$\label{alg:find-snarls-in-P}}
\KwIn{Sign-cut graph $F$ of a bidirected graph, maximal 2-connected subgraph $H$ of $G$, SPQR tree $T$ of $H$}
\For{each P-node $\mu$ of $T$}{
    $u,v \gets$ the vertices of $\skel(\mu)$\;
    $e_1,\dots,e_k \gets$ the edges of $\skel(\mu)$\;
    $X_1,\dots,X_k \gets \expansion(e_1),\dots,\expansion(e_k) \; (k\geq 3)$\;
    \If{$\mathsf{HasDangling}(F,H,u)$ or $\mathsf{HasDangling}(F,H,v)$ or $u$ is a tip in $F$ or $v$ is a tip in $F$}{
        \textbf{continue}\;
    }
    Build the sets $E^+_u, E^-_u, E^+_v, E^-_v$ as described in \Cref{prop:P-node-snarls}\;
    \tcp{Since $u$ and $v$ are non-tips in $F$ and have no dangling blocks with respect to $H$, all of the above are non-empty}
    \For{$\alpha,\beta \in\signs$}{
        \If{$E^{\alpha}_u \neq \emptyset$, $E^{\alpha}_u \cap E^{\hat\alpha}_u = \emptyset$, $E^{\beta}_v \cap E^{\hat\beta}_v = \emptyset$, $E^{\alpha}_u = E^{\beta}_v$}{
            \tcp{Equivalently, $E^{\hat\alpha}_u \neq \emptyset$, $E^{\alpha}_u \cap E^{\hat\alpha}_u = \emptyset$, $E^{\beta}_v \cap E^{\hat\beta}_v = \emptyset$, $E^{\hat\alpha}_u = E^{\hat\beta}_v$}
            \If{$|E^{\alpha}_u| = 1$ and the pertaining node of the edge in $E^{\alpha}_u$ is an S-node}{
                \tcp{$u,v$ are adjacent in the skeleton of this S-node and are good, so if $\{u\alpha,v\beta\}$ is a snarl then it is reported when S-nodes are examined}
                \textbf{continue}\;
            }
            \Else{
                Report $\{u\alpha,v\beta\}$\; \label{line:P-node-snarls-1-2}
            }
            \If{$|E^{\hat\alpha}_u| = 1$ and the pertaining node of the edge in $E^{\hat\alpha}_u$ is an S-node}{
                \textbf{continue}\;
            }
            \Else{
                Report $\{u\hat\alpha,v\hat\beta\}$\; \label{line:P-node-snarls-2-2}
            }
        }
    }
}
\end{algorithm}

\begin{proposition}[Snarls and R-nodes, see \Cref{fig:snarls-spr-nodes}(c)]
\label{prop:R-node-snarls}
    Let $G$ be a bidirected graph, $H$ be a 2-connected subgraph of $G$, $T$ be the SPQR tree of $H$, and $\nu,\mu$ be adjacent nodes in $T$.
    Let $e_\mu=\{u,v\}\in E(\skel(\nu))$ be the virtual edge pertaining to $\mu$ and $e_\nu=\{u,v\}\in E(\skel(\mu))$ be the virtual edge pertaining to $\nu$.
    Let $\alpha,\beta\in\signs$.
    If $\nu$ is an R-node and all vertex-sides at $u$ and $v$ in $\expansion(e_\nu)$ have signs $\alpha$ and $\beta$, respectively, and all vertex-sides at $u$ and $v$ in $\expansion(e_\mu)$ have signs $\hat{\alpha}$ and $\hat{\beta}$, respectively, and $u$ and $v$ have no dangling blocks with respect to $H$, then $\{u\alpha,v\beta\}$ is a snarl.
\end{proposition}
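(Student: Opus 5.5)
The plan has two parts: show separability of $\{u\alpha,v\beta\}$ with component contained in $\expansion(e_\nu)$ (plus possibly blocks attached at $u,v$), then get minimality from \Cref{prop:snarls-disjoint-paths} using the 3-connectivity of $\skel(\nu)$, as in \Cref{prop:R-node}.

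\textbf{Separability.} Let $G'$ be the graph obtained by splitting $u\alpha$ and $v\beta$. First I check connectivity. $\expansion(e_\nu)$ is connected, and all its vertex-sides at $u$ and $v$ have signs $\alpha,\beta$, so every edge of it keeps $u$ and $v$ (not $u',v'$) as endpoints. Hence $u$ and $v$ stay connected in $G'$. Next, every edge of $\expansion(e_\mu)$ incident to $u$ or $v$ has sign $\hat\alpha$ or $\hat\beta$ there, so in $G'$ it is reattached to $u'$ or $v'$.

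It remains to show there is no $u$-$u'$ or $u$-$v'$ path in $G'$. Let $p$ be such a path and take the first edge of $p$ leaving the vertex set reachable through $\expansion(e_\nu)-\{u,v\}$ together with $u,v$. Since $\{u,v\}$ separates $V(\expansion(e_\nu))\setminus\{u,v\}$ from $V(\expansion(e_\mu))\setminus\{u,v\}$ in $H$, that edge must be one of the following:
\begin{enumerate}[nosep]
\item an edge of $\expansion(e_\mu)$ at $u$ or $v$, which is impossible because those edges now sit at $u'$ or $v'$;
\item an edge of another block of $G$ through $u$ or $v$.
\end{enumerate}
For the second case, any block $H'\neq H$ at $u$ is not dangling, so all its vertex-sides at $u$ share one sign. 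If that sign is $\alpha$, the block hangs off $u$ and can only return to $u$, since $u$ is a cutvertex separating it. If the sign is $\hat\alpha$, it hangs off $u'$. Either way, no path through such a block connects $u$ to $u'$ or $v'$. The argument is the same at $v$, and also covers blocks further away in the block-cut tree.

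The main obstacle is this bookkeeping: making precise that the component of $u$ in $G'$ is $\expansion(e_\nu)$ together with the subtrees of the block-cut tree hanging at $u$ on the $\alpha$ side and at $v$ on the $\beta$ side. I would phrase it through the separation $(V(\expansion(e_\nu)),V(\expansion(e_\mu)))$ of $H$ and the block-cut tree of $G$, as in \Cref{prop:mixed-dangling-blocks}.

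\textbf{Minimality.} The snarl component $X$ contains $\expansion(e_\nu)$. Since $\nu$ is an R-node, $\skel(\nu)$ is 3-connected, so $\skel(\nu)-e_\mu$ has two internally vertex-disjoint $u$-$v$ paths. Replacing each virtual edge on these paths by a path in its expansion (these are connected and pairwise share only their poles) gives two internally vertex-disjoint $u$-$v$ paths in $\expansion(e_\nu)\subseteq X$. \Cref{prop:snarls-disjoint-paths} then yields that $\{u\alpha,v\beta\}$ is a snarl.
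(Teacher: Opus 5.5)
Your proposal is correct and follows essentially the same route as the paper. Separability comes from the separation of $H$ encoded by the tree-edge $\{\nu,\mu\}$, the absence of dangling blocks at $u$ and $v$, and the connectivity of $\expansion(e_\nu)$; minimality then comes from two internally vertex-disjoint $u$-$v$ paths obtained from the 3-connectivity of $\skel(\nu)$, fed into \Cref{prop:snarls-disjoint-paths}.

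Your first-leaving-edge argument is more explicit than the paper's. Just take the set being left to be everything reachable from $V(\expansion(e_\nu))\setminus\{u,v\}$ in $G-\{u,v\}$, together with $u$ and $v$. Then blocks hanging off interior vertices of $\expansion(e_\nu)$ are absorbed, and the leaving edge is genuinely forced to be at $u$ or $v$.
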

\begin{proof}
    Let $G'$ denote the graph after splitting $u\alpha$ and $v\beta$.    
    Since $u$ and $v$ have no dangling blocks with respect to $H$, for each block $H'\neq H$ that intersects $u$, the vertex-side of $H'$ at $u$ all have the same sign, and the same for $v$. So the blocks in $G'$ containing $u\alpha$ vertex-sides remain attached to $u$ and those with $u\hat{\alpha}$ vertex-sides are reattached to $u'$, and the same for $v$. Thus $u$ and $u'$ are not connected in $G'$ via any of these blocks, and the same for $v$ and $v'$.
    Now, the fact that $u$ and $v$ are separated from $u'$ and $v'$ in $G'$ follows from the fact that the tree-edge $\{\nu,\mu\}$ encodes a separation of $H$ where one side contains all vertex-sides of $G$ at $u$ and $v$ of signs $\alpha$ and $\beta$, respectively, and the other side contains those with signs $\hat\alpha$ and $v\hat\beta$. Finally, the fact that $u$ and $v$ remain connected follows from the fact that $\expansion(e_\nu)$ is connected. Therefore $\{u\alpha,v\beta\}$ is separable.
    
    For minimality notice that $\expansion(e_\nu)$ has two internally vertex-disjoint $u\alpha$-$v\beta$ paths since $\skel(\nu)$ is 3-connected. So the component of $\{u\alpha,v\beta\}$ in $G'$ containing $u$ and $v$ also contains these two paths and thus we can apply \Cref{prop:snarls-disjoint-paths} to conclude that $\{u\alpha,v\beta\}$ is a snarl.
\end{proof}

\begin{algorithm}[ht]
\small
\caption{$\mathsf{FindSnarlsBetweenRRnodes}(F,H,T)$\label{alg:find-snarls-in-RR}}
\KwIn{Sign-cut graph $F$ of a bidirected graph, maximal 2-connected subgraph $H$ of $F$, SPQR tree $T$ of $H$}
\For{$\{\nu,\mu\} \in E(T)$}{
    \If{$\nu$ and $\mu$ are R-nodes}{
        $e_{\mu} \gets$ the virtual edge in $\skel(\nu)$ pertaining to $\mu$\;
        $e_{\nu} \gets$ the virtual edge in $\skel(\mu)$ pertaining to $\nu$\;
        $u,v \gets$ the endpoints of $e_\nu,e_\mu$\;
        $X_\mu, X_\nu \gets \expansion(e_\mu), \expansion(e_\nu)$\;
        \If{$\mathsf{HasDangling}(F,H,u)$ or $\mathsf{HasDangling}(F,H,v)$ or $u$ is a tip in $F$ or $v$ is a tip in $F$}{
            \textbf{continue}\;
        }
        \For{$\alpha,\beta \in\signs$}{
            \If{$N^{\alpha}_H(u) \subseteq V(X_\mu)$ and $N^{\hat{\alpha}}_H(u) \subseteq V(X_\nu)$ and $N^{\beta}_H(v) \subseteq V(X_\mu)$ and $N^{\hat{\beta}}_H(v) \subseteq V(X_\nu)$}{
                Report $\{u\alpha, v\beta\}, \{u\hat{\alpha}, v\hat{\beta}\}$\; \label{line:R-node-snarls}
            }
        }
    }
}
\end{algorithm}

\paragraph*{Correctness and running time}

The next results are merely technical tools used in the completeness part of \Cref{thm:snarls-correct} to exclude mixed-sign configurations incident to R-nodes, and for an edge-case arising in the proof. 

\begin{proposition}
\label{prop:R-node-mixed}
    Let $G$ be a bidirected graph, $H$ be a 2-connected subgraph of $G$, and $T$ be the SPQR tree of $H$. Let $\nu$ be a node of $T$ such that $\skel(\nu)$ has a virtual edge $e_\mu=\{u,v\}$ pertaining to an R-node $\mu$. If $\expansion(e_\mu)$ has opposite vertex-sides at $u$, then $\{u\alpha,v\beta\}$ is not separable for any $\alpha,\beta \in \signs$.
\end{proposition}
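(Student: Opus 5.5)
The plan is to mimic the argument used for P-nodes in \Cref{prop:P-node-snarls}: exhibit, after splitting $u\alpha$ and $v\beta$, a path from $u$ to $u'$ that avoids $v$ entirely. Let $X=\expansion(e_\mu)$. By hypothesis $X$ contains edges $\{u+,x\gamma\}$ and $\{u-,y\delta\}$, and whatever $\alpha$ is, exactly one of these edges stays at $u$ after splitting $u\alpha$ while the other is moved to $u'$. So the whole task reduces to finding an $x$-$y$ path in $X$ that avoids both $u$ and $v$. Splitting $v\beta$ then cannot disturb this path, and $u$ stays connected to $u'$. That contradicts separability, whatever the component is taken to be.

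Two degenerate cases come first. If $x=y$ the path is trivial, unless $x=y=v$. So assume $x\ne v$ or $y\ne v$, and handle the case where one of $x,y$ equals $v$ separately below. In general, $X$ is $\expansion$ of a virtual edge whose pertinent node $\mu$ is an R-node. Hence $X$ is obtained from $\skel(\mu)$ minus the reference edge $e_\nu=\{u,v\}$ by replacing every virtual edge with its (connected) expansion.

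The main step is to show that $X-\{u,v\}$ is connected and contains every vertex of $X$ other than $u,v$. Since $\skel(\mu)$ is 3-connected, $\skel(\mu)-\{u,v\}$ is connected. Each virtual edge $f=\{a,b\}$ of $\skel(\mu)-e_\nu$ expands to $\expansion(f)$. By \Cref{lem:reaches-in-expansion}, every vertex of $\expansion(f)$ other than $a,b$ reaches $a$ while avoiding $b$, and symmetrically reaches $b$ while avoiding $a$. At least one of $a,b$ is not in $\{u,v\}$, since $e_\nu$ is the only edge of the skeleton joining $u$ and $v$ (the R-node skeleton is simple). So every inner vertex of an expansion attaches, avoiding $u$ and $v$, to a skeleton vertex outside $\{u,v\}$. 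Real edges attach trivially. This gives connectivity of $X-\{u,v\}$, and hence the required $x$-$y$ path whenever $x,y\notin\{u,v\}$. Neither $x$ nor $y$ can be $u$, since the graph has no self-loop issues here; loops at $u$ would themselves join $u+$ to $u-$ and directly connect $u$ to $u'$.

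The remaining obstacle is when, say, $y=v$, i.e.\ the edge $\{u-,v\delta\}$ lies in $X$. I expect this to need care. First, $u$ and $v$ are adjacent in $H$ only through a real edge. Because the paper's SPQR convention places a real edge $\{u,v\}$ either in the P-node or as a separate split component, such an edge cannot lie in $\expansion(e_\mu)$ for an R-node $\mu$ whose reference edge is $\{u,v\}$. Indeed, $\{u,v\}$ with a real edge and the R-part would form at least three split components, forcing a P-node between them. Invoking this structural fact excludes $y=v$ (and $x=v$). This completes the argument that $\{u\alpha,v\beta\}$ is not separable.
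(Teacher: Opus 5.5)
Your proof is correct and follows essentially the paper's route: 3-connectivity of $\skel(\mu)$ together with \Cref{lem:reaches-in-expansion} gives an $x$-$y$ path in $\expansion(e_\mu)$ avoiding both $u$ and $v$, and this path survives the two splits and joins $u$ to $u'$. You prove the slightly stronger fact that $\expansion(e_\mu)-\{u,v\}$ is connected, and you treat the degenerate cases ($x=y$, or $x$ or $y$ equal to $v$) explicitly; the paper handles these implicitly through the simplicity of the R-node skeleton, and that same simplicity is the cleanest way to rule out a real edge $\{u,v\}$ inside $\expansion(e_\mu)$.
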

\begin{proof}
    By assumption, $\expansion(e_\mu)$ has edges $\{u\alpha,x\gamma\}$ and $\{u\hat{\alpha},y\delta\}$. Notice that vertex $x$ is contained in the expansion of a virtual edge $e_x \in\skel(\mu)$ where $u$ is an endpoint, and analogously for $y$ with edge $e_y$ where $u$ is also an endpoint.
    Since $u$ and $v$ are not both the endpoints of these virtual edges as $x,y\in\expansion(e_\mu)$, we have that the other endpoint of $e_x$, say $x'$, is distinct from $v$; similarly, the other endpoint of $e_y$, say $y'$, is distinct from $v$.
    Now notice that $\expansion(e_x)$ has an $x-x'$ path avoiding $u$ by \Cref{lem:reaches-in-expansion}, and analogously $\expansion(e_y)$ has a $y$-$y'$ path avoiding $u$.
    Since these paths are contained in $\expansion(e_x)$ and $\expansion(e_y)$, respectively, and $v\notin V(\expansion(e_x)),V(\expansion(e_y))$, they also avoid $v$. Moreover, $\skel(\mu)$ has an $x'$-$y'$ path avoiding $u$ and $v$ because the skeleton of R-nodes is 3-connected. Therefore, $\expansion(e_\mu)$ has an $x$-$y$ path avoiding $u$ and $v$ and thus the graph resulting from splitting $u\alpha$ and $v\beta$ has a $u$-$u'$ path, and therefore $\{u\alpha,v\beta\}$ is not separable.
\end{proof}

\begin{proposition}\label{prop:snarl-edge-case}
    Let $G$ be a bidirected graph, $F$ be a sign-cut graph of $G$, $H$ be a block of $F$ containing the vertices $u$ and $v$, and let $\{u\alpha,v\beta\}$ be a snarl of $F$.
    If $u$ and $v$ are non-tips in $H$, $\{u,v\}$ is an edge of $H$, and $\{u,v\}$ is not a separation pair of $H$, then either $e=\{u\alpha,v\beta\} \in E(H)$ and all vertex-sides at $u$ and $v$ except for those in $e$ have signs $\hat{\alpha}$ and $\hat{\beta}$, respectively, or $e=\{u\hat{\alpha},v\hat{\beta}\} \in E(H)$ and all vertex-sides at $u$ and $v$ except for those in $e$ have signs $\alpha$ and $\beta$, respectively.
\end{proposition}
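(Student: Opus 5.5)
The plan is to work in the graph $F'$ obtained from $F$ by splitting $u\alpha$ and $v\beta$, and to let $X$ be the snarl component. The argument reads off the sign pattern at $u$ and $v$ from three facts: $u$ and $v$ are connected in $F'$, while neither $u$ nor $v$ is connected to $u'$ or $v'$. Separability alone should suffice; minimality will not be used. As a first step, \Cref{prop:mixed-dangling-blocks} tells us that $u$ and $v$ have no dangling blocks with respect to $H$. Hence every block $H'\neq H$ at $u$ carries a single sign at $u$, and likewise at $v$. Such blocks therefore stay attached wholly to $u$ or wholly to $u'$ (resp.\ $v$ or $v'$). Because no two blocks share two vertices, such a block cannot create a connection between $\{u,u'\}$ and $\{v,v'\}$ that avoids $H$. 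So every relevant connection between the four vertices $u,u',v,v'$ in $F'$ must use edges of $H$. I expect this reduction to $H$ to be the main obstacle, since it is the only place where the global structure of $F$ enters; I would argue it through the block-cut tree, in the same way as in the proof of \Cref{prop:mixed-dangling-blocks}.

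Next, suppose that $H$ has a vertex other than $u$ and $v$. Since $\{u,v\}$ is not a separation pair, $H-\{u,v\}$ is connected. Since $H$ is 2-connected, $u$ has a neighbour in $H$ other than $v$, and so does $v$. If $u$ had two $H$-edges to vertices $x,y\neq v$ carrying opposite signs at $u$, then an $x$-$y$ path in $H-\{u,v\}$ would connect $u$ and $u'$ in $F'$, which is impossible. So all non-$v$ edges of $H$ at $u$ carry a single sign $\gamma$ at $u$. By the same argument, all non-$u$ edges at $v$ carry a single sign $\delta$ at $v$. Moreover, $H-\{u,v\}$ connects the side $u\gamma$ with the side $v\delta$. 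Consequently $\gamma=\alpha$ forces $\delta=\beta$, and $\gamma=\hat\alpha$ forces $\delta=\hat\beta$, since otherwise $u$ would reach $v'$ or $u'$ would reach $v$.

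It remains to classify the edges joining $u$ and $v$ directly; there are at most four, as there are no parallel edges. In the case $\gamma=\alpha$ (so $\delta=\beta$), an edge $\{u\hat\alpha,v\beta\}$ would connect $u'$ to $v$, and an edge $\{u\alpha,v\hat\beta\}$ would connect $u$ to $v'$, so neither edge exists. The remaining $u$-$v$ edges can only be $\{u\alpha,v\beta\}$ and $\{u\hat\alpha,v\hat\beta\}$. Since $u$ is a non-tip in $H$, the edge $e=\{u\hat\alpha,v\hat\beta\}$ must be present, and every other vertex-side at $u$ and $v$ has sign $\alpha$ and $\beta$, respectively. This gives the second alternative of the statement. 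The case $\gamma=\hat\alpha$ is symmetric. There, non-tipness gives only that $u$ has some edge with side $u\alpha$, and that edge must go to $v$. To pin down that it is exactly $e=\{u\alpha,v\beta\}$, I would use the connectivity of $u$ and $v$ in $F'$, which by the first step must go through $H$; this yields the first alternative. Finally, in the degenerate case $V(H)=\{u,v\}$, $H$ is a multi-bridge of at most four signed edges. The same three connectivity constraints, together with non-tipness at both ends, leave exactly the two allowed configurations, and I would verify them by direct inspection.
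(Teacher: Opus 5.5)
Your argument is correct and uses the same two separability facts as the paper's proof: the mixed edges $\{u\alpha,v\hat\beta\}$ and $\{u\hat\alpha,v\beta\}$ cannot exist, and a path in $H-\{u,v\}$ joining edges of opposite sign at $u$ (or at $v$) would connect $u$ to $u'$. You split cases by the common sign $\gamma$ of the non-$v$ edges at $u$, whereas the paper splits by which same-sign $u$-$v$ edge is present; your split also covers the configuration where both $\{u\alpha,v\beta\}$ and $\{u\hat\alpha,v\hat\beta\}$ lie in $H$ and every $u\hat\alpha$-edge goes to $v$. There the paper's step, which joins a $u\hat\alpha$-neighbour $x$ to another $u\alpha$-neighbour $y$ by a path avoiding $u$ and $v$, fails because $x=v$ is forced, and the first alternative can genuinely fail while your second one holds.

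The dangling-block reduction and the appeal to $u$-$v$ connectivity when $\gamma=\hat\alpha$ are unnecessary. The exclusion of the mixed edges, which your argument gives independently of $\gamma$, together with non-tipness of $u$ in $H$, already forces $\{u\alpha,v\beta\}\in E(H)$.
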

\begin{proof}
    First notice that $\{u\alpha, v\hat{\beta}\}, \{u\hat{\alpha}, v\beta\} \not\in E(H)$, since otherwise splitting $u\alpha$ and $v\beta$ results in a graph with an edge $\{u,v'\}$ or $\{u',v\}$, contradicting the separability of $\{u\alpha,v\beta\}$.
    
    If $e=\{u\alpha,v\beta\} \in E(H)$ then we can pick an edge $\{u\hat{\alpha}, x\gamma\} \in E(H)$ since $u$ is a non-tip in $H$. Suppose for a contradiction that $H$ has an edge $\{u\alpha, y\delta\} \neq e$. Since $\{u,v\}$ is not a separation pair of $H$, $H$ has an $x$-$y$ path avoiding $u$ and $v$.
    This path remains after splitting $u\alpha$ and $v\beta$, thus violating separability between $u$ and $u'$, a contradiction. Therefore all vertex-sides at $u$ except for the one in $e$ have signs $\hat{\alpha}$, and we can argue symmetrically to conclude the same for $v$ and $\hat{\beta}$.
    
    Otherwise we have $e=\{u\hat\alpha,v\hat\beta\} \in E(H)$ and we proceed identically as above to conclude that all vertex-sides at $u$ and $v$ except for those in $e$ have signs $\alpha$ and $\beta$, respectively.
\end{proof}

We can now present the correctness proof of \Cref{alg:snarls}.

\begin{theorem}
\label{thm:snarls-correct}
    Let $G$ be a bidirected graph. The algorithm identifying snarls (\Cref{alg:snarls}) is correct, that is, it identifies all snarls of $G$ and only its snarls.
\end{theorem}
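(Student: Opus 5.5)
We do not see \Cref{alg:snarls} itself, but from the preceding development we take it to do the following. It computes the sign-cut graphs $F$ of $G$. For each $F$ it reports every pair of tips as a set $T_i$ with the obvious signs. For each block $H$ of $F$ it builds the SPQR tree $T$ and runs \Cref{alg:find-snarls-in-S}, \Cref{alg:find-snarls-in-P} and \Cref{alg:find-snarls-in-RR}. It also reports single-edge snarls $\{u,v\}$ that are not separation pairs, using the condition of \Cref{prop:snarl-edge-case}. The proof splits into soundness and completeness. Throughout we use \Cref{lem:snarls-G-F} to move freely between snarls of $G$ and snarls of the unique relevant sign-cut graph $F$.

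\textbf{Soundness.} Tip pairs are snarls by \Cref{thm:where-are-snarls-after-cutting}(1). Pairs reported in S-nodes are snarls by \Cref{prop:S-node-snarls}, since the algorithm's test on $N^{\pm}_H(v_i)$ together with the dangling/tip check is exactly the definition of a good vertex; separability in $F$ then transfers to $G$. Pairs reported in P-nodes are separable by \Cref{prop:P-node-snarls}, and minimality must be added. If $|E^\alpha_u|\ge 2$, the component contains two internally vertex-disjoint $u$-$v$ paths, one through each of two expansions, and \Cref{prop:snarls-disjoint-paths} applies. If $|E^\alpha_u|=1$, the pertaining node is not an S-node (it was skipped) and not a P-node (no two adjacent), so it is an R-node, and its 3-connected skeleton again yields two disjoint paths. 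Pairs reported between two R-nodes are snarls by \Cref{prop:R-node-snarls}. For edge snarls that are not separation pairs, separability follows directly from the sign condition. Minimality holds because any candidate $w$ would separate $u$ from $v$ inside the 2-connected $H$, which is impossible; we make this precise by repeating the argument of \Cref{prop:snarls-disjoint-paths}.

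\textbf{Completeness.} Let $\{u\alpha,v\beta\}$ be a snarl of $G$, lying in a sign-cut graph $F$ by \Cref{lem:snarls-inside-sign-cut-tree}. By \Cref{prop:tip-nontip-nonseparable}, $u$ and $v$ are either both tips, and hence reported in the tip list, or both non-tips. In the latter case \Cref{thm:where-are-snarls-after-cutting}(2) gives a unique block $H$ in which $\{u,v\}$ is a split pair, and \Cref{prop:mixed-dangling-blocks} rules out dangling blocks. If $\{u,v\}$ is an edge but not a separation pair, \Cref{prop:snarl-edge-case} gives exactly the reported edge condition. Otherwise, by \Cref{lem:spqr-tree-contains-split-pairs}, $\{u,v\}$ is either a nonadjacent pair in an S-node or the endpoints of a virtual edge.

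\textbf{Nonadjacent S-node pairs.} We argue that $u$ and $v$ must both be good and consecutive among the good vertices. Since $u$ is a non-tip without dangling blocks, its opposite signs occur in the two neighbouring expansions around the cycle. If a good vertex $w$ lay strictly between $u$ and $v$ on the side containing the snarl component, its two vertex-sides would witness non-minimality. If a non-good vertex lay there, it would connect $u$ to $u'$.

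\textbf{Virtual-edge pairs.} Consider the tree edge $\{\nu,\mu\}$. If either endpoint node is a P-node, \Cref{prop:P-node-snarls} characterizes separability and the P-node routine reports the pair. The only exception is the skipped case of a single S-node expansion, where $u,v$ are adjacent and good in that S-node and \Cref{prop:S-node-snarls} covers it. If both nodes are R-nodes, \Cref{prop:R-node-mixed} forces each expansion to carry a single sign at $u$ and at $v$, which is exactly the test in \Cref{alg:find-snarls-in-RR}. If one is an S-node and the other an R-node, then $u,v$ are adjacent in the S-node. By \Cref{prop:R-node-mixed} the R-side expansion is single-signed, so $u$ and $v$ are good there and must be consecutive, again by the minimality argument, and the S-node routine reports the pair.

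\textbf{Main obstacle.} The delicate step is the S-node completeness argument: showing that every non-tip snarl arising from an S-node, whether from adjacent or nonadjacent vertices, corresponds to consecutive good vertices. This needs a careful minimality argument on the cycle, in the style of the proof of \Cref{prop:S-node-snarls}, and care that the signs chosen at each good vertex match the pairing performed in \Cref{alg:find-snarls-in-S}.
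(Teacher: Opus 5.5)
Your case structure (tip pairs, localization to a unique block, then S-, P-, R-node cases with \Cref{prop:snarls-disjoint-paths} for P-node minimality) matches the paper's. The gap is in the snarls whose extremities form an edge $e$ of $H$ with $\{u,v\}$ not a separation pair, and it affects both soundness and completeness.

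\textbf{The edge case.} \Cref{prop:snarl-edge-case} is only a necessary condition. Its second alternative, $e=\{u\hat\alpha,v\hat\beta\}\in E(H)$, puts the snarl component on the far side of $e$, so the component does not contain $e$. Your claim that a minimality witness $w$ ``would separate $u$ from $v$ inside the 2-connected $H$'' therefore fails: the relevant graph is essentially $H-e$, which need not be 2-connected. Concretely, let $H=G$ be the 4-cycle with edges $\{u+,v-\},\{v+,w-\},\{w+,x-\},\{x+,u-\}$.
\begin{itemize}
\item $\{u,v\}$ is an edge and not a separation pair.
\item $\{u-,v+\}$ satisfies the second alternative and is separable.
\item Yet $\{u-,w+\}$ and $\{w-,v+\}$ are both separable, so $\{u-,v+\}$ is not a snarl.
\end{itemize}
This is why \Cref{alg:find-edge-snarls} reports the far-side pair $\{u\hat\alpha,v\hat\beta\}$ only when no S-node skeleton contains both $u$ and $v$. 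Soundness then uses that the real edge $\{u,v\}$ lies in a P- or R-node skeleton. Hence $H$ has three internally vertex-disjoint $u$-$v$ paths, two of which avoid $e$, and \Cref{prop:snarls-disjoint-paths} applies.

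Correspondingly, completeness needs a case you do not have: an S-node $\sigma$ contains both $u$ and $v$, necessarily as adjacent vertices.
\begin{itemize}
\item If $e$ is a real edge of $\skel(\sigma)$, the S-node routine reports the pair. Here $u,v$ are good, and by minimality no good vertex lies on the other arc.
\item If $e$ is virtual, then since $\{u,v\}$ is not a separation pair it pertains to a P-node with exactly two real edges and one virtual edge. \Cref{obs:edge-case} shows the P-node routine reports the pair there.
\end{itemize}

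\textbf{Goodness in the S-node cases.} In your nonadjacent S-node case, ``non-tip without dangling blocks'' does not yield goodness. Goodness requires each of the two skeleton edges at $u$ to have an expansion that is single-signed at $u$. You need the paper's separability argument:
\begin{itemize}
\item Suppose the expansion of an edge $\{u,w\}$ with $w\neq v$ contains both $u\alpha$ and $u\hat\alpha$.
\item Then two applications of \Cref{lem:reaches-in-expansion} connect the corresponding neighbours of $u$ by a path avoiding $u$ and $v$.
\item This path joins $u$ to $u'$ after splitting, contradicting separability.
\end{itemize}
The same step is missing in your S--R case for the S-side edges at $u$ and at $v$.

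\textbf{Minor points.} Your claim that a non-good vertex strictly between $u$ and $v$ would connect $u$ to $u'$ is false: tips, or vertices with a mixed expansion, may lie inside a snarl component. It is also unnecessary, since consecutiveness only concerns good vertices. In P-node soundness with $|E^\alpha_u|=1$, you omit the trivial case where the unique edge of $E^\alpha_u$ is a real edge.
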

\begin{proof}

    \textbf{(Completeness.)}
    We argue that every snarl of $G$ is reported by the algorithm.

    Let $\{u\alpha,v\beta\}$ be a snarl of $G$. By \Cref{lem:snarls-G-F} there is a sign-cut graph $F$ of $G$ where $\{u\alpha,v\beta\}$ is also a snarl.
    
    By \Cref{prop:tip-nontip-nonseparable} it follows that either $u$ and $v$ are both tips or both non-tips in $F$.
    If $u$ and $v$ are both tips then the snarl in question is encoded in the list described in Line~\ref{line:tip-tip-snarls} of \Cref{alg:snarls} (in $\mathcal{T}_i$ every two vertex-sides form a snarl).
    Otherwise $u$ and $v$ are both non-tips in $F$. By (2) of \Cref{thm:where-are-snarls-after-cutting} it follows that there is a unique block of $F$, say $H$, where $u$ and $v$ are both non-tips and where $\{u,v\}$ is a split pair. 
    Let $T$ denote the SPQR tree of $H$.

    If $\{u,v\}$ is a separation pair of $H$ then \Cref{lem:spqr-tree-contains-split-pairs} implies that $T$ has an edge $\{\nu,\mu\}$ corresponding to the separation pair $\{u,v\}$ or $T$ has an S-node where $u$ and $v$ are nonadjacent in the skeleton. Therefore it is enough to analyze all the S- and P-nodes individually, and the tree-edges between R-nodes. Importantly, we remark that the current assumptions do not exclude the possibility that $\{u,v\}$ is an edge of $H$. We discuss each of the possible cases.

    \begin{enumerate}[nosep]
        \item Suppose that $\mu$ is an S-node of $T$ such that $\{u,v\}\subseteq V(\skel(\mu))$. If $u$ and $v$ are good in $\mu$ and consecutive in the (circular) list of good vertices then $\{u\alpha,v\beta\}$ is reported in Line~\ref{line:S-node-snarls}. If $u$ and $v$ are good in $\mu$ and not consecutive in the (circular) list of good vertices, then there is a good vertex $w$ in between $u$ and $v$ such that $w\gamma$ and $w\hat\gamma$ are vertex-sides violating minimality of $\{u\alpha,v\beta\}$ (see the proof of \Cref{prop:S-node-snarls}), a contradiction to the fact that $\{u\alpha,v\beta\}$ is a snarl. If $u$ or $v$ is not good then we require a careful argument for which we do case analysis. Suppose without loss of generality that $u$ is not good.
        \begin{enumerate}[nosep]
            \item Suppose that there is an edge $e=\{u,w\}\in E(\skel(\mu))$ with $w \neq v$ such that $\expansion(e)$ has opposite vertex-sides at $u$. Then $\expansion(e)$ has edges $\{u\alpha,a\gamma\}$ and $\{u\hat{\alpha},b\delta\}$, and notice that $a,b\neq v$ because $w \neq v$. Let $w$ be the first vertex in $\skel(\mu)$ on an $a$-$v$ path in $\expansion(e)$ that avoids $u$ (such a path exists by \Cref{lem:reaches-in-expansion}). Due to the structure of S-nodes, doing the same reasoning for $b$ also yields vertex $w$. Then $H$ has an $a$-$b$ path avoiding $v$, and also avoiding $u$ by construction. So the graph resulting from splitting $u\alpha$ and $v\beta$ has an $a$-$b$ path and thus it has a $u$-$u'$ path, a contradiction.
            
            \item Otherwise the only edge witnessing the fact that $u$ is not good is the edge $e=\{u,v\} \in \skel(\mu)$. If $v$ is good then it is not hard to see that $\{u\alpha,v\beta\}$ is not separable, again, by two applications of \Cref{lem:reaches-in-expansion} to the out- and in-neighbor of $u$ in $\expansion(e)$ (where the vertex to be avoided is $u$). If $v$ is not good and there is an edge distinct from $e$ in $\skel(\mu)$ witnessing this fact, then we can argue for $v$ as we did in item (1a) for $u$ and contradict the separability of $\{u\alpha,v\beta\}$.
            The last case is thus when both $u$ and $v$ are not good and $e$ is the only edge such that $\expansion(e)$ has opposite vertex-sides at $u$ and $v$, and the other two edges of $\skel(\mu)$ incident to $u$ and $v$ are such that their expansions have only vertex-sides of the same sign at $u$ and $v$.
            If the pertaining node of $e$ is an R-node then \Cref{prop:R-node-mixed} gives a contradiction to the separability of $\{u\alpha,v\beta\}$. Since no two S-nodes are adjacent in $T$ the pertaining node of $e$ is a P-node and the snarl is reported once P-nodes are analyzed as shown next in item (2).
        \end{enumerate}

        \item Suppose that $\mu$ is a P-node of $T$ such that $V(\skel(\mu))=\{u,v\}$. Since $\{u\alpha,v\beta\}$ is separable, \Cref{prop:P-node-snarls} implies that the conditions described in the statement hold.
        Suppose that $|E^{\alpha}_u| = 1$ and let $e$ be the unique edge in $E^{\alpha}_u$.
        If $e$ pertains to an S-node then notice that $u$ and $v$ are good and that no other vertex is good, since otherwise a contradiction to the minimality of $\{u\alpha,v\beta\}$ follows (to see in detail why, see the proof of \Cref{prop:S-node-snarls}). Thus $\{u\alpha,v\beta\}$ is reported when S-nodes are analyzed.
        If $e$ does not pertain to an S-node then $e$ is a real edge or it pertains to an R-node. So the snarl is reported in Line~\ref{line:P-node-snarls-1-2} (symmetrically, Line~\ref{line:P-node-snarls-2-2}) and the same if $|E^{\alpha}_u| > 1$.
        
        Notice the following observation concerning vertices that do not form a separation pair but form an edge of $H$.

        \begin{observation}\label{obs:edge-case}
            If $\skel(\mu)$ has two real edges $\{u\alpha,v\beta\}$ and $\{u\hat{\alpha},v\hat{\beta}\}$ and one virtual edge pertaining to an S-node whose expansion contains only vertex-sides at $u$ and $v$ with signs $\alpha$ and $\beta$, respectively, then the snarl $\{u\alpha,v\beta\}$ is reported in Line~\ref{line:P-node-snarls-1-2} of \Cref{alg:find-snarls-in-P}.
        \end{observation}
        
        \item Suppose that $\mu$ is an R-node of $T$ with a virtual edge $\{u,v\}$. If the pertaining node $\nu$ of this virtual edge is an S- or a P-node then $\{u\alpha,v\beta\}$ was reported before. So $\nu$ is an R-node. We claim that the vertex-sides in $\expansion(e_\nu)$ all have the same sign in $u$, and those in $\expansion(e_\mu)$ all have the opposite sign in $u$, and the same for $v$. Suppose for a contradiction (and without loss of generality) that $\expansion(e_\mu)$ has vertex-sides of opposite signs at $u$. Then \Cref{prop:R-node-mixed} gives a contradiction to the fact that $\{u\alpha,v\beta\}$ is separable.
        Now notice that the conditions just established on the vertex-sides of $u$ and $v$ are precisely those described in \Cref{alg:find-snarls-in-RR} and thus the snarl is reported in Line~\ref{line:R-node-snarls} when the assignments to the sign variables $\alpha$ and $\beta$ match those of the snarl.
    \end{enumerate}

    Otherwise $\{u,v\}$ is an edge of $H$ and is not a separation pair of $H$, so we are in conditions of applying \Cref{prop:snarl-edge-case} from where two cases follow.
    We have $e=\{u\alpha,v\beta\} \in E(H)$ and all vertex-sides at $u$ and $v$ except for those in $e$ have signs $\hat{\alpha},\hat{\beta}$, respectively, or $e=\{u\hat{\alpha},v\hat{\beta}\} \in E(H)$ and all vertex-sides at $u$ and $v$ except for those in $e$ have signs $\alpha,\beta$, respectively.
    In the former case the snarl $\{u\alpha,v\beta\}$ is reported in Line~\ref{line:edge1-snarls}.
    In the latter case the snarl $\{u\alpha,v\beta\}$ is reported in Line~\ref{line:edge2-snarls} (where the signs are written with the respective opposites) if also no S-node of $T$ contains both $u$ and $v$.
    So we are left to argue the case when $T$ has an S-node $\sigma$ whose skeleton contains both $u$ and $v$. Our goal now is to contradict the fact $\{u\alpha,v\beta\}$ is a snarl or to show that it is reported in another phase of the algorithm.
    
    Notice that $u$ and $v$ are adjacent in $\skel(\sigma)$ because $\{u,v\}$ is an edge of $H$, so let $e=\{u,v\} \in E(\skel(\sigma))$. Let $e_u,e_v \neq e$ denote the edges incident to $u$ and $v$ in $\skel(\sigma)$, respectively. 
    We do case analysis on the type of $e$.
    \begin{enumerate}[nosep]
        \item If $e$ is a real edge then the snarl is reported when S-nodes are analyzed: $u$ and $v$ are classified as good vertices due to the assumption on the vertex-sides and moreover they are consecutive in the circular list of good vertices since they are adjacent in $\skel(\sigma)$.
        \item Otherwise $e$ is a virtual edge. Since $\{u,v\}$ is not a separation pair the pertaining node of $e$ necessarily is a P-node, say $\pi$, such that $\skel(\pi)$ consists of exactly two real edges and one virtual edge (which pertains to $\sigma$). Indeed, if $\skel(\pi)$ has three real edges then it is not hard to see that $\{u\alpha,v\beta\}$ is not separable, and if $\skel(\pi)$ has at least two virtual edges then $\{u,v\}$ is a separation pair, both leading to a contradiction.
        By the assumption on the vertex-sides at $u$ and $v$, we have that $\{u\alpha,v\beta\}$ and $\{u\hat\alpha,v\hat\beta\}$ are the real edges of $\skel(\sigma)$, and that the vertex-sides contained in $\expansion(e_u)$ all have sign $\alpha$ and those contained in $\expansion(e_v)$ have sign $\beta$.
        But then we are exactly in the conditions described in \Cref{obs:edge-case}, and thus the snarl is reported when P-nodes are analyzed.
    \end{enumerate}
    
    All cases were examined and so every snarl of $G$ is reported by the algorithm.

    \textbf{(Soundness.)}
    We argue that the algorithm only reports snarls.
    
    By \Cref{lem:snarls-G-F}, if $\{u\alpha,v\beta\}$ is a snarl in a sign-cut graph of $G$ then it is also a snarl in $G$. Thus, let $F$ denote the sign-cut graph where the pair $\{u\alpha,v\beta\}$ is reported. We show that $\{u\alpha,v\beta\}$ is separable and minimal in $F$.

    If $u\alpha$ and $v\beta$ are vertex-sides of the set built in Line~\ref{line:tip-tip-snarls} of \Cref{alg:snarls} then $u$ and $v$ are both tips in $F$. By (1) of \Cref{thm:where-are-snarls-after-cutting} it follows that $\{u\alpha,v\beta\}$ is a snarl and thus the set $\mathcal{T}_i$ only encodes snarls.

    If $\{u\alpha,v\beta\}$ is reported by virtue of Line~\ref{line:S-node-snarls} of \Cref{alg:find-snarls-in-S} then $u\alpha$ and $v\beta$ are consecutive elements of $W$ and $u\neq v$. Moreover, the conditions expressed in \Cref{alg:find-snarls-in-S} identify all and only good vertices. Thus \Cref{prop:S-node-snarls} implies that $\{u\alpha,v\beta\}$ is a snarl.

    If $\{u\alpha,v\beta\}$ and $\{u\hat{\alpha},v\hat{\beta}\}$ are reported in Line~\ref{line:R-node-snarls} of \Cref{alg:find-snarls-in-RR} then \Cref{prop:R-node-snarls} implies that $\{u\alpha,v\beta\}$ is a snarl: the conditions of the statement match those in the algorithm. Applying \Cref{prop:R-node-snarls} symmetrically to the other R-node implies that $\{u\hat{\alpha},v\hat{\beta}\}$ is a snarl.
    
    If $\{u\alpha,v\beta\}$ is reported in \Cref{alg:find-snarls-in-P} then \Cref{prop:P-node-snarls} implies that $\{u\alpha,v\beta\}$ is separable.
    Without loss of generality we argue on the minimality for Line~\ref{line:P-node-snarls-1-2}.
    If $|E^{\alpha}_u| > 1$ then the component containing $u$ and $v$ after splitting $u\alpha$ and $v\beta$ in $F$ has two internally vertex-disjoint $u$-$v$ paths (this is easily seen from the structure of P-nodes), so we can apply \Cref{prop:snarls-disjoint-paths} and conclude that $\{u\alpha,v\beta\}$ is a snarl.
    Otherwise we have $|E^{\alpha}_u| = 1$. If the unique edge in $E^{\alpha}_u$ is real then $\{u\alpha,v\beta\}$ is clearly a snarl, and otherwise it is virtual and it pertains to an R-node since no two P-nodes are adjacent in $T$ and if it is an S-node the algorithm does not report anything. Notice now that we are in conditions of applying \Cref{prop:R-node-snarls}, i.e., the conditions on the sets described in \Cref{prop:P-node-snarls} can be reinterpreted and plugged into \Cref{prop:R-node-snarls}, from where we can conclude that $\{u\alpha,v\beta\}$ is a snarl.

    If $\{u\alpha,v\beta\}$ is reported in Line~\ref{line:edge1-snarls} then the pair is clearly a snarl whose component has vertex set $\{u,v\}$.
    If $\{u\hat{\alpha},v\hat{\beta}\}$ is reported in Line~\ref{line:edge2-snarls} then $u$ and $v$ are non-tips and no skeleton of an S-node of $T$ contains the vertices $u$ and $v$. Clearly $\{u\hat{\alpha},v\hat{\beta}\}$ is separable, so we are left to argue minimality.
    Since $\{u,v\}$ is an edge of $U(H)$, $T$ has a node whose skeleton contains the real edge $\{u,v\}$. This node is thus a P- or an R-node, and so $H$ has three internally vertex-disjoint $u$-$v$ paths (the existence of these paths is easily seen from the description of the P- and R-nodes, nonetheless we point to Lemma 2 of~\citep{di1996line}). One of these paths consists of the edge $\{u\alpha,v\beta\}$, and thus $H$ has two internally vertex-disjoint $u\hat\alpha$-$v\hat\beta$ paths because no edge in $F$ distinct from $e$ has a vertex-side $u\alpha$ or $v\beta$.    
    So splitting $u\hat\alpha$ and $v\hat\beta$ in $F$ results in a graph with a component containing $u$ and $v$ which has two internally vertex-disjoint $u$-$v$ paths and hence we can apply \Cref{prop:snarls-disjoint-paths} to conclude that $\{u\hat{\alpha},v\hat{\beta}\}$ is a snarl.

    Every line where the algorithm reports a pair of vertex-sides is analyzed and therefore the algorithm only reports snarls.
\end{proof}

\maintheoremsnarls*
\begin{proof}
    First we argue that \Cref{alg:snarls} runs in linear-time.
    
    Block-cut trees can be built in linear time \citep{Hopcroft73blockcut} and the total size of the blocks is linear in $|V(G)|+|E(G)|$. 
    We also find the sign-cut graphs in linear time, since we only need to identify the sign-consistent cutvertices of the block-cut tree.
    We can suppose that we are analyzing a block $H$ that is 2-connected, since the other cases are trivial to check. Let $|H|=|V(H)|+|E(H)|$. 
    We show that the rest of the algorithm runs in time $O(|H|)$, thus proving the desired bound.

    After building the SPQR tree $T$ of $H$, which can be built in $O(|H|)$ time~\citep{gutwenger2001linear}, the algorithm examines each of the possible node types. By examination of \Cref{alg:find-snarls-in-S,alg:find-snarls-in-P,alg:find-snarls-in-RR} we conclude that the work done is at most linear in the size of the skeleton of the node, except for the neighborhood queries, which we must support in constant-time. But these neighborhood queries can easily be supported in constant-time, exactly how it was described in \Cref{thm:superbubbles}.

    Now we argue that the representation of the snarls has the desired size.
    The bound on the tip-tip snarls follows from the fact that the sum of tips over all sign-cut graphs of $G$ is at most $O(|V(G)|)$.
    For each 2-connected block $H$ examined by the algorithm, the total number of vertices over all skeletons of the SPQR tree of $H$ is $O(|V(H)|)$ (see \cite[Lemma 5]{di1996line}) and that SPQR tree has $O(|V(H)|)$ edges (recall \Cref{lem:spqr-total-size}), and it is not hard to see that $H$ contributes with $O(|V(H)|)$ many (explicitly) listed snarls to $\mathcal{S}$ (the edge-snarls case takes constant time with linear-time preprocessing on the S-nodes and the neighborhoods of the vertices therein). Finally, within each sign-cut graph $F_i$ we have $\sum_{H \text{ block of }F_i} |V(H)| = O(|V(F_i)|)$, and each vertex of $G$ appears in at most two sign-cut graphs. Hence $\sum_{j=1}^{\ell} |S_j| = O(|V(G)|)$.
\end{proof}

\begin{algorithm}[ht]
\small
\caption{$\mathsf{FindEdgeSnarls}(F,H,T)$\label{alg:find-edge-snarls}}
\KwIn{Sign-cut graph $F$ of a bidirected graph, maximal 2-connected subgraph $H$ of $F$, $T$ the SPQR tree of $H$}
\For{every edge $e = \{u\alpha, v\beta\}$ of $H$ such that $u$ and $v$ are non-tips in $F$}{
    \If{no edge in $F$ distinct from $e$ has a vertex-side $u\alpha$ or $v\beta$}{
        Report $\{u\alpha, v\beta\}$\; \label{line:edge1-snarls}
        \If{no S-node $\mu$ of $T$ is such that $\{u,v\}\subseteq V(\skel(\mu))$}{
            Report $\{u\hat{\alpha}, v\hat{\beta}\}$\; \label{line:edge2-snarls}
        }
    }
}
\end{algorithm}

\begin{algorithm}[ht]
\small
\caption{Snarls representation algorithm}
\label{alg:snarls}
\KwIn{Bidirected graph $G$}
\KwOut{A linear-size encoding of snarls of $G$ as two collections: $\mathcal{T} = \{T_1, \dots, T_k\}$ of vertex-side sets and $\mathcal{S} = \{S_1, \dots, S_\ell\}$ of unordered pairs of vertex-sides, where every unordered pair of distinct vertex-sides in $T_i$ is a snarl and each $S_j$ is a snarl.}
$F_1,\dots,F_k \gets \mathsf{BuildSignCutGraphs}(G)$\;
$\mathcal{T} \gets \{\}$\;
$\mathcal{S} \gets \{\}$\;
\For{$\iink$}{
    \If{$F_i$ is an isolated vertex}{
        \textbf{continue}\;
    }
    $T_i \gets$ the set of vertex-sides $v\alpha$ where $v$ is a tip in $F_i$ with sign $\alpha$\;\label{line:tip-tip-snarls}
    $\mathcal{T} \gets \mathcal{T} \cup \{T_i\}$\;
    \For{every block $H$ of $F_i$}{
        \If{$H$ is a multi-bridge}{
            \tcp{The snarls in multi-bridges are reported in \Cref{alg:find-edge-snarls}}
            \textbf{continue}\;
        }
        \Else{
            $T_H \gets \mathsf{BuildSPQRTree}(H)$\;
            $\mathcal{S} \gets \mathcal{S} \cup \mathsf{FindSnarlsInSnodes}(F_i,H,T_H)$\;
            $\mathcal{S} \gets \mathcal{S} \cup \mathsf{FindSnarlsInPnodes}(F_i,H,T_H)$\;
            $\mathcal{S} \gets \mathcal{S} \cup \mathsf{FindSnarlsBetweenRRnodes}(F_i,H,T_H)$\;   
            $\mathcal{S} \gets \mathcal{S} \cup \mathsf{FindEdgeSnarls}(F_i,H,T_H)$\;
        }
    }
}
\Return{$(\mathcal{T}, \mathcal{S})$}
\end{algorithm}

\section{Ultrabubbles}
\label{sec:ultrabubbles}


\subsection{Setup}

Ultrabubbles are snarls with two additional superbubble-like conditions.

\begin{definition}[Ultrabubble and Ultrabubble component~\citep{paten2018ultrabubbles}]
\label{def:ultrabubble}
    Let $G$ be a bidirected graph. Let $\{u\alpha, v\beta\}$ be a pair of vertex-sides with distinct $u,v\in V(G)$ and $\alpha,\beta \in\signs$. Then $\{u\alpha, v\beta\}$ is an \emph{ultrabubble} if:
    \begin{enumerate}[nosep]
        \item[(a)] \emph{separable:}
        the graph created by splitting $u\alpha$ and $v\beta$ contains a separate component $X \subseteq G$ containing $u$ and $v$ but not $u'$ and $v'$.
        We call $X$ the \emph{ultrabubble component} of $\{u\alpha, v\beta\}$.
        
        \item[(b)] \emph{tipless:} no vertex in $V(X)\setminus \{u,v\}$ is a tip.
        
        \item[(c)] \emph{acyclic:} $X$ is acyclic.

        \item[(d)] \emph{minimal:}
        no vertex-side $w\gamma$ with vertex $w \in X \setminus \{u,v\}$ is such that $\{u\alpha, w\gamma\}$ and $\{w\hat{\gamma}, v\beta\}$ are separable.
    \end{enumerate}
\end{definition}

The \emph{interior} of a separable pair of vertex-sides $\{u\alpha,v\beta\}$ with component $K$ is the vertex set $V(K) \setminus \{u,v\}$.
A \emph{trivial ultrabubble} is an ultrabubble whose interior is empty.

It should be clear that the analogues of \Cref{lem:bubbles-cutvertices} and \Cref{thm:bubbles-split-pairs} hold for ultrabubbles, as their proofs are a simple adaptation from directed to bidirected graphs. So ultrabubbles are confined to blocks, and any ultrabubble is either trivial, the whole graph, or induces a separation pair. Moreover, also an analogue of \Cref{lem:superbubbloid-superbubble-cutvertex} holds for ultrabubbles (see Lemma 4 of~\citep{Harviainen2026.03.28.714704}). The algorithm we propose is essentially the same as the one to find superbubbles presented in \Cref{sec:superbubbles}, having only a few major differences.

\subsection{The ultrabubble finding algorithm}

The algorithm starts by computing the blocks of the input bidirected graph $G$. For each block $H$ of $G$ it builds its SPQR tree $T$ and therein it runs two graph traversals akin to phases one and two of the superbubbles algorithm. The algorithm maintains the following information. Let node $\nu$ be the parent of node $\mu$ in $T$, $e_\mu$ and $e_\nu$ the usual edges, and $X:=\expansion(e_\mu)$.

\begin{itemize}
    \item $\noextremity{\nu}{\mu} := \true$ iff no vertex in $V(X)\setminus\{s,t\}$ is an extremity of $G$.
    
    \item $\acyclic{\nu}{\mu} := 
    \begin{cases}
        \Null, & \text{if $\noextremity{\nu}{\mu}$ is false,}\\
        \true, & \text{otherwise, if $X$ is acyclic,}\\
        \false, & \text{otherwise.}
    \end{cases}$
\end{itemize}

As for the reachability states, since we are working with bidirected graphs now we store four kinds of reachability. So for all $\alpha,\beta\in\signs$ we define the following states.

\begin{itemize}
    \item $\reachesuvab{\nu}{\mu} := 
    \begin{cases}
        \Null, & \text{if $\acyclic{\nu}{\mu}$ is $\false$ or $\Null$,}\\ 
        \true, & \text{otherwise, if $X$ has an $s\alpha$-$t\beta$ path,}\\
        \false, & \text{otherwise.}
    \end{cases}$
\end{itemize}

The $\dirskel$ construct defined for directed graphs naturally extends to bidirected graphs as follows.
For all $\alpha,\beta\in\signs$ let $B_{\alpha\beta} = \{ \{s_i\alpha,t_i\beta\} : \text{$\expansion(e_i)$ has an $s_i\alpha$-$t_i\beta$ path, for $i=1,\dots,k$} \}$. The \emph{bidirected skeleton} of $\mu$ is the graph $\dirskel(\mu) := (V(\skel(\mu)),B_{++} \cup B_{+-} \cup B_{-+} \cup B_{--})$.

We are now ready to proceed with the description of the phases. Some correctness details are omitted since they are just a straightforward adaptation of the proofs described in \Cref{subsec:superbubble-finding-algorithm}.

\paragraph*{Phase 1.}

Recall that $\nu$ is the parent of $\mu$ in $T$. Let $\{u,v\}$ denote the endpoints of $e_\nu$ and $e_\mu$.
Phase 1 is also a DFS starting at the root of $T$ and updates the states pointing downwards from the root.
The update of $\noextremity{\nu}{\mu}$ is identical to that for superbubbles.
The update of $\acyclic{\nu}{\mu}$ is also identical to that for superbubbles up to the point where graph $K:=\dirskel(\mu)-\{u+,v+\}-\{u+,v-\}-\{u-,v+\}-\{u-,v-\}$ can be built.
So at this point we have that $\noextremity{\nu}{\mu}$ is $\true$ and the states pointing from $\mu$ to its children all have the acyclicity state set to $\true$.
Clearly, as for directed graphs, $\acyclic{\nu}{\mu}$ is $\true$ if $K$ is acyclic and is $\false$ otherwise. Our aim is to use DFS in order to decide acyclicity, but before that we make some observations regarding bidirected graphs and the structure of $K$.

First notice that for any two vertices $x,y \in V(K)$ there is at most one edge with endpoints $\{x,y\}$ for otherwise $K$ has a cycle. This is justified in the next result (essentially, the obvious fact that a directed graph contains a cycle whenever two of its vertices reach each other generalizes to bidirected graphs).

\begin{proposition}\label{prop:reachability-bidirected-cycloid}
    Let $G$ be a bidirected graph, $u,v \in V(G)$ be vertices, and $\alpha,\beta \in \signs$.
    If $G$ has a $u\alpha$-$v\beta$ and a $u\hat\alpha$-$v$ bidirected path then $G$ has a cycloid.
\end{proposition}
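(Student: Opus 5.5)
Write $P$ for the $u\alpha$-$v\beta$ path and $Q$ for the $u\hat\alpha$-$v$ path, say ending with sign $\gamma$ at $v$. The plan is to mimic the directed argument that two mutually reaching vertices lie on a cycle. The reversal of $P$ is a $v\beta$-$u\alpha$ path, that is, it leaves $v$ through sign $\beta$ and enters $u$ through sign $\alpha$. Concatenating $Q$ (which leaves $u$ through $\hat\alpha$) with the reversal of $P$ gives a closed bidirected walk $W$ starting and ending at $u$. The signs alternate at every internal vertex of $P$ and of $Q$, and at the start/end vertex $u$ the walk uses opposite signs $\hat\alpha$ and $\alpha$. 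The only place where alternation may fail is the junction vertex $v$, where $W$ enters with sign $\gamma$ and leaves with sign $\beta$. So $W$ is a closed walk in which at most one vertex, namely $v$, is non-alternating. This matches the definition of a cycloid, except that $W$ need not be a path.

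The remaining work, which I expect to be the main obstacle, is extracting a genuine cycloid (a closed path with at most one exceptional vertex) from $W$. Since $P$ and $Q$ are themselves paths, any repetitions come from vertices lying on both. I would choose the first vertex $x$ along $Q$ (after $u$) that also lies on $P$; such a vertex exists because $v$ is one.

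If $x$ is on $P$ with some sign $\delta$ on the edge of $Q$ entering it, I combine the $u\hat\alpha$-$x\delta$ prefix of $Q$ with the $u\alpha$-$x$ prefix of $P$. These two subpaths are internally disjoint by the choice of $x$, so together they form a closed path through $u$. At $u$ the signs are $\hat\alpha$ and $\alpha$, so $u$ alternates and the only possible exceptional vertex is $x$. Hence this closed path is a cycloid. The degenerate case $x=u$ cannot occur, because $Q$ is a path and does not revisit $u$.

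To make this rigorous I would check that a closed path needs at least one edge on each side and that the case where the two subpaths are single parallel edges is allowed, since parallel edges with different signs are permitted in bidirected graphs. I would also reread the cycloid definition to confirm that $u$ alternating and $x$ possibly not alternating is exactly what is allowed. This shortcut argument avoids any general walk-to-cycloid reduction, which is where I initially expected the difficulty to lie.
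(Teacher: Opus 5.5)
Your proof is correct and follows essentially the paper's argument: take the first vertex $x \neq u$ where the two paths meet, and glue the two prefixes ending at $x$ into a closed path. This path alternates at $u$ (signs $\alpha$ and $\hat\alpha$) and at every internal vertex, so only $x$ can be exceptional; choosing $x$ as the first vertex of $Q$ lying on $P$ makes the internal disjointness of the prefixes more explicit than the paper's phrase ``closest vertex to $u$'', and your opening paragraph about the closed walk $W$ is not needed.
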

\begin{proof}
    Let $x \neq u$ be the closest vertex to $u$ where these two paths intersect (possibly $x=v$). Let $p_1$ denote the first path and $p_2$ the second.
    The subpath of $p_1$ from $u\alpha$ up to $x$ concatenated with the reversed subpath of $p_2$ from $u\hat\alpha$ up to $x$ forms a cycloid: if the vertex-side at $x$ in the subpath of $p_1$ has a different sign than that at $x$ in the subpath of $p_2$ then we have a cycloid with alternation at every vertex, and otherwise we have a cycloid where only $x$ does not respect alternation.
\end{proof}

Furthermore, notice that if $K$ has at most one tip then it has a cycloid (this is clear if $G$ is directed since every directed acyclic graph contains a source and a sink, i.e., it contains two tips).
See~\citep{Harviainen2026.03.28.714704} for a proof of the next result.

\begin{proposition}\label{prop:atm1-tip-cycloid}
    Let $G$ be a bidirected graph having at least two vertices. If $G$ has at most one tip then $G$ has a cycloid.
\end{proposition}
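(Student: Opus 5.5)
The plan is to prove the contrapositive-free direct form: assuming $G$ has at least two vertices and at most one tip, build an arbitrarily long bidirected walk and extract a cycloid from a repeated vertex. First, note that $G$ must have at least one edge for the statement to be meaningful. An isolated vertex is vacuously a tip, since no two incident edges carry different signs. So if $G$ has two isolated vertices it has two tips. We may restrict to a connected component, or simply work with the subgraph of non-isolated vertices, arguing that every vertex except possibly one has incident edges of both signs.

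The core construction is a greedy alternating walk. Start at the unique tip $x_0$ if one exists; otherwise start at an arbitrary non-isolated vertex. Leave it along any edge $\{x_0\alpha_0, x_1\beta_1\}$. Whenever the walk arrives at a vertex $x_i$ via vertex-side $x_i\beta_i$, the vertex $x_i$ is a non-tip, provided it is not the unique tip. Hence there is an edge at $x_i$ with sign $\hat\beta_i$, and the walk continues along it, so alternation holds at every internal vertex. The only way the walk can get stuck is by re-entering the tip. Since $V(G)$ is finite, some vertex repeats: let $j$ be the first index such that $x_j = x_i$ for some $i<j$. The segment from $x_i$ to $x_j$ is then a closed walk, all of whose internal vertices are distinct and alternate sign. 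At the repeated vertex $x_i$ alternation is not required by the definition of cycloid, since it is the exceptional vertex. This yields a cycloid, provided the segment is a genuine cycle of length at least one edge and does not merely traverse an edge back and forth.

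There are two delicate points, and the second is the main obstacle. The first is handling the tip. If the walk returns to the tip $x_0$ before any other repetition, the segment from $x_0$ back to $x_0$ is itself a closed walk whose only possibly non-alternating vertex is $x_0$, so it is still a cycloid. Hence starting at the tip is exactly what makes the "at most one" hypothesis suffice. The second is degenerate closing. The walk could immediately reuse the same edge, for example when the edge $\{x_{i-1}\beta_{i-1}, x_i\beta_i\}$ is a loop-like or parallel situation. We must therefore argue that the step out of $x_i$ uses an edge different from the entering one. Since the exit edge has sign $\hat\beta_i$ at $x_i$ while the entering edge has sign $\beta_i$ there, the two edges differ unless the entering edge is a self-loop with both signs at $x_i$. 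A self-loop $\{x\beta, x\hat\beta\}$ is itself a cycloid, and a self-loop $\{x\beta, x\beta\}$ at a vertex is a one-edge closed walk that also qualifies as a cycloid with exceptional vertex $x$. So loops can be dispatched up front, and consecutive edges are then always distinct. Because the first repetition makes $x_i,\dots,x_{j-1}$ pairwise distinct, the closed segment is a path except at its endpoints and contains at least two edges, or one loop. I would check carefully that a two-edge closing $x_i, x_{i+1}, x_i$ via two distinct parallel edges with suitable signs is accepted as a cycloid under the paper's definition. I expect it is, since the paper's definition only excludes repeated vertices, and the two edges are distinct. This completes the argument.
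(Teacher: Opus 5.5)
Your argument is correct. Started at the unique tip, or at any non-isolated vertex if there is none (your restriction to non-isolated vertices covers an isolated tip), the alternating walk can only halt at a repeated vertex, and the first repetition closes a cycloid whose only possibly non-alternating vertex is the repeated one; you also correctly rule out loops and back-and-forth closings. The paper does not prove this proposition itself but defers to \citep{Harviainen2026.03.28.714704}. Your greedy-extension argument is the standard one and matches the path-extension technique the paper uses in \Cref{lem:tipless-blossom-no-good-edges} and in the ear construction. The two-edge closings you were unsure about are accepted as cycloids by the paper: see \Cref{prop:reachability-bidirected-cycloid} applied to single-edge paths, and the remark in Phase~1 of the ultrabubble algorithm that two edges on the same vertex pair yield a cycle.
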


Suppose now that $K$ has at least three tips, so pick $x$ to be a tip which moreover is distinct from the vertices $u$ and $v$.
Then the vertex-sides contained in the expansions of the edges incident to $x$ in $\skel(\mu)$ all have sign $\alpha$ for some $\alpha \in \signs$.
Suppose otherwise and let $e = \{x,y\} \in \skel(\mu)$ be an edge incident to $x$ (possibly $y=u$ or $y=v$).
Then $\expansion(e)$ has vertex-sides of opposite signs at $x$. Since $\expansion(e)$ has no extremities except possibly $x$ and $y$, $\expansion(e)$ has at most one tip, which is $y$, and thus it has a cycloid by \Cref{prop:atm1-tip-cycloid}, a contradiction (recall that the fact that we build $K$ implies in particular that $\expansion(e)$ is acyclic).
Therefore $\noextremity{\nu}{\mu}$ is $\false$ because $x$ is a tip in $\expansion(e_\mu)$, a contradiction.

Therefore $K$ has exactly two tips, which are $u$ and $v$. Under these conditions deciding if $K$ is acyclic is a simple (linear-time) task and we refer to~\citep{Harviainen2026.03.28.714704} for the technical details.
Essentially, we can run a DFS starting at $u$ such that when arriving at an unvisited vertex $z$ with a vertex-side $z\gamma$, the DFS prioritizes expanding to vertices in $N^{\hat\gamma}_G(z)$ and only then considers those in $N^{\gamma}_G(z)$. Moreover, whenever an edge $\{x\alpha,y\alpha\}$ is scanned, for some $\alpha \in \signs$, if $y$ is unvisited then we can ``flip'' $y$ in order to make this edge have vertex-sides of opposite signs, i.e., this bidirected edge becomes essentially a directed edge, and if $y$ is already visited then it is possible to find a cycloid\footnote{In fact, a cycloid with the ``exceptional'' vertex.} in the current graph induced by the edges which have been scanned so far; if the DFS halts without ever finding such an edge then it produces a bidirected graph where every edge has vertex-sides of opposite signs, i.e., a directed graph, in which case a standard DFS suffices to finally decide if $K$ is acyclic. The correctness of ``flipping'' the vertex-sides of the vertices during the first DFS is ensured by the next result (\Cref{prop:flipping-iff-walks}), i.e., flipping vertices during the first DFS preserves cycloids.

\begin{proposition}
\label{prop:flipping-iff-walks}
    Let $G$ be a bidirected graph, let $u \in V(G)$ be a vertex, and let $G'$ be the graph obtained from $G$ by changing the positive vertex-sides at $u$ into negative vertex-sides and the negative into positive (i.e., \emph{flipping} $u$). Let $W$ be a sequence of edges of $G$. Then $W$ is a bidirected walk in $G$ if and only if $W$ is a bidirected walk in $G'$.
\end{proposition}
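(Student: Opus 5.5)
The proof is a direct local check of the walk condition, so I would argue edge-by-edge and then vertex-by-vertex along $W$. Flipping $u$ induces a bijection $\phi$ between $E(G)$ and $E(G')$. An edge not incident to $u$ is left unchanged. An edge $\{u\gamma, x\delta\}$ with $x\neq u$ becomes $\{u\hat\gamma, x\delta\}$. A loop $\{u\gamma,u\delta\}$ becomes $\{u\hat\gamma,u\hat\delta\}$. We identify $W$ in $G$ with $\phi(W)$ in $G'$, since edges are labelled by identifiers. By symmetry it suffices to prove one direction: flipping $u$ twice returns $G$, so the reverse implication follows by applying the forward one to $G'$.

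Write $W = e_1,\dots,e_{k-1}$ with $e_i=\{v_i\alpha_i, v_{i+1}\beta_{i+1}\}$. Being a bidirected walk means two things. First, consecutive edges share the vertex $v_{i+1}$. Second, they use opposite signs there, so the sign of $e_{i+1}$ at $v_{i+1}$ is $\hat\beta_{i+1}$.

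The first condition is unaffected by flipping, because vertex sets of edges are unchanged. For the second, fix an internal index $i$. If $v_{i+1}\neq u$, the relevant vertex-sides at $v_{i+1}$ in $e_i$ and $e_{i+1}$ are unchanged, so alternation is preserved. If $v_{i+1}=u$, then both signs at $u$ are negated. Since $\beta\neq\gamma$ if and only if $\hat\beta\neq\hat\gamma$, the pair remains opposite. The endpoints of $W$ carry no alternation constraint, so nothing needs checking there.

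The only mildly delicate step is an edge with both endpoints equal to $u$ (a loop), or an internal occurrence where the walk arrives at and leaves $u$ through such an edge. Here one must track which of the two vertex-sides plays the role of the ``exit'' and which the ``entry''. Both sides are negated simultaneously, so the same argument applies once the orientation of traversal is fixed consistently in $G$ and $G'$. I would state this explicitly to close the proof.
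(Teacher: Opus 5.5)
Your proof is correct. The paper states this proposition without proof, treating it as immediate, and your local check is exactly the verification it leaves implicit. Fix the vertex sequence of $W$ in the same way in $G$ and $G'$, and for loops also fix which side is entered and which is left. Then incidences are unchanged, and at each internal occurrence of $u$ both the arriving and the departing sign are negated, so alternation holds in $G$ if and only if it holds in $G'$. This equivalence already gives both directions, so the involution remark is redundant, though harmless. The same observation also shows that flipping does not change whether a vertex is the exceptional vertex of a cycloid, which is the form in which the paper actually uses the result.
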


As for the reachability states, if $u$ is a tip with sign $\gamma$ and $v$ is a tip with sign $\delta$ in $K$, then clearly $\mathsf{Reaches_{uv\gamma\delta}}$ is $\true$ and the remaining three reachability states are $\false$. The fact that at most one state is true is clear from \Cref{prop:reachability-bidirected-cycloid} (since otherwise there is a cycle and by definition these states are $\Null$ and we are done).
To see why indeed there is one state set to true, consider a maximal bidirected path in $K$ and observe that this path starts at $u\gamma$ and ends at $v\delta$ (similarly as to why a maximal path in an acyclic graph with unique source and unique sink starts at the source and ends at the sink). As usual, this path in $K$ can be mapped to a path in $X$.

\paragraph*{Phase 2.}

This phase is also a BFS starting at the root of the tree and is essentially identical to phase two of superbubbles.
The only difference is in the computation of the acyclicity states, which we describe next. Let $K:=\dirskel(\nu)$, let $\mu_1,\dots,\mu_k$ denote the children of $\nu$, and let $\mu_0$ denote the parent of $\nu$ (if $\nu$ is the root of $T$ then $\mu_0$ can be ignored).

Recall that at this point the acyclicity and absence-of-extremities states leaving $\nu$ to $\mu_i$ for $\iinkz$ are all set to $\true$, otherwise the states $\acyclic{\mu_i}{\nu}$ for $\iink$ are updated in the obvious way (see how to update the states in this case in the description of phase two for superbubbles). So we are indeed in conditions of building $K$.
Similarly as for superbubbles, if $K$ is acyclic then $\acyclic{\mu_i}{\nu}$ is $\true$ for each $\iink$ (to decide if $K$ is acyclic we can proceed identically as described above in phase one).
Otherwise $K$ has a cycloid and in order to maintain our algorithm linear-time we can compute its feedback edges.\footnote{For clarity, a feedback edge in a bidirected graph is an edge intersecting every cycloid of the graph.} Depending on which edges are indeed feedback edges we can update the acyclicity states accordingly, like we did for superbubbles.
One key observation about $K$ is that it contains no tips. Suppose otherwise.
If $K$ has a tip $x$ but $x$ is not a tip in $\expansion(e)$ for some edge $e \in E(\skel(\nu))$ then $\expansion(e)$ has vertex-sides of opposite signs at $x$ and therefore $\expansion(e)$ has at most one tip, which is the other endpoint of $e$. But then \Cref{prop:atm1-tip-cycloid} gives a cycloid in $\expansion(e)$, a contradiction since every acyclicity state pointing away from $\nu$ is to $\true$. So indeed $x$ is a tip for some absence-of-extremity state pointing towards $\nu$, and thus that state is $\false$, a contradiction.

In conclusion, it is enough to devise a linear-time algorithm for finding feedback edges in tipless bidirected graphs. Such an algorithm is presented in~\Cref{subsec:feedback-bidirected} and it is followed by a hardness result for the same problem in general bidirected graphs.

\paragraph*{Phase 3.}

This phase is completely analogous to phase three of the superbubble finding algorithm. The only change required to the algorithm is to remove the condition of the ``back-edge'', i.e., in a candidate superbubble $st$ we must ensure that $ts$ is not an edge, while for ultrabubbles the existence of that edge is allowed: clearly, for a separable pair of vertex-sides $\{u\alpha,v\beta\}$ in a graph $G$, if $\{u\hat\alpha,v\hat\beta\} \in E(G)$ then splitting $u\alpha$ and $v\beta$ results in a graph where the component containing $u$ and $v$ does not contain the edge $\{u'\hat\alpha,v'\hat\beta\}$. The remaining parts of the algorithm are completely identical and thus we obtain the next theorem. We remark that part as to why it is straightforward to get the ultrabubble algorithm from the superbubble algorithm has to do with the fact that ultrabubbles are, essentially, ``weak'' superbubbles (see~\citep{Harviainen2026.03.28.714704} for further results and intuition on this direction).

\begin{theorem}
    The ultrabubbles of a bidirected graph $G$ can be computed in time $O(|V(G)| + |E(G)|)$.
\end{theorem}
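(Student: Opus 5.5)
The plan is to mirror the proof of \Cref{thm:superbubbles} step by step, replacing each directed ingredient with its bidirected analogue, and to isolate the one genuinely new algorithmic ingredient: computing feedback edges of the tipless bidirected skeletons in Phase 2. I would first establish the structural reductions. The bidirected analogues of \Cref{lem:superbubbloid-superbubble-cutvertex}, \Cref{lem:bubbles-cutvertices} and \Cref{thm:bubbles-split-pairs} hold (for the first, via Lemma 4 of~\citep{Harviainen2026.03.28.714704}). Hence every ultrabubble falls into one of three cases: it is trivial (a single edge, tested by a constant-time predicate per edge after computing degrees per vertex-side); or its component is a whole block (tested with one traversal per block, using the tipless acyclicity test described in Phase 1); or its extremities form a separation pair of a block $H$. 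In the last case, \Cref{lem:spqr-tree-contains-split-pairs} together with the bidirected analogues of \Cref{prop:nonadjacent-S-node} and \Cref{prop:superbubbles-in-S-nodes} show that $\{u,v\}$ is the vertex set of a P-node or the endpoints of a virtual edge incident to an R-node. For each ultrabubble is a snarl, I would also invoke \Cref{prop:mixed-dangling-blocks} to justify treating each block independently.

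Second, I would prove correctness of Phases 1 and 2, the analogue of \Cref{lem:phases-correct}. The extremity states are updated verbatim. For acyclicity I would show that cycloids of $X=\expansion(e)$ correspond to cycloids of the bidirected skeleton $K$, whenever every child expansion is acyclic and tipless in its interior. A cycloid entering a child expansion through $s_i\alpha$ must leave through $t_i\beta$; this uses \Cref{prop:reachability-bidirected-cycloid}, which forbids it from returning to $s_i$. It can therefore be contracted to the edge $\{s_i\alpha,t_i\beta\}\in B_{\alpha\beta}$, and conversely each skeleton edge can be expanded into a path. The reachability states follow from the argument already given: exactly two tips $u,v$ remain, a maximal path runs between them, and \Cref{prop:atm1-tip-cycloid} excludes the other cases. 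In Phase 2, acyclicity of $K-e_i$ for all $i$ simultaneously reduces to knowing which virtual edges are bidirected feedback edges of $K$. As argued above, $K$ has no tips. I would therefore invoke the theorem that feedback edges of a tipless bidirected graph are computable in $O(|V(K)|+|E(K)|)$ time, proven in \Cref{subsec:feedback-bidirected}. Summed over all skeletons, this costs $O(|V(H)|+|E(H)|)$ by \Cref{lem:spqr-total-size}.

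Third, I would check Phase 3, using analogues of \Cref{prop:P-node} and \Cref{prop:R-node}. The neighbourhood sets $N^{+},N^{-}$ are replaced by the vertex-sides of a given sign at $u$ and $v$, and the condition $ts\notin E(G)$ is dropped, as already justified. Minimality comes from \Cref{prop:snarls-disjoint-paths}, using the two internally disjoint paths supplied by P-nodes with at least two chosen edges or by 3-connected R-skeletons. The sign-restricted neighbourhood inclusion queries are answered in constant time by the same DFS counting scheme as in \Cref{thm:superbubbles}, keeping counts per vertex-side. Completeness and soundness then follow the case analysis of \Cref{thm:superbubbles} line by line.

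The main obstacle is the Phase 2 step: it is the only point where the quadratic blow-up must be avoided. It also relies on a nontrivial fact, namely that tipless bidirected feedback edges are computable in linear time. This is where I expect the real work to lie. I would first try the ear-addition approach: flip vertices so that the graph becomes a strongly connected directed graph with the same cycloids, stopping early when two disjoint cycloids are detected. Then I would apply Garey--Tarjan~\citep{garey1978linear} via the arc-to-vertex subdivision reduction. A secondary subtlety is verifying that $K$ is tipless in every case where we actually need its feedback edges.
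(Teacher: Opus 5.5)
Your proposal is correct and follows essentially the same route as the paper. It carries the three-phase SPQR-tree algorithm for superbubbles over to bidirected skeletons. Phase 1 uses the two-tip flipping DFS to decide acyclicity and reachability. Phase 2 uses the linear-time feedback-edge algorithm for tipless bidirected graphs (ear addition, then Garey--Tarjan) to avoid the quadratic blow-up. Phase 3 is unchanged except for dropping the back-edge condition. The one minor variation is that you derive minimality from \Cref{prop:snarls-disjoint-paths} rather than from the ultrabubble analogue of \Cref{lem:superbubbloid-superbubble-cutvertex}; this is equally valid, since ultrabubble minimality coincides with snarl minimality.
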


\subsection{Feedback edges in bidirected graphs}
\label{subsec:feedback-bidirected}

\paragraph{Feedback edges in tipless bidirected graphs}

We present a linear-time algorithm computing every feedback edge of a bidirected graph $G$ containing no tips. Recall that a feedback edge in a bidirected graph is an edge contained in every cycloid of the graph.

The algorithm constructs $G$ by ear additions. If it succeeds then the problem reduces to that of a directed graph, where known linear-time algorithms to find feedback edges can be used.
Otherwise the procedure finds an obstruction and can correctly output that $G$ has no feedback edges.
We begin with a simple observation.

\begin{lemma}
\label{lem:tipless-blossom-no-good-edges}
    Let $G$ be a bidirected graph without tips. If $G$ has a cycloid with exceptional vertex then $G$ has no feedback edges.
\end{lemma}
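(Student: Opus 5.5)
The plan is to show that a cycloid $C$ with an exceptional vertex $x$ can be ``re-routed'' to produce a second cycloid avoiding any prescribed edge $e$ of $C$. Since every feedback edge lies on every cycloid, it suffices to show: for every edge $e$ of $C$ there is a cycloid of $G$ not containing $e$. Edges not on $C$ are already excluded, because $C$ itself avoids them.

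\textbf{Setup.} Write $C$ as a closed bidirected walk starting and ending at $x$. It leaves $x$ through $x\gamma$ and returns through $x\gamma$ as well, which is what makes $x$ exceptional. Every other vertex of $C$ alternates signs along $C$.

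\textbf{Escaping from $x$.} Since $G$ has no tips, $x$ has an edge $f=\{x\hat\gamma,y\delta\}$. Start a maximal bidirected walk $P$ from $x$ along $f$. Whenever $P$ enters a vertex $z$ through $z\delta$, it leaves through $z\hat\delta$, which always exists because $z$ is not a tip. Since $G$ is finite, $P$ eventually reaches a vertex already on $C\cup P$. Let $w$ be the first vertex of $P$ (after its start at $x$) lying on $C$ or repeating a vertex of $P$.

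\textbf{Case analysis on $w$.}
\begin{enumerate}[nosep]
\item If $w$ repeats a vertex of $P$ before touching $C$, the repeated portion of $P$ is a closed walk with at most one non-alternating vertex. Shortcutting it gives a cycloid disjoint from $C$, and hence disjoint from $e$.
\item If $w$ is on $C$ (possibly $w=x$), then $w$ splits $C$ into two $x$-$w$ arcs $A_1$ and $A_2$. By the alternation of $C$ at $w$ (or by exceptionality when $w=x$), the sign at which $P$ enters $w$ is compatible with continuing along exactly one of $A_1,A_2$ back to $x$. That arc arrives at $x$ with sign $\gamma$, whereas $P$ left $x$ with sign $\hat\gamma$. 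So $P$ concatenated with this arc is a closed walk that alternates everywhere except possibly at $w$. It is therefore a cycloid, and it omits the other arc.
\end{enumerate}

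\textbf{Main obstacle.} The difficulty is that case 2 yields a cycloid avoiding only one of the two arcs, and that arc is not the one of our choosing. To cover every edge $e$ of $C$, run the same argument from both ends. Either use both signs of $w$, or also grow a walk from the vertex-side opposite the one used to enter $w$. The aim is to obtain cycloids avoiding $A_1$ and cycloids avoiding $A_2$, and then argue that together with $C$ they leave no common edge.

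If this bookkeeping becomes messy, the fallback is a cleaner counting argument. Suppose $e$ is a feedback edge. Delete $e$; the graph $G-e$ is acyclic. Extend the walk from $x\hat\gamma$ maximally in $G-e$, using that the only vertices that can be tips in $G-e$ are the two endpoints of $e$. The walk must then terminate at an endpoint of $e$. Combining this walk with the two sub-walks of $C-e$ then yields a cycloid avoiding $e$ via Proposition~\ref{prop:reachability-bidirected-cycloid}, which is a contradiction.

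I expect the sign-compatibility check at the junction vertex, together with the handling of $e$ incident to $x$, to be the delicate step.
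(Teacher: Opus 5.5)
Your fallback is a complete and correct proof by a route different from the paper's. Your main route, however, stalls on an obstacle that does not exist. In case~2 with $w\neq x$, \emph{both} $P\cup A_1$ and $P\cup A_2$ are cycloids, for three reasons:
\begin{itemize}[nosep]
\item Each is a cycle in $U(G)$, because the internal vertices of $P$ are distinct and lie off $C$.
\item Each alternates at every internal vertex of $P$ and of the arc.
\item Each alternates at $x$, since $P$ leaves through $x\hat\gamma$ while both arcs return through $x\gamma$.
\end{itemize}
So the one permitted exception is free to sit at $w$, and it does not matter which arc is sign-compatible there. For any edge $e$ of $C$ you simply take the arc avoiding $e$. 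When $w=x$, $P$ itself is a cycloid edge-disjoint from $C$, so no splitting is needed. With this correction your main route is exactly the paper's proof, and no second walk or extra bookkeeping is required.

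In the fallback, a few steps should be made explicit:
\begin{itemize}[nosep]
\item The first edge at $x\hat\gamma$ is not $e$, because the edges of $C$ at $x$ use $x\gamma$.
\item Acyclicity of $G-e$ forces every walk in $G-e$ to be a path, since a first repeated vertex would close a cycloid.
\item Hence the greedy walk $Q$ terminates at some $z\neq x$ that is a tip of $G-e$, so $e$ is incident to $z$.
\end{itemize}
Since $C-e$ contains an $x$-$z$ path leaving $x$ through $x\gamma$, and $Q$ leaves through $x\hat\gamma$, \Cref{prop:reachability-bidirected-cycloid} yields a cycloid in $G-e$. Only this one sub-walk of $C-e$ is needed.

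The ``delicate step'' you anticipate also disappears. The proposition requires opposite signs only at the common start $x$, so no sign check at $z$ is needed. Likewise, $e$ incident to $x$ needs no separate treatment.

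Compared with the paper: the paper grows a single walk from $x\hat\gamma$ until it meets $C$, and explicitly exhibits, for every edge of $C$, a cycloid avoiding it. Your argument instead works per edge and by contradiction. It gives up constructiveness but avoids the junction case analysis entirely by reusing the reachability proposition. Non-constructiveness costs nothing here, since the lemma is only used to justify halting in the ear-addition procedure.
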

\begin{proof}
    Let $B$ be a cycloid in $G$ with exceptional vertex $x$ having sign $\alpha \in\signs$.
    Since $G$ has no tips, there is an edge $\{x\hat\alpha,u\gamma\} \in E(G)$ and let $p$ be a bidirected path consisting of that edge alone. Also because $G$ has no tips, we can greedily extend this bidirected path from $u$.
    During the process either we visit a vertex already in $p$, in which case we have found a cycloid edge-disjoint from $B$ and thus $G$ has no feedback edges, or we hit a vertex of $B$ for which we give the following argument.
    Notice that $p$ partitions $B$ into two subpaths. Moreover, notice that removing an edge of $B$ from either of these subpaths leaves $G$ with a cycloid via $\{x\hat\alpha,u\gamma\}$ and the untouched path (notice that this cycloid possibly has an exceptional vertex), but since any feedback edge is contained in $B$ we can conclude that $G$ has no feedback edges.
\end{proof}

We need two more definitions before describing the algorithm. Let $H \subset G$ be a nonempty graph. Say that an \emph{ear} is a path of $G$ whose first and last vertex (called \emph{attachment} vertices) are distinct and belong to $H$ and every other vertex on the path does not belong to $H$. Say that $H$ is \emph{digraphic} if every edge of $G$ has opposite signs in its vertex-sides.

Begin by applying \Cref{prop:atm1-tip-cycloid} to get a cycloid $C$. We can assume to be provided a cycloid without exception, for otherwise $G$ has no feedback edge by \Cref{lem:tipless-blossom-no-good-edges}.

We maintain the invariant that the graph is digraphic and is strongly connected (in the directed sense). Put $H_0=C$. Graph $H_0$ is digraphic (i.e., every edge has vertex-sides of opposite signs). To ensure it is strongly connected, it is not hard to see that it suffices for that effect to invert some of its vertices (which is correct due to \Cref{prop:flipping-iff-walks}). So $H_0$ respects the invariant.
We show that if $H_i$ respects the invariant then successfully adding an ear results in a graph $H_{i+1}$ also respecting the invariant, and if not, then we either found a cycloid with exception or an edge-disjoint cycle from $C$.
When no more ears can be added then we have recovered a graph equivalent to $G$ in the sense of \Cref{prop:flipping-iff-walks}. We remark that we only resign vertices of the newly added ears except for its attachment vertices as those are already in the current graph $H_i$.

Suppose that $V(H_i) \subset V(G)$ and let $x \in V(G) \setminus V(H_i)$. Since $G$ is tipless, vertex $x$ is not a tip. Then build a path $p^+$ starting with a vertex-side $x+$ until it hits a vertex of $H_i$ or a vertex previously on the path. In the latter case we halt, because we have found a cycloid disjoint from $C$. So this path hits $H_i$ first in a vertex $a$.
Similarly, build a path $p^-$ starting with a vertex-side $x-$ until it hits either a vertex in $V(H_i)$, a vertex of $p^+$, or a vertex of $p^-$.
If one of the last two cases occurs then we have found cycloid disjoint from $C$. So $p^-$ hits $H_i$ first in a vertex $b$. If $b=a$ then we have found a cycloid disjoint from $C$, so $b\neq a$.
Notice that the concatenation of $p^+$ and $p^-$ gives an ear with attachment vertices $a$ and $b$.
Let $\alpha,\beta \in\signs$ denote the signs of the vertex-sides of the ear in the attachment vertices.
Suppose that $\alpha=\beta$.
The graph $H_i$ has an $\alpha$-$\hat\beta$ path\footnote{$H_i$ also has a $\hat\alpha$-$\beta$ path, and thus $H_{i+1}$ also has a cycloid with exceptional vertex $b$.} since it is strongly connected. This path together with the new ear creates a cycloid with exceptional vertex $a$ in $H_{i+1}$, and so we can halt due to \Cref{lem:tipless-blossom-no-good-edges}.
Otherwise $\alpha \neq \beta$.
It is trivial to flip each vertex of the ear\footnote{Without loss of generality suppose that $\alpha=+$. We can start at $a+$ and move to the consecutive vertex in the $a+$-$b-$ path while greedily flipping vertices in order to make every edge directed. Essentially, the first edge has a plus and a minus, the second too, and so on and so forth, until we reach the last vertex, which must $b$ and moreover is contained in the vertex-side $b-$.} so that the every edge has vertex-sides of opposite signs, which makes $H_{i+1}$ digraphic.

We are left to argue that $H_{i+1}$ is strongly connected. Assume without loss of generality that $\alpha=+$.
Notice that $H_i$ has $+-$ and $-+$ paths for any two vertices since it is strongly connected.
Let $u$ be a vertex contained in the ear distinct from $a$ and $b$ and let $v \in V(H_i)$ (clearly, any two vertices in the ear are strongly connected in $H_{i+1}$).
We show that $H_{i+1}$ has $u+$-$v-$ and $v+$-$u-$ paths, thus showing that $H_{i+1}$ is strongly connected.
Since $a \in V(H_i)$, $H_i$ has a $v+$-$a-$ path, which prepended with the $a+$-$u-$ subpath of the ear gives a $v+$-$u-$ path in $H_{i+1}$. (Notice that the $a+$-$u-$ exists by construction of ear). Similarly, since $b \in V(H_i)$, $H_i$ has a $b+$-$v-$ path, which prepended with the $u+$-$b-$ subpath of the ear gives a $u+$-$v-$ path in $H_{i+1}$.

Suppose now that $V(H_i)=V(G)$ and $E(H_i) \subset E(G)$. Let $e \in E(G) \setminus E(H)$. If the vertex-sides of $e$ have the same sign then we have found a cycloid with exception (similarly to the case where the ear had vertex-sides of the same sign in the attachment vertices, i.e., when $\alpha=\beta$). Otherwise $H_{i+1}$ is digraphic since $H_i$ is digraphic, $e$ has vertex-sides of opposite signs, and $H_{i+1}=H_i+e$. Further, $H_{i+1}$ is strongly connected because $H_i$ is strongly connected and $V(H_{i+1})=V(H_i)$.
Finally we have $H_{i+1}=G$. Due to the invariant, $G$ is digraphic and is strongly connected.


In conclusion, if the ear addition procedure succeeds then we know that $G$ is essentially a strongly connected directed graph, where linear-time algorithms to find feedback edges are known. If it fails then we know that the graph has no feedback edges.
We are left to examine the running time of the ear-addition construction.

\begin{theorem}\label{thm:feedback-bidirected-linear}
    There is an algorithm that finds all the feedback edges of a tipless bidirected graph in time $O( |V(G)| + |E(G)| )$.
\end{theorem}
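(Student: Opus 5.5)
The plan is to show that the ear-addition procedure described above can be carried out with a single amortized traversal of $G$. Correctness of the outcome is already argued in the text: either we reach a digraphic strongly connected graph equivalent to $G$ by \Cref{prop:flipping-iff-walks}, or we halt with a certificate that no feedback edge exists. So it only remains to charge the total work to $O(|V(G)|+|E(G)|)$ and then call the directed feedback-arc routine once.

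\textbf{Initial cycloid.} First I would find the initial cycloid $C$ in linear time, reusing the greedy walk from the proof of \Cref{lem:tipless-blossom-no-good-edges}. Since $G$ is tipless, from any vertex-side we can always leave through the opposite sign, so we walk, marking vertices, until we revisit a marked vertex. The closed part of this walk is a cycloid. If its closing vertex does not alternate, it is a cycloid with an exceptional vertex, and we output that no feedback edge exists. Otherwise we flip vertices along $C$ so that it becomes a directed cycle. Flipping is an $O(1)$ relabeling per vertex (a stored bit that inverts the signs at that vertex), so this step costs $O(|C|)$.

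\textbf{Ear phase.} Next I would organize the ear additions so that every edge is scanned a constant number of times. For each $x\notin V(H_i)$, the two greedy paths $p^+$ and $p^-$ only traverse vertices outside $H_i$ and only scan edges leaving the current endpoint. I would use a global visited array with three labels (in $H$, on $p^+$, on $p^-$) and per-vertex adjacency pointers that are never rewound. Each scanned edge then either becomes an edge of the ear, which is added permanently, or triggers a halt. In both cases each edge is consumed $O(1)$ times overall.

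The detection of the halting events must also be $O(1)$ per step:
\begin{itemize}
\item a path hitting itself or the other path is detected by the labels;
\item the case $a=b$ is a direct comparison;
\item the case $\alpha=\beta$ is read off the signs after accounting for the flip bits of $a$ and $b$.
\end{itemize}
The key point is that in the case $\alpha=\beta$ we do not need to exhibit the $\alpha$-$\hat\beta$ path in $H_i$. Its existence follows from strong connectivity of $H_i$ by the invariant, so we simply halt. Flipping the ear vertices is linear in the ear length. After all vertices have been absorbed, the remaining edges are checked individually in $O(1)$ each for same-sign endpoints.

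\textbf{Final step and main obstacle.} Having obtained a strongly connected directed graph, I would compute its feedback arcs exactly as in the superbubble Phase 2. That is, subdivide each arc and run Garey and Tarjan's linear-time feedback-vertex algorithm, keeping only the subdivision vertices. If the procedure halted earlier, we return the empty set.

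The main obstacle I expect is making the amortization honest. In particular, I must ensure that the greedy extension of $p^+$ and $p^-$ never rescans edges already scanned from vertices now in $H$. I must also ensure it never needs to backtrack. This holds because tiplessness guarantees that an unused edge with the required sign always exists at the current endpoint, so the path can always be extended. The acyclicity test for tipless graphs follows as a corollary: run the same procedure, and if it completes, test the resulting directed graph with a DFS.
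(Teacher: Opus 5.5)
Your proposal is correct and follows the paper's own argument: the ear-addition procedure runs in amortized linear time (each edge scanned $O(1)$ times, each vertex flipped at most once, $O(1)$ sign checks at the attachment vertices). On success, the resulting strongly connected directed graph is passed to Garey--Tarjan via arc subdivision, and you simply make explicit the bookkeeping (labels, adjacency pointers, flip bits) that the paper leaves implicit. Your closing remark on testing acyclicity is harmless but vacuous, since in a tipless graph your initial greedy walk can never get stuck, so such a graph always contains a cycloid.
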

\begin{proof}
    We argue that the ear addition procedure takes $O(|V(G)| + |E(G)|)$ time.
    
    Finding $H_0$ and the ears takes linear time: we can greedily extend the walk (e.g., by always taking an arbitrary edge incident to the current vertex being extended) until we hit a relevant vertex for the construction.
    Flipping the vertices of the ears takes linear time in the size of the ear, and since every vertex is flipped at most once and the ears partition $E(G)$, the overall time taken to flip the vertices during the construction is $\Theta(|V(G)| + |E(G)|)$.
    The checks on the vertex-sides at the attachment vertices when adding ears take constant time.
\end{proof}

\paragraph{Hardness of computing feedback edges}

Recall the \textsc{Triangle} problem where one is given an undirected graph $G' = (V', E')$ and asked whether it contains a cycle of length $3$, that is, a triangle. The $k$-\textsc{Clique Conjecture} (see Conjecture 10 of \cite{Kunnemann24}) asserts in particular that a triangle cannot be found in time $O(n^{\omega - \epsilon})$ for matrix multiplication exponent~$\omega$ and any $\epsilon > 0$. We argue that under the $k$-\textsc{Clique Conjecture}, one cannot decide whether a bidirected graph has \emph{bidirected feedback edge}, i.e., whether it can be made acyclic by removing a single edge in time $O(n^{\omega - \epsilon})$ for any $\epsilon > 0$.

\begin{theorem}
    Under the $k$-\textsc{Clique Conjecture}, the existence of a bidirected feedback edge cannot be decided in time $O(n^{\omega - \epsilon})$ for any $\epsilon > 0$.
\end{theorem}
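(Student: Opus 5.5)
We will reduce \textsc{Triangle} on tripartite graphs to deciding whether a bidirected graph has a bidirected feedback edge. The tripartite restriction costs nothing: from an arbitrary graph $G'=(V',E')$ on $n$ vertices we build the standard tripartite graph with three copies $V_1,V_2,V_3$ of $V'$, placing an edge between $x_i\in V_i$ and $y_j\in V_j$ ($i\neq j$) whenever $\{x,y\}\in E'$. This graph has $O(n)$ vertices and has a triangle if and only if $G'$ does.

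\textbf{The sign assignment.} For each edge between parts we choose signs so that only triangles close up into cycloids. Edges between $V_1$ and $V_2$ become $\{x+,y-\}$ (with $x\in V_1$, $y\in V_2$). Edges between $V_2$ and $V_3$ become $\{y+,z-\}$, and edges between $V_3$ and $V_1$ become $\{z+,x-\}$. The effect is that any bidirected walk must follow the cyclic order $V_1\to V_2\to V_3\to V_1$. Because every edge is of the form $\{a+,b-\}$, this is a directed graph, and any cycloid is a directed cycle. A directed cycle here has length divisible by $3$, so we must also rule out cycles of length $6,9,\dots$. One option is to restrict to a variant in which long cycles carry a triangle. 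A cleaner option, following \Cref{fig:triangle}, is to add extra vertices $x,y,z$ and use same-sign edges. I would first attempt the following: give every vertex of each part only one incidence with each sign toward the next part, and add gadget vertices so that any closed walk must pass through a designated vertex of $V_1$ exactly once. Concretely, make $V_1$ appear twice, as a source copy and a sink copy. The source copy gets only $+$ sides and the sink copy only $-$ sides. A single vertex $x$ is joined to both copies, with an edge $\{x-,a_{\rm in}\,?\}$ and, for each $a$, a path $x\to a_{\rm out}$ that forces the walk to return to the \emph{same} $a$. Then every cycloid corresponds to a closed walk $a\to b\to c\to a$, i.e.\ a triangle.

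\textbf{Making "no feedback edge" coincide with "has a triangle".} Even with an exact triangle$\leftrightarrow$cycloid correspondence, deciding whether a feedback edge exists is not the same as deciding acyclicity. If there is exactly one triangle, there is a feedback edge. To fix this, duplicate the whole construction, or attach to each triangle-encoding gadget two edge-disjoint routes. Then every triangle produces at least two edge-disjoint cycloids, as the violet walks in \Cref{fig:triangle} illustrate. The consequences are: with no triangle the graph is acyclic, so either answer is decidable, and if the question is "acyclic or at least one feedback edge", the answer is "yes". With a triangle there are two disjoint cycloids, so there is no feedback edge. Hence a feedback-edge (or acyclicity-vs-feedback) decision procedure running in $O(n^{\omega-\epsilon})$ would detect triangles in $O(n^{\omega-\epsilon})$ time, contradicting the $k$-\textsc{Clique Conjecture}. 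The reduction itself produces a graph with $O(n)$ vertices and $O(|E'|)$ edges in linear time, so it does not affect the exponent.

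\textbf{Main obstacle.} The delicate step is the gadget guaranteeing that every cycloid closes at the \emph{same} $V_1$-vertex it started from. Without it, walks $a\to b\to c\to a'$ with $a'\neq a$ chain into longer cycles that do not correspond to triangles. In a purely directed layering this is impossible, since closed walks of length $3k$ always exist once such chains close. So I would exploit the bidirected feature directly: each $a\in V_1$ has its $V_3$-edges on the same sign as its $V_2$-edges. A return to $a$ is then an exceptional vertex and closes a cycloid, while the walk cannot continue through $a$ to another triangle. Checking that no other non-alternating closures arise is the main verification.
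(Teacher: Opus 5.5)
Your overall plan is the right one: reduce from tripartite \textsc{Triangle}, make every cycloid a triangle, then force two disjoint cycloids whenever a triangle exists. But the step that carries the whole reduction, namely that every cycloid of the constructed graph comes from a triangle, is never established. The one gadget you spell out does not achieve it, because a single hub $x$ cannot remember which $a\in V_1$ the walk started from. In your description every in-edge at $x$ has the form $\{x-,\cdot\}$. So any route that lets the closed walk of a triangle $a,b,c$ pass through $x$ back to $a_{\mathrm{out}}$ (a route leaving $x$ with a $+$ side) equally closes the walk $a_{\mathrm{out}}\to b\to c\to a'_{\mathrm{in}}\to x\to a_{\mathrm{out}}$ for any $a'\neq a$ adjacent to $c$. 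That is exactly the non-triangle closure you wanted to exclude. If instead, as literally written, all sides of $a_{\mathrm{in}}$ and of $a_{\mathrm{out}}$ carry a single sign, then both are tips. A closed walk through both then has two non-alternating vertices, so not even triangles produce cycloids. Either way, the claim ``every cycloid corresponds to a closed walk $a\to b\to c\to a$'' is unsupported for this construction.

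The idea in your last paragraph is the one that works. It needs no hub or copies, and up to flipping one class it is exactly the paper's construction. Take $\{a+,b-\}$, $\{b+,c-\}$, $\{c+,a+\}$ for $a\in V_1$, $b\in V_2$, $c\in V_3$; the paper instead makes $C$ the single-sign class, using $\{u-,v+\}$, $\{u+,v-\}$, $\{u-,v-\}$ for $A$--$B$, $A$--$C$, $B$--$C$. The verification you defer is short, and it is the heart of the proof:
\begin{enumerate}[nosep]
\item A vertex of $V_1$ has sides of one sign, so it can only be the exceptional vertex of a cycloid.
\item An alternating vertex of $V_2$ (resp.\ $V_3$) uses one $-$ side and one $+$ side, so one of its cycloid-neighbours lies in $V_1$ and the other in $V_3$ (resp.\ $V_2$).
\item Since there are no loops or parallel edges, a cycloid has at least three vertices, hence at least two alternating ones. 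It therefore contains a $V_1$-vertex, and by (1) exactly one, namely its exceptional vertex $a$.
\item Every alternating vertex is then adjacent to $a$ on the cycloid, so the cycloid has length $3$. By (2) it is $a,b,c$ with $b\in V_2$ and $c\in V_3$, i.e.\ a triangle.
\item Conversely, every triangle yields such a cycloid with exceptional vertex $a$.
\end{enumerate}

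Your duplication fix is valid under the paper's definition, but only because in an acyclic graph every edge vacuously meets every cycloid. The paper instead adds one disjoint directed triangle $\{x+,y-\},\{y+,z-\},\{z+,x-\}$. With no triangle in $G_3$ there is then exactly one cycloid and a genuine feedback edge, while a triangle gives two disjoint cycloids, so the reduction does not depend on how the acyclic case is read.
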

\begin{proof}
    Take an arbitrary instance $G_T = (V_T, E_T)$ of \textsc{Triangle}. By a standard color-coding-like reduction, we can reduce \textsc{Triangle} to the \textsc{Tripartite Triangle} problem \citep{Fellows09} where we look for a triangle in an undirected tripartite graph $G_3 = (A, B, C, E_3)$ with tripartite sets $A$, $B$, and $C$ such that both endpoints of all edges in $E_3$ come from distinct sets. The reduction multiplies the number of vertices by $3$ and the number of edges by $6$, so solving \textsc{Tripartite Triangle} is as hard as hard solving \textsc{Triangle}.

    We will next show a reduction from \textsc{Tripartite Triangle} to finding a bidirected feedback edge in a bidirected graph $G = (V, E)$. Let $G_3 = (A, B, C, E_3)$ be an instance of \textsc{Tripartite Triangle}. Construct a bidirected graph on the vertex set $A \cup B \cup C \cup \{x, y, z\}$ with auxiliary vertices $x$, $y$, and $z$. For every edge $\{u, v\} \in E_3$,
\begin{itemize}[nosep]
    \item[(1)] if $u \in A$ and $v \in B$, add the edge $\{u-, v+\}$;
    \item[(2)] if $u \in A$ and $v \in C$, add the edge $\{u+, v-\}$; and
    \item[(3)] if $u \in B$ and $v \in C$, add the edge $\{u-, v-\}$.
\end{itemize}
Finally, add bidirected edges $\{x+, y-\}$, $\{y+, z-\}$, and $\{z+, x-\}$. We claim that $G$ has a bidirected feedback edge if and only if $G_3$ does \emph{not} have a triangle.

Suppose $G_3$ has a triangle $\{a, b, c\}$ with $a \in A$, $b \in B$, and $c \in C$. Then, there are two disjoint cycles with the edges $\{x+, y-\}$, $\{y+, z-\}$, $\{z+, x-\}$ and $\{a-, b+\}$, $\{b-, c-\}$, $\{c-, a+\}$ in $G$, and no bidirected feedback edge can exist.

Suppose now instead that no triangle exists in $G_3$. We need to show that there are no other cycles in $G$ than the one with the edges $\{x+, y-\}$, $\{y+, z-\}$, $\{z+, x-\}$.
Consider any alternating closed walk $W$ in $G$ with edges $\{v_1\alpha_1, v_2\hat{\alpha_2}\}, \{v_2\alpha_2, v_3\hat{\alpha_3}\}, \dots, \{v_\ell\alpha_\ell, v_1\alpha_1'\}$ with $v_1, v_2, \dots, v_\ell \in A \cup B \cup C$ and~$\alpha_1, \alpha_2, \dots, \alpha_\ell, \alpha_1' \in \signs$. Note that $v_2, v_3, \dots, v_\ell \not\in C$, since for all $v \in C$ all vertex-sides are of the form $v-$. Because of the construction, it also cannot be that $v_{i}, v_{i+2\,\text{mod}\,\ell} \in X$ and $v_{i+1\,\text{mod}\,\ell} \in Y$ for any $i \in \{1, \dots, \ell\}$ with $X, Y \in \{A, B, C\}$.

Therefore, we must have that $\ell = 3$, $v_1 \in C$, and either $v_2 \in A$, $v_3 \in B$ or $v_2 \in B$, $v_3 \in A$. In this case, we have a tripartite triangle $\{v_1, v_2, v_3\}$, resulting in a contradiction. Hence, $G$ cannot contain any other cycles and thus, for example, $\{x+, y-\}$ is a bidirected feedback edge.
\end{proof}

By omitting the edges $\{x+, y-\}$, $\{y+, z-\}$, and $\{z+, x-\}$ from $G$, we immediately get the following corollary.
\begin{corollary}
     Under the $k$-\textsc{Clique Conjecture}, the acyclicity of a bidirected graph cannot be decided in time $O(n^{\omega - \epsilon})$ for any $\epsilon > 0$.
\end{corollary}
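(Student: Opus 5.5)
The corollary follows by reusing the construction from the preceding theorem, with the triangle gadget on $x,y,z$ deleted. Given an instance $G_3=(A,B,C,E_3)$ of \textsc{Tripartite Triangle}, let $G^\circ$ be the bidirected graph on $A\cup B\cup C$ containing exactly the edges of types (1)--(3). Building $G^\circ$ takes time linear in $|E_3|$, and its number of vertices and edges is linear in that of $G_3$. As argued before the corollary, \textsc{Tripartite Triangle} is as hard as \textsc{Triangle} up to constant factors.

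The core claim is that $G^\circ$ has a cycloid if and only if $G_3$ has a triangle.

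For the forward direction, suppose $G_3$ contains a triangle $\{a,b,c\}$ with $a\in A$, $b\in B$, $c\in C$. Then the edges $\{a-,b+\}$, $\{b-,c-\}$, $\{c-,a+\}$ form a closed bidirected walk. It alternates signs at $a$ and $b$, and at $c$ both edges carry $c-$, so $c$ is the exceptional vertex. This walk is therefore a cycloid in $G^\circ$.

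For the converse, we reuse the case analysis from the theorem's proof. Take any cycloid in $G^\circ$, written as a walk through vertices $v_1,\dots,v_\ell$.
\begin{itemize}
\item Every vertex-side at a vertex of $C$ has sign $-$. Hence no vertex of $C$ can be an internal, alternating vertex, so at most one vertex of the cycloid lies in $C$, and it must be the exceptional one.
\item Each vertex of $A$ has only $-$ sides toward $B$ and only $+$ sides toward $C$. Each vertex of $B$ has only $+$ sides toward $A$ and only $-$ sides toward $C$. Consequently, alternation at a vertex of $A$ or $B$ forces the walk to leave toward the other two parts rather than toward the part it came from. This rules out any pattern $X,Y,X$ around such a vertex.
\item It follows that $\ell=3$ and the three vertices lie one in each part, which gives a triangle of $G_3$.
\end{itemize}

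Combining the two directions, an algorithm deciding acyclicity of bidirected graphs in time $O(n^{\omega-\epsilon})$ would solve \textsc{Tripartite Triangle}, and hence \textsc{Triangle}, in time $O(n^{\omega-\epsilon})$. This contradicts the $k$-\textsc{Clique Conjecture}.

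The only delicate step is the converse direction. In the theorem's proof, the absence of extra cycles was stated under the assumption that no triangle exists. Here it has to be used in the contrapositive form: every cycloid of $G^\circ$ induces a triangle. That argument is the same one and does not involve $x,y,z$ at all, so it carries over verbatim.
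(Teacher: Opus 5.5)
Your proposal is correct and follows the paper's route: the paper also proves the corollary by deleting the edges $\{x+,y-\}$, $\{y+,z-\}$, $\{z+,x-\}$ from the theorem's construction. The remaining graph then has a cycloid exactly when $G_3$ has a triangle; a triangle gives a cycloid with the $C$-vertex as the exceptional vertex, and the theorem's case analysis shows every cycloid comes from a triangle. Your write-up spells out both directions of this equivalence, which the paper leaves implicit in a one-line remark.
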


\section*{Acknowledgements}
\addcontentsline{toc}{section}{Acknowledgements}

We are grateful to Benedict Paten for very helpful explanations and clarifications on snarls.

Co-funded by the European Union (ERC, SCALEBIO, 101169716). Views and opinions expressed are however those of the author(s) only and do not necessarily reflect those of the European Union or the European Research Council. Neither the European Union nor the granting authority can be held responsible for them. Co-funded by the Research Council of Finland, Grant 1358744. Juha Harviainen was supported by the Research Council of Finland, Grant 351156. \\[0.2cm]
\includegraphics[width=4cm]{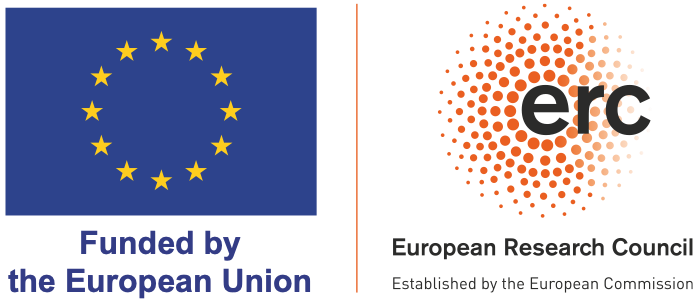}

\bibliography{bibliography}

@article{hopcroft1973dividing,
author = {Hopcroft, J. E. and Tarjan, R. E.},
title = {Dividing a Graph into Triconnected Components},
journal = {SIAM Journal on Computing},
volume = {2},
number = {3},
pages = {135-158},
year = {1973},
doi = {10.1137/0202012}
}

@article{Hopcroft73blockcut,
  author       = {John E. Hopcroft and
                  Robert Endre Tarjan},
  title        = {Efficient Algorithms for Graph Manipulation {[H]} (Algorithm 447)},
  journal      = {Commun. {ACM}},
  volume       = {16},
  number       = {6},
  pages        = {372--378},
  year         = {1973}
}

@article{rahman2022assembler,
  title={Assembler artifacts include misassembly because of unsafe unitigs and underassembly because of bidirected graphs},
  author={Rahman, Amatur and Medvedev, Paul},
  journal={Genome Research},
  volume={32},
  number={9},
  pages={1746--1753},
  year={2022},
  publisher={Cold Spring Harbor Lab}
}

@inproceedings{gabow1983efficient,
  title={An efficient reduction technique for degree-constrained subgraph and bidirected network flow problems},
  author={Gabow, Harold N},
  booktitle={Proc. 15th Annual ACM Symposium on Theory of Computing},
  pages={448--456},
  year={1983}
}

@article {billi,
	author = {Bhat, Shreeharsha G and Mahajan, Daanish and Jain, Chirag},
	title = {Billi: Provably Accurate and Scalable Bubble Detection in Pangenome Graphs},
	elocation-id = {2025.11.21.689636},
	year = {2025},
	publisher = {Cold Spring Harbor Laboratory},
	URL = {https://doi.org/10.1101/2025.11.21.689636},
	eprint = {https://www.biorxiv.org/content/early/2025/11/22/2025.11.21.689636.full.pdf},
	journal = {bioRxiv}
}

@article{hprc,
	author = {Liao, Wen-Wei and others},
	date = {2023/05/01},
	id = {Liao2023},
	isbn = {1476-4687},
	journal = {Nature},
	number = {7960},
	pages = {312--324},
	title = {A draft human pangenome reference},
	volume = {617},
	year = {2023}}

@InProceedings{gutwenger2001linear,
author="Gutwenger, Carsten
and Mutzel, Petra",
editor="Marks, Joe",
title="A Linear Time Implementation of {SPQR}-Trees",
booktitle="Graph Drawing",
year="2001",
publisher="Springer Berlin Heidelberg",
address="Berlin, Heidelberg",
pages="77--90",
isbn="978-3-540-44541-8",
doi="https://doi.org/10.1007/3-540-44541-2_8"
}

@article{bienstock1988complexity,
author = {Bienstock, Daniel and Monma, Clyde L.},
title = {On the Complexity of Covering Vertices by Faces in a Planar Graph},
journal = {SIAM Journal on Computing},
volume = {17},
number = {1},
pages = {53-76},
year = {1988},
doi = {10.1137/0217004}
}

@InProceedings{battista1990on-line,
author="Di Battista, Giuseppe
and Tamassia, Roberto",
editor="Paterson, Michael S.",
title="On-line graph algorithms with {SPQR}-trees",
booktitle="Automata, Languages and Programming",
year="1990",
publisher="Springer Berlin Heidelberg",
address="Berlin, Heidelberg",
pages="598--611",
isbn="978-3-540-47159-2"
}

@article{battista1996on-line,
author = {Di Battista, Giuseppe and Tamassia, Roberto},
title = {On-Line Planarity Testing},
journal = {SIAM Journal on Computing},
volume = {25},
number = {5},
pages = {956-997},
year = {1996},
doi = {10.1137/S0097539794280736}
}

@article{lane1937structural,
author = {Saunders Mac Lane},
title = {{A structural characterization of planar combinatorial graphs}},
volume = {3},
journal = {Duke Mathematical Journal},
number = {3},
publisher = {Duke University Press},
pages = {460 -- 472},
year = {1937},
doi = {10.1215/S0012-7094-37-00336-3},
URL = {https://doi.org/10.1215/S0012-7094-37-00336-3}
}

@inproceedings{onodera2013detecting,
  title={Detecting superbubbles in assembly graphs},
  author={Onodera, Taku and Sadakane, Kunihiko and Shibuya, Tetsuo},
  booktitle={Algorithms in Bioinformatics: 13th International Workshop, WABI 2013, Sophia Antipolis, France, September 2-4, 2013. Proceedings 13},
  pages={338--348},
  year={2013},
  organization={Springer}
}

@article{brankovic2016linear,
  title={Linear-time superbubble identification algorithm for genome assembly},
  author={Brankovic, Ljiljana and Iliopoulos, Costas S and Kundu, Ritu and Mohamed, Manal and Pissis, Solon P and Vayani, Fatima},
  journal={Theoretical Computer Science},
  volume={609},
  pages={374--383},
  year={2016},
  publisher={Elsevier}
}

@article{dabbaghie2022bubblegun,
    author = {Dabbaghie, Fawaz and Ebler, Jana and Marschall, Tobias},
    title = {BubbleGun: enumerating bubbles and superbubbles in genome graphs},
    journal = {Bioinformatics},
    volume = {38},
    number = {17},
    pages = {4217-4219},
    year = {2022},
    month = {07},
    issn = {1367-4803},
    doi = {10.1093/bioinformatics/btac448},
    url = {https://doi.org/10.1093/bioinformatics/btac448},
    eprint = {https://academic.oup.com/bioinformatics/article-pdf/38/17/4217/49889707/btac448.pdf},
}

@article{shafin2020nanopore,
  title={Nanopore sequencing and the Shasta toolkit enable efficient de novo assembly of eleven human genomes},
  author={Shafin, Kishwar and Pesout, Trevor and Lorig-Roach, Ryan and Haukness, Marina and Olsen, Hugh E and Bosworth, Colleen and Armstrong, Joel and Tigyi, Kristof and Maurer, Nicholas and Koren, Sergey and others},
  journal={Nature biotechnology},
  volume={38},
  number={9},
  pages={1044--1053},
  year={2020},
  publisher={Nature Publishing Group US New York}
}

@article{Even98,
  author       = {Guy Even and
                  Joseph Naor and
                  Baruch Schieber and
                  Madhu Sudan},
  title        = {Approximating Minimum Feedback Sets and Multicuts in Directed Graphs},
  journal      = {Algorithmica},
  volume       = {20},
  number       = {2},
  pages        = {151--174},
  year         = {1998}
}

@article{garg2018graph,
  title={A graph-based approach to diploid genome assembly},
  author={Garg, Shilpa and Rautiainen, Mikko and Novak, Adam M and Garrison, Erik and Durbin, Richard and Marschall, Tobias},
  journal={Bioinformatics},
  volume={34},
  number={13},
  pages={i105--i114},
  year={2018},
  publisher={Oxford University Press}
}

@article{garrison2018variation,
  title={Variation graph toolkit improves read mapping by representing genetic variation in the reference},
  author={Garrison, Erik and Sir{\'e}n, Jouni and Novak, Adam M and Hickey, Glenn and Eizenga, Jordan M and Dawson, Eric T and Jones, William and Garg, Shilpa and Markello, Charles and Lin, Michael F and others},
  journal={Nature biotechnology},
  volume={36},
  number={9},
  pages={875--879},
  year={2018},
  publisher={Nature Publishing Group US New York}
}

@article{paten2018ultrabubbles,
    author = {Paten, Benedict and Eizenga, Jordan M. and Rosen, Yohei M. and Novak, Adam M. and Garrison, Erik and Hickey, Glenn},
    title = {Superbubbles, Ultrabubbles, and Cacti},
    journal = {Journal of Computational Biology},
    volume = {25},
    number = {7},
    pages = {649-663},
    year = {2018},
    doi = {10.1089/cmb.2017.0251},
    note = {PMID: 29461862}
}

@article{di1996line,
  title={On-line maintenance of triconnected components with SPQR-trees},
  author={Di Battista, Giuseppe and Tamassia, Roberto},
  journal={Algorithmica},
  volume={15},
  number={4},
  pages={302--318},
  year={1996},
  publisher={Springer}
}

@book{diestel,
  title={Graph theory},
  author={Diestel, Reinhard},
  volume={173},
  year={2025},
  publisher={Springer Nature}

}

@article{kolmogorov2020metaflye,
  title={metaFlye: scalable long-read metagenome assembly using repeat graphs},
  author={Kolmogorov, Mikhail and Bickhart, Derek M and Behsaz, Bahar and Gurevich, Alexey and Rayko, Mikhail and Shin, Sung Bong and Kuhn, Kristen and Yuan, Jeffrey and Polevikov, Evgeny and Smith, Timothy PL and others},
  journal={Nature methods},
  volume={17},
  number={11},
  pages={1103--1110},
  year={2020},
  publisher={Nature Publishing Group US New York}
}

@article{mwaniki2024popping,
  title={Popping Bubbles in Pangenome Graphs},
  author={Mwaniki, Njagi and Garrison, Erik and Pisanti, Nadia},
  journal={arXiv preprint arXiv:2410.20932},
  year={2024}
}

@article{li2024exploring,
  title={Exploring gene content with pangene graphs},
  author={Li, Heng and Marin, Maximillian and Farhat, Maha R},
  journal={Bioinformatics},
  volume={40},
  number={7},
  pages={btae456},
  year={2024},
  publisher={Oxford University Press}
}

@ARTICLE{loglinear,
  author={Sung, Wing-Kin and Sadakane, Kunihiko and Shibuya, Tetsuo and Belorkar, Abha and Pyrogova, Iana},
  journal={IEEE/ACM Transactions on Computational Biology and Bioinformatics}, 
  title={An $O(m \log m)$-Time Algorithm for Detecting Superbubbles}, 
  year={2015},
  volume={12},
  number={4},
  pages={770-777},
  doi={10.1109/TCBB.2014.2385696}}

@inproceedings{medvedev2007computability,
  title={Computability of models for sequence assembly},
  author={Medvedev, Paul and Georgiou, Konstantinos and Myers, Gene and Brudno, Michael},
  booktitle={International workshop on algorithms in bioinformatics},
  pages={289--301},
  year={2007},
  organization={Springer}
}

@article{gartner-revisited,
  title={Superbubbles revisited},
  author={G{\"a}rtner, Fabian and M{\"u}ller, Lydia and Stadler, Peter F},
  journal={Algorithms for Molecular Biology},
  volume={13},
  number={1},
  pages={16},
  year={2018},
  publisher={Springer}
}

@misc{metagenomescope,
  title        = {MetagenomeScope},
  author={Fedarko, Marcus},
  howpublished = {\url{https://github.com/fedarko/MetagenomeScope}},
  year         = {2017},
  note         = {Accessed on Nov 1, 2025}
}

@article{garey1978linear,
  title={A linear-time algorithm for finding all feedback vertices},
  author={Garey, Michael R and Tarjan, Robert E},
  journal={Information Processing Letters},
  volume={7},
  number={6},
  pages={274--276},
  year={1978},
  publisher={Elsevier}
}

@article{gartner2019direct,
  title={Direct superbubble detection},
  author={G{\"a}rtner, Fabian and Stadler, Peter F},
  journal={Algorithms},
  volume={12},
  number={4},
  pages={81},
  year={2019},
  publisher={MDPI}
}

@InProceedings{rotenberg,
  author =	{Holm, Jacob and Italiano, Giuseppe F. and Karczmarz, Adam and Lacki, Jakub and Rotenberg, Eva},
  title =	{{Decremental SPQR-trees for Planar Graphs}},
  booktitle =	{26th Annual European Symposium on Algorithms (ESA 2018)},
  pages =	{46:1--46:16},
  series =	{Leibniz International Proceedings in Informatics (LIPIcs)},
  ISBN =	{978-3-95977-081-1},
  ISSN =	{1868-8969},
  year =	{2018},
  volume =	{112},
  editor =	{Azar, Yossi and Bast, Hannah and Herman, Grzegorz},
  publisher =	{Schloss Dagstuhl -- Leibniz-Zentrum f{\"u}r Informatik},
  address =	{Dagstuhl, Germany},
  URL =		{https://drops.dagstuhl.de/entities/document/10.4230/LIPIcs.ESA.2018.46},
  URN =		{urn:nbn:de:0030-drops-95091},
  doi =		{10.4230/LIPIcs.ESA.2018.46}
}

@InProceedings{onlinegraphalgorithms,
author="Di Battista, Giuseppe
and Tamassia, Roberto",
editor="Paterson, Michael S.",
title="On-line graph algorithms with SPQR-trees",
booktitle="Automata, Languages and Programming",
year="1990",
publisher="Springer Berlin Heidelberg",
address="Berlin, Heidelberg",
pages="598--611",
isbn="978-3-540-47159-2"
}

@article{minkin2020scalable,
  title={Scalable multiple whole-genome alignment and locally collinear block construction with SibeliaZ},
  author={Minkin, Ilia and Medvedev, Paul},
  journal={Nature communications},
  volume={11},
  number={1},
  pages={6327},
  year={2020},
  publisher={Nature Publishing Group UK London}
}

@article {Jafarzadeh2025.07.05.662656,
	author = {Jafarzadeh, Nafiseh and Eizenga, Jordan M and Paten, Benedict},
	title = {An Efficient Graph Algorithm for Diploid Local Ancestry Inference},
	elocation-id = {2025.07.05.662656},
	year = {2025},
	doi = {10.1101/2025.07.05.662656},
	publisher = {Cold Spring Harbor Laboratory},
	URL = {https://www.biorxiv.org/content/early/2025/07/09/2025.07.05.662656},
	eprint = {https://www.biorxiv.org/content/early/2025/07/09/2025.07.05.662656.full.pdf},
	journal = {bioRxiv}
}

@article{paten2011cactus,
  title={Cactus graphs for genome comparisons},
  author={Paten, Benedict and Diekhans, Mark and Earl, Dent and John, John St and Ma, Jian and Suh, Bernard and Haussler, David},
  journal={Journal of Computational Biology},
  volume={18},
  number={3},
  pages={469--481},
  year={2011},
  publisher={Mary Ann Liebert, Inc. 140 Huguenot Street, 3rd Floor New Rochelle, NY 10801 USA}
}

@article{wang2025population,
  title={Population-level structural variant characterization from pangenome graph},
  author={Wang, Songbo and Xu, Tun and Zhang, Pengyu and Ye, Kai},
  journal={bioRxiv},
  pages={2025--07},
  year={2025},
  publisher={Cold Spring Harbor Laboratory}
}

@article{siren2024personalized,
  title={Personalized pangenome references},
  author={Sir{\'e}n, Jouni and Eskandar, Parsa and Ungaro, Matteo Tommaso and Hickey, Glenn and Eizenga, Jordan M and Novak, Adam M and Chang, Xian and Chang, Pi-Chuan and Kolmogorov, Mikhail and Carroll, Andrew and others},
  journal={Nature Methods},
  volume={21},
  number={11},
  pages={2017--2023},
  year={2024},
  publisher={Nature Publishing Group US New York}
}

@article{hickey2024pangenome,
  title={Pangenome graph construction from genome alignments with Minigraph-Cactus},
  author={Hickey, Glenn and Monlong, Jean and Ebler, Jana and Novak, Adam M and Eizenga, Jordan M and Gao, Yan and Marschall, Tobias and Li, Heng and Paten, Benedict},
  journal={Nature biotechnology},
  volume={42},
  number={4},
  pages={663--673},
  year={2024},
  publisher={Nature Publishing Group US New York}
}

@article{chang2020distance,
  title={Distance indexing and seed clustering in sequence graphs},
  author={Chang, Xian and Eizenga, Jordan and Novak, Adam M and Sir{\'e}n, Jouni and Paten, Benedict},
  journal={Bioinformatics},
  volume={36},
  number={Supplement\_1},
  pages={i146--i153},
  year={2020},
  publisher={Oxford University Press}
}

@article{siren2021pangenomics,
  title={Pangenomics enables genotyping of known structural variants in 5202 diverse genomes},
  author={Sir{\'e}n, Jouni and Monlong, Jean and Chang, Xian and Novak, Adam M and Eizenga, Jordan M and Markello, Charles and Sibbesen, Jonas A and Hickey, Glenn and Chang, Pi-Chuan and Carroll, Andrew and others},
  journal={Science},
  volume={374},
  number={6574},
  pages={abg8871},
  year={2021},
  publisher={American Association for the Advancement of Science}
}

@article{rautiainen2023telomere,
  title={Telomere-to-telomere assembly of diploid chromosomes with Verkko},
  author={Rautiainen, Mikko and Nurk, Sergey and Walenz, Brian P and Logsdon, Glennis A and Porubsky, David and Rhie, Arang and Eichler, Evan E and Phillippy, Adam M and Koren, Sergey},
  journal={Nature biotechnology},
  volume={41},
  number={10},
  pages={1474--1482},
  year={2023},
  publisher={Nature Publishing Group US New York}
}

@article{menger,
  title={Zur allgemeinen kurventheorie},
  author={Menger, Karl},
  journal={Fundamenta Mathematicae},
  volume={10},
  number={1},
  pages={96--115},
  year={1927},
  publisher={Polska Akademia Nauk. Instytut Matematyczny PAN}
}

@article{gutwenger2005inserting,
  title={Inserting an edge into a planar graph},
  author={Gutwenger, Carsten and Mutzel, Petra and Weiskircher, Ren{\'e}},
  journal={Algorithmica},
  volume={41},
  number={4},
  pages={289--308},
  year={2005},
  publisher={Springer}
}

@inproceedings{rahman2022uncovering,
  title={Uncovering hidden assembly artifacts: when unitigs are not safe and bidirected graphs are not helpful},
  author={Rahman, Amatur and Medvedev, Paul},
  booktitle={International Conference on Research in Computational Molecular Biology},
  pages={377--379},
  year={2022},
  organization={Springer}
}

@article{kita2017bidirected,
  title={Bidirected Graphs I: Signed General Kotzig-Lov$\backslash$'asz Decomposition},
  author={Kita, Nanao},
  journal={arXiv preprint arXiv:1709.07414},
  year={2017}
}

@article{ghorbani2025generalisation,
  title={A generalisation of Menger's theorem in bidirected graphs},
  author={Ghorbani, Ebrahim and Nickel, Jana Katharina and Reich, Florian},
  journal={arXiv preprint arXiv:2511.12283},
  year={2025}
}

@article{bessouf2019transitive,
  title={Transitive closure and transitive reduction in bidirected graphs},
  author={Bessouf, Ouahiba and Khelladi, Abdelkader and Zaslavsky, Thomas},
  journal={Czechoslovak Mathematical Journal},
  volume={69},
  number={2},
  pages={295--315},
  year={2019},
  publisher={Springer}
}

@article{ando1996decomposition,
  title={Decomposition of a bidirected graph into strongly connected components and its signed poset structure},
  author={Ando, Kazutoshi and Fujishige, Satoru and Nemoto, Toshio},
  journal={Discrete Applied Mathematics},
  volume={68},
  number={3},
  pages={237--248},
  year={1996},
  publisher={Elsevier}
}

@book{schrijver2003combinatorial,
  title={Combinatorial optimization: polyhedra and efficiency},
  author={Schrijver, Alexander and others},
  volume={24},
  number={2},
  year={2003},
  publisher={Springer}
}

@misc{zisis2026ultrabubbleenumerationlowestcommon,
      title={Ultrabubble enumeration via a lowest common ancestor approach}, 
      author={Athanasios E. Zisis and Pål Sætrom},
      year={2026},
      eprint={2603.03909},
      archivePrefix={arXiv},
      primaryClass={cs.DS},
      url={https://arxiv.org/abs/2603.03909}, 
}

@article{Fellows09,
  author       = {Michael R. Fellows and
                  Danny Hermelin and
                  Frances A. Rosamond and
                  St{\'{e}}phane Vialette},
  title        = {On the parameterized complexity of multiple-interval graph problems},
  journal      = {Theor. Comput. Sci.},
  volume       = {410},
  number       = {1},
  pages        = {53--61},
  year         = {2009},
  url          = {https://doi.org/10.1016/j.tcs.2008.09.065},
  doi          = {10.1016/J.TCS.2008.09.065},
  bibsource    = {dblp computer science bibliography, https://dblp.org}
}

@article{maniu2017indexing,
  title={An indexing framework for queries on probabilistic graphs},
  author={Maniu, Silviu and Cheng, Reynold and Senellart, Pierre},
  journal={ACM Transactions on Database Systems (TODS)},
  volume={42},
  number={2},
  pages={1--34},
  year={2017},
  publisher={ACM New York, NY, USA}
}

@article{DEMACEDOFILHO2018101,
title = {Using SPQR-trees to speed up recognition algorithms based on 2-cutsets},
journal = {Discrete Applied Mathematics},
volume = {245},
pages = {101-108},
year = {2018},
note = {LAGOS’15 — Eighth Latin-American Algorithms, Graphs, and Optimization Symposium, Fortaleza, Brazil — 2015},
issn = {0166-218X},
doi = {https://doi.org/10.1016/j.dam.2017.01.009},
url = {https://www.sciencedirect.com/science/article/pii/S0166218X17300422},
author = {H.B. {de Macedo Filho} and C.M.H. {de Figueiredo} and Z. Li and R.C.S. Machado}
}

@inproceedings{mutzel2003spqr,
  title={The SPQR-tree data structure in graph drawing},
  author={Mutzel, Petra},
  booktitle={International Colloquium on Automata, Languages, and Programming},
  pages={34--46},
  year={2003},
  organization={Springer}
}

@inproceedings{Kunnemann24,
  author       = {Marvin K{\"{u}}nnemann and
                  Mirza Redzic},
  editor       = {{\'{E}}douard Bonnet and
                  Pawel Rzazewski},
  title        = {Fine-Grained Complexity of Multiple Domination and Dominating Patterns
                  in Sparse Graphs},
  booktitle    = {19th International Symposium on Parameterized and Exact Computation,
                  {IPEC} 2024, Royal Holloway, University of London, Egham, United Kingdom,
                  September 4-6, 2024},
  series       = {LIPIcs},
  volume       = {321},
  pages        = {9:1--9:18},
  publisher    = {Schloss Dagstuhl - Leibniz-Zentrum f{\"{u}}r Informatik},
  year         = {2024},
  url          = {https://doi.org/10.4230/LIPIcs.IPEC.2024.9},
  doi          = {10.4230/LIPICS.IPEC.2024.9},
  bibsource    = {dblp computer science bibliography, https://dblp.org}
}

@article {Harviainen2026.03.28.714704,
	author = {Harviainen, Juha and Sena, Francisco and Moumard, Corentin and Politov, Aleksandr and Schmidt, Sebastian and Tomescu, Alexandru I},
	title = {Scalable computation of ultrabubbles in pangenomes by orienting bidirected graphs},
	elocation-id = {2026.03.28.714704},
	year = {2026},
	doi = {10.64898/2026.03.28.714704},
	publisher = {Cold Spring Harbor Laboratory},
	journal = {bioRxiv},
	URL = {http://biorxiv.org/content/early/2026/03/31/2026.03.28.714704/}
}
\addcontentsline{toc}{section}{References}

\end{document}